  \newtheorem{theorem}{Theorem}[section]
  \newtheorem{thm}[theorem]{Theorem}
  \newtheorem{lemma}[theorem]{Lemma}
  \newtheorem{proposition}[theorem]{Proposition}
  \newtheorem{corollary}[theorem]{Corollary}
  \newtheorem{claim}[theorem]{Claim}
  \newtheorem{obs}[theorem]{Observation}
  \newtheorem{definition}[theorem]{Definition}
  \newcommand{\eps}{\epsilon}
 \newcommand{\Z}{{\mathbb Z}}
 \newcommand{\N}{{\mathbb N}}
  \newcommand{\calG}{{\cal G}}
  \newcommand{\A}{{\cal A}}
 \newcommand{\polylog}{\operatorname{polylog}}
 \newcommand{\poly}{\operatorname{poly}}
\newcommand{\parent}{\operatorname{parent}}
\newcommand{\mcH}{\mathcal{H}}
  \newcommand{\view}{\textsc{View}}
\colorlet{d}{RedOrange!70!BurntOrange}
\colorlet{p}{PineGreen}
\colorlet{t}{MidnightBlue}
\colorlet{em}{Fuchsia}
\colorlet{tm}{BrickRed}
\newcommand{\mcA}{\mathcal{A}}
\newcommand{\ignore}[1]{}
\newcommand{\ignog}{\ignore}
\newcommand{\lowerapp}{\frac12n^{1-2\eps}}
\newcommand{\lowerprobes}{\eps n^{\eps}}
 \newcommand{\T}{T}
 \newcommand{\fuse}{\operatorname{fusion}}
 \newcommand{\expect}{\mathbb{E}}
\newcommand{\girth}{\operatorname{girth}(G)}
\newcommand{\namedref}[2]{\hyperref[#2]{#1~\ref*{#2}}}
\newcommand{\sectionref}[1]{\namedref{Section}{#1}}
\newcommand{\appendixref}[1]{\namedref{Appendix}{#1}}
\newcommand{\theoremref}[1]{\namedref{Theorem}{#1}}
\newcommand{\lemmaref}[1]{\namedref{Lemma}{#1}}
\newcommand{\claimref}[1]{\namedref{Claim}{#1}}
\newcommand{\corollaryref}[1]{\namedref{Corollary}{#1}}
\newcommand{\propref}[1]{\namedref{Proposition}{#1}}
\newcommand{\figref}[1]{\namedref{Figure}{#1}}
\newcommand{\equalityref}[1]{\hyperref[#1]{Equality~\eqref{#1}}}
\newcommand{\inequalityref}[1]{\hyperref[#1]{Inequality~\eqref{#1}}}
\newcommand{\algorithmref}[1]{\hyperref[#1]{Algorithm~\ref{#1}}}
\newcommand{\subsecref}[1]{\namedref{Subsection}{#1}}
\renewcommand{\paragraph}[1]{\par\noindent\textbf{#1}}
\begin{document}
\def\thepage{}

	\title{
	\textbf{On the Probe Complexity of Local Computation Algorithms}
		}

	\author{Uriel Feige%
		\thanks{Weizmann Institute of Science, Rehovot, Israel. E-mail: {\tt  uriel.feige@weizmann.ac.il}. Supported in part by the Israel Science Foundation (grant No. 1388/16). Work partly done in Microsoft Research, Herzeliya, Israel. }
	\and Boaz Patt-Shamir%
	\thanks{Tel Aviv University, Tel Aviv, Israel. E-mail: {\tt  boaz@tau.ac.il}.
		Supported in part by the Israel Science Foundation (grant No. 1444/14).}
		\and Shai Vardi%
		\thanks{California Institute of Technology, Pasadena, CA, USA. E-mail: {\tt  svardi@caltech.edu}. Part of the research was carried out when Shai was a postdoctoral researcher at the Weizmann Institute of Science. Supported in part by the I-CORE in Algorithms Postdoctoral Fellowship.} }
	
	\maketitle
\begin{abstract}
	
The Local Computation Algorithms (LCA) model  is a computational
model aimed at problem instances with huge inputs and output. For
graph problems, the input graph is accessed using probes: \emph{strong probes (SP)} specify a vertex $v$ and receive as a reply a list of $v$'s neighbors, and \emph{weak probes (WP)}  specify a vertex $v$ and a port number $i$ and receive as a  reply  $v$'s $i^{th}$ neighbor.  Given a local query (e.g., ``is a certain vertex in
the vertex cover of the input graph?''), an LCA should compute the
corresponding local output (e.g., ``yes'' or ``no'') while making
only a small number of probes, with the requirement that all local
outputs form a single global solution (e.g., a legal vertex cover).
We study the probe complexity of LCAs
  that are required to work on graphs that may have arbitrarily large
  degrees. In particular, such LCAs are  expected to  probe the graph a number of times that is significantly smaller than the maximum, average, or even minimum  degree.

  For weak probes, we focus on the weak coloring problem~\cite{NN90}. Among our results we show a separation between weak 3-coloring and weak 2-coloring for deterministic LCAs: $\log^* n + O(1)$ weak probes suffice for weak 3-coloring, but $\Omega\left(\frac{\log n}{\log\log n}\right)$ weak probes are required for weak 2-coloring.

   For strong probes, we consider randomized LCAs for vertex cover and maximal/maximum matching. Our negative results include showing that there are graphs for which finding a \emph{maximal} matching requires $\Omega(\sqrt{n})$ strong probes. On the positive side, we design a randomized LCA that finds a $(1-\eps)$ approximation to \emph{maximum} matching in regular graphs, and uses $\frac{1}{\eps}^{O\left( \frac{1}{\eps^2}\right)}$ probes, independently of the number of vertices and of their degrees.

\end{abstract}

\newpage
\pagenumbering{arabic}


\section{Introduction}

In classical algorithmic models, an algorithm is given an input
and is required to compute an output. When dealing with truly massive data, such as the Internet, just reading the entire input may turn out to be impossible or
impractical. The model of \emph{local computation algorithms} (LCAs), as studied
by Rubinfeld et al.~\cite{RTVX11}, proposes the following idea.
An algorithm in the LCA model is required to produce only a specified part of the output, and
is expected to access only a ``small'' part of the input (without any pre-processing).
For example, an LCA for maximal independent set (MIS) is given a vertex ID as a \emph{query},
and is expected to return a ``yes/no'' answer, indicating whether or not the queried vertex is in the MIS; all replies to queries must be consistent with the same MIS.

The focus of this paper is LCAs for graph problems; the input is a graph, which the LCA is allowed to probe. For the purpose of understanding our notions of probes, it helps to think of the graph as being represented as a two dimensional $n$ by $d+1$ array (where $d$ is the maximum degree). Rows are labeled from $1$ to $n$ by the vertex names. In a given row $v$, the cell $(v,0)$ specifies the degree $d_v$ of $v$, the cell $(v,j)$ for $1 \le j \le d_v$ specifies the ID of the neighbor connected to $v$'s $j$th port (the order of these neighbors is arbitrary, not necessarily from smallest to largest), and cells $(v,j)$ for $d_v  < j \le d$ contain $0$. A {\bf strong probe (SP)}  specifies a row number $v$ and gets the entire row in response. A {\bf weak probe (WP)} specifies a single cell $(v,j)$ and gets its content in response. Weak probes are identical to the graph probing characterization of Goldreich and Ron~\cite{GR02}. Previous work typically considered graphs of constant bounded degree, and did not differentiate between 
strong and weak probes; as we consider graphs with unbounded degrees, it is natural to distinguish between them.

\ignog{The LCA is allowed
to \emph{probe} the input graph.
	We distinguish between two types of probes:\footnote{Previous work on LCAs does not typically discriminate between these two types of probes; if the maximal degree, $d$ of the input graphs considered is bounded, and we are interested in probe complexity polynomial or even exponential in the degree, the two types give similar probe complexities. }
	\begin{itemize}
		\item {\bf Strong probes (SP)} By specifying a vertex ID, the algorithm obtains a list of the IDs of that vertex's neighbors.
		\item {\bf Weak probes (WP)} 
		By specifying a vertex ID $v$, the algorithm learns the number of neighbors $v$ has. By specifying a pair $(v,i)$, the algorithm obtains the ID of $v$'s $i^{th}$ neighbor (we assume there is some fixed order on the neighbors of each vertex).
	\end{itemize}
}
LCAs are useful in scenarios when the input is so large that we may not even be able to compute the entire solution, yet there is a global function that we wish to query small parts of. Well known examples include \emph{locally-decodable codes} (LDCs) (e.g., \cite{KT00, Yekh12}) and local reconstruction (e.g., \cite{JR11, SS10}). LDCs are   error-correcting codes that allow a single bit of the original message to be decoded with high probability by querying a small number of bits of a  codeword. Local reconstruction involves recovering the value of a function $f$ for a particular input $x$ given oracle access to a closely related function $g$. LCAs have recently been applied  to solving convex problems 
in a distributed fashion~\cite{PCVW17}. Traditional methods for solving distributed optimization, such as iterative descent methods (e.g.,~\cite{Low2002}) and consensus methods (e.g.,~\cite{Blondel05}) require global communication, and any edge failure or lag in the system immediately affects the entire solution, by delaying computation or causing it not to be computed at all; furthermore, if the network changes in a small way, the entire solution needs to be recomputed. If an LCA is used, most of the system remains unaffected by local changes or failures. Hence LCAs can be used to  make systems more robust to edge failures, lag, and dynamic changes.   There are other situations when LCAs may be useful - say we wish to perform some computation on each of the vertices of an MIS of some huge graph. LCAs allow us to be able to begin work on some vertices before the entire MIS is computed, and guarantee that the local replies to the queries will be consistent with the same global solution, that will be available at some point in the future. See \cite{VardiPhD} for  more on the applications of LCAs.


One of the main challenges of designing LCAs is bounding their  probe complexity - the worst-case number of probes the LCA needs to make to the graph in order to reply to a single query. Although there are other complexity measures for LCAs (see the related work section), in this work we primarily focus on probe complexity, and only briefly remark upon other measures.

Noting that in many real-world applications, graphs have vertices of high degree (see, e.g., \cite{KCF11}), 
it is natural to ask whether one can design LCAs with low probe complexity
for graphs in which the maximal, average, or even minimal degree is large.
Another natural question is more ``local'' in nature: can we design an LCA that, for vertex $v$ with degree $d_v$, uses $o(d_v)$ probes?
There are known LCAs for  MIS, maximal matching, approximate maximum matching, coloring and other problems, e.g., \cite{ARVX11,EMR14,LRY15,MRVX12,MV13,RV16}: the probe complexity 
in these results is typically  \emph{exponential} in the maximal degree $d$ (\cite{LRY15}
gives an LCA for approximate maximum matching
  with probe complexity polynomial in $d$). 
None of the above results appears to be useful for graphs of high degree;
for example, if we would like to have a $\polylog(n)$-probe LCA, the above results only hold when the maximal (or average) degree is at most $O{(\log\log{n})}$. A notable exception is the algorithm of Levi et al.~\cite{LRY15}, whose complexity is roughly $\tilde{O}{(d^{1/\eps^2}\log^2{n})}$, and therefore gives a $\polylog(n)$-probe LCA for $d = O(\polylog{n})$.

\medskip

 In this work, we consider LCAs for graph problems  when the degrees are large.  We focus on the \emph{number of probes} that a vertex needs to make in order to compute its value in the solution: Can we even hope for an LCA whose probe complexity is polylogarithmic in $n$ if the minimal degree is $n^{\Omega(1)}$?
From the algorithmic perspective,  the main challenge is that in high-degree graphs,
we cannot afford to obtain information about all the neighbors of a vertex: using strong probes, we can only know their IDs; using weak probes - not even that. In many cases, this is sufficient reason
to preclude (even SP) LCAs with sub-degree probe complexity altogether. Consider, for example, a problem that is considered easy in most natural computational models: given a tree, mark all nodes that have a leaf as a neighbor. In the closely related distributed LOCAL model (e.g., \cite{Peleg00}), each vertex can send an unbounded message to all of its neighbors and  do unlimited computation in each round. The complexity of algorithms in the LOCAL model is measured by the number of rounds required for every vertex to compute its own value in the solution. For this problem (the leaf-neighbor-marking problem), a single round suffices, but it is easy to see that an LCA requires $\Omega(d_v)$ (weak or strong) probes per query (where $d_v$ is the degree of the queried vertex).

 \subsection{Our Contributions}

 Since we do not want to probe all neighbors, we  need to carefully choose a small set of neighbors that we \emph{do} probe.  In~\sectionref{sec:sample}, we discuss several natural methods of sampling a single neighbor,  a ``parent''.
 In subsequent sections, these parent sampling techniques are used in LCAs that we design for the problem of \emph{weak-coloring} (a coloring in which every non-isolated vertex $v$ has at least one neighbor colored differently than $v$).

 In~\sectionref{sec:color}, we give tight upper and lower bounds for weak $3$-coloring. Our algorithm is deterministic and uses weak probes, whereas our lower bound holds also for randomized LCAs that may use strong probes.

 \begin{theorem}\label{thm:weak3upper}
	There exists a deterministic WP LCA for  weak $3$-coloring that uses $\log^*{n}+O(1)$  probes.
\end{theorem}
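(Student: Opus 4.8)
The plan is to adapt the Cole--Vishkin $\log^*$-round coloring of rooted forests to the LCA model, using ``first neighbor'' as a parent pointer and then being careful to do the final color reduction using only \emph{ancestor} information. Given a query $v$, first probe cell $(v,1)$: if the answer is $0$ then $v$ is isolated, weak coloring imposes no constraint on it, so output color $1$; otherwise set $\parent(v)$ to be the returned neighbor, and observe that $\parent(v)$ is itself non-isolated. Following $\parent$ pointers $m=\log^* n + O(1)$ more times costs $m$ weak probes and reveals the ID sequence $v=v_0, v_1=\parent(v_0),\dots,v_m$. On the non-isolated vertices, $\parent$ defines a functional digraph (a pseudoforest): its undirected edges are exactly the pairs $\{x,\parent(x)\}$, so \emph{any} coloring $\gamma$ with $\gamma(x)\neq\gamma(\parent(x))$ for every non-isolated $x$ is automatically a legal weak coloring of $G$ (each non-isolated $x$ differs from its genuine $G$-neighbor $\parent(x)$). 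Hence it suffices to compute, from the ancestor IDs $v_0,\dots,v_m$ alone, a $3$-coloring of the pseudoforest that is proper along every $\parent$ edge.

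Next I would run the standard bit reduction: set $c_0(x)=\mathrm{ID}(x)\in\{1,\dots,n\}$ and $c_{t+1}(x)=(\iota,b)$, where $\iota$ is the least bit position in which $c_t(x)$ and $c_t(\parent(x))$ differ and $b$ is the $\iota$-th bit of $c_t(x)$. An easy induction, which uses only that the update at $x$ refers to $x$ and $\parent(x)$, shows the invariant $c_t(x)\neq c_t(\parent(x))$ is preserved at every step; in particular the usual argument goes through unchanged on $2$-cycles and on the unique cycle of each pseudoforest component, so we need no special casing there. Since the palette shrinks from $N$ to $2\lceil\log_2 N\rceil$ per step, after $T=\log^* n+O(1)$ steps the coloring $c:=c_T$ uses at most $6$ colors (relabel them as $\{1,\dots,6\}$ by a fixed bijection), and $c(x)$ is a function of $\mathrm{ID}(v_0),\dots,\mathrm{ID}(v_T)$, i.e.\ of $T$ probes.

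The crux is reducing $6$ colors to $3$ while still consulting only ancestors, since the textbook LOCAL reduction inspects a vertex's \emph{children}, and here a vertex can have $\Omega(n)$ children in the pseudoforest. For this I would use the ``shift-down'' trick: repeat three times, eliminating the current largest color $i\in\{6,5,4\}$ in turn: let $\hat c$ denote the coloring at the start of the iteration; shift down via $c(x)\leftarrow\hat c(\parent(x))$; then every $x$ with (post-shift) $c(x)=i$ recolors to $\min\bigl(\{1,2,3\}\setminus\{c(\parent(x)),\hat c(x)\}\bigr)$. After a shift-down all children of a vertex $w$ share the single color $\hat c(w)$, so when $x$ recolors it avoids its children's (common) color $\hat c(x)$ and its parent's current color; because $\hat c$ is proper, $x$ and $\parent(x)$ never recolor in the same iteration, and a short case check on the edge $\{x,\parent(x)\}$ shows the coloring remains proper while color $i$ disappears. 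Each iteration needs $\hat c$ only at $x$, $\parent(x)$, $\parent(\parent(x))$, so the three iterations add only $O(1)$ to the depth reached down the parent chain; thus $m=T+O(1)=\log^* n+O(1)$.

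Finally I would verify the LCA bookkeeping: the algorithm uses no randomness and only weak probes $(x,j)$, and the output $\gamma(v)$ is a deterministic function of $v$'s parent chain, so answers to distinct queries are trivially consistent with the single global weak $3$-coloring $\gamma$; the total probe count is $\log^* n + O(1)$. I expect the main obstacle to be exactly the $6\to 3$ step --- engineering it (via shift-down) so that it reads only ancestors rather than children, and checking that this, together with the Cole--Vishkin phase, is correct on pseudoforests containing cycles (including $2$-cycles), not merely on rooted trees.
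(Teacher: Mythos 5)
Your proposal matches the paper's proof essentially point for point: choose an arbitrary parent (port~1), observe that any coloring proper along parent edges is a weak coloring of~$G$ since each parent edge is a genuine edge of~$G$, run Cole--Vishkin bit-reduction along the ancestor chain to reach~$6$ colors, and finish with three shift-down rounds whose crucial feature is that a vertex's children all share its own pre-shift color, so the reduction reads only a bounded number of ancestors. The paper packages this as a modified Goldberg--Plotkin--Shannon algorithm (MGPS) on the spanning unicyclic forest and defers the shift-down details to an appendix, but the underlying argument — including the observation that the Cole--Vishkin invariant survives on the unique cycle of each component without special casing — is the same as yours.
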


\begin{theorem}\label{thm:weak3lower}
	Every (randomized or deterministic) SP LCA for weak $3$-coloring of the cycle graph requires $\Omega(\log^*{n})$  probes.
\end{theorem}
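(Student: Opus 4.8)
The plan is to show that a strong-probe LCA for weak $3$-coloring of $C_n$ that makes few probes would yield an overly efficient local algorithm for a canonical symmetry-breaking task on cycles, contradicting a classical bound. The first step is to pin down the locality of strong probes on a cycle: starting from a query vertex $v$, a strong probe can only be issued at an identifier the algorithm already knows, and it reveals only the (at most two) neighbors of that vertex; so after $t$ probes the set of vertices the LCA has actually learned about forms a contiguous arc of $C_n$ of length $O(t)$ containing $v$ (probes to identifiers ``guessed'' outside the explored region carry no useful information, since identifiers are worst-case). Hence the LCA's reply on $v$ may be assumed to be a function only of the identifiers in a length-$O(t)$ arc around $v$, together with the shared random string in the randomized case. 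In other words, a $t$-probe SP LCA for weak $3$-coloring of $C_n$ is simulated by a (possibly randomized, shared-coin) $\mathrm{LOCAL}$ algorithm of radius $O(t)$ for the same problem.

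The core of the argument is a reduction from weak $3$-coloring to maximal independent set on $C_n$ that costs only an additive constant in radius. The structural fact I would use is that in any weak $3$-coloring the monochromatic edges of $C_n$ form a matching — a vertex with both incident edges monochromatic would have no differently colored neighbor — so each color class induces a subgraph of maximum degree at most $1$, i.e.\ a disjoint union of isolated vertices and isolated edges. Given such a coloring, simulate the greedy rule ``process the three color classes in order, breaking ties inside a monochromatic edge by identifier''; because colors lie in $\{1,2,3\}$ and one can never traverse two monochromatic edges consecutively, the chain of ``earlier'' vertices whose decisions can influence $v$ has constant length, so membership of $v$ in the resulting MIS is determined by the colors and identifiers in a constant-radius ball of $v$. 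Composing this $O(1)$-radius map with the radius-$O(t)$ algorithm from the first step produces a radius-$O(t)$ algorithm that computes an MIS of $C_n$. Since computing an MIS of the $n$-cycle (equivalently, properly $O(1)$-coloring it) requires $\Omega(\log^* n)$ rounds — by Linial's classical lower bound, and its extension to randomized algorithms by Naor — and every $t$-probe SP LCA on $C_n$ is captured by the radius-$O(t)$ model above, we conclude $O(t) = \Omega(\log^* n)$, i.e.\ $t = \Omega(\log^* n)$.

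A self-contained alternative that bypasses the MIS detour is the Naor--Stockmeyer order-invariance argument applied directly: if the radius $r = O(t)$ were $o(\log^* n)$, then since the identifier universe has size $N \ge n \gg \mathrm{tower}(\Theta(r))$, Ramsey's theorem for $(2r+1)$-uniform hypergraphs yields a set $I$ of $2r+3$ identifiers on which the algorithm is order-invariant; placing the elements of $I$ in increasing order along an arc of $C_n$ of length $2r+3$ forces the three central vertices of that arc to receive the same color, so the middle one has two equally colored neighbors, contradicting weak $3$-coloring. I expect the main obstacle in either route to be the rigorous handling of randomness: the reduction and the Ramsey argument are cleanest for deterministic algorithms, and passing to randomized LCAs requires either invoking the randomized cycle lower bound as a black box (and verifying it survives the LCA-to-$\mathrm{LOCAL}$ translation with a shared random string) or running a Yao-type argument against a carefully planted hard distribution — a uniformly random identifier assignment does not suffice, since it almost never produces the long sorted arc the Ramsey argument needs.
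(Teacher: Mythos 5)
Your proposal is correct and lands on the same top-level structure as the paper (invoke the GHLMS-style observation that a strong-probe LCA on $C_n$ is effectively a constant-radius LOCAL algorithm, then hand off to Linial/Naor), but the reduction you supply in the middle is genuinely different. The paper converts a weak $3$-coloring into a \emph{proper $6$-coloring} by a one-line rule: query the two neighbors' weak colors, and if a vertex shares its color with a neighbor, the lower-ID endpoint of that monochromatic edge shifts its color by $3$. Checking that this yields a proper $6$-coloring is a short case analysis (the only way two adjacent vertices could still collide is if one of them had both neighbors monochromatic, violating weak coloring), and the probe overhead is just $2p+O(1)$. You instead convert the weak $3$-coloring into an \emph{MIS} by processing color classes greedily with ID tie-breaking inside monochromatic edges; this works, and your key observation (monochromatic edges form a matching, so each color class is a disjoint union of vertices and edges) is correct, but making the ``constant-length dependency chain'' argument airtight takes more bookkeeping than the paper's recoloring rule, since you must bound an alternation of strict color descents and single monochromatic hops. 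Both routes cash out at the same Linial/Naor lower bound — you target the MIS form, the paper the $c$-coloring form — and these are interchangeable on cycles. Your second, Ramsey-based route via Naor--Stockmeyer order-invariance is a legitimate self-contained alternative that avoids citing the LOCAL lower bound as a black box, but you correctly flag that it does not directly handle randomization, which is exactly where the paper prefers to lean on the known randomized cycle lower bound rather than re-derive it. In short: same skeleton, different (and somewhat more involved) connective tissue; the paper's $6$-coloring shift is the more economical reduction.
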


 In \sectionref{subsec:det2} we show how to augment the algorithm  of \sectionref{sec:color} in order to reduce the number of colors to $2$:

\begin{theorem}\label{thm:detupper2}
	There exists a deterministic  WP LCA  for  weak $2$-coloring that uses $\log^*{n}+2d_v+O(1)$ weak probes, where $d_v$ is the degree of the queried vertex.
\end{theorem}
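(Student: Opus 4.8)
The plan is to run the deterministic weak $3$-coloring LCA of \theoremref{thm:weak3upper} and then recolor locally into two colors. Recall (from the parent-sampling discussion in \sectionref{sec:sample} and the construction of \sectionref{sec:color}) that that algorithm works over the parent forest induced by the port-$1$ pointers: every non-isolated $v$ has a parent $p(v)$, namely its port-$1$ neighbor, so $vp(v)\in E$ and $p(v)$ is itself non-isolated; the $3$-coloring $\phi$ satisfies $\phi(v)\neq\phi(p(v))$ for every non-isolated $v$; and $\phi(v)$ is determined by the IDs of the first $\log^*{n}+O(1)$ ancestors $v,p(v),p^2(v),\dots$ of $v$, obtained with one weak probe per ancestor. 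Hence, reading $O(1)$ extra ancestors, a single run also yields $\phi(p(v))$ and $\phi(p^2(v))$, still within $\log^*{n}+O(1)$ weak probes. (Isolated vertices are unconstrained and are answered with $1$ probe.)

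I would then define $\psi\colon V\to\{0,1\}$ as follows. Put $\psi_0(1)=0$ and $\psi_0(2)=1$. Call $v$ \emph{critical} if $\phi(v)\in\{1,2\}$, $\phi(p(v))=3$, and $\phi(p^2(v))\neq\phi(v)$, and call $v$ \emph{childless} if no neighbor $u$ of $v$ satisfies $p(u)=v$. Set
\[
  \psi(v)=\begin{cases}
    1-\psi_0(\phi(p(v))) & \text{if }\phi(v)=3,\\
    1-\psi_0(\phi(v)) & \text{if }\phi(v)\in\{1,2\}\text{ and }v\text{ is critical and childless},\\
    \psi_0(\phi(v)) & \text{otherwise.}
  \end{cases}
\]
To answer a query $v$ we run the $3$-coloring LCA as above, then probe $v$'s degree ($1$ probe), probe all $d_v$ neighbors of $v$ ($d_v$ probes), and probe the port-$1$ entry of each neighbor ($d_v$ probes) so as to decide whether $v$ is childless; the total is $\log^*{n}+2d_v+O(1)$ weak probes. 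Since $\psi(v)$ is a fixed function of $v$'s ancestor chain and of the port-$1$ entries of $v$'s neighbors, all answers are mutually consistent, and consistency of $\phi$ is inherited from \theoremref{thm:weak3upper}.

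The heart of the argument is the claim that $\psi$ is a legal weak $2$-coloring, i.e.\ every non-isolated $v$ has a neighbor $w$ with $\psi(w)\neq\psi(v)$. Two observations drive it: (i) any vertex that is the parent of some vertex is not childless, hence is never flipped (flipping applies only to childless critical vertices); in particular the parent of the query vertex is never flipped, and color-$3$ vertices are never flipped; and (ii) a short case check (on whether $\phi(p(v))\in\{1,2\}$ or $\phi(p(v))=3$) shows that if $v$ and $p(v)$ are both unflipped and $v$ is not critical, then $\psi(v)\neq\psi(p(v))$. The proof then splits on the type of $v$: if $\phi(v)=3$ then $\phi(p(v))\in\{1,2\}$ and, by (i), $\psi(p(v))=\psi_0(\phi(p(v)))\neq 1-\psi_0(\phi(p(v)))=\psi(v)$, so $p(v)$ witnesses $v$; if $v$ is critical and childless (hence flipped) then $\phi(p(v))=3$, $\psi(p(v))=1-\psi_0(\phi(p^2(v)))$, and criticality ($\phi(p^2(v))\neq\phi(v)$) gives $\psi(p(v))\neq 1-\psi_0(\phi(v))=\psi(v)$, so $p(v)$ witnesses $v$; if $\phi(v)\in\{1,2\}$ and $v$ is not critical, then (i) and (ii) make $p(v)$ witness $v$; and if $v$ is critical but has a child $u$, then $\phi(p(u))=\phi(v)\in\{1,2\}$, so $u$ is not critical and, by (i) and (ii) applied to $u$ and its parent $v$, $u$ witnesses $v$. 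In every case $v$ has a witness.

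The main obstacle — and the reason for the $2d_v$ term — is the interaction between "erasing" color $3$ and preserving a witness for each color-$\{1,2\}$ vertex. Merging $1,2$ into $\{0,1\}$ and recoloring each color-$3$ vertex opposite to its (necessarily color-$\{1,2\}$) parent would leave a color-$\{1,2\}$ vertex $v$ with a monochromatic parent edge exactly when $\phi(p(v))=3$ and $\phi(p^2(v))$ is the color now clashing with $\phi(v)$ — that is, exactly for the critical vertices — so $v$ loses the witness $\phi$ supplied. Flipping such $v$ restores the parent-witness, but a naive flip could cascade down the parent forest; observation (i) is precisely what rules this out, and a critical vertex that does have a child is instead witnessed by that child (which is never critical). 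The only mildly delicate points are verifying the "monochromatic parent edge" case check and checking that flips cannot collide; the rest is bookkeeping together with the probe count.
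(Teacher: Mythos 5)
Your proof is correct and is essentially the paper's proof in different clothing: Algorithm 2 of the paper first recolors every color-$2$ vertex (your color-$3$) to the complement of its parent's color, and then recolors every leaf (your ``childless'') to the complement of its post-recoloring parent's color; tracing through that second sweep shows it is a no-op on every leaf except exactly your ``critical'' ones, so your $\psi$ coincides with the paper's final coloring, your case split matches the paper's Claim~5.1 plus Step~3, and both spend the same $2d_v+O(1)$ probes checking whether any neighbor's port-$1$ pointer is $v$. The paper's presentation is a little simpler because it just recolors all leaves without singling out which ones actually change, whereas your analysis makes that set explicit (and in principle lets a non-critical query skip the $2d_v$ neighbor probes, though the worst-case bound is unchanged).
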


We show that some dependence on the degree is unavoidable, at least when using weak probes.

\begin{theorem} \label{thm:detlower2} Any WP LCA for weak $2$-coloring $d$-regular graphs with $d =O\left( \frac{\log{n}}{\log\log{n}}\right)$ requires at least $d/2$ probes.
\end{theorem}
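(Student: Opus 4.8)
The plan is to argue by contradiction: assuming a WP LCA $\mathcal{A}$ that weak-$2$-colors every $d$-regular graph using at most $p=\lceil d/2\rceil-1<d/2$ probes per query, we build a $d$-regular graph on $n$ vertices on which $\mathcal{A}$ errs. We make $\mathcal{A}$ err by exhibiting a vertex $v$ whose entire closed neighborhood is monochromatic (all $d$ neighbors of $v$ receive the color $\mathcal{A}$ assigns to $v$) — exactly the negation of the weak-$2$-coloring property. The whole approach rests on one structural observation: on a query to $v$, $\mathcal{A}$ inspects at most $p$ of $v$'s ports, so \emph{more than half} of $v$'s ports are never seen ($d-p=\lfloor d/2\rfloor+1$); the adversary gets to wire those up last, after $\mathcal{A}$ has already committed to $v$'s color, and can send them all to vertices of a prescribed color.

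The core mechanism is a two-color pigeonhole over vertex names. Call a name $x$ \emph{$c$-realizable} if there is a graph and a consistent set of answers to $\mathcal{A}$'s probes on query $x$ for which $\mathcal{A}(x)=c$; every name is $R$- or $B$-realizable, and combining this with the validity of $\mathcal{A}$ (which forces both colors to appear, since a monochromatic $d$-regular graph is never weak-$2$-colored) one shows that for one common color $c$ there are far more than $d$ names that are, moreover, \emph{$c$-robust}: $c$-realizable by wiring the inspected ports of $x$ to names that are themselves $c$-realizable. We then build the graph as follows. Pick a $c$-robust name $v$; wire $v$'s inspected ports according to a witnessing configuration, so $\mathcal{A}(v)=c$ while $v$'s inspected neighbors are all $c$-realizable names. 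For each inspected neighbor $w$, place the edge back to $v$ at a port that $\mathcal{A}$ does not probe when queried on $w$ (there are more than $d/2$ such ports), and wire $w$'s free ports to realize $\mathcal{A}(w)=c$. Finally send $v$'s more-than-half-many un-inspected ports to fresh names each forced to color $c$, and complete the whole thing to a $d$-regular graph using a large auxiliary reservoir placed far from $v$. A routine consistency check — every vertex's simulated view actually occurs in the final graph, so $\mathcal{A}$ replies exactly as simulated, and no later edge is ever placed on a port that $\mathcal{A}$ probed — then shows $\mathcal{A}$ colors $v$ and all $d$ of its neighbors $c$, the desired contradiction.

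The main obstacle is the interaction of adaptivity with \emph{deep} probes: when $\mathcal{A}$, while querying $v$, probes a neighbor-of-a-neighbor, it pins additional ports of the inspected neighbor $w$, so we need $w$ to stay forceable to color $c$ \emph{conditioned on up to $p$ pinned ports}, which is strictly stronger than plain $c$-realizability, and the same issue propagates one more level outward. Handling this requires running many name-disjoint copies of the construction in parallel and repeatedly pigeonholing on the two colors that come back, so that the various prescribed local configurations (with pins) can all be realized simultaneously on pairwise-disjoint sets of names. The number of local configurations that must be jointly realizable is roughly $d^{\Theta(d)}$, so the argument needs $n$ to be at least about $d^{\Theta(d)}$ — which is precisely the regime $d=O(\log n/\log\log n)$. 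Checking that all these demands can be met while the partial graph remains completable to a $d$-regular graph, and that $d$-regularity and consistency are never violated, is the bulk of the technical work.
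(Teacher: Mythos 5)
Your intuition about the mechanism is right --- the algorithm inspects fewer than half of $v$'s ports, the adversary controls the rest, and the relevant scale is $n \approx d^{\Theta(d)}$, which is where $d = O(\log n/\log\log n)$ comes from. But the load-bearing step of your argument, producing a vertex $v$ and $d$ names for its neighborhood that are all \emph{simultaneously} forceable to a common color $c$ inside one consistent graph, is exactly the content of the theorem and is not delivered by the two-color pigeonhole you describe. Pigeonhole over names gives you that many names are $c$-realizable \emph{in isolation}, where ``in isolation'' means with a free choice of that vertex's neighborhood. Once you commit $v$'s inspected neighbors and start realizing $w_1$, the probes made during the query of $w_1$ may demand names for ports of $v$, or of $w_2$, that conflict with the realizations you chose for them; and since a $p$-probe query can wander up to distance $p \approx d/2$ from the queried vertex, the local configurations you must jointly realize have radius $\Theta(d)$, not just one extra level as you state. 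You acknowledge this adaptivity problem and wave toward ``many name-disjoint copies and repeated pigeonhole,'' but that is precisely the gap: nothing in the proposal establishes that the pinned-port configurations produced by the adaptive probes from all $d+1$ relevant queries can be met consistently in one $d$-regular graph.

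The paper sidesteps this by induction on the probe count, which supplies exactly the ``forceable to $c$ under pinned ports'' primitive your plan assumes. It builds graph families $F_1,\ldots,F_q$ with $n_i \le 2(d+1)^i$ vertices, where $F_1 = K_{d+1,d+1}$ minus a perfect matching forces one probe, and each graph in $F_{i+1}$ is assembled from $d$ \emph{vertex-disjoint} copies $H_1,\ldots,H_d$ of $F_i$-graphs (one edge each removed) attached through two fresh hub vertices $a_{i+1}, b_{i+1}$. The disjointness is what makes the interlocking constraints disappear: a probe made while querying $a_{i+1}$ lands either on a port of $a_{i+1}$ or inside a single $H_j$, and the inductive adversary lemma (Lemma~4.3 in the paper, proved by a finite full-information game argument from the $i$-probe lower bound on $F_i$) says the adversary for $H_j$ can answer $i-1$ probes and still complete $H_j$ to force $v_j$'s color either way, hence to match $a_{i+1}$'s. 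The ``disjoint islands'' you gesture at is the right instinct, but the induction is what makes it rigorous; a direct one-shot construction would essentially have to rederive the adversary lemma, at which point the inductive route is strictly simpler.
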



In~\sectionref{subsec:rand2}, we design a randomized LCA for weak $2$-coloring, whose probe complexity is independent of the maximal degree and show how it can be implemented in both the strong and weak probe models.
\begin{theorem}\label{thm:genweakrand}
	There exists a randomized WP LCA for weak $2$-coloring that uses $\Theta\left( \frac{\log^2 {n}}{\log\log{n}}\right) $ probes, and a randomized SP LCA for weak $2$-coloring that uses $\Theta\left( \frac{\log{n}}{\log\log{n}}\right) $ probes.
\end{theorem}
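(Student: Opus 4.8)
The plan is to build both algorithms around one globally consistent \emph{random} colouring and to spend probes only where that colouring fails to be a legal weak colouring. Fix a hash function that assigns each vertex $v$ a colour $\chi(v)$ drawn uniformly from a palette of size $k$, where $k$ is a parameter: a fixed polylogarithm for the internal colouring, later collapsed to two colours. Call $v$ \emph{bad} if $\chi$ is constant on $\{v\}\cup N(v)$, i.e.\ every neighbour of $v$ repeats $v$'s colour; on every non-bad vertex $\chi$ already certifies a legal weak colouring, so all the work is in coping with bad vertices. Since $\Pr[v\text{ is bad}]=k^{-d_v}$, a union bound shows that with probability $1-1/\poly(n)$ every bad vertex has degree $O(\log n/\log k)$; in particular bad vertices lie entirely in the low-degree part of the graph, so on a query to a vertex of degree above the corresponding threshold we may simply output $\chi(v)$ after only reading the degree.

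For the strong-probe algorithm, a query at a low-degree vertex $v$ uses one strong probe to read $N(v)$ and --- through the hash --- the colours around $v$; if $v$ is not bad, we are done. If $v$ is bad we grow, one strong probe per newly discovered bad vertex, the connected component $K$ of $v$ in the subgraph induced by bad vertices. The crucial estimate is $\Pr[\,|K|\ge m\,]\le n\cdot(\polylog n)^m\cdot k^{-\Omega(m)}$: a bad component of size $m$ forces $\chi$ to be constant on a connected low-degree set of at least $m$ vertices together with their boundary (cost $k^{-\Omega(m)}$), while at most $(\polylog n)^m$ connected low-degree candidate sets pass through a fixed vertex. Choosing $k$ a sufficiently large polylogarithm pushes this below $1/n$ already at $m=\Theta(\log n/\log\log n)$, so with high probability $|K|=O(\log n/\log\log n)$ and the exploration costs that many probes. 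Once $K$, and (for free, from IDs) the colours on its boundary, are in hand, we re-colour $K$ by a rule depending only on $K$, chosen so that every vertex of $K$ gains a differently-coloured neighbour and no boundary vertex loses its last one; this is possible exactly because $K$ is small, connected, and surrounded by colours we already know. A final phase collapses the palette to two colours: since the repaired colouring still agrees with the random $\chi$ outside the sparse repaired components, each palette-shrinking step finds a differently-coloured neighbour with only a few extra probes. All phases stay within $O(\log n/\log\log n)$ strong probes.

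The weak-probe version runs the identical plan, but each operation that would read the neighbour list of a relevant (hence low-degree) vertex must now proceed port by port, at cost $\Theta(\log n)$ weak probes per such vertex; multiplied by the $O(\log n/\log\log n)$ vertices touched during exploration and repair this gives $\Theta(\log^2 n/\log\log n)$ weak probes, while high-degree queries again cost only a degree read. For the matching lower bounds implicit in the $\Theta$, one sharpens the indistinguishability argument behind Theorem~\ref{thm:detlower2}: on near-regular instances of degree $\Theta(\log n)$ and ``radius'' $\Theta(\log n/\log\log n)$ (so that the instance still has $\Theta(n)$ vertices), an algorithm that probes too little sees a view consistent with instances forcing different colours at $v$, each required level of exploration costing one strong probe or $\Theta(\log n)$ weak probes to pin down the relevant neighbour; randomisation is absorbed through Yao's principle.

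The two genuinely delicate points are, first, making the local repair simultaneously consistent across all queries and harmless to already-happy vertices \emph{without} ever probing a high-degree boundary vertex --- which is what forces the repair rule to be a function of the explored small component and of boundary colours deduced from IDs alone, and also what makes the final palette reduction subtle --- and, second, the probabilistic bookkeeping: calibrating the palette size $k$ and the degree thresholds so that the $(\polylog n)^m$ enumeration of candidate bad components is beaten by the $k^{-\Omega(m)}$ monochromaticity penalty precisely at scale $\log n/\log\log n$, while the total failure probability stays $1/\poly(n)$ and the weak-probe count lands on $\log^2 n/\log\log n$.
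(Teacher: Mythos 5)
Your proposal shares the paper's first move---random colour assignment, good/bad dichotomy, and spending probes only on bad vertices---but then takes a genuinely different route. The paper sticks with a \emph{two}-colour palette from the start: a bad vertex follows parent pointers through the bad region to a root and colours itself by parity, and the $\Theta(\log n/\log\log n)$ (strong-probe) bound comes from choosing parents by the \emph{random lowest neighbour} scheme and a delicate argument (Lemma~\ref{lem:randomf} plus Theorem~\ref{thm:r2d2}) that the longest monochromatic directed path inside an $O(\log n)$-sized boring component has length $O(\log n/\log\log n)$. You instead inflate the palette to $k=\polylog n$ so that the entire bad connected component $K$ is small, explore all of $K$, and repair it. Your component-size estimate ($n\cdot(\polylog n)^m\cdot k^{-\Omega(m)}$, pushed below $1/n$ at $m=\Theta(\log n/\log\log n)$) is essentially sound (and can be made rigorous via the bound in Theorem~\ref{thm:graphs}, which does not require restricting the enumeration to low-degree subsets), and your observation that the boundary $N(K)\setminus K$ is automatically safe because $K\cup N(K)$ is monochromatic under $\chi$ is correct.

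The genuine gap is the final ``collapse the palette to two colours.'' After your repair you hold a weak $k$-colouring, and there is no cheap local reduction from weak $k$-colouring to weak $2$-colouring: any fixed map $[k]\to\{0,1\}$ can identify the colours of a vertex and its unique differently-coloured witness, and the standard shift-down/recolour trick used for \emph{proper} colourings needs the colours of all neighbours, which you precisely cannot afford at high-degree vertices (and, via Naor--Stockmeyer, weak $2$-colouring is provably harder than weak $c$-colouring for $c\ge 3$ in closely related settings). Your sentence ``each palette-shrinking step finds a differently-coloured neighbour with only a few extra probes'' asserts exactly what needs to be proved. This is not a cosmetic issue: the paper avoids it entirely by never leaving $\{0,1\}$, which is what forces the path-following design and the random-lowest-neighbour analysis that you replaced.

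Two smaller issues. First, you read the $\Theta$ in the theorem as a \emph{problem} lower bound and sketch a Yao/indistinguishability argument; the paper's $\Theta$ is about the probe complexity of \emph{Algorithm~\ref{alg:lca2rand}} (the upper bounds w.h.p.\ plus explicit graph families on which that algorithm really does use $\Omega(\log^2 n/\log\log n)$ weak, resp.\ $\Omega(\log n/\log\log n)$ strong, probes). The paper explicitly does \emph{not} prove a matching problem lower bound for randomized LCAs; it only conjectures one (Theorem~\ref{thm:detlower2} is deterministic and applies only for $d=O(\log n/\log\log n)$). Second, your weak-probe accounting charges $\Theta(\log n)$ per explored vertex, but under your own $k=\polylog n$ palette bad vertices have degree $O(\log n/\log k)=O(\log n/\log\log n)$, so the per-vertex port-reading cost is $O(\log n/\log\log n)$ and the total would come out as $O(\log^2 n/(\log\log n)^2)$---harmless for the upper bound, but it indicates the bookkeeping has not been pinned down.
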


In~\sectionref{sec:VC}, we give a lower bound for vertex cover in the strong probe model. Specifically, we show that for high degree graphs, 
many strong probes are necessary to approximate a minimal vertex cover to any interesting precision.
\begin{theorem}
	\label{thm:lower}
	For any $\eps<\frac{1}{2}$, any randomized  SP LCA that computes a vertex cover  whose size is a $(\lowerapp)$-approximation to the size of the  minimal vertex cover requires at least $\lowerprobes$ probes.
\end{theorem}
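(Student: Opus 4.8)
The plan is to prove the bound via Yao's minimax principle: fix the LCA's random string so that it becomes a deterministic algorithm, and exhibit a distribution $\mathcal{D}$ over $n$-vertex graphs together with a distribution over query vertices such that every deterministic strong-probe algorithm making at most $\lowerprobes$ probes per query produces, with positive probability, either an inconsistent answer (two queries witnessing an uncovered edge) or a cover of size exceeding $\lowerapp\cdot\mathrm{OPT}$ on the realized graph. Since a correct LCA must succeed on every input, a single bad event suffices. The hard distribution will have a \emph{small hidden vertex cover}: each graph in the support consists of a ``core'' set $C$ of $\Theta(n^{2\eps})$ vertices forming a vertex cover, together with an independent set $I=V\setminus C$ whose vertices are adjacent only to $C$. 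Two features are essential. First, $\mathrm{OPT}=\Theta(n^{2\eps})$, so that a $\lowerapp$-approximate cover has size $\Theta(n)$; hence, to be valid, the LCA must answer ``not in the cover'' on a constant fraction of all vertices --- it cannot simply answer ``yes'' everywhere. Second, the vertices of $I$ are given degree strictly larger than $\lowerprobes$, and $\mathcal{D}$ is arranged so that, conditioned on the (at most $\lowerprobes$-probe) transcript the LCA sees when queried on a typical vertex $v$, the support still contains graphs in which $v\in C$ is \emph{forced} into every valid cover because some unprobed neighbor of $v$ lies in $I$. In that case the LCA cannot safely answer ``not in the cover'' on $v$, so it must answer ``in the cover.''

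Carrying this out, the steps in order are: (i) define $\mathcal{D}$ and verify that $\mathrm{OPT}=\Theta(n^{2\eps})$ for every graph in the support; (ii) prove an indistinguishability lemma: for a uniformly random query vertex $v$ and any deterministic $\le\lowerprobes$-probe algorithm, with constant probability the probe transcript is consistent both with a graph in which $v\in I$ and with a graph in which $v\in C$ has a private $I$-neighbor among its unprobed neighbors, so the algorithm's answer on $v$ is forced to be ``in the cover''; (iii) sum over query vertices --- by linearity of expectation, $\expect\bigl[\,|\text{cover}|\,\bigr]=\Omega(n)$ while $\expect[\mathrm{OPT}]=\Theta(n^{2\eps})$, so (after a Markov/averaging argument, and choosing the hidden constant in $\mathrm{OPT}$ small enough relative to the forced fraction of step (ii)) there is an input on which the LCA exceeds the allowed approximation ratio.

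The crux, and the step I expect to be the main obstacle, is designing $\mathcal{D}$ so that the indistinguishability lemma of step (ii) actually holds. The tension is that in any graph with a small vertex cover the core vertices necessarily have large average degree, so a single strong probe on a neighbor would reveal whether it is a core vertex. The construction must therefore exploit that (a) a strong probe on $v$ itself returns only the identities of $v$'s neighbors, not their degrees, so distinguishing a core vertex from an $I$-vertex still costs one probe per candidate, while $v$ has more than $\lowerprobes$ neighbors; and/or (b) the core should contain vertices of the \emph{same} moderate degree as $I$-vertices, so that degree alone is uninformative --- all while keeping $\mathrm{OPT}=\Theta(n^{2\eps})$ and ensuring that a short probe walk started at $v$ cannot reach any vertex whose inspection would break the symmetry. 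Making these requirements coexist --- most likely through a girth- or expansion-type argument showing that the realized transcript is (nearly) equidistributed over many graphs of the support, so that the forced answer is the same across all of them --- is the technical heart of the proof.
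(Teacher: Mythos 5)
Your high-level framework matches the paper's: both invoke Yao's minimax principle with a hard distribution over graphs whose minimum vertex cover has size $\Theta(n^{2\eps})$ while the LCA is driven to answer ``in the cover'' on $\Theta(n)$ vertices, and you correctly isolate the crucial observation (your option (a)) that a strong probe on $v$ returns the IDs of $v$'s neighbors but not their degrees. However, the construction you sketch does not actually force any vertex into the cover, which is where the argument would break. In a bipartite graph with core $C$ and independent set $I$, a vertex $v\in C$ with a degree-one $I$-neighbor $w$ is \emph{not} forced into every cover---the edge $(v,w)$ can equally be covered by $w$---and more generally an LCA that answers ``yes'' precisely on the high-degree side is always correct and small; your option (b) (equal degrees on both sides) is ruled out by counting, since $|I|\gg|C|$ and all edges crossing between them would force $C$-vertices to have far higher average degree than $I$-vertices.

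The paper supplies exactly the two ingredients your sketch is missing. First, a \emph{fusion} gadget: take two disjoint copies of a bipartite graph with parts $U$ (size $p^k$, degree $p$) and $W$ (size $p^2$, degree $p^{k-1}$), delete one $U$--$W$ edge from each copy, and join the two freed $U$-endpoints $u'$ and $u''$ by a new edge. That single $U$--$U$ edge cannot be covered by $W$, so at least one $U$-vertex is forced into every cover; yet a queried $U$-vertex sees only $p$ neighbor names and cannot tell, without probing each one, whether one of them is its fusion partner. Second, a pseudorandomness ingredient: label $U$-vertices by polynomials over $\Z(p)$ of degree below $k$ and $W$-vertices by points $(b_0,b_1)$, with an edge iff $f(b_0)=b_1$. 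Claim~\ref{claim:kdiff} then bounds the intersection of any two $U$-neighborhoods by $k-1$, which is what makes Claim~\ref{claim:miss} go through: fewer than $p/k$ probes leave some neighbor of $v$ unexplored, and the fusion partner can be hidden there (Claims~\ref{claim:view} and~\ref{claim:doesnt}). These are precisely the ``technical heart'' your proposal flags as an open obstacle; without them, the plan as written does not yield a lower bound.
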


A corollary of  \theoremref{thm:lower} is the following.

\begin{corollary}
	Any SP LCA for maximal matching on arbitrary graphs requires $\Omega({n}^{1/2-o(1)})$ probes.
\end{corollary}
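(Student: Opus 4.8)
The plan is to derive the corollary from \theoremref{thm:lower} by the textbook reduction from $2$-approximate vertex cover to maximal matching, calibrated with an appropriate choice of $\eps$.

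First I would recall two elementary facts about a maximal matching $M$ of a graph $G$. (i) The set $V(M)$ of endpoints of the edges of $M$ is a vertex cover: if some edge had both endpoints outside $V(M)$, it could be added to $M$, contradicting maximality. (ii) $|V(M)| \le 2\,\tau(G)$, where $\tau(G)$ is the minimum vertex cover size, because every vertex cover contains an endpoint of each of the $|M|$ pairwise disjoint edges of $M$, so $\tau(G) \ge |M|$, and $|V(M)| = 2|M|$. Hence $V(M)$ is a vertex cover of size within a factor $2$ of optimal, and, crucially, the answers to all queries ``is $v \in V(M)$?'' are mutually consistent with the single global object $M$.

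Next I would convert an SP LCA $\mcA$ for maximal matching with probe complexity $q(n)$ into an SP LCA $\mcA'$ for vertex cover: on query $v$, run $\mcA$ to learn $v$'s partner in $M$ (or that $v$ is unmatched), and answer ``$v$ is in the cover'' iff $v$ is matched. By the facts above, $\mcA'$ computes a $2$-approximate vertex cover using exactly $q(n)$ probes. (If the maximal-matching LCA is instead formulated to answer \emph{edge} queries, deciding whether $v$ is covered costs one strong probe for $v$'s adjacency list plus one edge query per incident edge; one then checks that the hard instances behind \theoremref{thm:lower} have degree $n^{o(1)}$, so this overhead does not affect the final bound.)

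Finally I would instantiate \theoremref{thm:lower} with $\eps = \eps_n := \tfrac12 - \tfrac{1}{2\sqrt{\log n}}$, so that $\tfrac12\,n^{1-2\eps_n} = \tfrac12\,2^{\sqrt{\log n}} \ge 2$ for all large $n$. Since a $2$-approximate vertex cover is in particular a $(\tfrac12\,n^{1-2\eps_n})$-approximate vertex cover, \theoremref{thm:lower} applied to $\mcA'$ forces $q(n) \ge \eps_n\, n^{\eps_n} = \Theta(1)\cdot n^{1/2}\cdot 2^{-\frac12\sqrt{\log n}} = n^{1/2-o(1)}$, which is the claimed bound. The only real point of care is this last calibration of $\eps_n \to \tfrac12$: it must approach $\tfrac12$ fast enough that $n^{\eps_n} = n^{1/2-o(1)}$, yet slowly enough that $\tfrac12\,n^{1-2\eps_n}$ stays above $2$, so that the hypothesis of \theoremref{thm:lower} is genuinely implied by what $\mcA'$ guarantees; beyond lining up these asymptotics (and, in the edge-query formulation, confirming the degree of the lower-bound instances) there is no obstacle.
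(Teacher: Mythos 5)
Your main argument is correct and follows the same route as the paper: derive the corollary from \theoremref{thm:lower} via the classical fact that the endpoints of a maximal matching form a $2$-approximate vertex cover, then calibrate $\eps$ toward $\tfrac12$. Your calibration $\eps_n = \tfrac12 - \tfrac{1}{2\sqrt{\log n}}$ is in fact \emph{more careful} than the paper's terse ``setting $2 = \Theta(n^{1-2\eps})$ gives $\eps = 1/2$'' (which cannot be taken literally, since the theorem requires $\eps < \tfrac12$ strictly, and the approximation threshold $\tfrac12 n^{1-2\eps}$ would drop below $2$ at $\eps = \tfrac12$); you correctly spell out the two constraints that the sequence $\eps_n$ must satisfy simultaneously.

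The one flaw is in your parenthetical fallback for the edge-query formulation. You assert that ``the hard instances behind \theoremref{thm:lower} have degree $n^{o(1)}$,'' but this is false: in the construction of \sectionref{sec:VC}, a vertex $u \in U$ has degree exactly $p = \Theta(n^{1/k}) = \Theta(n^{\eps})$, and with $\eps_n \to \tfrac12$ this is $n^{1/2-o(1)}$, not $n^{o(1)}$ (vertices of $W$ have even larger degree $p^{k-1} = \Theta(n^{1-\eps})$). Consequently, simulating the vertex query ``is $v$ matched?'' by querying all $d_v$ incident edges through an edge-query LCA would cost $\Theta(d_v\,q(n))$ probes, and since $d_v$ matches the magnitude of the probe lower bound itself, the reduction collapses to a trivial bound on $q(n)$; the same obstruction persists for any admissible choice of $\eps$. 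Your \emph{primary} argument, which takes the maximal-matching LCA to answer vertex queries returning the matched partner (the formulation the paper also implicitly uses when it speaks of ``taking both end vertices''), is unaffected by this and is sound.
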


 In~\sectionref{sec:match1}, we describe an LCA that finds a matching that is a $(1-\epsilon)$-approximation to the  maximum matching, for regular (and approximately-regular, see~\sectionref{sec:match1} for a definition) graphs, using a constant number of probes.

\begin{theorem}\label{thm:dregularbest}
	There exists an SP LCA that finds a $(1-\eps)$-approximate maximum matching in expectation on $d$-regular graphs that uses $\left( \frac{1}{\eps}\right)^{O\left( \frac{1}{\eps^2}\right)}$ probes per query.
\end{theorem}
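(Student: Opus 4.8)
The plan is to reduce the $d$-regular, high-degree case to a bounded-degree instance by \emph{randomly sparsifying} the graph, running a (by now fairly standard) bounded-degree $(1-\eps)$-approximate matching procedure on the sparsification, and composing the two error terms. Fix a parameter $k=\poly(1/\eps)$ (concretely $k=\Theta(\eps^{-2}\log(1/\eps))$), and let the LCA's shared randomness supply a hash $h$ assigning every potential edge an independent uniform value in $[0,1]$ and every vertex an independent uniform colour in $\{0,1\}$. Define the sparsified graph $H\subseteq G$ by keeping an edge $e=\{u,v\}$ iff (i) $u$ and $v$ have different colours, (ii) $h(e)<2k/d$, and (iii) neither endpoint is ``over-full'', where a vertex is over-full if the number of edges satisfying (i)+(ii) at it exceeds $(1+\eps/2)k$ (in which case we isolate it in $H$). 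Since this definition refers only to $h$, to the colours, and to the two incident neighbour-lists, a single strong probe at a vertex $w$ lets the LCA reconstruct $w$'s entire $H$-neighbourhood at the cost of one extra strong probe per $H$-neighbour (to test over-fullness there); thus $H$ is consistently well-defined and has maximum degree $d_H\le(1+\eps/2)k=\poly(1/\eps)$. If $d<2k$ then $G$ is already bounded-degree and we skip straight to the subroutine with $d_H:=d$.

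The first and main ingredient is a \textbf{sparsification lemma}: for $d$-regular $G$, $\expect[\nu(H)]\ge(1-\eps)\nu(G)$, where $\nu(\cdot)$ is the maximum-matching size. I expect this to be the crux. The route I would take: by Vizing's edge-colouring theorem $\nu(G)\ge\tfrac n2\cdot\tfrac{d}{d+1}$, so (for $d\ge 2k$) it suffices to find a matching of size $(1-O(\eps))\tfrac n2$. Now condition (i) makes $H$ \emph{bipartite}, so by K\"onig's theorem $\nu(H)$ equals the fractional matching number $\nu_f(H)$; and a valid fractional matching is obtained by placing weight $1/((1+\eps/2)k)$ on every edge of $H$ (every endpoint has $H$-degree at most $(1+\eps/2)k$ by (iii)), whose value is $|E(H)|/((1+\eps/2)k)$. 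A Chernoff bound shows the number of edges satisfying (i)+(ii) at a vertex is $\mathrm{Bin}(d,k/d)$, hence concentrated around $k$, so condition (iii) discards only an $O(e^{-\Omega(\eps^2 k)})$ fraction of those edges; therefore $\expect|E(H)|\ge\tfrac{nk}{2}\bigl(1-O(e^{-\Omega(\eps^2 k)})\bigr)$ and $\expect[\nu(H)]\ge\tfrac{n}{2(1+\eps/2)}\bigl(1-O(e^{-\Omega(\eps^2 k)})\bigr)\ge(1-\eps)\tfrac n2$ for our choice of $k$. The genuinely delicate points are making the tail bounds clean enough to survive being stated in expectation rather than with high probability, and checking that the truncation (iii) is not itself what destroys the matching; the ``approximately regular'' version only changes the constants in $k$, in the truncation threshold, and in the Vizing-type lower bound.

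The second ingredient is a \textbf{bounded-degree $(1-\eps)$-approximate maximum matching procedure on $H$} whose probe complexity depends only on $d_H$ and $\eps$. Run $O(1/\eps)$ Hopcroft--Karp phases: in phase $i$, flip a maximal set of vertex-disjoint length-$(2i-1)$ augmenting paths, selected greedily by random priorities derived from $h$ so that membership can be decided by local recursion; after these phases no augmenting path of length below $2/\eps$ survives, so the matching is a $(1-\eps)$-approximation to $\nu(H)$. To answer a query the LCA only examines the radius-$O(1/\eps)$ ball of the queried vertex in $H$, which costs $d_H^{O(1/\eps)}$ strong probes in $G$; the $O(1/\eps)$ phases compose multiplicatively, for a total of $d_H^{O(1/\eps^2)}$ probes (in expectation, or with high probability; this can be made worst-case by truncation). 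Finally I would compose: conditioned on $H$ the procedure outputs a matching of expected size $\ge(1-\eps)\nu(H)$, so the overall expected matching size is $\ge(1-\eps)\expect[\nu(H)]\ge(1-\eps)^2\nu(G)\ge(1-2\eps)\nu(G)$; rescaling $\eps$ and substituting $d_H=\poly(1/\eps)$ into $d_H^{O(1/\eps^2)}$ gives the claimed $(1/\eps)^{O(1/\eps^2)}$ probe bound. The main obstacle is the sparsification lemma; the remaining steps — consistent reconstruction of $H$ from strong probes, the recursion for the greedy augmenting-path selection on a bounded-degree graph, and the Hopcroft--Karp accounting — are routine.
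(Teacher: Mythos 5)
Your proposal follows essentially the same route as the paper: sparsify the $d$-regular graph down to maximum degree $\poly(1/\eps)$ (keeping bipartite edges at rate $\Theta(k/d)$ and discarding over-full vertices), show that a strong probe to the sparsified graph can be simulated by $O(k)$ strong probes to $G$, lower-bound the expected maximum matching of the sparsification via a fractional matching on the bipartite subgraph, and then run a truncated YYI/Hopcroft--Karp-style bounded-degree LCA and compose the two error terms. The only substantive difference is in the sparsification lemma: you use the uniform fractional weight $1/((1+\eps/2)k)$ together with a Chernoff tail on the $\mathrm{Bin}(d,k/d)$ degrees, whereas the paper assigns weight $1/\max\{d_u,d_v\}$ and bounds $\expect[1/\max\{d_u,d_v\}]$ via a second-moment estimate --- both are valid, yours being arguably a bit cleaner (the Vizing appeal, though correct, is superfluous since $\nu(G)\le n/2$ already suffices).
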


In~\sectionref{sec:match}, we show that for graphs with sufficiently high girth and degree, polynomially (in $1/\eps$) many strong probes suffice.
\begin{theorem}
	\label{thm:girth}
	There exists an SP LCA that finds a  $(1-\eps)$-approximate maximum matching in expectation on $d$-regular graphs of girth $g$, with $d \ge \frac{1}{\epsilon}$ and $g \ge \frac{1}{\epsilon^3}$, that uses ${O(1/{\eps^7})}$ probes per query.
\end{theorem}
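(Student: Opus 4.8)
The plan is to output (a local description of) a matching that is \emph{almost perfect}, using the degree bound $d\ge 1/\eps$ to guarantee that such a matching exists, and the girth bound to guarantee that the local simulation stays cheap. Concretely, I would run a Hopcroft--Karp-style iterated augmentation, simulated vertex-by-vertex with the help of random edge priorities, and charge the probes against the fact that every relevant neighbourhood is a tree.

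\textbf{The benchmark.} First I would record, via a Tutte--Berge computation, that a $d$-regular graph has a matching of size at least $\frac{n}{2}\left(1-\frac1d\right)$: for any vertex set $S$, every odd component $C$ of $G\setminus S$ with $|C|\le d$ sends at least $d$ edges into $S$, because $e(C,S)=d|C|-2e(G[C])\ge |C|(d-|C|+1)\ge d$ for $1\le|C|\le d$; since these edge sets are disjoint there are at most $|S|$ such components, and the odd components with more than $d$ vertices number at most $n/(d+1)$, so the deficiency is at most $n/(d+1)$. With $d\ge 1/\eps$ this gives $|M^*|\ge(1-\eps)\frac n2$, so it suffices to produce a matching of size $(1-O(\eps))\frac n2\ge(1-O(\eps))|M^*|$ (and one rescales $\eps$ at the end).

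\textbf{The algorithm.} Assign i.i.d.\ uniform priorities to the edges. Phase $0$: take the greedy maximal matching $M_0$ in priority order. Phase $i$, for $i=1,\dots,k$ with $k=\Theta(1/\eps)$: augment the current matching along a maximal, vertex-disjoint family of augmenting paths of length exactly $2i-1$, the family being selected greedily in priority order. By the standard argument the shortest augmenting-path length strictly increases with each phase, so the final matching has no augmenting path of length $\le 2k-1$ and hence size $(1-O(\eps))|M^*|$; the degree bound is also what makes this affordable, because on a locally tree-like $d$-regular graph the random greedy matching $M_0$ leaves only a $\tilde O(1/d)$ fraction of vertices unmatched (the fixed point of the branching recursion on the infinite $d$-regular tree), so after phase $0$ the unmatched set is already sparse, and stays sparse through the later phases. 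To answer a query at $v$, one unwinds these definitions: $v$'s status in phase $i$ depends, through the phase-$(i{-}1)$ structure, only on a ball of radius $O(k)$, so the whole computation lives in a ball of radius $R=O(k^2)=O(1/\eps^2)$; since $g\ge 1/\eps^3$ we have $2R<g$, so this ball is a tree and the tree analysis above applies verbatim.

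\textbf{The probe count, which is the crux.} A single strong probe at $v$ reveals \emph{all} of $v$'s $d$ incident edges with their priorities, so what must be controlled is the number of \emph{distinct vertices} touched, independently of $d$. The key point is that the augmenting-path search issued by the oracle need not enumerate the $d^{\Theta(1/\eps)}$ potential augmenting paths in the ball: because the unmatched set is sparse ($\tilde O(1/d)\le\tilde O(\eps)$ fraction, by the previous paragraph), only a bounded number of short augmenting paths actually pass near any given vertex, and each such path is followed along its $O(1/\eps)$ vertices rather than branched into a subtree; the conflict resolution among competing higher-priority paths is then a branching process whose offspring count is, by the i.i.d.-priority heuristic on a $d$-regular vertex (only the $\Theta(d\cdot t)=O(1)$ edges below the current priority threshold $t\approx 1/d$ are relevant), $O(1)$ in expectation. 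Combining this with a tail bound---this is exactly where the girth is needed, to keep the exploration genuinely tree-shaped so that the branching-process analysis holds and the process never re-intersects itself---one gets that the oracle touches only $\poly(1/\eps)$ vertices, with failure probability at most $\eps$; accounting for the $O(1/\eps)$ phases and the $O(1/\eps^2)$ radius, the probe count is $O(1/\eps^7)$. The main obstacle is precisely this last estimate: turning the generic $d^{O(1/\eps)}$-probe augmenting-path machinery of \theoremref{thm:dregularbest} into a $\poly(1/\eps)$-probe one, i.e.\ proving the high-probability, worst-case-over-queries bound on the priority-ordered exploration (and, secondarily, pinning down the $\tilde O(1/d)$ unmatched rate of $M_0$ in the tree fixed-point computation); everything else is bookkeeping.
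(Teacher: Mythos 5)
Your plan diverges from the paper in two ways, and the second one is a genuine gap that would sink the proof.

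\emph{Different route on the approximation guarantee.} The paper never runs Hopcroft--Karp augmentation phases on the high-girth graph: it uses only the random greedy maximal matching (their ``RG-LCA''), and cites the theorem of Gamarnik and Goldberg that on a $d$-regular graph of girth $g$ the expected size of the greedy matching is already $(\frac12 - \Theta(\frac1d) - \frac{d(d-1)^{\lfloor (g-2)/2\rfloor}}{\lfloor (g-2)/2\rfloor !})n$. You yourself observe that $M_0$ leaves only a $\tilde O(1/d)$ fraction unmatched and that $d\ge 1/\eps$, so all your augmentation phases are unnecessary: the phase-0 matching already achieves $(1-O(\eps))\frac n2$. (Your Tutte--Berge computation is correct but also redundant, since $n/2$ is already an upper bound on $|M^*|$.) The paper then converts this global guarantee into a per-vertex/per-edge guarantee by a symmetry argument (all radius-$g/2$ balls of a regular high-girth graph are isomorphic, and the exploration stays inside such a ball with high probability).

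\emph{The probe bound is the gap, and your heuristic for it is wrong.} You acknowledge explicitly that ``the main obstacle is precisely this last estimate,'' which means the claim is asserted rather than proven; but beyond that, the estimate itself does not hold as stated. For the YYI-style priority-ordered greedy exploration on a $d$-regular graph, the expected number of strong probes per query is $\Theta(d)$, not $O(1)$ (the paper records this as Theorem~D.4, an adaptation of YYI: expected probes $\le d$). Your argument that only $\Theta(d\cdot t)=O(1)$ edges below threshold $t\approx 1/d$ are relevant conflates the probability that a fixed edge is in the matching (about $1/d$) with the priority of the queried edge, which is a uniform random variable with typical value $\Theta(1)$; the first level of the recursion therefore has $\Theta(d)$ children in expectation, and only subsequent levels shrink. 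So without further work, the probe count scales with $d$, which can be arbitrarily large. The paper handles this with \emph{sparsification}: when $d > 1/\eps^3$, keep each edge with probability roughly $t/d$ for $t\simeq 1/\eps^3$ (chosen so $t\le g/10$), drop overfull vertices, and embed the result in an imaginary $d'$-regular high-girth graph $G''$ so that the Gamarnik--Goldberg and symmetry arguments still apply; then RG-LCA on $G''$ has expected probes $O(d')=O(1/\eps^3)$, a Markov truncation at $(d')^{7/3}$ gives the $O(1/\eps^7)$ worst-case bound, and the expected number of real edges lost to truncation or to imaginary-edge matches is $O((d')^{-1/3})n$. The paper also shows that each probe to $G'$ can be simulated by strong probes to $G$, and in fact the naive $d'+1$ overhead can be avoided because discovering an overfull neighbor is equivalent to walking into the imaginary part of $G''$. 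None of this sparsification machinery, which is what actually delivers the $\poly(1/\eps)$ bound, appears in your proposal.
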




\subsection{Overview of our Techniques}

In~\sectionref{sec:sample} we discuss deterministic and randomized  methods of sampling a ``parent'' for each vertex. No matter how the parents are chosen, the resulting graph is a disjoint set of directed subgraphs, each one containing exactly one cycle. The main technical content of~\sectionref{sec:sample} is the analysis of the diameter of these subgraphs, depending on the choice of parent selecting scheme.

	 In~\sectionref{sec:color} we address the problem of weak $3$-coloring. Our LCA (the proof of \theoremref{thm:weak3upper}) is based on the following approach. Given an arbitrary graph, the directed subgraphs formed by the parent relation (as above) span all vertices, and each of their connected components has at least two vertices. Hence any weak 3-coloring of every component separately induces a weak 3-coloring of the whole graph. Each component has one cycle, but it turns out that the $3$-coloring algorithm for rooted trees of Goldberg, Plotkin and Shannon \cite{GPS88} can be adapted in order to legally 3-color (and hence also weakly 3-color) such a component. When implemented as an LCA, the upper bound of $\log^* n + O(1)$ on the number of probes follows from a similar upper bound on the number of rounds of the (modified) algorithm of \cite{GPS88}.
	 To prove a nearly matching lower bound (\theoremref{thm:weak3lower}) we  use a reduction to the lower bound of Linial \cite{Linial92} for distributed algorithms that legally 6-color a cycle. Adapting lower bounds from the distributed setting to the LCA setting also involves an argument of G{\"{o}}{\"{o}}s et al.~\cite{GHLMS15}.

In \sectionref{subsec:det2} we show how to augment the algorithm  of \sectionref{sec:color} in order to reduce the number of colors to $2$, thus proving \theoremref{thm:detupper2}. It takes only one more probe to transform the weak 3-coloring to a weak $2$-coloring that is legal for all vertices except for those vertices that do not serve as parents (which we refer to as {\em leaves}). The final step involves changing the colors of (some) leaves. In order to determine whether a vertex $v$ is a leaf, we probe all of its neighbors, making the probe complexity linear in the degree of the queried vertex. 

One natural method of proving lower bounds for LCAs
is by reduction to the distributed LOCAL model, as was done in \cite{GHLMS15} (and as we do in the proof of \theoremref{thm:weak3lower}). The relationship between LCAs and distributed algorithms has been studied before (e.g., \cite{EMR14,PR07,RV16,RTVX11}) -- given a distributed algorithm to a problem that takes $t$ rounds, one immediately obtains an LCA that uses $d^{\,t}$ probes (where $d$ is the maximal degree), by probing all nodes within radius $t$. The inverse
reduction doesn't work, as an LCA may probe remote (disconnected) nodes. Consider,
for example, the following artificial problem: each vertex has to color itself blue if the node with ID~$1$ has an odd number of neighbors, and red otherwise. An LCA for this problem needs a single probe, while a distributed algorithm requires time proportional to the graph's diameter.
G{\"{o}}{\"{o}}s et al.~\cite{GHLMS15} show that for many natural problems, probing undiscovered vertices does not help. 
But even if we consider only local probes, the best lower bound we can hope for using such
a reduction is the distributed \emph{time} lower bound: a lower bound of $t$ rounds  in the distributed model implies a  lower bound of $t$ probes in the LCA model (recall that an upper bound of $t$ rounds in the distributed model implies an upper bound of $d^t$ probes!). This suggests that we may need new tools to obtain stronger lower bounds. In the proof of Theorem~\ref{thm:detlower2}, we iteratively construct families of $d$-regular graphs, where graphs in family $F_i$ (for $i \le d/2$) contain roughly $d^i$ vertices, and we show that weak 2-coloring of graphs in family $F_i$ requires at least $i$ deterministic weak probes. Every graph $G$ in family $F_i$ is composed of $d$ disjoint copies of graphs $H_1, \ldots, H_d$ from family $F_{i-1}$ and two auxiliary vertices $a_i$ and $b_i$. From each graph $H_j$ a single edge is removed, and instead one of its endpoints is connected to $a_i$ and the other to $b_i$. Intuitively, it is reasonable to expect that $a_i$ cannot decide on a color before it knows the color of at least one of its neighbors. Likewise, it is reasonable to expect that each neighbor, being a member of a graph in $F_{i-1}$, requires (by an induction hypothesis) $i-1$ probes. The combination of these two non-rigorous claims would imply that $a_i$ requires at least $i$ probes (one to determine the name of one of its neighbors, and $i-1$ probes so as to determine the color of that neighbor). Turning this informal intuition into a rigorous proof is nontrivial, and this is the main content of our proof of Theorem~\ref{thm:detlower2}.

		In~\sectionref{subsec:rand2}, we design a randomized LCA for weak $2$-coloring. A key aspect in our randomized LCA is that each vertex first chooses a random temporary color. This induces a weak 2-coloring for most vertices of the graph: each vertex whose temporary color differs from the temporary color of a neighbor can keep its color. Extending this weak 2-coloring to the remaining vertices is done by associating a parent with each vertex, with the intended goal that the color of the vertex will differ from the color of the parent (determining the color of the parent uses an inductive process). This aspect has several different implementations, leading to different probe complexities. Namely, for an arbitrary parent choice, the number of strong probes is $\Theta(\log{n})$. If we implement the arbitrary choice using weak probes, the probe complexity is $\Theta(\log^2{n})$. For a more clever randomized choice of parent, we get that the strong probe complexity is $\Theta\left(\frac{\log{n}}{\log\log{n}} \right)$, and the weak probe complexity is $\Theta\left(\frac{\log^2{n}}{\log\log{n}} \right)$. All these bounds on probe complexity hold with high probability.
		
	We note that our results do not prove a separation between the complexities of deterministic and randomized WP LCAs for weak $2$-coloring (although we conjecture that there is one), as our lower bound that is linear in the degree is proved only for regular graphs of degree at most $O\left(\frac{\log{n}}{\log\log{n}} \right)$, and our upper bound for randomized WP LCAs for weak $2$-coloring is $O\left(\frac{\log^2{n}}{\log\log{n}} \right)$.

	Randomized LCAs generally use a pseudo-random generator in order to limit the number of bits that they use, while ensuring consistency  (e.g., \cite{ARVX11,LRY15,RV16}). In order to explore the theoretical limitations of the probe complexity of LCAs, we assume that our randomized LCAs have unbounded access to random bits. Nevertheless,  we show in~\sectionref{subsec:rand2} that one can implement the randomized WP LCA for weak 2-coloring (the one using the arbitrary parent choice scheme) using a pseudo-random generator with a seed of length $O(\log{n})$.

In the proof of \theoremref{thm:lower} (\sectionref{sec:VC}) we construct a family of bipartite graphs in which a large subset of vertices have ``almost'' the same view at distance $2$.  Exactly one of these vertices, $v_0$, needs to be added to the vertex cover; however, there are many vertices for which a small number of strong probes does not suffice in order to verify that  they are not $v_0$, and hence they must add themselves to the vertex cover. In our construction we use vertex naming schemes based on low degree polynomials in order to ensure that certain vertices do not share many neighbors.

In~\sectionref{sec:match1} we give an LCA for approximate maximum matching in regular graphs. We first describe an LCA for $(1-\eps)$ matching on graphs of degree bounded by $d$, that uses at most $\left(d + \frac{1}{\eps}\right)^{O\left( \frac{1}{\eps^2}\right)}$ probes per query. (Alternatively, if we wish to have a better dependency on $\epsilon$ and are willing to have a dependency on $n$, then an LCA of Even, Medina and Ron~\cite{EMR14} has  probe complexity  $d^{O\left( \frac{1}{\eps}\right) } + O\left( \frac{1}{\eps^2}\right) \log^*{n}$.) Our LCA $\A$ is a simple variation on a randomized LCA of Yoshida, Yamamoto and Ito~\cite{YYI12}: whereas~\cite{YYI12} do not limit the number of probes used by their LCA but instead analyse and provide upper bounds on the expected number of probes used by their LCA (expectation taken both other choice of random edge and randomness of the LCA), we run essentially the same LCA, but with a strict upper bound on the number of probes. This upper bound is a factor  of $\frac{d}{\eps}$ larger than the expectation. Markov's inequality implies that this gives a $(1-\eps)$-approximation to the maximum matching in expectation.

To obtain an LCA for a $d$-regular graph $G$, if $d < \frac{1}{\eps^2}$  we use LCA $\A$. If $d > \frac{1}{\eps^2}$, we sparsify the graph: for some universal constant $c$ (independent of $\epsilon$), each edge remains in the graph with probability $\frac{c}{d\eps}$, and then all vertices that still have degree higher than $\frac{2c}{\eps}$ are removed from the graph. This results in a graph $G'$ of degree bounded by $\frac{2c}{\eps}$. Every matching in $G'$ is a matching in $G$. Moreover, we show that the expected size of a maximum matching in $G'$ (expectation taken over choice of random sparsification) is at least $(1 - \frac{\eps}{2})$ times the size of the maximum matching in $G$.  Hence it suffices to find a $(1-\frac{\eps}{2})$-approximate maximum matching in the bounded degree graph $G'$, and it will serve as a $(1 - \eps)$-approximate maximum matching in $G$.  A sufficiently large matching in $G'$ can be found by $\A$, using  $\left(\frac{1}{\eps}\right)^{O\left( \frac{1}{\eps^2}\right)}$ probes to $G'$ per query. To implement $\A$ on $G'$, but using only probes to $G$, we show that each strong probe to $G'$ can be simulated by $O(1/\eps)$ strong probes to $G$.

In~\sectionref{sec:match}, we show that we can find a $(1-\eps)$-approximation to the maximum matching for regular graphs with sufficiently high degree and girth using polynomially (in $1/\eps$) many probes. Gamarnik and Goldberg~\cite{GG10} show that the randomized  greedy algorithm finds a $(1-\eps)$ approximation to the maximum matching on regular graphs with sufficiently high degree and girth. Similarly to~\sectionref{sec:match1}, if the vertex degrees are sufficiently small (say, below $\frac{1}{\eps^3}$), we can use an LCA of~\cite{YYI12} (an implementation of the randomized greedy algorithm that uses in expectation $O(d)$ probes in graphs of maximum degree $d$), while placing a strict upper bound on the number of probes (this upper bound is a factor of $\frac{1}{\eps^{O(1)}}$ larger than the expectation). 
If the degrees are large, our approach is once again to sparsify the graph $G$ prior to using the LCA of~\cite{YYI12} (modified to have a strict upper bound on the number of probes). Unfortunately, the resulting graph $G'$ is only nearly regular but not actually regular. This requires us to extend the result of~\cite{GG10} from regular graphs to nearly regular graphs. We do so without need to refer to the proofs of~\cite{GG10}, by the following approach.
We add imaginary edges to $G'$, making it regular (while maintaining high girth). The LCA now runs on a regular graph, and hence the bounds of~\cite{GG10} apply. The new problem that arises is that the matching that is output by the LCA might contain imaginary edges. However, we prove that the expected fraction of imaginary edges in that matching is similar to their fraction within the input graph (our proof uses both the high girth assumption and the fact that the randomized greedy algorithm is local in nature). This, combined with the fact that the fraction of imaginary edges in the input graph was small (because $G'$ is nearly regular), implies that the imaginary edges can be discarded from the solution without significantly affecting the approximation ratio.

\medskip

\subsection{Related Work}
In this work, we focus on the probe complexity of LCAs; however, there are several criteria by which one can measure the performance of LCAs. Rubinfeld et al.~\cite{RTVX11} focus on the time complexity of LCAs: how long it takes to reply to a query; Alon et al.~\cite{ARVX11} emphasize the space complexity, in particular, the length of the random seed used (randomized LCAs need a global
random seed to ensure consistency). Expanding upon this, Mansour et al.~\cite{MRVX12} describe LCAs for maximal matching and other problems that use polylogarithmic (in the size of the graph; exponential in the degree) time and space when the degrees of the graph are bounded by a constant. %
Even, Medina and Ron~\cite{EMR14}, focusing on probe complexity,
give deterministic LCAs for MIS, maximal matching and $(d+1)$-coloring for graphs of degree bounded by a constant $d$, which use  $d^{O(d^2)}\log^*{n}$ probes.  Fraigniauld, Heinrich and Kosowski~\cite{FHK16} investigate local conflict coloring, a general distributed labeling problem, and use their results to improve the probe complexity of $(d+1)$-vertex coloring to approximately $d^{O(\sqrt{d})} \log^* n$ probes. Mansour, Patt-Shamir and Vardi~\cite{MPV15} introduce a unified model, that takes into account all four complexity measures: \emph{probe complexity}, \emph{time complexity} (per query) and space complexity, divided into \emph{enduring memory} (in all known LCAs, this includes only the random seed) and \emph{transient space} (the computational space used per query). They
show that it is possible to obtain LCAs for which all of these are independent of $n$ for certain problems, such as a $(1-\epsilon)$ approximation to a maximal acyclic subgraph, using $d^{O(1/\eps)}$ probes. Few papers address super-constant degrees: Levi, Rubinfeld and Yodpinyanee~\cite{LRY15} give LCAs for MIS and maximal matching for graphs of  degree $2^{O(\sqrt{\log\log n})}$, using an improvement of
Ghaffari~\cite{Ghaf16}.
Reingold and Vardi~\cite{RV16} give LCAs for MIS, maximal matching and other problems that apply to graphs that are sampled from some distribution. This limitation allows them to address graphs with higher maximal degree, as long as the average degree is $O(\log\log{n})$, and the tail of the distribution is sufficiently light. If we restrict ourselves to LCAs that use polylogarithmic time and space, the approximate maximum matching LCA of  \cite{LRY15} accommodates graphs of polylogarithmic degree. Levi, Ron and Rubinfeld~\cite{LRR14} describe
an LCA that constructs spanners using a number of probes polynomial in $d$.


There are few explicit impossibility results LCAs.  G{\"{o}}{\"{o}}s et al.~\cite{GHLMS15} show that any LCA for MIS requires $\Omega(\log^*{n})$ probes, by showing that probing vertices that have not yet been discovered is not useful. This implies that, on a ring, the number of probes that an LCA needs to make is ``roughly the same'' as the number of rounds required by a distributed LOCAL algorithm,
implying that the lower bounds of Linial~\cite{Linial92} holds for LCAs as well. Levi, Ron and Rubinfeld~\cite{LRR14} show that an LCA that determines whether an edge belongs to a sparse spanning subgraph requires $\Omega(\sqrt{n})$ probes. Feige, Mansour and Schapire~\cite{FMS15}, adapting a lower bound from the property testing literature~\cite{GR02}, show  that approximating the minimum vertex cover in bounded degree bipartite graphs within a ratio of $1 + \eps$ (for some explicit fixed $\eps > 0$) cannot be done with $o(\sqrt{n})$ probes.

 The distributed lower bound for vertex cover  by Kuhn, Moscibroda and Wattenhofer \cite{KMW04,KMW16} shows that no $k$-round distributed algorithm for Vertex Cover has an approximation ratio better than $\Omega\left( \frac{n^{c/k^2}}{k}\right) $ or $\Omega\left( \frac{d^{c'/k}}{k}\right) $ for  positive constants $c$ and $c'$. The main idea behind the lower bound of~\cite{KMW04,KMW16} is that two vertices, one of which should clearly be in the vertex cover, and the other of which should not, have the same view at any distance up to $k$. Compare this with our lower bound construction for the proof of~\theoremref{thm:lower}: by requiring that two vertices only have \emph{almost} the same view, we are able to  leverage the difference between distributed algorithms and LCAs to obtain a stronger lower bound.

Weak coloring was introduced by Naor and Stockmeyer \cite{NS95}. They show that there exists a LOCAL distributed algorithm that requires $\log^* d +O(1)$ rounds for weak $2$-coloring graphs of maximal degree $d$, assuming the degree of every vertex is odd. 
In contrast, they show that there is no constant time LOCAL algorithm for weak $c$-coloring all graphs with vertices with even degrees, for constant $c$. Our deterministic weak 2-coloring LCA (\theoremref{thm:detupper2}) uses $ \log^* n+O(d_v)$ probes, but when cast within the LOCAL model it takes $\log^* n+O(1)$ rounds (independently of $d_v$).

LCAs can be used as subroutines in approximation algorithms e.g., \cite{CRT05,PR07,NO08,YYI12}. The goal of such algorithms is to output an approximation to the size of the solution to some combinatorial problem (such as Vertex Cover, Maximum Matching, Minimum Spanning Tree), in time sublinear in the input size. In particular, if one has an LCA whose running time is $t$ that solves some problem, one can obtain an approximation to the size of the solution (with some constant probability) by executing this LCA on a sufficiently large (but constant) number of vertices $k$ chosen uniformly at random, to obtain an approximation algorithm whose running time is $kt$ (see e.g., \cite{NO08,VardiPhD} for more details).

Sampling techniques similar to the one we use in~\sectionref{sec:match1} for sparsifying regular graphs are used by Goel, Kapralov and Khanna~\cite{GKK10} -- each edge of a regular bipartite graph is kept independently with some probability, and the remaining graph still has a perfect matching w.h.p. The same authors also use an adaptive sampling technique to further reduce the running time of finding a perfect matching in a regular bipartite graph to $O(n\log{n})$~\cite{GKK13}.

The concept of LCAs is related to but should not be confused with local algorithms as in~\cite{Andersen10,AGPT16,ST13}. The difference is that these local algorithms do not require the output for different probes to satisfy a global consistency property, but rather to satisfy some local criteria. For example, the goal might be for each vertex to output a small dense subgraph that contains it, without requiring two different vertices to agree on whether they share the same dense subgraph or not.



\section{Preliminaries}
\label{sec:prelims}

We denote the set of integers $\{1,2, \ldots, n\}$ by $[n]$. All logarithms are base $2$.
Our input is a simple undirected graph $G=(V,E)$,  in which every vertex has an ID and all IDs are distinct, $|V|=n$. The neighborhood of a vertex $v$, denoted $N(v)$, is the set of vertices that share an edge with $v$: $N(v)=\{u:(u,v)\in E\}$. The \emph{degree} of a vertex $v$,  is $|N(v)|$. The \emph{girth} of $G$, denoted $\girth$ is the length of the shortest cycle in $G$.

\paragraph{LCAs.} Our definition of LCAs is slightly different from previous definitions in that it focuses on probe complexity. We do this so as not to introduce unnecessary notation. See \cite{MPV15, VardiPhD} for definitions that also take into account other complexity measures.

\begin{definition}[Probe]\label{def:probe} We assume that the input graph is represented as a two dimensional $n$ by $d+1$ array, where $d$ is the maximum degree. Rows are labeled from $1$ to $n$ by the vertex names. For any $v$, the cell $(v,0)$ specifies the degree $d_v$ of $v$, the cell $(v,j)$ for $1 \le j \le d_v$ specifies the name of the neighbor connected to $v$'s $j$th port. Cells $(v,j)$ for $d_v  < j \le d$ contain $0$. We define strong and weak probes as follows.
	\begin{itemize}
		\item A \emph{strong probe} (SP) specifies the ID of a vertex $v$; the reply to the probe is the entire row corresponding to $v$ (namely, the list of all neighbors of $v$).
		\item A \emph{weak probe} (WP) specifies a single cell $(v,j)$ and  receives its content (namely, only the $j^{th}$ neighbor of $v$).
		\end{itemize}
\end{definition}

We note that knowing that $u$ is $v$'s $i^{th}$ neighbor does not give us information regarding which one of $u$'s ports $v$ is connected to. This property is crucial for the proof of~\theoremref{thm:detlower2}.

\begin{definition}[Local computation algorithm]\label{def:lca}
	A deterministic {\em $p(n)$-probe 
		local computation algorithm} $\mcA$ for a computational problem is an
	algorithm that receives an input of size $n$. Given a query $x$,
 $\mcA$ makes at most $p(n)$ probes to the input in order to reply.
 $\mcA$ must be \emph{consistent}; that is, the algorithm's replies to all of the
	possible queries combine to a single feasible solution to the problem.

	A randomized {\em $(p(n),s(n), {\delta(n)})$-local computation algorithm} $\mcA$ differs from a deterministic one in the following aspects. Before receiving its input, it is allowed to write $s(n)$ random bits (referred to as the random {\em seed}) to memory.\footnote{In this work, we generally allow the random seed to be unbounded. In Section~\ref{subsec:rand2} we remark upon the seed length/probe tradeoff that we can achieve for the algorithm described therein.} Thereafter, it must behave like a deterministic LCA, except that when answering queries it may also 
	read and use the random seed. For any input $G$, $|G|=n$, the probability (over the choice of random seed)  that there exists a query in $G$ for which  $\mcA$ uses more than $p(n)$ probes is at most $\delta(n)$,
	which is called $\mcA$'s failure probability.
\end{definition}

When LCA $\mcA$ is given input graph $G=(V,E)$ and queried on vertex $v \in V$, we denote this by $\mcA(G,v)$.

An LCA $\mcA$ is said to \emph{require} $k$ probes on a graph $G=(V,E)$, if there is at least one query  for which $\mcA$ uses $k$ probes. We say that an LCA $\mcA$ requires $k$ probes for a family $F$ of graphs, if for some graph $G \in F$, $\mcA$ requires $k$ probes. 
\medskip

 \paragraph{Graph Problems.}
 We examine several well-studied graph problems:
 \begin{compactitem}
 	\item
 	{\em Weak $c$-Coloring:} Given a graph $G=(V,E)$ and an integer $c$, find an assignment $col:V \rightarrow [c]$, such that for each non-isolated vertex $v \in V$ there exists a vertex  $u \in N(v)$ such that $col(v) \neq col(u)$. Weak $2$-coloring can be viewed also as a \emph{domatic partition} -- partitioning the vertices of the graph into (at least) two sets such that each one of them is a dominating set. Weak 3-coloring is a natural weakening of weak 2-coloring, but no longer gives a domatic partition.
 	
 	  To avoid confusion, we refer to the problem of assigning each vertex a color from $[c]$ such that no two neighbors have the same color (usually simply called $c$-coloring) by \emph{proper} $c$-coloring. 
 \item\emph{(Minimum) Vertex Cover (VC):} Given a graph $G=(V,E)$, find a  set of vertices $S \subseteq V$ of minimum cardinality, such that for every edge $e=(u,v) \in E$, at least one of $\{u,v\}$ is in $S$.
\item   {\em Maximal Matching:} Given a graph $G=(V,E)$, find an edge set $S \in E$ such that no two edges in $S$ share a vertex, and there is no edge $e \in E \setminus S$ such that $e$ has no neighbors in $S$.
\item
 {\em Maximum Matching:} Given a graph $G=(V,E)$, find an edge set $S \in E$ of maximum cardinality such that no two edges in $S$ share a vertex.

 \end{compactitem} 
   \begin{definition}[Approximation algorithm]
  	Given a maximization  problem over graphs and a real number $0 \leq \alpha \leq 1$, a (possibly randomized) $\alpha$-approximation algorithm $\mcA$ is guaranteed, for any input graph $G$, to output a feasible solution whose expected value is at least an $\alpha$ fraction of the value of an optimal solution (in expectation over the random bits used by $\mcA$).\footnote{The definition of approximation algorithms to minimization problems is analogous, with $\alpha \geq 1$.}
  \end{definition}

\begin{definition}[Unicyclic tree, unicyclic forest]
	 A  set $S$ of vertices and $|S|$ directed edges such that each vertex has one outgoing edge is called a \emph{unicyclic tree}.  A disjoint set of unicyclic trees  is called a \emph{unicyclic forest}. 
\end{definition}
Observe that in a unicyclic tree, there is a single cycle, and there are no edges from a vertex on the cycle to a vertex not on the cycle. Furthermore, unless self-loops are allowed, there are no isolated vertices in a unicyclic  forest.

\section{Local Exploration}
\label{sec:sample}

We analyze classes of directed graphs with out-degree one that result from procedures by which local algorithms explore the neighborhood of a given vertex in a graph $G$. We assume that vertices have IDs, which might be adversarial. For randomized algorithms, we allow the algorithm to replace the original IDs by random IDs, chosen from a large enough range so that collisions are unlikely. (If collisions do occur, they can be resolved by appending the original ID to the random ID, with negligible effect on our bounds, because collisions are so unlikely.) Let $N(v)$ denote the open neighborhood of $v$ and $N^+(v)$  the closed neighborhood (including $v$ itself).

We consider several ways by which a vertex $v$ chooses an outgoing edge to a parent vertex $p(v)$. The three deterministic schemes are:

\begin{enumerate}

\item {\bf Arbitrary}. A neighbor $u \in N(v)$ is chosen arbitrarily, and becomes $p(v)$.

\item {\bf Lowest neighbor}. The vertex $u \in N(v)$ of smallest ID ($ID(u) = \min_{w \in N(v)}[ID(w)]$) becomes $p(v)$.

\item  {\bf Lowest ID}. The vertex $u \in N^+(v)$ of smallest ID becomes $p(v)$. Here, if $v$ has a lower ID that any of its neighbors then $v$ is its own parent ($p(v) = v$), and the directed graph has a self-loop.

\end{enumerate}

For randomized schemes, we have:
\begin{enumerate}
	\item  {\bf Random}. A neighbor $u \in N(v)$ is chosen uniformly at random.
	\item  {\bf Random lowest neighbor}. Same as lowest neighbor, but with random IDs.
	
	\item  {\bf Random lowest ID}. Same as lowest ID, but with random IDs.
	\end{enumerate}

We note that it is possible to implement all schemes in the strong probe model using a single probe. In the weak probe model,  the arbitrary and random schemes can be implemented using a single probe, but the rest use a number of probes that can be as large as the degree of $v$.

\begin{proposition}\label{prop:unicyclic }
In any of the above schemes, the graph decomposes into unicyclic  trees.%
\end{proposition}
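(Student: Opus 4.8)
The plan is to notice that the proposition depends on only one feature common to all six schemes: each vertex $v$ is assigned exactly one outgoing edge, to its parent $p(v)$. (For the \textbf{arbitrary}, \textbf{lowest neighbor}, \textbf{random}, and \textbf{random lowest neighbor} schemes this presupposes that $v$ has at least one neighbor, so I would restrict attention to graphs with no isolated vertices, which is anyway the only interesting case; for the \textbf{lowest ID} and \textbf{random lowest ID} schemes an isolated vertex simply receives a self-loop.) It therefore suffices to prove the scheme-independent statement: \emph{every directed graph $D=(V,A)$ in which every vertex has out-degree exactly one decomposes into unicyclic trees.} This is the classical structure theorem for functional graphs; I would give a short self-contained argument tailored to the formal definition of a unicyclic tree stated above.

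First I would show that every vertex lies on a directed path leading into a cycle: starting from any $v_0$ and iterating, the sequence $v_0, p(v_0), p^2(v_0), \dots$ must revisit a vertex since $V$ is finite, and because $p$ is a function the sequence becomes periodic from the first revisited vertex on, exhibiting a directed cycle (a self-loop when $p(w)=w$). Next I would fix a weakly connected component $C$ with vertex set $S$, $|S|=k$, and observe that every arc incident to $S$ has both endpoints in $S$ (an arc makes its endpoints weakly connected); hence the subgraph induced on $S$ has exactly $k$ arcs, one per vertex of $S$. Thus the underlying (multi)graph of $C$ is connected with $k$ vertices and $k$ edges, and so its cyclomatic number is $1$: it contains exactly one cycle, counting a self-loop as a cycle of length $1$ and a pair of antiparallel arcs $p(u)=v,\ p(v)=u$ as a cycle of length $2$.

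Finally I would verify the remaining clause of the definition — that no arc leaves the cycle. If $w_0\to w_1\to\cdots\to w_{\ell-1}\to w_0$ is the unique cycle of $C$, each arc $(w_i,w_{i+1})$ lies in $A$, and since $w_i$ has out-degree exactly one this is its \emph{only} outgoing arc, i.e. $p(w_i)=w_{i+1}$. Consequently no vertex on the cycle has an arc to a vertex off the cycle, which is exactly the condition in the definition of a unicyclic tree. Combining the three steps, each weakly connected component of $D$ is a unicyclic tree, so $D$ is a unicyclic forest; and because nothing in the argument used how $p(v)$ was selected, the conclusion holds verbatim for every one of the six parent-selection schemes.

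I do not expect a real obstacle here — the only thing needing care is the bookkeeping around degenerate cases: isolated vertices (forcing the minimum-degree-$1$ assumption for four of the schemes), self-loops (length-$1$ cycles allowed by the lowest-ID schemes), and antiparallel arcs (a length-$2$ cycle in the directed multigraph). All of these are consistent with the stated definitions, but they must be folded correctly into the edge-counting step and the "no arc leaves the cycle" step so that both remain valid.
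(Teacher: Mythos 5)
Your proof is correct and rests on the same observation the paper uses — every vertex has out-degree exactly one, so each weakly connected component has as many arcs as vertices and hence exactly one cycle. The paper's proof is the two-sentence counting argument; you flesh it out by separately verifying the periodicity and the ``no arc leaves the cycle'' clause (which the paper treats as an immediate observation following the definition), but the underlying idea is identical.
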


\begin{proof}
Every vertex has one outgoing edge. Hence the number of edges equals the number of vertices in each connected component.
\end{proof}

The following two lemmas are useful for understanding the lowest neighbor and lowest ID schemes. In the following, ``directed path'' refers to the path implied by the directed edges created by the \emph{parent} operation. 

\begin{lemma}
\label{lem:nonincrease1}
	In the lowest ID scheme, every directed path has non-increasing IDs (decreasing, except for self loops). Moreover, no neighbor (in $G$) of a path vertex $u$ has lower ID than any of the vertices reachable from $u$ via this directed path.

\end{lemma}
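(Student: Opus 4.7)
The plan is to derive both assertions directly from the defining property of the lowest ID scheme, namely that $p(u)$ is the vertex of smallest ID in $N^+(u) = N(u) \cup \{u\}$.

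For the first assertion, I would look at a single step $u \to p(u)$ of the directed path. Because $u$ itself lies in $N^+(u)$, the minimum ID in $N^+(u)$ is at most $ID(u)$, so $ID(p(u)) \le ID(u)$. The only way equality can hold is if $u$ attains the minimum, and in that case the scheme sets $p(u) = u$, producing a self-loop. Hence along any directed path the IDs are non-increasing, and a drop from $ID(v_i)$ to $ID(v_{i+1}) = ID(v_i)$ can occur only when $v_i = v_{i+1}$. Iterating this step-by-step argument over the path yields the claim.

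For the second assertion, fix a path vertex $u$ and any neighbor $w \in N(u)$ of $u$ in $G$. Since $w \in N^+(u)$ and $p(u)$ is the ID-minimizer of $N^+(u)$, we have $ID(w) \ge ID(p(u))$. Now invoke the first assertion applied to the sub-path starting at $p(u)$: every vertex $v$ reachable from $u$ via the directed path (i.e., $v = p(u), p(p(u)), \ldots$) satisfies $ID(v) \le ID(p(u))$. Chaining these two inequalities gives $ID(w) \ge ID(v)$ for every such $v$, which is exactly the stated property.

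There is no real obstacle here; the lemma is essentially an unpacking of the definition. The only point to be careful about is the self-loop case (making sure the ``decreasing'' claim is stated modulo self-loops, and that the induction on the sub-path from $p(u)$ still works when $p(u) = u$, in which case the sub-path is trivial and the second assertion reduces to $ID(w) \ge ID(u)$, which is exactly what it means for $u$ to be its own parent).
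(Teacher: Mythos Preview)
Your proof is correct and follows essentially the same approach as the paper: both arguments derive the monotonicity directly from the minimizing property of $p(u)$ over $N^+(u)$, and both prove the second assertion by the chain $ID(w) \ge ID(p(u)) \ge ID(v)$ for any neighbor $w$ of $u$ and any $v$ reachable from $u$. Your treatment is simply more explicit about the self-loop edge case.
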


\begin{proof}
Every directed path is non-decreasing from the definition of the scheme, and the fact that the IDs are unique implies that, except for self-loops, the paths are decreasing.
Consider any vertex $v$ that is a neighbor of $u$, and any vertex $w$, reachable from $u$ via the directed path. 
Then it must hold that $w \leq  p(u) \leq v$, where the first inequality is due to the first part of the lemma, and the second due to the choice of $p(u)$.
\end{proof}

\begin{lemma}
	\label{lem:nonincrease2}
	 In the lowest neighbor scheme, the vertices at odd locations (or even locations) of any directed path have non-increasing IDs. Moreover, no neighbor (in $G$) of a path vertex $u$ has lower ID than any of the vertices reachable from $u$ via the directed path at an odd distance from $u$.
 \end{lemma}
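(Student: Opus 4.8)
The plan is to mirror the structure of the proof of Lemma~\ref{lem:nonincrease1}, but to account for the fact that in the lowest-neighbor scheme a vertex points to its smallest \emph{neighbor}, so the ID may go \emph{up} when we take one step along the directed path. First I would establish the monotonicity along every second vertex. Fix a directed path $u_0 \to u_1 \to u_2 \to \cdots$, so $u_{i+1} = p(u_i)$ is the neighbor of $u_i$ of smallest ID. For any $i$, the vertex $u_{i+2}$ is a neighbor of $u_{i+1}$, and $u_i$ is also a neighbor of $u_{i+1}$ (the parent edge $u_i \to u_{i+1}$ comes from the undirected edge $\{u_i,u_{i+1}\}$ in $G$). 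Since $u_{i+2} = p(u_{i+1})$ is the smallest-ID neighbor of $u_{i+1}$, we get $ID(u_{i+2}) \le ID(u_i)$. Thus the subsequence $u_0, u_2, u_4, \ldots$ is non-increasing in ID, and likewise $u_1, u_3, u_5, \ldots$; this is the first assertion.

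Next I would prove the second assertion. Let $u = u_0$ be a path vertex, let $v$ be any neighbor of $u$ in $G$, and let $w = u_{2k}$ be a vertex reachable from $u$ at an even distance $2k \ge 0$ along the directed path (the statement's ``odd distance'' wording should be read against whichever parity convention the authors fix; the content is: the vertices whose ID is controlled are exactly those on the same side of the alternation as $u$, one step removed — I will align the indices with their convention). By the first part, $ID(w) = ID(u_{2k}) \le ID(u_2) \le ID(u_0)$ when $k \ge 1$, and for $k = 0$ we have $w = u$. It remains to compare $ID(w)$ with $ID(v)$. Since $u_1 = p(u)$ is the smallest-ID neighbor of $u$ and $v$ is a neighbor of $u$, we have $ID(u_1) \le ID(v)$. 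Now I need $ID(w) \le ID(u_1)$: but $w$ sits at an even index while $u_1$ sits at an odd index, so the monotonicity of a single subsequence does not directly connect them. The fix is to use a one-step comparison at the far end: $u_{2k+1} = p(u_{2k}) = p(w)$ is the smallest-ID neighbor of $w$, but that lower-bounds things in the wrong direction. Instead, observe $ID(w) = ID(u_{2k}) \le ID(u_0) = ID(u)$, and separately $ID(u_{2k}) \le ID(u_2) \le \cdots$; to link to $v$ I chain through $u$: since $v \in N(u)$ and $u_1 = p(u)$ is minimal among $N(u)$, $ID(u_1) \le ID(v)$; and since $u_2 = p(u_1)$ is minimal among $N(u_1)$ while $u \in N(u_1)$, $ID(u_2) \le ID(u)$. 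Hence $ID(w) \le ID(u_2) \le ID(u)$, which combined with $ID(u_1) \le ID(v)$ is not yet $ID(w) \le ID(v)$.

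So the genuine obstacle — and the step I expect to need the most care — is bridging the two interleaved subsequences: a neighbor $v$ of $u$ gives a bound through the \emph{odd}-indexed chain ($ID(u_1) \le ID(v)$), whereas $w$ lives on the \emph{even}-indexed chain. I believe the intended reading is that $w$ is reachable at an \emph{odd} distance from $u$ (distance $1, 3, 5, \ldots$), so $w = u_{2k+1}$ and the relevant bound is $ID(w) = ID(u_{2k+1}) \le ID(u_1) \le ID(v)$, using first the monotonicity of the odd subsequence (proved in the first part, since $u_1$ and $u_{2k+1}$ are both odd-indexed) and then $ID(u_1) = ID(p(u)) \le ID(v)$ because $p(u)$ is the minimum-ID neighbor of $u$ and $v \in N(u)$. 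This closes the argument cleanly, and it is consistent with the phrase ``at an odd distance from $u$'' in the statement. I would therefore write the proof with $w$ at odd distance, state the two displayed inequalities $ID(w) \le ID(u_1)$ and $ID(u_1) \le ID(v)$, and conclude $ID(w) \le ID(v)$, noting that the first-part monotonicity is exactly what makes the first of these valid.
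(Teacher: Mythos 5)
Your final argument is correct and matches the paper's: for the first part, the paper considers three consecutive path vertices $v_1 \to v_2 \to v_3$ and observes $ID(v_3) \le ID(v_1)$ because $v_3 = p(v_2)$ is the lowest-ID neighbor of $v_2$ while $v_1 \in N(v_2)$, and for the second part it states the chain $ID(w) \le ID(p(u)) \le ID(v)$, where the first inequality is odd-subsequence monotonicity and the second is minimality of $p(u)$ among $N(u)$. Your detour through the even-distance reading is unnecessary exploration, but once you settle on $w$ at odd distance you produce exactly the paper's proof.
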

\begin{proof}
	The proof of the first part of the lemma is the following: consider $3$ vertices on a path - $v_1 \rightarrow v_2 \rightarrow v_3$ (possibly $v_1=v_3$). As $p(v_2)=v_3$, $v_3\leq v_1$. 
The proof of the second part of the lemma is identical to the proof of the second part of Lemma~\ref{lem:nonincrease1}.
	\end{proof}

\begin{proposition}
For the arbitrary parent scheme, the cycle in each component may be of arbitrary length (but greater than one), for the lowest neighbor scheme the cycle is of length two, and for lowest ID scheme the cycle is of length one (self loop).
\end{proposition}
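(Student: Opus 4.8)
The statement has three parts, one for each deterministic scheme, and I would handle them in increasing order of difficulty. For the \textbf{arbitrary} scheme, the plan is to note that by Proposition~\ref{prop:unicyclic } each component is a unicyclic tree, hence contains exactly one cycle, and the cycle has length at least $2$ because the input graph $G$ is simple (no self-loops), so $p(v) \neq v$ for every $v$; to see that cycles of every length $\ge 2$ are achievable, exhibit a simple example — take $G$ to be a single cycle $v_1 - v_2 - \cdots - v_k - v_1$ and let each $v_i$ choose $v_{i+1}$ as parent, producing a directed cycle of length $k$.

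For the \textbf{lowest neighbor} scheme, I would invoke Lemma~\ref{lem:nonincrease2}: along any directed path the IDs at odd positions (and separately at even positions) are non-increasing. Now consider the unique cycle $u_1 \to u_2 \to \cdots \to u_\ell \to u_1$ in a component. If $\ell \ge 3$, then going around the cycle we would return to $u_1$ after $\ell$ steps; but the first part of Lemma~\ref{lem:nonincrease2} applied to the triple $u_{i-1} \to u_i \to u_{i+1}$ gives $ID(u_{i+1}) \le ID(u_{i-1})$, and iterating around the cycle forces a strict decrease somewhere (since all IDs are distinct and $\ell \ge 3$ means $u_1 \neq u_3$), contradicting the return to $u_1$. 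Hence $\ell = 2$. One should also check $\ell \neq 1$: a self-loop $p(u) = u$ would require $u \in N(u)$, impossible in a simple graph. Finally, $\ell = 2$ is realized: a single edge $\{u, w\}$ gives $p(u) = w$, $p(w) = u$.

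For the \textbf{lowest ID} scheme, I would use Lemma~\ref{lem:nonincrease1}: every directed path has non-increasing IDs, strictly decreasing except for self-loops. A cycle of length $\ell \ge 2$ would be a directed path $u_1 \to \cdots \to u_\ell \to u_1$ of distinct vertices; strict decrease along it gives $ID(u_1) > ID(u_2) > \cdots > ID(u_\ell) > ID(u_1)$, a contradiction. So the only possible cycle is a self-loop, $\ell = 1$, which occurs precisely when $v$ has the smallest ID in $N^+(v)$; such a $v$ always exists (the global minimum-ID vertex), so self-loops do indeed occur.

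The main obstacle, such as it is, is the cycle-elimination argument for the lowest neighbor scheme: one must be careful that the parity argument in Lemma~\ref{lem:nonincrease2} is applied correctly when traversing the cycle — in particular, when $\ell$ is odd, the "odd position" and "even position" chains get interleaved as one wraps around, and the cleanest way to derive the contradiction is simply to iterate the three-vertex inequality $ID(u_{i+1}) \le ID(u_{i-1})$ around the cycle and observe that it yields $ID(u_1) \le ID(u_1)$ with at least one inequality strict (because the vertices are distinct), which is impossible. Everything else is a direct appeal to the already-established lemmas plus a one-line example for realizability.
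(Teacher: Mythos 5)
Your proof is correct, but it reaches the conclusions by a different route than the paper. For the lowest-neighbor and lowest-ID schemes, the paper's proof is constructive and very short: it looks at the vertex $u$ of minimum ID in the connected component, observes that for the lowest-ID scheme $p(u) = u$ (giving a self-loop), and that for the lowest-neighbor scheme if $w = p(u)$ then $p(w) = u$ (giving a $2$-cycle), and then invokes Proposition~\ref{prop:unicyclic } to conclude that this must be \emph{the} unique cycle of the component. Your argument instead \emph{excludes} the other cycle lengths by contradiction, iterating the monotonicity inequalities of Lemma~\ref{lem:nonincrease1} and Lemma~\ref{lem:nonincrease2} around a hypothetical cycle of the wrong length. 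Both are valid; the paper's is shorter and does not actually need the monotonicity lemmas, whereas yours shows how to extract the result ``mechanically'' from the lemmas once they are in place. You are right that the parity issue in the lowest-neighbor case is the only delicate spot, and your treatment of it --- stepping by two around the cycle and noting that at least one inequality $ID(u_{i+1}) \le ID(u_{i-1})$ must be strict when $\ell \ge 3$ --- is correct (for $\ell$ even the chain of odd-indexed vertices closes after $\ell/2$ steps, for $\ell$ odd it closes after $\ell$ steps, and in either case it passes through some vertex $\ne u_1$). The extra realizability examples you supply (a $k$-cycle for the arbitrary scheme, a single edge for the lowest-neighbor scheme, the global minimum for the lowest-ID scheme) are not in the paper's proof but are harmless and confirm that the claimed lengths actually occur.
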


\begin{proof}
For the arbitrary and lowest neighbor schemes, as a vertex cannot choose itself as its parent, the cycle must be of length at least two. In the lowest neighbor scheme, the vertex with the smallest ID in the connected component, $u$ has an outgoing edge to some vertex $w$, who in turn must have an outgoing edge to $u$ (otherwise $u$ would not have the lowest ID in the  connected component).  In the lowest ID scheme, the cycle must be of length one, as it is a self loop at the vertex with the lowest ID in the connected component.
\end{proof}

We refer to the vertex of lowest ID on the unique cycle of a component as the {\em root} of the component. (When random IDs are used, the root can be the cycle vertex of either lowest original ID or lowest random ID -- both options work.)

We analyze how long it takes a vertex $v$ to discover the root of its own component by the following procedure: follow the directed path that starts at $v$ until a cycle is completed. We refer to this as the longest directed path from $v$.

\begin{proposition}
\label{pro:diameter2}
There are graphs of diameter~$2$ for which the longest directed path in the lowest ID scheme is of length $n-2$.
\end{proposition}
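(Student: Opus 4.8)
The plan is to exhibit an explicit family of graphs of diameter $2$ in which the lowest-ID parent scheme produces a long directed path. The natural candidate is a \emph{star-like} construction: take a center vertex $c$ and attach all other vertices to $c$, so the graph has diameter $2$ (any two leaves are at distance $2$ via $c$). To make the directed path long, I would assign IDs so that $c$ has the smallest ID overall, so that $p(c)=c$ (self-loop, $c$ is the root), and every leaf has $c$ as its unique neighbor of smallest ID among its closed neighborhood only if $c$ is indeed its lowest-ID neighbor — but in a pure star, every leaf points directly to $c$, giving directed paths of length $1$, not $n-2$. So a pure star is not enough; I need to add a few more edges to create a long chain while keeping diameter $2$.

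The refined construction I would use: let the vertices be $v_1, v_2, \ldots, v_n$ with $ID(v_i)=i$, so $v_1$ has the globally smallest ID. Connect $v_1$ to every other vertex (this alone guarantees diameter $\le 2$). Additionally, add the path edges $(v_2,v_3),(v_3,v_4),\ldots,(v_{n-1},v_n)$. Now examine the lowest-ID scheme. Vertex $v_1$ is its own parent. Vertex $v_2$'s neighbors are $v_1$ and $v_3$; its lowest-ID neighbor is $v_1$, so $p(v_2)=v_1$. For $i\ge 3$, vertex $v_i$'s neighbors are $v_1$, $v_{i-1}$, and $v_{i+1}$ (the last only if $i<n$); the lowest-ID neighbor is $v_1$, so again $p(v_i)=v_1$. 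That again gives short paths. The issue is that $v_1$'s universal adjacency short-circuits everything. So instead I should make $v_1$ adjacent only to enough vertices to guarantee diameter $2$ without being everyone's lowest-ID neighbor: have $v_1$ adjacent to $v_n$ only (for connectivity of the chain's far end to the root region) — but then diameter blows up along the chain.

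The correct idea, then, is to have a \emph{second} high-degree vertex. Let me use two ``hubs'': put $ID(v_1)=1$ and $ID(v_2)=2$, make $v_1$ adjacent to all of $v_3,\ldots,v_n$ and also to $v_2$, so diameter is $2$ (everything is within distance $1$ of $v_1$). Now add the chain $v_2 \to v_3 \to \cdots \to v_n$ via edges $(v_i,v_{i+1})$ for $2\le i\le n-1$, but \emph{remove} from $v_1$ the edges to $v_3,\dots,v_{n-1}$? No — that destroys diameter $2$. Let me reconsider: I want each $v_i$ (for $i\ge 3$) to have its lowest-ID neighbor be $v_{i-1}$, not $v_1$, which forces $v_i \not\sim v_1$; yet I still need $v_i$ within distance $2$ of everything. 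The resolution: make $v_i$ adjacent to $v_2$ for all $i\ge 3$ (so $v_2$ is the universal hub giving diameter $2$), give $v_2$ the \emph{largest} ID $n$, give $v_1$ ID $1$ and put $v_1$ on the far end of a chain $v_1 \to v_3 \to v_4 \to \cdots$, with the chain edges $(v_1,v_3),(v_3,v_4),\ldots,(v_{n-1},v_2)$ renamed appropriately so IDs decrease along the chain toward $v_1$. Concretely: vertices $u_0=v_1$ (the low-ID root), $u_1,\ldots,u_{n-2}$ forming a path $u_0 - u_1 - u_2 - \cdots - u_{n-2}$ with $ID(u_j)=j+1$, plus a hub $h$ with $ID(h)=$ something, adjacent to all $u_j$. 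For diameter $2$, take $h$'s ID to be, say, $n$ (largest) but adjacency to all; then the lowest-ID neighbor of $u_j$ ($j\ge 1$) among $\{u_{j-1},u_{j+1},h\}$ is $u_{j-1}$ since $ID(u_{j-1})=j < j+2, n$. Hence $p(u_j)=u_{j-1}$, and the directed path from $u_{n-2}$ is $u_{n-2}\to u_{n-3}\to\cdots\to u_0$, of length $n-2$, with $u_0$ self-looping ($u_0$'s only neighbors are $u_1$ and $h$, both larger, so $p(u_0)=u_0$). Diameter is $2$ because every $u_j$ is adjacent to $h$. I would verify: (i) the graph is simple and connected; (ii) its diameter is exactly $2$ (not $1$, since e.g. $u_0 \not\sim u_2$, and $\le 2$ via $h$); (iii) the parent computation is as claimed; (iv) the longest directed path from $u_{n-2}$ has length $n-2$.

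The only subtle point — and the main thing to get right — is the bookkeeping of IDs versus adjacencies: I must ensure simultaneously that the hub $h$ does \emph{not} become the parent of any chain vertex (so $h$'s ID must exceed $ID(u_{j-1})$ for the relevant $j$, which is automatic if $ID(h)=n$), that the path IDs are strictly increasing away from $u_0$ so each points back toward $u_0$, and that $u_0$ indeed has the smallest ID in its component so it self-loops. All of these are satisfied by the assignment $ID(u_j)=j+1$ for $0\le j\le n-2$ and $ID(h)=n$... wait, that double-books the ID $n$ if $n-2+1 = n-1$, so $ID(u_{n-2})=n-1$ and $ID(h)=n$ are distinct — good. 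This is a routine verification once the construction is written down; no real obstacle remains beyond stating it cleanly.

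\begin{proof}
Consider the following graph on $n$ vertices. Let $u_0, u_1, \ldots, u_{n-2}$ be $n-1$ vertices forming a simple path, with edges $(u_{j-1}, u_j)$ for $1 \le j \le n-2$, and let $h$ be one additional vertex adjacent to all of $u_0, u_1, \ldots, u_{n-2}$. Assign $ID(u_j) = j+1$ for $0 \le j \le n-2$, and $ID(h) = n$. The graph is simple and connected, and it has diameter $2$: every $u_j$ is within distance $1$ of $h$, so any two vertices are within distance $2$; and the diameter is not $1$ since, e.g., $u_0$ and $u_2$ are non-adjacent.

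Now apply the lowest ID scheme. The vertex $u_0$ has smallest ID in the graph; its neighbors are $u_1$ and $h$, both with larger ID, so $p(u_0) = u_0$ (a self loop), and $u_0$ is the root of the component. For $1 \le j \le n-2$, the neighbors of $u_j$ are $u_{j-1}$, $h$, and (if $j < n-2$) $u_{j+1}$, with IDs $j$, $n$, and $j+2$ respectively; the smallest is $ID(u_{j-1}) = j < ID(u_j) = j+1$, so $p(u_j) = u_{j-1}$. Finally $p(h)$ is some $u_j$, but this does not affect the path below.

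Therefore the longest directed path starting at $u_{n-2}$ is $u_{n-2} \to u_{n-3} \to \cdots \to u_1 \to u_0$, which reaches the root $u_0$ and then completes its cycle via the self loop at $u_0$. This path has length $n-2$.
\end{proof}
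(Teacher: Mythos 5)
Your final construction is exactly the paper's: a path on $n-1$ vertices with IDs $1,\dots,n-1$ in order, plus a universal hub of ID $n$ to bring the diameter down to $2$. The proof is correct and takes the same approach.
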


\begin{proof}
Let $G$ be a graph composed of a path with $n-1$ vertices with IDs from $n-1$ to~1. The longest directed path is then of length $n-2$. To make the diameter equal to~2, add one vertex with ID $n$ connected to all other vertices.
\end{proof}

\begin{proposition} There are graphs for which  the longest directed path in the random lowest ID scheme has length $\Omega(\log n)$ with constant probability, where the probability is taken over the choice of random IDs.
\end{proposition}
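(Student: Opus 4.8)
The plan is to exhibit an explicit family of graphs together with a random ID assignment under which the longest directed path from some distinguished vertex has length $\Omega(\log n)$ with constant probability. The most natural candidate is a complete graph (or a complete bipartite graph, or any graph where many vertices share neighbors), since in such a graph the parent of a vertex $v$ under the random lowest ID scheme is the globally lowest-ID vertex among $N^+(v)$, and we can track how the ``current minimum'' decreases as we walk along the directed path. Concretely, I would take $G = K_n$ (or perhaps $K_{n/2,n/2}$ to keep the diameter behaviour clean, though the proposition as stated does not constrain the diameter). Assign each vertex a uniformly random real in $[0,1]$ as its random ID (equivalently a uniformly random permutation of $[n]$); ties have probability zero so we may ignore them.

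The key structural observation is this: fix any vertex $v_0$ and follow the directed path $v_0 \to v_1 \to v_2 \to \cdots$. In $K_n$ the path collapses after one step (everyone points at the global minimum), so $K_n$ itself is the wrong example — I need a graph where the neighbourhoods are large but not everything. A good choice is a graph built as a ``blow-up of a path'' or, more simply, a graph where vertex $v$'s neighbourhood excludes exactly the vertices already on the path. The cleanest construction I would use: let the vertices be $1,\dots,n$ with a random permutation $\pi$ giving the random IDs, and let $G$ be such that the lowest-ID neighbour structure mimics the classical ``record minima'' process. In fact the textbook fact is that if you repeatedly take the minimum of a shrinking random set where at each step you remove a constant fraction, you get $\Theta(\log n)$ steps; and the number of left-to-right minima of a random permutation is $\Theta(\log n)$ in expectation and $\Omega(\log n)$ with constant probability by a second-moment or Chernoff-type argument (the indicators $X_i = \mathbf{1}[\pi(i) = \min\{\pi(1),\dots,\pi(i)\}]$ are independent with $\Pr[X_i]=1/i$, so $\sum X_i$ concentrates around $H_n \approx \ln n$).

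So the concrete plan: build a graph on ordered vertices $u_1, \dots, u_n$ where $u_i$'s only neighbours are $u_1, \dots, u_{i-1}$ (a transitive tournament-like DAG made undirected — i.e.\ just the complete graph, but we exploit the order). Under the random lowest-ID scheme with random IDs $\pi$, I would argue that the directed path started at $u_n$ visits exactly the vertices $u_i$ that are left-to-right minima of the sequence $\pi(u_n), \pi(u_{n-1}), \dots$ — wait, I need to be careful about direction. Better: use a graph $G$ on $V = \{0,1,\dots,n-1\}$ where $i \sim j$ iff one is the ``predecessor set'' of the other; the honest way is to make $G$ a path $P_n$ with a clever ID assignment, but the proposition explicitly says random IDs, so I instead want a fixed graph and random IDs. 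The right fixed graph is the \emph{half-graph}: $V = \{a_1,\dots,a_n\} \cup \{b_1,\dots,b_n\}$ with $a_i \sim b_j$ iff $i \le j$. Then $N^+(b_j) = \{b_j\} \cup \{a_1,\dots,a_j\}$, and $p(b_j)$ is the minimum-ID element there. Following the path from $b_n$: each step moves to an $a$-vertex or down to some $b$-vertex of smaller index, and the index sequence is a running-minimum process over random values. I would formalize that the number of distinct $b$-indices visited stochastically dominates the number of left-to-right minima of a random permutation of $[n]$, which is $\Omega(\log n)$ with constant probability.

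The main obstacle is getting the graph definition exactly right so that (a) the directed path genuinely has length $\Omega(\log n)$, not $O(1)$ — this requires that neighbourhoods be large but \emph{nested/varying}, not all equal, and (b) the reduction to the left-to-right-minima process is clean enough to state without a page of case analysis. Once the half-graph (or a similar ``nested neighbourhoods'' construction) is fixed, the probabilistic core is entirely standard: by linearity of expectation the expected number of record-minima among a random permutation is $H_n$, and since the record indicators are independent Bernoulli$(1/i)$ variables, Chebyshev (variance $\sim H_n$, so standard deviation $o(H_n)$) gives $\sum X_i \ge \frac12 H_n = \Omega(\log n)$ with probability bounded below by a constant. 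Each record-minimum corresponds to a distinct vertex on the directed path, so the path length is $\Omega(\log n)$ with constant probability, completing the proof.
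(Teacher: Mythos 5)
There is a genuine gap: the half-graph construction you propose does not produce a long directed path under the random lowest-ID scheme, and the ``reduction to left-to-right minima'' does not hold. Concretely, in your half-graph the directed path starting at $b_n$ terminates after at most two steps with probability $1-o(1)$. The first step goes to $a_{i_1}=\arg\min\{b_n,a_1,\dots,a_n\}$; since $a_{i_1}$ is now the minimum of \emph{all} $a$-vertices, any subsequent step $a_{i_1}\to b_{j_1}$ (with $j_1\ge i_1$) lands on a vertex whose entire neighborhood $N(b_{j_1})=\{a_1,\dots,a_{j_1}\}$ has minimum exactly $a_{i_1}$, which is already known to exceed $b_{j_1}$. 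So $b_{j_1}$ is a self-loop and the path stops. Symmetrically, the path from $a_1$ also has length at most two. This is precisely the danger captured by the second part of Lemma~\ref{lem:nonincrease1}: in the lowest-ID scheme, once the path visits a vertex that was chosen as the minimum of a large, spanning set, its downstream neighborhoods are dominated by vertices already compared and ruled out. Graphs with globally nested neighborhoods ($K_n$, $K_{n/2,n/2}$, the half-graph) all collapse in $O(1)$ steps for this reason, so there is no domination of the record-minimum process to appeal to; the number of records among $n$ i.i.d.\ values is indeed $\Theta(\log n)$ with high probability, but the directed path simply does not track those records.

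The paper's construction (Section~\ref{sec:sample}) sidesteps exactly this obstacle and is structured quite differently from what you are attempting. It uses a \emph{layered} graph $H$ with layers of sizes $2^0,2^1,\dots,2^L$ in which each layer is joined only to its neighboring layers; this prevents the ``all competitors already compared'' collapse, because at each step roughly half of $N^+$ of the current vertex is fresh. Under the event that, for every $i$, the minimum ID among layers $\{0,\dots,i\}$ lies in layer $i$, the directed path from the unique layer-$0$ vertex climbs all $L$ layers. The probability of this event is only about $2^{-L}\approx 2/\sqrt{n}$, so a single component does \emph{not} have a long path with constant probability. Instead, the graph $G$ is taken to be $\sqrt{n}$ disjoint copies of $H$, and a union bound over these independent components pushes the success probability up to $1-e^{-2}$. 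This last point marks the philosophical divergence from your plan: you are trying to build one connected graph in which a fixed start vertex has a $\Omega(\log n)$-length path with constant probability, whereas the paper accepts an inverse-polynomial per-component probability and amplifies by taking many independent copies. If you want to repair your proof along your own lines, you would need a construction where fresh small-ID candidates keep appearing at every step (layered or tree-like, not ``nested''), and you would almost certainly still need to amplify over many disjoint components to reach constant probability.
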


\begin{proof}

Let $n$ be such that $\sqrt{n}+1$ is a power of $2$. Consider a graph $G$ composed of $\sqrt{n}$ disjoint copies of the following graph $H$. The graph $H$ is composed of a sequence of layers, with $2^i$ vertices in layer $i$, for $0 \le i \leq  \log\left(\frac{\sqrt{n}}{2}+\frac{1}{2}\right) = L$; every two successive layers comprise a complete bipartite graph. In any such $H$, the lowest random ID vertex has probability (greater than) $1/2$ of being in layer $L$. Conditioned on this, the probability that the lowest random ID vertex in layers $\{0,\ldots, L-1\}$ is at level $L-1$ is (greater than) $1/2$, and so on. Hence, the probability that the directed path from the root has length $L$ is at least $\frac{1}{2^L} \approx \frac{2}{\sqrt{n}}$. By the union bound, the probability that the length of the longest directed path in $G$ is at most $L-1$ is at most $\left( 1 - \frac{2}{\sqrt{n}}\right) ^{\sqrt{n}} \approx e^{-2}.$
\end{proof}

\begin{thm}
For any graph $G$,  the longest directed path has length at most $O(\log n)$  with high probability (over the random choice of IDs) in the random lowest neighbor scheme. Moreover, if $G$ is of bounded degree $d$, then with high probability the longest directed path has length at most $O_d\left( \frac{\log n}{\log\log n}\right) $ ($O_d$ signifies that the constant in the $O$ notation depends on $d$).
\end{thm}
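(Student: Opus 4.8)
The plan is to bound the length of the longest directed path by a union bound over all vertices $v$, showing that for each $v$, the probability (over random IDs) that the directed path starting at $v$ has length exceeding $\Theta(\log n)$ is $o(1/n)$. Fix a target vertex $v$ and condition on the event that the directed path from $v$ first reaches the cycle root after $\ell$ steps. By Lemma~\ref{lem:nonincrease2}, the IDs along the path are non-increasing on odd positions (and separately on even positions), and moreover every neighbor of a path vertex $u$ has ID at least as large as any vertex reachable from $u$ at odd distance. The key observation is that at each step, the parent $p(u)$ is the \emph{minimum}-ID neighbor of $u$. So I want to argue that with constant probability the minimum-ID neighbor of $u$ is ``substantially'' smaller than the relevant comparison value, forcing the sequence of thresholds to shrink geometrically, hence reach the global minimum of the component within $O(\log n)$ steps with overwhelming probability.

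To make this precise, the cleanest route is to expose the random IDs lazily along the path. Consider the directed path $v = u_0 \to u_1 \to u_2 \to \cdots$. I will track, for each even index $2k$, the rank of $\mathrm{ID}(u_{2k})$ among the set of IDs of all vertices that could still appear further along the path (a subset determined by the structure but whose size I can control). Because $u_{2k+2} = p(u_{2k+1})$ is the minimum over $N(u_{2k+1}) \supseteq \{u_{2k}\}$ (recall $u_{2k+1}$'s parent is $u_{2k+2}$ but $u_{2k}$ is also a neighbor of $u_{2k+1}$), and $u_{2k}$ itself is one of the candidates, a martingale/geometric-shrinkage argument shows that the rank of the running minimum drops by a constant factor in expectation every two steps, with the shrinkage being a genuine random variable bounded below. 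Formally: if at step $2k$ the current path-minimum has rank $r$ among $m$ candidate IDs, then after two more steps the rank is dominated by $r$ times a $\mathrm{Uniform}[0,1]$-like factor (since the new minimum is the min of the old one and at least one fresh independent ID). Iterating, after $t$ pairs of steps the log of the rank has dropped by $\sum$ of $t$ independent sub-exponential decrements, so by a Chernoff-type bound, after $t = O(\log n)$ pairs the rank is $1$ with probability $1 - n^{-\omega(1)}$, meaning the root has been reached. A union bound over the $n$ choices of $v$ gives the first claim.

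For the bounded-degree refinement to $O_d(\log n / \log\log n)$: when the degree is bounded by $d$, each vertex $u_{2k+1}$ has at most $d$ neighbors, so the parent $u_{2k+2}$ is the minimum of at most $d$ IDs, one of which is $u_{2k}$. This means that conditioned on $u_{2k}$ being among these $d$ candidates, $u_{2k+2}$ has at least a $1 - 1/d$... wait --- rather, the improvement comes from the fact that the \emph{number of vertices within distance $\ell$} of $v$ is at most $d^{\ell+1}$, so if the path has length $\ell$ it is confined to this small set, and I only need the rank-shrinkage argument to drive the rank from $d^{\ell+1}$ down to $1$; since each pair of steps multiplies the rank by a factor whose expectation is bounded away from $1$ but which can be as large as $1 - 1/d$... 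The correct argument: the path stays within the ball $B_\ell(v)$ of size $\le d^{\ell+1}$; for the path to have length $\ell$ we need $\ell$ successive ``failures'' to reach the minimum of the ball; each such event, conditioned on the past, holds with probability at most roughly $1 - 1/d^{O(1)}$, but we get a stronger per-step gain by noting that reaching a record-low forces the next minimum-neighbor step to jump, and the probability that $\ell$ consecutive steps all fail to halve the candidate count is, by the lazy-exposure argument, at most $2^{-\Omega(\ell)}$ against a pool of size $d^{\ell}$ --- balancing $2^{-c\ell}$ against $d^{\ell+1}$ (union-bounding over the $\le d^{\ell+1}$ vertices of the ball, or rather over all $n$ vertices) forces $\ell = O(\log n / \log d)$; combining with $d$-dependence in the Chernoff constant and optimizing yields $\ell = O_d(\log n/\log\log n)$ when the relevant quantities are tuned (the $\log\log n$ appears because the effective base of the exponential improvement, after accounting for the at-most-$d$ branching, is $\log d$ versus the $\log n$ total, and the sharper bound uses that only $O(\log n/\log\log n)$ levels are needed before the ball exhausts all $n$ vertices).

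The main obstacle I expect is making the ``rank shrinks geometrically'' step fully rigorous despite the dependencies: the vertices appearing along the path are not independent of the IDs (the path is \emph{defined} by the IDs), so a naive principle of deferred decisions is delicate. The fix is to set up the exposure so that at each step we reveal only the ID of the \emph{next} path vertex and the IDs of its neighbors needed to determine the following step, and to carefully argue that the set of ``not-yet-relevant'' vertices still has enough un-exposed, exchangeable randomness to guarantee the shrinkage --- essentially an argument that the minimum along the path behaves like the running minimum of a sequence of fresh uniform samples, up to conditioning that only helps. Getting the bounded-degree constant dependence right (the precise $O_d$) will also require care in how the branching factor $d$ trades off against the number of levels.
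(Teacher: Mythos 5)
Your argument for the general $O(\log n)$ bound is essentially the paper's argument: fix a starting vertex, look at the IDs at even positions along the directed path, expose the randomness lazily so that the ID at the next even position is (in expectation) at most half the current one, apply a Chernoff-type concentration to get $2\log n$ halvings within $O(\log n)$ steps, and union bound over the $n$ starting vertices. The ``deferred decisions'' concern you flag is exactly what the paper handles by revealing IDs only as vertices are first encountered, and your framing in terms of ranks is an immaterial reparametrization of the same shrinkage claim.

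For the bounded-degree refinement there is a genuine gap. Your proposed bound balances a per-path failure probability of roughly $2^{-c\ell}$ against a pool of roughly $d^{\ell+1}$ candidate paths (or, in the other variant you float, a per-step failure probability of $1 - 1/d^{O(1)}$ and a ball of size $d^{\ell+1}$). Either way, setting $2^{-c\ell} \cdot n d^{\ell} < 1/\mathrm{poly}(n)$ forces $\ell(c - \log d) \gtrsim \log n$, which for constant $d$ gives $\ell = O(\log n)$ --- no better than the first part. The $\log n / \log\log n$ improvement does not come from tuning constants in a Chernoff bound or from the ball exhausting the graph (the ball exhausts all vertices after $\log n/\log d$ levels, not $\log n/\log\log n$). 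The missing observation is \emph{combinatorial}, not concentration-based: by Lemma~\ref{lem:nonincrease2} the IDs at the even positions of a directed path of length $2k$ form a \emph{strictly decreasing} sequence of $k$ i.i.d. uniforms, and the probability of that event is exactly $1/k!$, which is super-exponentially small. Union-bounding $1/k!$ against the at most $nd^{2k}$ candidate paths of length $2k$ gives, via Stirling, a bound of $n\left(\frac{(de)^2}{k}\right)^k$; this is $o(1)$ precisely when $k = \Omega_d(\log n/\log\log n)$, which is where the $\log\log n$ comes from. Your halving argument captures only the weaker fact that each even-position ID is likely at most half its predecessor; it throws away the stronger deterministic fact that they are \emph{always} strictly decreasing, and it is exactly that $1/k!$ rather than $2^{-k}$ that buys the extra $\log\log n$ factor.
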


\begin{proof}
	Denote the random ID of vertex $v$ by $r(v)$. 
	Assume w.l.o.g. that the random ID are drawn uniformly at random from $[0,1]$. We show that with high probability, any directed path that has length $c \log{n}$ has a vertex with ID at most $\frac{1}{n^2}$. This suffices, as the probability that any vertex has ID less than $n^{-2}$ is at most $n^{-1}$, by the union bound.  Label the even vertices on a directed path that begins at $v_1$ (i.e., $v_1$ has no incoming edges) by $v_1,\ldots, v_{\ell}$, where $2\ell$ is the length of the path (we assume the path has even length for simplicity).
	Let $X_i$ be the random variable denoting the ID of $v_i$.
	If $p(v_i)$ has $j$ neighbors whose IDs are lower than $r(v_i)$, the expected value of $X_{i+1}$ is $v_i/(j+1)$, as the IDs are chosen uniformly at random, and the marginal distribution of the IDs is uniform over $[0, v_i]$; hence $$\expect[X_{i+1}|X_i=x] \leq x/2.$$ 
	Let $Y_i$ denote the event that $X_{i+1}> X_i/2$. Notice that these events are independent, conditioned on there being a path of length $c \log{n}$ (it is easy to see that they are independent if one thinks of computing the random IDs of vertices only when they are encountered). By the Chernoff bound,
	
	$$\Pr\left[ \sum_i Y_i<\frac{c}{4} \log{n}\right]  \leq e^{\frac{-c\log{n}}{16}}.$$

Hence, setting $c$ to be large enough, it holds that with probability at least $1-1/n^2$, there will be at least $2\log{n}$ indices $i$ such that $X_{i+1}<X_i/2$, hence $X_{\ell}$ is at most $n^{-2}$. Taking a union bound over all possible paths (there are at most $n$ such paths, one from each vertex), completes the proof of the first part.

For graphs of maximum degree $d$, there are at most $nd^{2k}$ paths of length $2k$. The probability of a decreasing sequence in the even locations is $1/k!$.
A simple union bound gives that the probability that there is at least one directed $p$-path of length exactly $k$ is at most (by Stirling's inequality)
$$\frac{nd^{2k}}{k!} \leq 
n \left( \frac{(de)^2}{k}\right) ^k.$$
As every path of length at least $k$ contains a path of length exactly $k$, this bound holds for paths of length at least $k$. 
Setting $k = O\left( \frac{\log{n}}{\log{\log{n}}}\right) $ completes the proof.
\end{proof}


\section{Weak $3$-Coloring}
\label{sec:color}



An LCA for weak $c$-coloring receives as a query a vertex $v$ and is required to return $v$'s color in a legal weak $c$-coloring. In this section,  we describe a deterministic WP LCA for weak $3$-coloring that uses $\log^*{n}+O(1)$ probes, independent of the maximal degree. In~\sectionref{subsec:det2} we use this LCA as a subroutine in a  deterministic  LCAs for weak $2$-coloring, that uses $\log^*{n} +2d_v+O(1)$ probes, where $d_v$ is the degree of the queried vertex.


Recall the  proper  $3$-coloring algorithm of Goldberg, Plotkin and Shannon \cite{GPS88} (which we denote by GPS) for rooted forests. The GPS algorithm has two parts; in the first, a proper $6$-coloring is found in $\log^*{n}+O(1)$ rounds, (using the symmetry breaking technique of Cole and Vishkin~\cite{CV86}), and in the second, the number of colors is reduced to $3$ in three more rounds. Let $\T(G)$ be the number of rounds by which the GPS algorithm is guaranteed to terminate for graph $G$.  Set $\T_n = \max\{\T(G):|G|=n\}$; that is, $\T_n$ is the maximal round complexity of the GPS algorithm on graphs of size $n$. By \cite{GPS88},  $\T_n=\log^*{n}+O(1)$.\footnote{In~\cite{GPS88} the bound is stated as $O(\log^*{n})$. It is easy to verify (see also~\cite{BE13}) that $\T_n=\log^*{n}+O(1)$.}

We would like to create a directed spanning forest that does not have isolated vertices, and simulate the GPS algorithm on it. This would guarantee a proper $3$-coloring of the forest, implying a weak $3$-coloring of the graph. Unfortunately, it is not clear how to build such a forest using $O(\log^*{n})$ weak probes.\footnote{With strong probes, we could use the smallest neighbor scheme (the smaller ID vertex of the $2$-cycle would be the root).} Instead, we do the following: Each vertex chooses an arbitrary parent, creating a spanning unicyclic  forest (see \propref{prop:unicyclic  }). We then simulate a modified version of the GPS algorithm on this graph, which we call MGPS. The modifications are minor - first,  as unicyclic  forests do not have roots, we do not concern ourselves with them. The second is simply a change in the stopping condition - the GPS algorithm stops when the number of colors is $6$; as this is a global property that cannot be verified locally, we  run the algorithm for $T_n$ rounds. One can verify that  the proofs of \cite{GPS88} hold for MGPS on unicyclic  forests; see~\appendixref{app:goldberg} for more details.

Our weak $3$-coloring LCA is the following (\algorithmref{alg:lca3col}). Given a graph $G$, we first probe the graph in order to discover  a (subgraph of a) spanning unicyclic  forest:
 For each vertex $v$, let $\parent(v)$ denote an arbitrary neighbor of $v$. 
 Our subgraph is $v$ and all of its ancestors up to distance  (at most) $\T(G)+1$. 
 The MGPS algorithm is then simulated for $v$ (we expand on how below), and the color of $v$ is returned.
 

\vspace{10pt}

\begin{algorithm}[h]
	\caption{\textsc{Weak $3$-coloring WP LCA}}\label{alg:lca3col}
	\SetKwInOut{Input}{Input}\SetKwInOut{Output}{Output}\SetKwInOut{Inquiry}{Inquiry}
	
	\Input{$G=(V,E)$, query $v$}
	\Output{a color}
	\BlankLine
	
	$G'=(V',E')$, $V'=\emptyset$, $E'=\emptyset$\;
	$dist=1$\;
	$u=v$\;
	$V' \leftarrow v$\;
	\While{$dist \leq \T_n+1$}
	{
			\tcc{The following can be replaced with 
				any arbitrary choice of parent}
		Let $\parent(u)$ be $u$'s neighbor at port $1$\;
			$E' \leftarrow (u,\parent(u))$\;
		\If{$\parent(u)  \in V'$}
		{
			Break\;}
		$V' \leftarrow \parent(u)$\;
	
		$u=\parent(u)$\;
		$dist=dist+1$\;
	}
		$i=1$\;
		\While{$i\leq \T_n$}{
		Execute round $i$ of the MGPS algorithm on all vertices at distance at most $\T_n-i$ from $v$ in $V'$\;
		$i=i+1$\;}
		Return $v$'s color.
	
\end{algorithm}

\begin{claim}\label{claim:color}
	Given a graph $G=(V,E)$ and query $v$, the color returned for a vertex $v$ by  \algorithmref{alg:lca3col} is the same as that of the MGPS algorithm.
	\end{claim}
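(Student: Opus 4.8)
The plan is to argue that Algorithm~\ref{alg:lca3col} faithfully simulates the MGPS algorithm on the component of $v$ in the spanning unicyclic forest $G'$ defined by the (arbitrary) parent choice. The essential observation is that MGPS, being the GPS algorithm of~\cite{GPS88} adapted to unicyclic forests, is a $\T_n$-round LOCAL-style algorithm in which the color a vertex holds after round $i$ depends only on the initial colors (IDs) and parent pointers of the vertices within directed distance $i$ \emph{along the parent path} from it. Since each vertex has out-degree exactly one, the set of vertices reachable from $v$ by following parent pointers for up to $\T_n$ steps is precisely the directed path of length $\le \T_n$ starting at $v$, and this is exactly what the first \texttt{while} loop of the algorithm collects into $V'$ (it follows the port-1 neighbor, i.e.\ $\parent(u)$, for $\T_n+1$ steps, stopping early only if it closes a cycle, in which case no further vertices are relevant).

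First I would make precise the locality claim for MGPS: prove by induction on $i$ that for any vertex $w$, the color assigned to $w$ after $i$ rounds of MGPS is a function only of the ID of $w$ and the ID/parent information of the at most $i$ vertices lying on the directed path of length $i$ starting at $w$. This is immediate from the structure of the Cole--Vishkin/GPS update rule, where round $i$'s new color of $w$ is computed from the current colors of $w$ and of $\parent(w)$ only; unrolling the recursion for $i$ rounds reaches back $i$ parent-steps. Because out-degree is one, ``the $i$-step directed neighborhood used by MGPS'' is a single path, not a tree, which is what keeps the probe count at $\log^* n + O(1)$ rather than exponential.

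Next I would verify that the algorithm's second loop reproduces exactly these updates. The loop runs rounds $i = 1, \dots, \T_n$ of MGPS, and in round $i$ it updates precisely the vertices of $V'$ at directed distance at most $\T_n - i$ from $v$. An easy downward induction (or a ``cone of influence'' argument) shows this is exactly the set of vertices whose round-$i$ color is needed to compute $v$'s color after round $\T_n$: to know $v$'s color after round $\T_n$ we need $v$ and $\parent(v)$'s colors after round $\T_n-1$, hence vertices at distance $\le 1$; to know those we need vertices at distance $\le 2$ after round $\T_n - 2$; and so on. The path collected in the first loop has length $\ge \T_n$ (or closes a cycle, which by the remark after Definition of unicyclic trees still contains all needed vertices — a unicyclic tree has no edges leaving its cycle, so following parents eventually just cycles and no vertex outside the collected set is reachable), so every vertex the simulation needs is present in $V'$ with correct parent information. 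Therefore the color the algorithm returns for $v$ equals the color MGPS assigns to $v$ after $\T_n$ rounds.

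The main obstacle I anticipate is the bookkeeping around the early-termination / cycle case: one must check that when the first loop breaks because $\parent(u) \in V'$, the truncated structure stored in $G'$ still contains every vertex within directed distance $\T_n$ of $v$, i.e.\ that wrapping around the cycle does not ``hide'' a needed vertex. This follows because in a unicyclic tree the only vertices reachable from $v$ via parent pointers are those on the directed path from $v$ into and around its cycle, all of which are enumerated (with their outgoing edges) before or at the moment the loop detects the repeat; a short argument handles the one subtlety that MGPS's update for a vertex on the cycle reads its parent, which is also on the cycle and hence also in $V'$. Everything else is routine induction, and no genuinely new idea beyond the locality of the GPS update rule is required.
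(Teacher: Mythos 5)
Your proposal is correct and follows essentially the same approach as the paper: both argue that MGPS updates a vertex's color using only information along the single outgoing parent path (so the ``$i$-step neighborhood'' is a path, not a tree), and both run the same downward/reverse induction showing that to compute $v$'s color after round $\T_n$ it suffices to simulate round $i$ on vertices within directed distance $\T_n - i$ of $v$, all of which (plus the parent of the furthest one) are collected by the first loop. Your extra care about the early-termination/cycle case is a welcome addition that the paper's proof leaves implicit, but no genuinely different idea is involved.
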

\begin{proof}
	We need to show that at the end of the {\bf while} loop of~\algorithmref{alg:lca3col}, the color of $v$ is identical to the color it would have been assigned by running MGPS on $G$. To this end, we show that (1) the $\T_n+1$ probes give us sufficient information to perform the necessary computations and (2) it suffices to execute round $i$ of the MGPS algorithm on vertices at distance at most $\T_n+1-i$ from $v$. We prove (2) by ``reverse'' induction. For the base case, $i=\T_n$, clearly it suffices to simulate the MGPS algorithm for $\T_n$ rounds on vertex $v$. For the inductive step, assume that it suffices to simulate MGPS for $i$ rounds on vertices at distance at most $\T_n-i$ from $v$. In order to do this, we only need to know the color of the vertices at distance at most  $\T_n-i+1$ from $v$ at time $i-1$. 
	
	To prove (1), we note that in order to simulate MGPS on these $\T_n$ vertices, we need the initial color of the vertices as well as the parent of the furthest vertex from $v$ - a total of $\T_n+1$ vertices, hence we require $\T_n+1$ probes.
	\end{proof}

 {
	\renewcommand{\thetheorem}{\ref{thm:weak3upper}}
	
\begin{theorem}
	Given a graph $G=(V,E)$, and query $v \in V$,  \algorithmref{alg:lca3col} returns a color for $v$ that is consistent with some valid weak $3$-coloring using $\log^*{n}+O(1)$ weak probes.
\end{theorem}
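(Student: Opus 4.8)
The plan is to show that Algorithm~\ref{alg:lca3col} is simply a local simulation of a single global object — the coloring produced by the distributed algorithm MGPS run on the unicyclic forest determined by the port-$1$ parent choice — and that this global coloring is a valid weak $3$-coloring of $G$. First I would fix, for every non-isolated vertex $u$, $\parent(u)$ to be $u$'s neighbor at port $1$ (isolated vertices of $G$ are irrelevant, since the weak-coloring constraint is vacuous for them, and may be colored arbitrarily without any probe). By Proposition~\ref{prop:unicyclic } the parent edges decompose the non-isolated part of $G$ into a unicyclic forest $F$, and crucially $F$ is a fixed object, independent of which vertex is queried. Running MGPS on $F$ yields, by the discussion preceding the theorem together with Appendix~\ref{app:goldberg}, a \emph{proper} $3$-coloring of $F$ within $\T_n = \log^*n + O(1)$ rounds: the modifications of GPS that produce MGPS (there are no roots to treat specially, and we run for a fixed number of rounds rather than halting on a global condition) do not affect correctness on unicyclic trees. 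A proper $3$-coloring of $F$ is in particular a weak $3$-coloring of $G$, because every non-isolated $v$ has an outgoing edge in $F$ to some $\parent(v)\in N(v)$ and properness forces $col(v)\neq col(\parent(v))$, so $v$ has a differently-colored neighbor in $G$.

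Next I would invoke Claim~\ref{claim:color}, which states that on query $v$ the algorithm returns exactly the color MGPS assigns to $v$ in $F$. Consequently the answers produced for all possible queries are precisely the restrictions of the single global coloring described above; this gives feasibility (it is a weak $3$-coloring) and, since it is a single coloring, automatic consistency. For the probe bound I would observe that the first \textbf{while} loop performs at most $\T_n+1$ iterations, each issuing exactly one weak probe (reading the port-$1$ entry of the current vertex) and possibly terminating earlier once the cycle of the current component has been closed, while the second loop runs the rounds of MGPS only on vertices already stored in $V'$ and issues no further probes. Hence the algorithm uses at most $\T_n+1 = \log^*n + O(1)$ weak probes, as required.

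The only genuinely substantive step is the correctness of MGPS on unicyclic forests — that the Cole--Vishkin symmetry-breaking phase and the $6\to3$ color-reduction phase of GPS still produce a \emph{proper} coloring when each connected component carries one cycle instead of being a tree — together with the ``light-cone'' truncation showing that it suffices to simulate round $i$ only on vertices within distance $\T_n-i$ of $v$. The latter is exactly the reverse induction already carried out in the proof of Claim~\ref{claim:color}; the former is routine but is deferred to Appendix~\ref{app:goldberg}. Everything else in the argument is bookkeeping.
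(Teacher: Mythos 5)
Your proposal is correct and follows essentially the same route as the paper: invoke Claim~\ref{claim:color} to reduce the LCA's output to the global MGPS coloring of the fixed unicyclic forest, observe that a proper $3$-coloring of that forest is a weak $3$-coloring of $G$ because every vertex disagrees with its parent, and read off the $\T_n+1$ probe bound from the initialization loop. You merely spell out some details the paper leaves implicit (the forest being query-independent, the treatment of isolated vertices), which is fine but not a different argument.
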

	\addtocounter{theorem}{-1}
}

\begin{proof}
	
	From Claim~\ref{claim:color}, every vertex has at least one neighbor with a different color (its parent); hence the resulting coloring is a valid weak $3$-coloring.
The probe complexity is immediate, as we only probe the graph in the initialization phase, in which we perform $\T_n+1 = \log^*{n}+O(1)$ probes.
\end{proof}

This result is asymptotically optimal:
 {
	\renewcommand{\thetheorem}{\ref{thm:weak3lower}}

\begin{theorem}
	Every SP LCA for weak $3$-coloring a ring requires $\Omega(\log^*{n})$ probes.
\end{theorem}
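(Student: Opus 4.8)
The plan is to reduce the LCA lower bound to Linial's $\Omega(\log^* n)$ round lower bound for distributed algorithms that properly $6$-color the ring, using the machinery of G\"o\"os et al.~\cite{GHLMS15} to handle the fact that an LCA may probe vertices it has not yet discovered. First I would observe the ``color reduction'' direction: although weak $3$-coloring is much weaker than proper $6$-coloring, on a ring a weak $3$-coloring already gives us something nontrivial, namely that no vertex agrees with \emph{both} of its neighbors; equivalently, the monochromatic edges form a matching, and between any two consecutive monochromatic edges there are at least two non-monochromatic edges whose endpoints can be used to extract local symmetry-breaking information. More precisely, from a valid weak $3$-coloring of $C_n$ one can, with $O(1)$ additional rounds of distributed post-processing, produce a proper $O(1)$-coloring (in fact a proper $6$-coloring after the standard GPS-style reduction), so any $T$-round distributed weak $3$-coloring algorithm yields a $T + O(1)$-round distributed proper $6$-coloring algorithm; Linial's bound then forces $T = \Omega(\log^* n)$.

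Second I would transfer this to the LCA setting. Given an SP LCA $\mcA$ for weak $3$-coloring of the ring that makes at most $p(n)$ strong probes, I want to build a distributed LOCAL algorithm on $C_n$ that runs in $O(p(n))$ rounds. The naive simulation fails because $\mcA$ can probe a far-away vertex whose neighborhood it has not explored. This is exactly the scenario addressed in~\cite{GHLMS15}: on the cycle (a bounded-degree graph with a very rigid structure), probing an ``undiscovered'' vertex can be simulated, because the adversary controlling the IDs can answer such a probe in a way consistent with a fresh, disjoint segment of the ring — so w.l.o.g.\ $\mcA$ only makes probes that stay within the connected component already explored, i.e., within distance $p(n)$ of the query vertex in $C_n$. (Formally one invokes the lemma of~\cite{GHLMS15} stating that for such LCAs on the ring, the output on $v$ is a function only of the IDs in the $p(n)$-ball around $v$.) Once that is established, each vertex of $C_n$ in the distributed model collects its radius-$p(n)$ neighborhood in $p(n)$ rounds, runs $\mcA$'s decision procedure locally, and outputs the resulting weak color; consistency of $\mcA$ guarantees this is a valid weak $3$-coloring. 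Composing with the color-reduction step from the previous paragraph gives an $O(p(n))$-round distributed proper $6$-coloring of $C_n$, whence $p(n) = \Omega(\log^* n)$ by Linial.

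The main obstacle I expect is the ``probing undiscovered vertices is useless'' step: one has to be careful that the argument of~\cite{GHLMS15} applies to weak $3$-coloring on the ring and that the adversary's ID assignment can genuinely be made consistent — the subtlety is that the LCA's answers across \emph{all} queries must combine into one global weak $3$-coloring, so the indistinguishability argument must be run on a pair (or family) of global instances that the LCA cannot tell apart yet which would require contradictory outputs if it used few probes. The clean way to do this is to follow~\cite{GHLMS15} verbatim: take two rings with carefully chosen ID labelings that are ``locally identical'' around some vertex $v$ up to radius roughly $\log^* n / 3$ but whose forced weak colorings differ at $v$ — the existence of such labelings is precisely what Linial's lower bound (via the neighborhood-graph / Ramsey-type argument) supplies once we have reduced weak $3$-coloring to proper $O(1)$-coloring. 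A secondary, more routine obstacle is verifying the $O(1)$-round reduction from a weak $3$-coloring of $C_n$ to a proper coloring; this uses that monochromatic edges form a matching so their removal leaves a proper edge-$2$-coloring structure on paths, which can be patched in constant rounds, and I would relegate the details to a short lemma.
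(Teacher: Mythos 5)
Your proposal uses the same two ingredients as the paper's proof — the reduction of G\"o\"os et al.\ linking LCA probe complexity on the cycle to the LOCAL round complexity, and Linial's $\Omega(\log^* n)$ lower bound for proper coloring a ring — combined with a constant-overhead color reduction from weak $3$-coloring to proper $6$-coloring. The only organizational difference is the order of composition: you first convert the LCA into a $O(p)$-round LOCAL algorithm and do the color reduction in LOCAL, while the paper builds a proper $6$-coloring LCA directly from the weak $3$-coloring LCA (each vertex spends an extra $2p$ probes to learn both neighbors' weak colors, then the lower-ID endpoint of any monochromatic edge adds $3$ to its color) and cites the known LCA lower bound for proper coloring; these are interchangeable. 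One small inaccuracy in your sketch of the color reduction: a weak $3$-coloring of $C_n$ guarantees only that the monochromatic edges form a matching, so between consecutive monochromatic edges there may be just one non-monochromatic edge (e.g., the coloring $1,1,2,2,1,1,2,2,\dots$), not two as you claim; your ``edge-$2$-coloring on paths'' phrasing is also muddled since the coloring is a vertex coloring. These do not affect the viability of the approach — the paper's explicit tie-breaking rule is the clean way to implement the reduction you gesture at — but they would need to be fixed in a full write-up.
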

	\addtocounter{theorem}{-1}
}

\begin{proof}
	From \cite{Linial92}, every distributed, possibly randomized, LOCAL algorithm for $c$-coloring a ring requires $\Omega(\log^*{n})$ rounds, for any constant $c$. Using the reduction of \cite{GHLMS15}, this implies that every LCA for $c$-coloring a ring requires $\Omega(\log^*{n})$ probes. 
	Suppose that there exists a weak $3$-coloring LCA that uses $p=o(\log^*{n})$ probes. Then we can convert it to a proper $6$-coloring LCA on a ring using $2p$ more probes,  as follows. Each vertex discovers the color of its neighbors, using one query (and hence at most $p$ probes) per neighbor. If two adjacent vertices  have the same color, the one with the lower ID adds $3$ to its color. 	 The new coloring is a proper $6$-coloring: if two adjacent vertices have the same color after this step, it must hold that  they had the same color before and either both or neither changed their color. In either case, it means that at least one of them, $w$, had another neighbor with the same color. But then $w$ had the same color as both of its neighbors, a contradiction.
	\end{proof}

\section{Deterministic WP Weak $2$-Coloring}
\label{subsec:det2}
In \subsecref{subsec:detupper}, We describe a WP LCA for weak $2$-coloring an arbitrary graph in $\log^*{n} +2d_v+O(1)$ probes, where $d_v$ is the degree of the queried vertex. In \subsecref{subsec:lower} we show that some dependence on the degree cannot be avoided. Combined with \theoremref{thm:weak3lower}, this implies that the LCA of \subsecref{subsec:detupper} is asymptotically optimal.

\subsection{Upper Bound}
\label{subsec:detupper}

We extend \algorithmref{alg:lca3col} to  a weak $2$-coloring LCA,~\algorithmref{alg:lca2cold}. The high level description of the algorithm is the following: we use~\algorithmref{alg:lca3col} to generate a spanning unicyclic  forest and color it with the colors $\{0,1,2\}$ (recall that although the output of~\algorithmref{alg:lca3col} is a weak coloring, on every unicyclic  tree it is in fact a proper  coloring). We then recolor all vertices that have the color $2$ with the complement of their parent's color; this ensures that all non-leaf vertices are colored legally (a \emph{leaf} is a vertex that is not the parent of any other vertex). Finally, we recolor all leaves with the complement of their parent's color. 

\vspace{10pt}
\begin{algorithm}[h]
	\caption{\textsc{Deterministic Weak $2$-coloring WP LCA}}\label{alg:lca2cold}
	\SetKwInOut{Input}{Input}\SetKwInOut{Output}{Output}\SetKwInOut{Inquiry}{Inquiry}
	
	\Input{$G=(V,E)$, query $v$}
	\Output{$0/1$}
	\BlankLine

		\tcc{Step $1$}
	Execute Algorithm~\ref{alg:lca3col} on $(G,v)$\;
			\tcc{The following can be replaced with 
		any arbitrary choice of parent}
	Let $\parent(u)$ be $u$'s neighbor at port $1$\;
	\tcc{Step $2$}
	\If{$color(v)=2$}
	{\If {$color(\parent(v))=1$}
		{$color(v)=0$\;}
		\Else{$color(v)=1$\;}
		}
	\tcc{Step $3$}
	Probe $v$'s neighbors to find out who their parents are\;
	\If {$v$ is a leaf (no neighbor of $v$ has $v$ as its parent)}
	{\If {$color(\parent(v))=1$}
		{$color(v)=0$\;}
		\Else{$color(v)=1$\;}
	}
\end{algorithm}

\begin{claim}\label{claim:step1}Step $2$ of \algorithmref{alg:lca2cold} guarantees that every non-leaf has a different color from either its parent or child.\end{claim}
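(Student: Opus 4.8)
The plan is to unpack what Step 2 does and trace through the three cases for the color of a non-leaf vertex $v$ after Algorithm~\ref{alg:lca2cold} has run Step 1 (which, restricted to each unicyclic tree, produces a \emph{proper} $3$-coloring with colors in $\{0,1,2\}$) and then applied the Step 2 recoloring rule. First I would record the key structural fact about Step 2: a vertex whose color after Step 1 is $2$ recolors itself to the complement (within $\{0,1\}$) of its \emph{parent's} color, where ``parent's color'' means the color of $\parent(v)$ \emph{after Step 1} --- since $\parent(v)$ is itself recolored only if its Step-1 color was $2$, and here we should be careful about whether Step 2 reads the pre- or post-recoloring color of the parent. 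I would first clarify (from the pseudocode) that the condition tested is $color(\parent(v))=1$ as computed by Step 1 of Algorithm~\ref{alg:lca3col}, i.e. before any recoloring, so that all the Step 2 decisions are made with respect to the proper $3$-coloring; this is the natural reading and makes the argument clean.

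Next I would do the case analysis on a fixed non-leaf vertex $v$ with parent $p = \parent(v)$ and some child $c$ (which exists since $v$ is not a leaf), all within one unicyclic tree, using that $col_1$ (the Step-1 coloring) is a proper $3$-coloring so $col_1(p) \ne col_1(v) \ne col_1(c)$. Case (a): $col_1(v) \in \{0,1\}$ and $col_1(p) \in \{0,1\}$. Then $v$ keeps its color and, since $col_1(p) \ne col_1(v)$, we have $col_1(p),col_1(v)$ are $\{0,1\}$ in some order. The parent $p$ keeps its color too (its Step-1 color is in $\{0,1\}$, not $2$), so after Step 2 $v$ and $p$ still differ --- $v$ differs from its parent. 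Case (b): $col_1(v) \in \{0,1\}$ and $col_1(p) = 2$. Then $p$ recolors to $1 - (\text{complement rule applied to } p\text{'s parent})$... but more directly: $p$'s new color lies in $\{0,1\}$; it could collide with $v$'s color. This is exactly the situation the claim must handle, and here I would argue that we instead look at $v$'s child $c$: after Step 2, does $v$ differ from $c$? Case (c): $col_1(v) = 2$. Then $v$ recolors to the complement of $col_1(p)$; since $col_1(p) \in \{0,1\}$ (as $col_1$ is proper and $col_1(v)=2$), $v$'s new color is in $\{0,1\}$ and differs from $col_1(p)$; moreover $p$ keeps its color (its Step-1 color is in $\{0,1\}$), so $v$ differs from its parent after Step 2.

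So the only problematic case is (b): $col_1(v)\in\{0,1\}$, $col_1(p)=2$. Here $p$ recolored itself to the complement of $p$'s parent's Step-1 color, which might equal $col_1(v)$, so $v$ may fail to differ from its parent. The resolution I expect: show $v$ differs from its child $c$. We have $col_1(c) \ne col_1(v)=:x\in\{0,1\}$, so either $col_1(c) = 1-x \in \{0,1\}$, in which case $c$ keeps that color (not $2$), and it differs from $v$ (which also kept $x$) --- done; or $col_1(c) = 2$, in which case $c$ recolors to the complement of $c$'s parent's Step-1 color, i.e. the complement of $col_1(v) = x$, which is $1-x \ne x$ --- again $c$ differs from $v$. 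Either way, in case (b) $v$ differs from its child $c$. I would phrase the claim's conclusion as: every non-leaf has a color different from at least one of its parent or its (some) child, and the above exhausts the cases.

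The main obstacle I anticipate is purely a matter of \emph{which} color of the parent Step 2 consults --- pre- or post-recoloring --- and making sure the case analysis is consistent about that; once that is pinned down (Step 2 uses the Step-1/MGPS coloring of the parent, which is the only coloring available when Step 2 runs on the LCA's local copy), the rest is a finite check. A secondary subtlety is that a non-leaf vertex may have several children and we only need \emph{one} of them (or the parent) to witness the color difference; I would make explicit that in case (b) \emph{any} child works, so there is no dependence on which child the recoloring of neighbors happens to inspect. I would also note in passing that root/cycle vertices are handled uniformly: on a unicyclic tree the parent relation is total and $col_1$ is a proper coloring of the whole tree including the cycle, so the argument above never needs to treat cycle vertices specially.

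\medskip

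\noindent\emph{Proof.}
Write $col_1$ for the coloring of $G$ produced by Step~1 (Algorithm~\ref{alg:lca3col}); restricted to each unicyclic tree of the parent forest, $col_1$ is a \emph{proper} $3$-coloring with colors in $\{0,1,2\}$, and in particular every vertex has a color different from its parent and from each of its children. Step~2 leaves $col_1(u)$ unchanged if $col_1(u)\in\{0,1\}$, and if $col_1(u)=2$ it sets the new color of $u$ to the complement in $\{0,1\}$ of $col_1(\parent(u))$ (note $col_1(\parent(u))\in\{0,1\}$ in this case, since $col_1$ is proper and $col_1(u)=2$). Let $col_2$ denote the coloring after Step~2. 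Fix a non-leaf vertex $v$, let $p=\parent(v)$, and let $c$ be some child of $v$ (which exists since $v$ is not a leaf). Since $col_1$ is proper on the tree containing $v$, we have $col_1(v)\ne col_1(p)$ and $col_1(v)\ne col_1(c)$.

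\emph{Case $col_1(v)=2$.} Then $col_1(p)\in\{0,1\}$, so $p$ keeps its color ($col_2(p)=col_1(p)$), while $col_2(v)$ is the complement of $col_1(p)$; hence $col_2(v)\ne col_2(p)$, so $v$ differs from its parent.

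\emph{Case $col_1(v)=x\in\{0,1\}$ and $col_1(p)\in\{0,1\}$.} Then $p$ keeps its color and $col_2(p)=col_1(p)\ne x=col_1(v)=col_2(v)$, so again $v$ differs from its parent.

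\emph{Case $col_1(v)=x\in\{0,1\}$ and $col_1(p)=2$.} Here $p$ may be recolored to $x$, so $v$ need not differ from its parent; we show $v$ differs from $c$. Since $col_1$ is proper, $col_1(c)\ne x$, so $col_1(c)\in\{1-x,2\}$. If $col_1(c)=1-x$, then $c$ keeps its color: $col_2(c)=1-x\ne x=col_2(v)$. If $col_1(c)=2$, then $c$ is recolored to the complement of $col_1(\parent(c))=col_1(v)=x$, i.e. $col_2(c)=1-x\ne x=col_2(v)$. In both subcases $v$ differs from $c$.

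In every case, $v$ has a color (after Step~2) different from its parent or from one of its children. \qed
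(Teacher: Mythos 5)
Your proof is correct and follows essentially the same approach as the paper's: both rely on the properness of the Step-1 coloring on each unicyclic tree and observe that when a vertex's recolored parent collides with it, one of its children must still differ. Your version is somewhat more explicit, spelling out all three cases on $(col_1(v), col_1(p))$ and clarifying that Step 2 consults the Step-1 color of the parent, whereas the paper focuses the argument on 2-colored vertices and their children and leaves the easy case implicit.
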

\begin{proof}
	In Step $2$, every vertex $v$ that has the color $2$ recolors itself $1$ if the color of $\parent(v)$ is $0$, and $0$ otherwise. Hence $v$ has a different color than its parent, and both $v$ and its parent obey the weak coloring rule. We need to verify that $v$'s child, denoted $w$,  also has a different color than its parent or child (assuming $w$ is not a leaf). Assume w.l.o.g. that $v$ was recolored $0$. Vertex $w$ was not colored $2$ at the end of Step $1$, as the output of~\algorithmref{alg:lca3col} is a legal (not weak) $3$-coloring of the subgraph. If $w$ was colored $1$, we are done. Otherwise, $w$ was colored $0$, but its child was colored a different color. If its child was colored $2$, it is now colored $1$. 
\end{proof}
 Step $3$ guarantees that leaves have a different color from their parents,  and does not create any new color conflicts; hence  \algorithmref{alg:lca2cold} yields a legal weak $2$-coloring. Step $3$ uses $2d_v-2$ probes: $d_v-1$ to discover all of $v$'s neighbors (the parent had already been discovered), and $1$ more probe per neighbor, to check if $v$ is its parent. 
This gives
 {
	\renewcommand{\thetheorem}{\ref{thm:detupper2}}
\begin{theorem}
 \algorithmref{alg:lca2cold}  is a deterministic  weak $2$-coloring WP LCA that uses  $\log^*{n}+2d_v+O(1)$ weak probes, where $d_v$ is the degree of the queried vertex.
\end{theorem}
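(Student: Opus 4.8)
The statement collects facts already established above, so the plan is to assemble them in the right order while keeping track of which ``version'' of a vertex's color (its Step~1 color or its Step~2 color) each step of \algorithmref{alg:lca2cold} reads; since the whole procedure is deterministic, consistency of the LCA is automatic once the global assignment is shown to be a legal weak $2$-coloring. For correctness of Steps~1--2: by \theoremref{thm:weak3upper} (and \claimref{claim:color}), after Step~1 the algorithm has computed, for the queried vertex, its color in a proper $3$-coloring with palette $\{0,1,2\}$ of the spanning unicyclic forest induced by the port-$1$ parent relation; in particular every forest edge is bichromatic. Step~2 recolors each vertex whose Step~1 color is $2$ to the element of $\{0,1\}$ differing from its parent's Step~1 color (which lies in $\{0,1\}$ by properness), so afterwards the color $2$ is gone, and \claimref{claim:step1} guarantees that every non-leaf differs in color from its parent or from one of its children.

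Next I would check that Step~3 does not spoil this. Step~3 recolors each leaf $\ell$ to the complement of the post-Step-$2$ color of its parent $w$; since $w$, being a parent, is not a leaf, Step~3 does not touch $w$'s color, so $\ell$ ends up differently colored from $w$, and thus every leaf obtains a differently-colored neighbor. No non-leaf loses its Step~2 witness either: that witness is either the non-leaf's parent (never a leaf, hence untouched by Step~3) or one of its children, and a leaf child $c$ of a non-leaf $u$ is recolored in Step~3 to the complement of $u$'s (unchanged) color, so $c$ still differs from $u$. Hence every vertex ends with a differently-colored neighbor, i.e., \algorithmref{alg:lca2cold} outputs a legal weak $2$-coloring, and since the algorithm is deterministic all query answers are mutually consistent.

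For the probe complexity: Step~1 runs \algorithmref{alg:lca3col}, costing $\T_n+1=\log^* n+O(1)$ weak probes, which as a byproduct reveals the ancestor chain of $v$ up to distance $\T_n+1$; extending this chain by $O(1)$ further probes recovers the Step~1 colors of $v$'s parent and, if needed, grandparent, which is all that Steps~2 and~3 consult. Step~3 additionally discovers the remaining $d_v-1$ neighbors of $v$ (port~$1$ is the already-known parent) and probes the port-$1$ cell of each of $v$'s $d_v$ neighbors to test whether it points back to $v$, a further $2d_v+O(1)$ probes. Summing gives $\log^* n + 2d_v + O(1)$ weak probes per query.

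The only genuinely delicate point — and the one I expect to require the most care — is the Step~2/Step~3 interaction: one must verify that recoloring leaves in Step~3 cannot destroy the per-non-leaf guarantee of \claimref{claim:step1}, and, relatedly, be precise about exactly which colors each step reads so that the induced global coloring is well-defined and consistent. Everything else is routine bookkeeping.
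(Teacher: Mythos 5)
Your proposal is correct and takes the same route as the paper: correctness of Steps~1--2 via \theoremref{thm:weak3upper} and \claimref{claim:step1}, then checking that Step~3 fixes leaves without breaking non-leaves, then the probe count. The one place where you are more careful than the paper is the Step~2/Step~3 interaction: the paper simply asserts that ``Step~3 \ldots does not create any new color conflicts,'' whereas you spell out exactly which color version each read of $color(\parent(\cdot))$ consults (Step~2 reads the parent's post-Step-1 color, Step~3 the parent's post-Step-2 color) and verify that a leaf child recolored in Step~3 still differs from its (untouched) parent, so the Claim~\ref{claim:step1} witness survives. This is the right disambiguation and it is what makes the algorithm consistent; you are also right that obtaining the parent's (and, if the parent was colored~2, the grandparent's) MGPS color costs only $O(1)$ extra probes to extend the ancestor chain, a detail the paper's probe count leaves implicit. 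Overall, same proof, slightly tighter bookkeeping on your side.
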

	\addtocounter{theorem}{-1}
}

From ~\theoremref{thm:weak3lower} we know that every LCA for deterministic weak $2$-coloring requires $\Omega(\log^*{n})$ probes. We now show that some dependence on the degree is unavoidable.

\subsection{Lower Bound}
\label{subsec:lower}
We prove the following.
 {
	\renewcommand{\thetheorem}{\ref{thm:detlower2}}
	\begin{theorem}
Let $d \ge 2$ be an arbitrary integer and let $q$ be the largest integer satisfying $2q < d + 3$. Then there is some $n \le 2(d+1)^{q}$ such that every weak 2-coloring WP LCA  that works on all $d$-regular graphs of size $n$ requires at least $q$ weak probes. In particular, there exists a constant $c>0$ such that every WP LCA for weak 2-coloring requires at least $\frac{d}{2}$  probes whenever $d \le \frac{c\log n}{\log\log n}$.
\end{theorem}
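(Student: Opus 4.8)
The plan is to make rigorous the inductive intuition sketched in the introduction. I will construct, for each $i$ with $1 \le i \le q$, a family $F_i$ of $d$-regular graphs of size at most $2(d+1)^i$, together with a distinguished ``socket'' structure, and prove by induction on $i$ that any WP LCA for weak $2$-coloring must make at least $i$ probes on some query in some graph of $F_i$. The base case $i=1$ is essentially trivial: a vertex with no probes cannot have seen any neighbor, so on a large enough regular graph an adversary can force a monochromatic edge. For the inductive step, a graph $G \in F_i$ is assembled from $d$ vertex-disjoint copies $H_1,\dots,H_d$ of graphs in $F_{i-1}$, plus two new vertices $a_i,b_i$; from each $H_j$ we delete one carefully chosen edge $(x_j,y_j)$ and add the edges $(a_i,x_j)$ and $(b_i,y_j)$, so that $a_i,b_i$ each get degree $d$ and every $x_j,y_j$ keeps degree $d$. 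The key point, which the port model buys us (recall the remark after Definition~\ref{def:probe}: learning that $u$ is $v$'s $i$th neighbor tells you nothing about which port of $u$ points back to $v$), is that from the perspective of a vertex deep inside some $H_j$, the graph looks exactly like a member of $F_{i-1}$ until a probe actually reaches $a_i$ or $b_i$.

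**The core argument.** Fix an alleged LCA $\mathcal{A}$ using at most $q-1$ probes and suppose for contradiction it works on all of $F_q$. Consider the query $a_q$ in a graph $G\in F_q$. With $\le q-1 \le (d+3)/2 - 1$ probes, and since each probe to $a_q$ or to either auxiliary vertex reveals at most one port's neighbor, I want to argue that $\mathcal{A}$'s probe sequence starting at $a_q$ either (a) never descends into the ``interior'' of enough copies $H_j$ to determine the color $\mathcal{A}$ would assign to a neighbor $x_j$, or (b) by the induction hypothesis, fails inside some $H_j$. The clean way to organize this: by the induction hypothesis, for each $j$ there is a pair of graphs in $F_{i-1}$ (or a single graph with two candidate colorings) on which $\mathcal{A}$, querying the socket vertex, behaves identically for $i-2$ probes but must ultimately disagree — so $\mathcal{A}$'s color for $x_j$ is not determined by fewer than $i-1$ probes confined to $H_j$. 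Since the LCA must produce a globally consistent weak $2$-coloring, $a_q$'s color must differ from that of at least one neighbor $x_j$ OR from $b_q$. I will build the adversary graph so that $b_q$'s color is symmetric to $a_q$'s (swap roles of all $x_j$ and $y_j$), forcing $a_q$ to rely on some $x_j$; but determining any single $x_j$'s color costs one probe to learn the name $x_j$ (a probe to $a_q$) plus $i-1$ further probes inside $H_j$ to pin down $\mathcal{A}(\cdot, x_j)$ — total $i$, contradicting the budget $i-1=q-1$. I will need a careful ``indistinguishability'' lemma: I maintain a set of consistent completions of the partial view, and show that as long as $\mathcal{A}$ has made $<i$ probes, there remain two completions in which $a_q$ should get opposite colors, so any deterministic answer is wrong on one of them.

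**The main obstacle.** The delicate part is the bookkeeping in the inductive step: making precise the claim that a probe budget of $i-1$ cannot simultaneously (i) discover the identity of a neighbor $x_j$ and (ii) resolve, via the induction hypothesis applied inside $H_j$, which color $\mathcal{A}$ assigns to $x_j$. The subtlety is that probes are adaptive and may be interleaved across several copies $H_j$, and that a probe which enters $H_j$ from the $a_q$ side (i.e., starting at $x_j$) is not literally the same as the query ``$x_j$'' in the standalone graph of $F_{i-1}$ — the edge $(x_j,y_j)$ has been rerouted. I will handle this by choosing the deleted edge $(x_j,y_j)$ to be, within the recursive structure, precisely the edge whose removal does not affect $\mathcal{A}$'s behavior on the relevant queries for the first $i-1$ probes (this is why the construction is iterative rather than one-shot, and why we need the bound $2q < d+3$: we need enough ``fresh'' degree at $a_i,b_i$ and enough disjoint copies to absorb all possible probe destinations). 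Once the indistinguishability invariant is set up correctly, the size bound $n \le 2(d+1)^q$ follows from the recursion $|F_i| \le d\cdot|F_{i-1}| + 2 \le (d+1)|F_{i-1}|$, and the final clause — $q$ probes needed when $d \le \frac{c\log n}{\log\log n}$ — follows by taking $q = \lfloor (d+1)/2 \rfloor \ge d/2$ and checking that $2(d+1)^q \le n$ holds for a suitable constant $c$, since $q\log(d+1) = O(d\log d) = O(\log n)$ in that regime.
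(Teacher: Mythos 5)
Your proposal follows essentially the same approach as the paper's proof: the same recursive construction of families $F_i$ from $d$ vertex-disjoint copies of $F_{i-1}$ plus two auxiliary vertices, the same base case (a complete bipartite graph on $2(d+1)$ vertices minus a perfect matching), the same accounting of the probe budget via the constraint $2q < d+3$, and the same key trick of choosing the deleted edge in $H_j$ to be one not touched by any relevant probe. The ``indistinguishability lemma'' you flag as needed is precisely what the paper establishes through a two-player-game adversary argument.

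There is, however, one step you underspecify that the paper devotes a dedicated lemma to, and that is not cosmetic. To invoke the induction hypothesis ``inside $H_j$'' you must first establish that the alleged algorithm $W_{i+1}$, when restricted to queries in the name space assigned to a single $H_j$, is \emph{itself a valid weak 2-coloring algorithm for the family $F_{i-1}$}. That is not automatic: $W_{i+1}$ is only promised to weakly 2-color graphs in $F_{i+1}$, and if its restriction fails to weakly 2-color some graph in $F_{i-1}$, then the adversary whose existence you derive from the induction hypothesis simply does not exist, and the whole step collapses. The paper closes this with Lemma~\ref{lem:W_i}, whose proof is a counting argument: if the restriction miscolors $v$ (same color as all its neighbors) in some $G\in F_{i-1}$, then the probes $W_{i+1}$ makes on $v$ and its $d$ neighbors touch at most $k(d+1)\le (d-1)(d+1)$ edges, while $G$ has at least $d(d+1)$ edges, so some unprobed edge away from $v$ can be rerouted into a graph of $F_i$ in which $v$'s neighborhood and all relevant probe answers are unchanged --- contradicting that $W_{i+1}$ works on $F_i$. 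Your ``choose the deleted edge not to affect $\mathcal{A}$'s behavior'' is the right instinct but you apply it only in the adversary's run on $a_q$, not to first legitimize applying the induction hypothesis at all. Separately, a small slip: $b_q$ is not a neighbor of $a_q$ (each auxiliary vertex is attached only to vertices inside the $H_j$'s), so the clause about $a_q$ needing to differ from ``some $x_j$ \emph{or} from $b_q$'' and the ensuing symmetry argument are off; the paper needs no such move --- the adversary simply forces $a_q$ and all $d$ of its actual neighbors to receive the same color.
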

	\addtocounter{theorem}{-1}
}

Before proving the theorem 
 we present some useful principles.

\begin{proposition}
\label{pro:renaming}
For given positive integers $n,k,t$ and a family $F$ of graphs with $n$ vertices, suppose that every weak 2-coloring algorithm for $F$ requires at least $k$ probes. Then the same holds also if the name space is changed from $[1,n]$ to $[t+1, t+n]$.
\end{proposition}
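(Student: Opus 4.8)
The plan is to show that renaming the vertices is an invertible relabeling that preserves exactly the structure a WP LCA can observe, so an algorithm for one name space can be converted into an algorithm for the other with no change in probe count. First I would fix a family $F$ of graphs on $n$ vertices with name space $[1,n]$ for which every weak 2-coloring WP LCA requires at least $k$ probes, and let $F'$ be the family obtained by applying the shift $\sigma(i) = i + t$ to all vertex names of all graphs in $F$. Observe that $\sigma$ is a bijection from $[1,n]$ to $[t+1,t+n]$, and that the port structure is untouched: in a graph $G \in F$, if $u$ is the $j$th neighbor of $v$, then in $\sigma(G)$ the vertex $\sigma(u)$ is the $j$th neighbor of $\sigma(v)$, and degrees are unchanged.

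The key step is the reduction. Suppose toward a contradiction that there is a weak 2-coloring WP LCA $\mcA'$ for $F'$ using at most $k-1$ probes. I would build a WP LCA $\mcA$ for $F$ as follows: on input graph $G \in F$ and query $v$, simulate $\mcA'$ on query $\sigma(v)$; whenever $\mcA'$ issues a weak probe to cell $(w,j)$ with $w \in [t+1,t+n]$, translate it to the probe $(\sigma^{-1}(w), j)$ against $G$, receive the reply (either a degree or a neighbor name $u$), and hand $\mcA'$ back the reply with the neighbor name translated by $\sigma$ (degrees are passed through unchanged). Since $\sigma$ is a fixed, publicly known bijection, this translation requires no probes. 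Thus $\mcA$ makes exactly as many probes as $\mcA'$, i.e.\ at most $k-1$, on every query. Because $\mcA'$ simulated on $\sigma(G)$ produces a legal weak 2-coloring of $\sigma(G)$, and $\sigma$ is a graph isomorphism, pulling the colors back through $\sigma$ gives a legal weak 2-coloring of $G$, and consistency is inherited: the color $\mcA$ assigns to $v$ equals the color $\mcA'$ assigns to $\sigma(v)$, and the latter replies are mutually consistent as a single weak 2-coloring of $\sigma(G)$. Hence $\mcA$ is a valid $(k-1)$-probe WP LCA for $F$, contradicting the hypothesis.

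I do not expect any real obstacle here; the only point requiring a little care is confirming that a WP LCA is genuinely name-oblivious up to relabeling — that the only way names enter is through the contents of probed cells and the name of the query. This is exactly what Definition~\ref{def:probe} and Definition~\ref{def:lca} give us, since a weak probe returns the name of a neighbor but (as the paper emphasizes) nothing about port numbers on the other endpoint, so the translated simulation is faithful. One should also note the symmetric direction — applying $\sigma^{-1}$ — shows the two families are equivalent for this purpose, though only one direction is needed for the stated proposition. I would phrase the argument for a general bijection between name spaces, since nothing uses that the map is a shift, and then specialize.
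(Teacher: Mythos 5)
Your proof is correct and takes essentially the same approach as the paper's (which is stated in one line: run the algorithm on the shifted name space, translating names by $\pm t$); you simply make the probe-translation simulation and the consistency argument explicit, which the paper leaves implicit.
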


\begin{proof}
Suppose for the sake of contradiction that on name space $[t+1, t+n]$ there is a weak 2-coloring algorithm that uses fewer than $k$ probes. Then run the same algorithm on name space $[1,n]$, adding $t$ to each name.
\end{proof}

\begin{lemma}
\label{lem:adversary}
For given positive integers $n,k$ and a family $F$ of graphs with $n$ vertices, suppose that every weak 2-coloring algorithm for $F$ requires at least $k$ probes. Then for every algorithm $W$ for weak 2-coloring of $F$ that uses at most $k$ probes, there is an adversary $A_{F,W}$ that does the following:

\begin{enumerate}

\item Selects a query vertex $v$, $1 \le v \le n$.

\item Adaptively answers any legal sequence of $k-1$ probes, where the answer to probe $j$ for $1 \le j \le k-1$ may depend on all probes up to and including probe $j$.

\item Exhibits two graphs from $F$ (or possibly, the same graph but with different naming of the vertices) that are consistent with all $k-1$ probe answers, and such that in one graph $W(v) = 0$ leads to an inconsistent weak coloring and in the other $W(v) = 1$  leads to an inconsistent weak coloring. (Namely, the information provided by the answers to the $k-1$ probes is insufficient in order to determine whether $W$ would color $v$ with color~0 or with color~1.) 

\end{enumerate}
\end{lemma}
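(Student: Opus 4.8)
The target is Lemma~\ref{lem:adversary}, which is essentially a packaging of the assumption ``every weak 2-coloring algorithm for $F$ requires at least $k$ probes'' into the form of an adversary that can be used inductively in the proof of Theorem~\ref{thm:detlower2}. The plan is to prove it by contradiction, arguing that if no such adversary $A_{F,W}$ existed for some fixed $k$-probe algorithm $W$, then $W$ could be converted into an algorithm using only $k-1$ probes, contradicting the hypothesis. So the main work is to articulate precisely what the failure of ``every adversary'' means and to turn that failure into a $(k-1)$-probe algorithm.

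\textbf{Key steps.} First I would fix an arbitrary weak 2-coloring WP LCA $W$ for $F$ that uses at most $k$ probes, and consider the (finite) rooted decision tree describing $W$'s behavior on each possible query vertex $v$: each internal node is a probe $(u,j)$, each edge is an answer (a vertex ID or $0$), and each leaf at depth $\le k$ outputs a color in $\{0,1\}$. Next, I would say what it means for no adversary to exist: for \emph{every} choice of query vertex $v$ and \emph{every} legal sequence $\sigma$ of $k-1$ probe answers that is consistent with at least one graph in $F$, the set of graphs in $F$ consistent with $\sigma$ is such that $W$, on input any such graph and query $v$, either always produces a consistent weak coloring when $W(v)=0$ is forced, or always does so when $W(v)=1$ is forced — i.e., there is a ``safe'' color determined already by $\sigma$. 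Third, I would build the $(k-1)$-probe algorithm $W'$: on query $v$, simulate $W$ for its first $k-1$ probes on whatever graph is actually presented, obtaining the real answer sequence $\sigma$; then output the color that is ``safe'' according to the above — formally, for any $\sigma$ reachable after $k-1$ probes, the failure of the adversary guarantees that at least one of colors $0,1$ is consistent for \emph{all} graphs compatible with $\sigma$, and $W'$ picks such a color (breaking ties, say, toward $0$), using only the $k-1$ probes it already made. Finally, I would argue consistency of $W'$ as a global LCA: the key point is that $W'$'s output on $v$ depends only on the prefix $\sigma$, which is itself determined by the true graph, so on any fixed input graph $G\in F$ all queries are answered coherently, and by the ``safe color'' property this coloring is a legal weak 2-coloring of $G$; but $W'$ uses only $k-1$ probes, contradicting the hypothesis on $F$. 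Hence some adversary $A_{F,W}$ must exist, and unwinding the negation of its nonexistence gives exactly the three listed properties (choice of $v$, adaptive answers to $k-1$ probes, and the two witnessing graphs on which the two forced colorings fail respectively).

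\textbf{Main obstacle.} The delicate part is handling the quantifiers correctly and making sure the contradiction is airtight — in particular, ensuring that ``safe color'' is well-defined for \emph{every} reachable $\sigma$ (so that $W'$ is total), and that $W'$ really only probes $k-1$ cells rather than needing a $k$-th probe to disambiguate. This is where the hypothesis that \emph{every} algorithm needs $\ge k$ probes is used in full force: it is applied not to $W$ itself but to the derived $W'$, so I must check $W'$ is a legitimate deterministic WP LCA for the whole family $F$ (consistent across all queries, not just the one at hand). A secondary subtlety is that the lemma allows the two witness graphs to be ``the same graph with different naming''; I would note that Proposition~\ref{pro:renaming} is exactly what lets such renamed copies count, and that the adversary's answers to the $k-1$ probes need not commit to vertex identities beyond what those answers literally reveal, which is what leaves room for two inconsistent completions. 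The rest — that a decision tree of depth $\le k$ has the claimed structure, and that simulating $k-1$ levels costs $k-1$ probes — is routine.
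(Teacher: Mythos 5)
Your overall plan is the same as the paper's: argue by contradiction that non-existence of the adversary would yield a $(k-1)$-probe weak 2-coloring algorithm $W'$, contradicting the hypothesis on $F$. The paper packages this as a finite two-player full-information game between a probing player and an answering player and invokes determinacy (one of them must have a winning strategy); your version unpacks that determinacy by hand via the decision tree of $W$. These are the same argument in different clothing, and the game framing mainly just keeps the nested quantifiers tidy.

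There is, however, one genuine soft spot in how you build $W'$. You define $W'(v)$ to be a ``safe'' color for $v$, meaning a color $c$ such that for every graph compatible with the transcript $\sigma$, coloring $v$ with $c$ does not break the weak coloring. But ``does not break the weak coloring'' is implicitly relative to $W$'s coloring of $v$'s \emph{neighbors}, and $W'$ does not color the neighbors the way $W$ does: each neighbor $u$ is itself assigned a safe color according to its own transcript $\sigma_u$, possibly different from $W(G,u)$. So the safety of $c$ for $v$ guarantees $v$ has a neighbor $u$ with $W(G,u)\ne c$, but not that $W'(G,u)\ne c$, and your ``break ties toward $0$'' rule makes this worse, since ties could be broken the same way on both endpoints of an edge. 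As written, consistency of $W'$ as a global weak 2-coloring is not established.

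The fix is exactly what the paper's parenthetical in the lemma supplies and what its proof quietly uses: the adversary's failure should be read as ``$W(v)$ is \emph{uniquely determined} by the $k-1$ probe answers'' (since $W$ is a valid LCA, the condition ``$W(v)=c$ would be inconsistent on $G$'' pins $W(G,v)$ to $1-c$, so the two exhibited graphs force opposite values of $W(v)$; failure of the adversary means no such pair exists, hence $W(v)$ is the same on all graphs compatible with $\sigma$). Then let $W'$ output that determined value of $W(v)$, with no tiebreaking. Now $W'(G,\cdot)=W(G,\cdot)$ for every $G\in F$, so $W'$ is a valid weak 2-coloring LCA using only $k-1$ probes, giving the contradiction. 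With that one change, your argument matches the paper's and closes cleanly; your remarks about the decision tree, the restriction to $W$'s own probe sequence, and Proposition~\ref{pro:renaming} are all fine.
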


\begin{proof}
Given $F$ and $W$ consider a two player game between a player $P$ who makes probes, and a player $A$ who answers the probes. The game starts with player $A$ announcing a vertex name $v$. It then continues for $k-1$ rounds, where in each round $P$ makes a legal probe and $A$ provides an answer. Player $A$ wins if at the end of the game there are two graphs from $F$ (or possibly, the same graph but with different naming of the vertices) that are consistent with all probe answers, and such that in one graph $W(v)$ must be $0$ and in the other $W(v)$ must be $1$. Being a finite full information sequential game, either $P$ or $A$ have a winning strategy. Suppose that $P$ has a winning strategy. Then this strategy gives rise to a weak 2-coloring algorithm for $F$ with $k-1$ probes (because at that point $W(v)$ can be the same for all namings for graphs of $F$ that are consistent with the probe answers), contradicting the premise of the lemma. Hence $A$ has a winning strategy, and this can serve as $A_{F,W}$, proving the lemma.
\end{proof}


We are now ready to prove~\theoremref{thm:detlower2}.

\begin{proof}
We shall construct by induction a sequence $F_1, \ldots, F_q$ of families of graphs, such that every graph $G_i \in F_i$ (for $1 \le i \le q$) is $d$-regular and has $n_i \le 2(d+1)^{i}$ vertices, and every algorithm for weak 2-coloring that works on the entire family $F_i$ requires at least $i$ weak probes.

{\bf The base case of the induction.} The family $F_1$ contains a single graph $G_1$ -- the complete bipartite graph on $2d+2$ vertices $K_{d+1,d+1}$, minus a perfect matching. As required by the induction invariant, it indeed is $d$-regular, has at most $n_1 \le 2(d+1)$ vertices, and weak $2$-coloring it requires at least one probe. (If there are zero probes, then for some color, say~0, there are at least $d+1$ named vertices that always have this color. If we name the vertices of $G_1$ such that these 0-colored vertices correspond to one vertex and its $d$ neighbors, the resulting coloring is not a legal weak 2-coloring.)

{\bf The inductive step.} Suppose that we have family $F_i$ that satisfies the inductive hypothesis, for some $1 \le i < q$. We now explain how to derive family $F_{i+1}$ from it. A graph in $F_{i+1}$ will be composed of $d$ graphs $H_1, \ldots, H_d$ from $F_i$ and two special vertices $a_{i+1}$ and $b_{i+1}$. In each of the graphs $H_j$ we remove one edge, and connect $a_{i+1}$ to one of its endpoints and $b_{i+1}$ to its other endpoint. Thus the graph obtained has $n_{i+1} = dn_i + 2 \le (d+1)n_i \le 2(d+1)^{i+1}$ vertices and is $d$ regular. Doing so in all possible ways (namely, every graph from $F_i$ can serve as each $H_j$, and every edge in $H_j$ can be removed) gives the set of graphs that constitute $F_{i+1}$.

Let $k$ be smallest integer such that there is a weak 2-coloring for $F_{i+1}$ that uses at most $k$ probes, and let $W_{i+1}$ be a weak 2-coloring algorithm for $F_{i+1}$ that uses at most $k$ probes. It remains to prove that $k \ge i+1$. For the sake of contradiction, suppose that $k < i+1$. 

By definition, $W_{i+1}$ specifies for each queried vertex $v \in [n_{i+1}]$ a probing algorithm.
Let $W_i$ be the algorithm for $F_i$ in which each  queried vertex in $[n_i]$ uses its algorithm from $W_{i+1}$.

\begin{lemma}
\label{lem:W_i}
Algorithm $W_i$ as defined above is a weak 2-coloring algorithm for $F_i$. 
\end{lemma}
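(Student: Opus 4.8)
The plan is to show that $W_i$, which answers a query $v \in [n_i]$ by running the probing algorithm that $W_{i+1}$ uses for that same vertex name, produces a legal weak $2$-coloring of every graph $G_i \in F_i$. The idea is to embed $G_i$ into a larger graph $G_{i+1} \in F_{i+1}$ in which the image of $G_i$ ``looks locally the same'' from the point of view of the vertices of $G_i$, so that $W_{i+1}$'s answers on $G_{i+1}$ coincide with $W_i$'s answers on $G_i$, and legality on $G_{i+1}$ transfers back to $G_i$.

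First I would make the embedding precise. Given $G_i \in F_i$, pick any edge $e = (x,y)$ of $G_i$, and build $G_{i+1}$ by taking $H_1 = \cdots = H_d = G_i$ (all $d$ copies identical to $G_i$), removing the edge $e$ from each copy, and attaching the two new special vertices $a_{i+1}, b_{i+1}$: for each copy, $a_{i+1}$ is joined to the image of $x$ and $b_{i+1}$ to the image of $y$. This $G_{i+1}$ is a legal member of $F_{i+1}$ by construction, it is $d$-regular, and it has $n_{i+1} \le 2(d+1)^{i+1}$ vertices. Now rename the vertices so that one distinguished copy, say $H_1$, carries exactly the names $[n_i]$ with the same names as in $G_i$ (this is permissible since within $F_{i+1}$ we may name the $n_{i+1}$ vertices arbitrarily; by Proposition~\ref{pro:renaming} we may freely choose the name space for the other copies and the special vertices so that they lie outside $[n_i]$). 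The key structural point is that every vertex $u$ of $H_1$ other than the two endpoints of the removed edge has exactly the same neighbor list in $G_{i+1}$ as in $G_i$, and the two endpoints have one neighbor changed (from the other endpoint of $e$ to a special vertex).

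The crucial observation about the probe complexity is that $W_{i+1}$ uses fewer than $i+1$ probes, i.e.\ at most $i$ probes, and in fact at most $k \le i$ probes. Since $G_i \in F_i$, by the inductive hypothesis any weak $2$-coloring algorithm for $F_i$ needs at least $i$ probes, so $W_i$ would then be a weak $2$-coloring algorithm for $F_i$ with at most $i$ probes — this is consistent and not yet a contradiction; the contradiction is squeezed out later in the main proof by running the adversary of Lemma~\ref{lem:adversary} against $W_i$ and lifting its two witness graphs into $F_{i+1}$. So here the only thing to establish is correctness: $W_i$ is genuinely a weak $2$-coloring algorithm for $F_i$. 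For a queried vertex $v \in [n_i]$, run $W_{i+1}$ on $G_{i+1}$ queried at $v$; I claim that every probe answer it receives is identical to the probe answer $W_i$ receives when run on $G_i$ at $v$, \emph{provided} the probes never reach the two modified endpoints $x,y$ in a way that exposes the changed edge. Since $W_{i+1}$ makes at most $i < q$ probes and, more importantly, since the modified graph differs from $G_i$ only in two cells per copy, I need to handle the case where a probe does land on $x$ or $y$ — but the point of using $d$ identical copies is that the adversary in the enclosing argument can always arrange which copy plays the role of $G_i$ and redirect the special-vertex connections so that, from $v$'s limited view, the altered endpoints are indistinguishable from their $G_i$ counterparts. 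Concretely: $W_{i+1}(v)$ must yield a legal weak $2$-coloring of $G_{i+1}$; restricting that coloring to $H_1$, every edge of $H_1$ except possibly $e$ is satisfied; and since $v$ ranges over all of $[n_i] = V(H_1)$, and since for \emph{every} choice of removed edge $e$ we get such a $G_{i+1}$, we may intersect over all these $G_{i+1}$'s to conclude that the coloring $v \mapsto W_{i+1}(v)$ is legal on \emph{all} edges of $G_i$, hence is a legal weak $2$-coloring of $G_i$.

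The main obstacle, and the step I would spend the most care on, is the claim that $W_{i+1}$'s probes on $G_{i+1}$ and $W_i$'s probes on $G_i$ see identical answers — equivalently, that a weak probe into $G_{i+1}$ can always be answered in a way consistent with $G_i$. The subtlety is exactly the one the paper flags after Definition~\ref{def:probe}: a weak probe $(v,j)$ returns the name at $v$'s $j$th port, but does not reveal which port of the returned neighbor points back to $v$. This asymmetry is what lets us splice copies of $G_i$ together around $a_{i+1}, b_{i+1}$ while keeping the port structure seen from inside any one copy unchanged; without it, a probe reaching $x$ would immediately reveal the missing edge. I would therefore argue that, because the removed edge $e$ can be chosen adversarially and the special vertices' attachments rerouted accordingly, the answers to $W_{i+1}$'s probes can be made to match those of $G_i$ for \emph{all} queried $v \in [n_i]$ simultaneously with the legality conclusion above, so that the restriction of $W_{i+1}$'s coloring to $H_1$ is forced to be a legal weak $2$-coloring of $G_i$. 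Once this is in hand, $W_i$ being a correct weak $2$-coloring algorithm for $F_i$ is immediate, and Lemma~\ref{lem:W_i} follows.
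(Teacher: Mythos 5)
Your proof has a genuine gap, and it is exactly at the place you flag as "the step I would spend the most care on."

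The paper's proof is by contradiction: assume there is some $G_i \in F_i$ and some $v \in [n_i]$ such that $v$ and all its neighbors receive the same color under $W_i$. Running $W_i$ on $v$ and on each of $v$'s $d$ neighbors consumes at most $k(d+1) \le (d-1)(d+1)$ probes. Since $G_i$ has at least $\frac{d}{2}n_1 = d(d+1)$ edges, some edge $(u,w)$ not incident with $v$ escapes all of these probes. The paper then removes \emph{that} edge to build $G_{i+1}$, so the probe answers $W_{i+1}$ sees on $G_{i+1}$ when queried at $v$ and at $v$'s neighbors are \emph{verbatim} those $W_i$ sees on $G_i$, giving a monochromatic closed neighborhood in $G_{i+1}$ and the contradiction. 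The counting argument (at most $k(d+1)$ probes vs.\ at least $d(d+1)$ edges) is the engine, and it only works because we restrict attention to the $d+1$ executions around a single bad vertex.

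Your approach instead tries to establish the positive statement all at once: find a $G_{i+1}^e$ in which the probe answers for \emph{every} $v \in [n_i]$ agree with $G_i$, and then intersect over choices of $e$. Neither half of this works. First, there may be no single removed edge $e$ unprobed by all $n_i$ queries: the total number of probes across all queries can be on the order of $k n_i$, which easily exceeds the $\frac{d}{2}n_i$ edges of $G_i$, so the counting that rescues the paper's local argument is unavailable here. Second, the "intersection" step is unsound: the coloring $v \mapsto W_{i+1}(G_{i+1}^e, v)$ \emph{depends on $e$} — if a probe of some query hits a reassigned port at an endpoint of $e$, the answer changes, and so may the color — so there is no single coloring of $H_1$ that you are intersecting constraints over. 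You notice this issue ("provided the probes never reach the two modified endpoints") but then resolve it by appealing to "the adversary in the enclosing argument" who can "arrange which copy plays the role of $G_i$" — this appeal is out of place, since Lemma~\ref{lem:W_i} must be a self-contained correctness claim about $W_i$ and cannot lean on the adversary constructed later in the proof of Theorem~\ref{thm:detlower2}. To repair the argument, replace the direct proof with the paper's contradiction, add the $k(d+1)$ vs.\ $\frac{d}{2}n_i$ counting, and pick the removed edge to dodge only the probes of $v$ and its neighbors rather than all of $[n_i]$.
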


\begin{proof}
Suppose for the sake of contradiction that there is some graph $G_i \in F_i$ which $W_i$ fails to weakly 2-color. Let $v \in [n_i]$ be a vertex colored by $W_i$ in the same color as its neighbors. We now show that there is an input $G_{i+1} \in F_{i+1}$ in which $v$ is colored by $W_{i+1}$ in the same color as its neighbors, contradicting the fact that $W_{i+1}$ be a weak 2-coloring algorithm for $F_{i+1}$.

Run $W_i$ on $d+1$ vertices from $G_i$, namely, on $v$ and on its neighbors. Altogether this involves at most $k(d+1) \le (d-1)(d+1)$ probes. Observe that $G_i$ has $\frac{d}{2}n_i \ge \frac{d}{2}n_1 = d(d+1)$ edges. Hence at least $d+1$ of the edges of $G_i$ are not involved in any of these probes, and consequently there is an edge $(u,w)$ (not involving $v$) that is not probed. Now construct $G_{i+1}$ by including $G_i$ as one of the $H_j$ and the edge $(u,v)$ as the edge removed from this $H_j$. Hence $v$ maintains in $G_{i+1}$ the same set of neighbors that it has in $G_i$, and neither $v$ nor its neighbors change their color under $W_{i+1}$, giving the desired contradiction.
\end{proof}

For every $t$ in the range $0 \le t \le n_{i+1} - n_i$, let $W_{i,t}$ be the algorithm for $F_i$ in which the name space is $[t+1, t+n_i]$ rather than $[n_i]$, and each  queried vertex uses its algorithm from $W_{i+1}$. Under this notation, $W_{i,0} = W_i$. The proof of Lemma~\ref{lem:W_i} shows that algorithm $W_{i,t}$ is a weak 2-coloring algorithm for $F_i$ with name space $[t+1, t+n_i]$,

Lemma~\ref{lem:W_i} together with the inductive hypothesis implies that $k \ge i$. (In fact, it could be that $F_i$ by itself requires more than $i$ probes, and then we are done. But suppose that we are not yet done.) Together with our assumption (for the sake of contradiction) that $k < i+1$ we get that $k = i$. This implies that $W_i$ satisfies the conditions of Lemma~\ref{lem:adversary} with respect to the family $F_i$. Likewise, using Proposition~\ref{pro:renaming} we have that for every $t$ in the range $0 \le t \le n_{i+1} - n_i$, algorithm $W_{i,t}$ satisfies the conditions of Lemma~\ref{lem:adversary} with respect to the family $F_i$ and the name space $[t+1, t+n_i]$.

We now exhibit a graph $G_{i+1} \in F_{i+1}$ on which $W_{i+1}$ makes at least $i+1$ probes, showing that $k \ge i+1$. Recall that $G_{i+1}$ needs to be constructed from $d$ graphs $H_j$ (for $1 \le j \le d$) each from $F_i$, and two auxiliary vertices $a_{i+1}$ and $b_{i+1}$. We use an adversary argument to construct $G_{i+1}$. For every $1 \le j \le d$, the adversary uses the name space $[(j-1)n_i + 1, jn_i]$ for graph $H_j$. The adversary names $a_{i+1}$ as $dn_j + 1$, and makes the $j$th port of $a_{i+1}$ lead into $H_j$. The adversary will use as subroutines the adversaries $A_{F_i,W_{i,t}}$ whose existence is guaranteed by Lemma~\ref{lem:adversary}.

The query that the adversary gives to $W_{i+1}$ is the vertex $a_{i+1} = dn_j + 1$. First, assume that at least one probe is made to a port of $a_{i+1}$, and assume w.l.o.g. that the first probe is such a probe. Upon a probe to port $j$ of vertex $a_{i+1}$ the adversary replies by the vertex $v_j$ that is the query provided by adversary $A_{F_i, W_{i,(j-1)n_i}}$. Likewise, given any probe in graph $H_j$, the adversary relays the probe to the corresponding adversary $A_{F_i, W_{i,(j-1)n_i}}$, and relays the received answer back to $W_{i+1}$. 

Suppose for the sake of contradiction that $W_{i+1}$ stops after $i$ probes (rather than $i+1$), and suppose without loss of generality that it gives $a_{i+1}$ the color~0. As $W_{i+1}$ spent at least one probe on a port of $a_{i+1}$, only $i-1$ probes are left for any of the graphs $H_j$. By Lemma~\ref{lem:adversary}, in each $H_j$ the corresponding adversary $A_{F_i, W_{i,(j-1)n_i}}$ can complete the graph $H_j$ is such a way that  $W_{i,(j-1)n_i}(v_j)$ must be $0$. This gives the graphs $H_1, \ldots, H_d$.

To complete the description of $G_{i+1}$, we need to select in each $H_j$ an edge incident with the corresponding $v_j$ that will be removed, so as to allow $a_{i+1}$ to connect to $v_j$ (and $b_{i+1}$ to connect to the other endpoint of the edge). To ensure that in $G_{i+1}$ we have that $W_{i+1}(v_j) = 0$, the removed edge will be one that was not probed by $W_{i+1}$ in the adversarial process described above, and also not probed by $W_{i,(j-1)n_i}$ upon query $v_j$. Altogether this excludes at most $i-1 + i < d$ ports of $v_j$ (the inequality holds because $i \le q-1$ and $2q < d + 3$), and hence $v_j$ has an incident edge that can be removed. On $G_i$ constructed as above, $a_{i+1}$ and all its neighbors are all colored~0, contradicting that $W_{i+1}$ ends after $i$ probes. Hence $W_{i+1}$ uses at least $i+1$ probes.

Recall that we assumed that at least one probe is made to a port of $a_{i+1}$. If no such probe is made, then the adversary can reply to all of the probes and complete the graphs $H_j$ arbitrarily. Then, for each $j$, the adversary can remove some unprobed edge to a vertex $v_j$ that is colored $0$ (there must be at least one  such vertex and at least one such edge in each $H_j$), and connect $v_j$ to $a_{i+1}$ using this port, and the $j^{th}$ port in $a_{i+1}$.
\end{proof}

\section{Randomized Weak $2$-Coloring}
\label{subsec:rand2}

In this section we  give a randomized  LCA for weak $2$-coloring, \algorithmref{alg:lca2rand}. Each node is randomly assigned a color, and then the inconsistencies created by this coloring are corrected. We consider four possible implementations of the correction process: using strong probes with arbitrary parent selection, the probe complexity of \algorithmref{alg:lca2rand} is $\Theta(\log{n})$; using strong probes with randomized selection, it is $\Theta\left( \frac{\log{n}}{\log\log{n}}\right)$; using weak probes with arbitrary parent selection, it is $\Theta(\log^2{n})$; and using weak probes with randomized selection, it is $\Theta\left( \frac{\log^2{n}}{\log\log{n}}\right)$.

\algorithmref{alg:lca2rand} is the following. First, each vertex $v$ is assigned a color  $c_{temp}(v)$  uniformly at random from $\{0,1\}$. We say that a vertex $v$ is  \emph{good} if it has a neighbor $u$ such that $c_{temp}(v) \neq c_{temp}(u)$. Otherwise, $v$ is \emph{bad}. Every good vertex $v$'s color is fixed to be $c_{temp}(v)$. Bad vertices choose a parent independently of $c_{temp}$. When a vertex $v$ is queried, if it is good, it returns its color, otherwise, it probes its parent. If its parent is good, $v$ returns the complementary color of $\parent(v)$, otherwise it continues iteratively until it encounters either a good vertex or a vertex that it had already encountered. In the first case, we call the encountered vertex a (primary)  \emph{root}; in the second case, this necessarily closes a unique cycle and we call the lowest ID vertex on the cycle a (secondary) \emph{root} and color it $0$. We then color the vertices on this cycle from the root using alternating colors. We differentiate between primary and secondary roots only for clarity; the algorithm treats them identically.

\begin{algorithm}[h]
	\caption{\textsc{Generic Randomized Weak $2$-coloring LCA}}\label{alg:lca2rand}
	\SetKwInOut{Input}{Input}\SetKwInOut{Output}{Output}\SetKwInOut{Inquiry}{Inquiry}
	
	\Input{$G=(V,E)$, query $v$, random function $c_{temp}:V\rightarrow \{0,1\}$}
	\Output{a color in $\{0,1\}$}
	\BlankLine
	
	$G'(v)=(V',E')$, $V'=\emptyset$, $E'=\emptyset$\;
	$u=v$\;

	$V' \leftarrow u$\;
	\While{1}
		{
		\tcc{Step $1$}
		\If{$\exists w \in N(u) : c_{temp}(u) \neq c_{temp}(w)$}
		{$root=u$\;
				$c(u)=c_{temp(u)}$\;
				Break\;
	}\Else{
				\tcc{Step $2$}
			Choose $\parent(u)$ from $u$'s neighbors, independently of $c_{temp}$\;
				\tcc{The method of choosing determines the running time, but any method works}
						$V' \leftarrow \parent(u)$\;
						$E' \leftarrow (u,\parent(u))$\;
			\If{$\parent(u)  \in V'$}
	{Let $root$ be the vertex with the smallest ID in the cycle of $V'$\;
		$c(root)=0$\;
		Break\;}
\Else{

		$u=\parent(u)$\;}
	}}

		\tcc{Step $3$}
		Let $\ell(v)$ be the number of invocations of the ``parent" operations  that are needed in order to reach $root$ when starting at $v$. Then $c(v) = \ell(v) \mod 2$\;
	Return $c(v)$\;

\end{algorithm}

\begin{lemma}
	\algorithmref{alg:lca2rand} produces a weak 2-coloring.
\end{lemma}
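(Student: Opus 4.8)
The plan is to fix an arbitrary assignment of the random seed, so that the temporary coloring $c_{temp}$ and the parent map $v\mapsto\parent(v)$ become fixed functions, and to show that the coloring $c$ returned by \algorithmref{alg:lca2rand} is a valid weak $2$-coloring (the lemma then follows for every seed). Call a vertex \emph{good} if it has a neighbor of a different temporary color, and \emph{bad} otherwise. The first step is a symmetry observation: if $u$ is good with witness $w\in N(u)$ (so $c_{temp}(u)\neq c_{temp}(w)$), then $w$ is good as well, with witness $u$. Since good vertices retain their temporary colors, this already establishes the weak-coloring condition at every good vertex.

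Next I would describe the structure the parent map imposes on the bad vertices. Starting from a bad vertex $v$ and repeatedly applying $\parent$ traces a walk $v=p_0\to p_1\to\cdots$ that, because each step goes to a neighbor, either first hits a good vertex $g$ (a \emph{primary root}) or first revisits a vertex, thereby closing a unique cycle consisting entirely of bad vertices, whose smallest-ID member is the \emph{secondary root}; these cycles have length at least $2$ (a vertex is never its own parent) and may be even or odd. The crucial bookkeeping fact is that the walk from any vertex lying on $v$'s walk is exactly the corresponding suffix of $v$'s walk; hence the root reached from $v$, and the number $\ell(v)$ of $\parent$-steps needed to reach it, are well defined and independent of the queried vertex, and whenever $\parent(v)$ is bad we have $\ell(\parent(v))=\ell(v)-1$ with the same root. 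Reading off Step~$3$ in this light, the color of a bad vertex is $c(v)=(\ell(v)+c_{\mathrm{root}})\bmod 2$, where $c_{\mathrm{root}}$ is $0$ at a secondary root and $c_{temp}(g)$ at a primary root $g$ (good vertices keep $c_{temp}$); equivalently, each root keeps the color assigned to it in the loop and every bad vertex receives the opposite color of its parent \emph{unless} its parent is the root of its own cycle. Together with the bookkeeping fact this also shows that $c$ is a well-defined function $V\to\{0,1\}$.

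It then remains to verify the weak-coloring condition at the bad vertices. Since primary roots are good (already handled), let $v$ be a bad vertex that is not a secondary root. Then $\parent(v)\in N(v)$, and comparing the parity expressions — using $\ell(\parent(v))=\ell(v)-1$ when $\parent(v)$ is bad, or $c(\parent(v))=c_{temp}(g)=c_{\mathrm{root}}$ with $\ell(v)=1$ when $\parent(v)$ is the primary root $g$ — yields $c(v)\neq c(\parent(v))$, so $v$ has a neighbor of the opposite color. Finally let $r$ be a secondary root on a parent-cycle of length $L\ge 2$, so $c(r)=0$. Its neighbors on the cycle in $G$ are $\parent(r)$, colored $(L-1)\bmod 2$, and its cycle-predecessor $w$ (the vertex with $\parent(w)=r$), colored $\ell(w)\bmod 2=1$. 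If $L$ is even then $c(\parent(r))=1\neq c(r)$; if $L$ is odd (so $L\ge 3$ and $\parent(r)\neq w$) then $c(\parent(r))=0=c(r)$ but $c(w)=1\neq c(r)$. In all cases $r$ has a neighbor of the opposite color. Since isolated vertices, if any, carry no constraint, this covers all cases and $c$ is a weak $2$-coloring.

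I expect the main obstacle to be exactly this bookkeeping: making rigorous that the root and the quantity $\ell(\cdot)$ produced by the while-loop are the same no matter which vertex of a path is queried — which is needed even to make sense of the statement ``$c$ is a coloring'' — and handling the single monochromatic edge (the one incident to the secondary root) that appears on an odd parent-cycle, which is precisely the reason only weak, and not proper, $2$-coloring can be guaranteed along such cycles. Everything else reduces to parity arithmetic.
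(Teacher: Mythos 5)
Your proof is correct and follows essentially the same route as the paper's: treat the good vertices first (they keep their temporary colors and witness each other), then argue via parent-walk parity that each bad non-root vertex differs in color from its parent, and finally handle the secondary root separately using a cycle neighbor, with the consistency of the root and of $\ell(\cdot)$ across queries being the key bookkeeping step that both arguments rely on. One small cosmetic difference: your case split on the parity of the cycle length $L$ at a secondary root $r$ is unnecessary, since the cycle-predecessor $w$ with $\parent(w)=r$ always has $\ell(w)=1$ and hence $c(w)\neq c(r)$ regardless of $L$; this is the single neighbor the paper uses.
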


\begin{proof}
	Regardless of $c_{temp}$, for every query $v$, the algorithm terminates (because $G'(v)$ can have at most $n$ distinct vertices). Hence every vertex indeed receives a color.
	We show that $v$ must have a neighbor $w$ with $c(v) \not= c(w)$. If $v$ receives its color in Step~$1$ then there is some neighbor $w$ for which $c_{temp}(v) \neq c_{temp}(w)$. $w$ must also receive its color in its own Step~$1$ and $c(v) \not= c(w)$. If $v$ receives its color in Step $2$ or $3$ then there are two cases to consider.

	\begin{enumerate}
		
		\item Vertex $v$ is not the root of $G'(v)$. In this case, consider the parent $w$ of $v$. When $v$ is queried, it assigns to $w$  the color $c(v)-1$. When $w$ is queried, it assigns itself $c(v)-1$ as well. This is because $G'(w)$ and $G'(v)$ are either identical (if $G'(v)$ is a cycle), or $G'(v)$ has an extra edge $(v,w)$. As $v$ is not the root of $G'(v)$, $G'(v)$ and $G'(w)$ have the same root. Hence $c(v) \not= c(w)$.
		
		\item Vertex $v$ is the root of $G'(v)$. Then $v$ sits on the unique cycle of $G'(v)$. Let $w$ be the vertex on that cycle such that $v$ is the parent of $w$. Then necessarily $G'(w)$ is the same cycle with $v$ as its root, and hence $c(v) \not= c(w)$.
			\end{enumerate}
	\end{proof}

\subsection{Strong Probes}
\label{sub:strongp}
\paragraph{Arbitrary choice of parent.}

We first analyze \algorithmref{alg:lca2rand} for arbitrary strong probes.

\begin{thm}\label{thm:randomf}
	For every $n$ vertex graph and every $1 \le k \le n$, the probability (over the random choice of $c_{temp}$) that there is a vertex $v$ that  uses more than $k + 1$ strong probes, when executing ~\algorithmref{alg:lca2rand} with an arbitrary (but independent of $c_{temp}$) choice of parent for each vertex,  is at most $2^{1-k}n $. In particular, for $k = (\alpha+1) \log n$  this probability is  at most $n^{-\alpha}$.
\end{thm}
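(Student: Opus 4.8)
The plan is to reduce the probe count of a query to a deterministic combinatorial quantity on the parent graph, and then to bound the probability that this quantity is large by a monochromatic-chain argument, rather than by (incorrectly) treating the relevant ``bad'' events as independent.

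First I would pin down the exact probe count. Because the parent of each vertex is chosen independently of $c_{temp}$, once $G$ and the parent function are fixed the directed path $v = u_0\to u_1\to u_2\to\cdots$ obtained by iterating $\parent(\cdot)$ from $v$ does not depend on $c_{temp}$. Looking at \algorithmref{alg:lca2rand}, each pass through the \textbf{while} loop issues exactly one strong probe — on the current vertex $u_i$, to obtain $N(u_i)$, which is needed both for the Step~1 test and for choosing $\parent(u_i)$ — and no vertex is probed twice, since the loop stops as soon as a parent coincides with an already-visited vertex. Hence the number of strong probes on query $v$ is $m(v)+1$, where $m(v)$ is the least index $i$ with either $u_i$ good or $u_{i+1}\in\{u_0,\dots,u_i\}$; equivalently, query $v$ uses more than $k+1$ probes precisely when $u_0,\dots,u_{k+1}$ are pairwise distinct and $u_0,\dots,u_k$ are all bad.

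The heart of the argument — and the step I expect to need the most care — is bounding the probability of the latter event for a fixed $v$. The naive approach would be to multiply the probabilities that each $u_i$ is bad, but this fails: consecutive vertices of the path are $G$-neighbours, so the neighbourhoods that determine ``$u_i$ bad'' and ``$u_{i+1}$ bad'' overlap and these events are dependent. The key observation that avoids this is that if $u_i$ is bad then every $G$-neighbour of $u_i$, and in particular $u_{i+1} = \parent(u_i)$, gets the same temporary colour as $u_i$; hence ``$u_0,\dots,u_k$ all bad'' forces $c_{temp}(u_0) = c_{temp}(u_1) = \dots = c_{temp}(u_{k+1})$. When $u_0,\dots,u_{k+1}$ are $k+2$ distinct vertices these colours are i.i.d.\ uniform bits, so the probability they all agree is $2\cdot 2^{-(k+2)} = 2^{-(k+1)}$ (and if the path is not simple on these vertices the event cannot occur). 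So for each fixed $v$ the probability of using more than $k+1$ probes is at most $2^{-(k+1)}$.

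Finally I would union-bound over the $n$ possible queries, obtaining total failure probability at most $n\cdot 2^{-(k+1)} \le 2^{1-k}n$, and substitute $k = (\alpha+1)\log n$ to get $\tfrac12 n^{-\alpha} \le n^{-\alpha}$. Isolated vertices play no role in weak colouring, so we may assume $G$ has none (and then ``$u_i$ not good'' is the same as ``$u_i$ bad''). Apart from the monochromatic-chain observation, every step is routine accounting for how many probes the \textbf{while} loop makes.
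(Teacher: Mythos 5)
Your proof is correct and follows essentially the same approach as the paper's: both reduce the event of too many probes to the event that the (deterministic) directed parent path from $v$ is monochromatic under $c_{temp}$, and then union-bound over the at most $n$ such paths. Your writeup is more explicit than the paper's terse proof — in particular you spell out why ``$u_0,\dots,u_k$ all bad'' forces the chain $u_0,\dots,u_{k+1}$ to be monochromatic — but the underlying argument is identical.
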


  \begin{proof}
  	Denote the subgraph created by taking only edges from vertices to their parents by $G'=(V',E')$. For every $1 \le k \le n$, there are at most $n$ directed (simple) paths in $G'$ of length $k$: for every $v \in V'$, there is at most one path of length $k$ from $v$, as the out-degree of each vertex in $V'$ is exactly $1$.
  	For any set of vertices $S \subseteq V'$,   the probability that $S$ is monochromatic with respect to $c_{temp}$ is $2^{1-|S|}$. Applying the union bound over the $n$ possible paths of length $|S|$ completes the proof.
  \end{proof}

The following proposition shows that the upper bound of Theorem~\ref{thm:randomf} is tight, even for graphs of diameter $2$. 

\begin{proposition}
	\label{pro:diameter3}
	There are graphs of diameter~$2$ for which the longest monochromatic path in the lowest ID scheme is of length $\Omega(\log n)$ with probability $1-o(1)$ over the random choice of $c_{temp}$.
\end{proposition}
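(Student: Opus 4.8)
The plan is to reuse the graph from the proof of Proposition~\ref{pro:diameter2}. Let $v_1, \dots, v_{n-1}$ form a simple path (with edges $v_i v_{i+1}$) and assign $\mathrm{ID}(v_i) = n-i$, so that IDs decrease along the path; then add a universal vertex $v_n$ of ID $n$ adjacent to all of $v_1, \dots, v_{n-1}$. As noted there, this graph has diameter $2$. First I would verify the parent pointers in the lowest ID scheme: for $1 \le i \le n-2$ the smallest ID in the closed neighborhood $N^+(v_i)$ is $\mathrm{ID}(v_{i+1}) = n-i-1$, so $p(v_i) = v_{i+1}$; the global minimum ID vertex $v_{n-1}$ satisfies $p(v_{n-1}) = v_{n-1}$ (a self-loop); and $p(v_n) = v_{n-1}$. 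Hence $G'$ contains the directed path $v_1 \to v_2 \to \cdots \to v_{n-1}$ of length $n-2$.

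Next I would analyze the restriction of the random coloring $c_{temp}$ to this path. The values $c_{temp}(v_1), \dots, c_{temp}(v_{n-1})$ are i.i.d.\ uniform on $\{0,1\}$, and any run of consecutive equally-colored vertices along the path is a monochromatic directed path in $G'$. So it suffices to show that with probability $1-o(1)$ the longest monochromatic run among $n-1$ fair coin flips has length $\Omega(\log n)$. I would set $m = \lceil \tfrac12 \log_2 n \rceil$, partition $v_1, \dots, v_{n-1}$ into $B = \lfloor (n-1)/m \rfloor = \Omega(n/\log n)$ disjoint consecutive blocks of $m$ vertices each, observe that each block is monochromatic with probability $2^{1-m} \ge n^{-1/2}$, and that these events are mutually independent (disjoint blocks, independent coins). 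Then $\Pr[\text{no block is monochromatic}] \le (1 - n^{-1/2})^{B} \le \exp(-B n^{-1/2}) = \exp\!\big(-\Omega(\sqrt n / \log n)\big) = o(1)$. A monochromatic block yields a monochromatic directed path on $m$ vertices, i.e.\ of length $m-1 = \Omega(\log n)$, which proves the proposition; together with Theorem~\ref{thm:randomf} this shows the bound there is tight up to constants even on diameter-$2$ graphs.

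There is no real obstacle here. The only points needing care are (i) checking the parent structure and the diameter of the construction, which is routine, and (ii) choosing the block length $m$ so that $2^{1-m}$ is polynomially large in $1/n$ (so that $(1-2^{1-m})^B = o(1)$) while keeping $m = \Omega(\log n)$; the choice $m \approx \tfrac12 \log_2 n$ does both. Alternatively one could simply invoke the classical fact that the longest run in $n$ fair coin flips is $\log_2 n - O(1)$ with high probability, but the self-contained disjoint-block argument above already gives the claimed $\Omega(\log n)$ bound.
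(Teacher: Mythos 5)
Your proof is correct and takes essentially the same route as the paper: reuse the graph from Proposition~\ref{pro:diameter2}, partition the length-$(n-2)$ directed path into $\Theta(n/\log n)$ disjoint blocks of length $\Theta(\log n)$, and show that with probability $1-o(1)$ at least one block is monochromatic. You are somewhat more careful than the paper in explicitly checking the parent pointers and the block-length constant, but the underlying argument is identical.
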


\begin{proof}
	Let $G$ be a graph as in the proof of~\propref{pro:diameter2}: a path with $n-1$ vertices with IDs from $n-1$ to~$1$ and  one vertex with ID $n$ connected to all other vertices. The probability that a path of length $c\log{n}$ is monochromatic is $1/n^c$. There are $\frac{n}{c\log{n}}$ such disjoint paths. Therefore, the probability that none of these is monochromatic is at most $(1-\frac{1}{n^c})^{n/(c\log{n})} = o(1)$, for $c<1$.
\end{proof}

\paragraph{Random choice of parent.}

We now show that the random lowest ID scheme for choosing parents does strictly better.
We say that a vertex is {\em boring} if it has the same color as all its neighbors under $c_{temp}$. Namely, the boring vertices are those that are not secondary roots. Let $B(c,G)$ denote the size of the largest boring connected subgraph in $G$ under a 2-coloring $c_{temp}$.

\begin{lemma}
\label{lem:randomf}
For every $n$ vertex graph $G$, the probability (over the random choice of $c_{temp}$) that $B(c_{temp},G) \ge k$ is at most $\min_{0 < s < k}\left\{\frac{n^2}{{k \choose s}}\right\}$.
\end{lemma}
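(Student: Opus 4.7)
My plan is a first-moment argument via Markov's inequality applied to a well-chosen family of indicator variables.

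\textbf{Setting up the indicators.} For each $s$-subset $S \subseteq V$, let $Y_S$ be the indicator of the event that $S$ is contained in some boring connected subgraph of size at least $k$ (with respect to $c_{temp}$). The crucial deterministic observation is the following: if $B(c_{temp},G) \geq k$, then by fixing any witnessing boring connected subgraph $B$ with $|B| \geq k$, every one of the $\binom{|B|}{s} \geq \binom{k}{s}$ subsets $S \subseteq B$ of size $s$ satisfies $Y_S = 1$. Hence $\sum_S Y_S \geq \binom{k}{s}$ on the event $\{B(c_{temp},G) \geq k\}$, and Markov's inequality gives
\[
\Pr\bigl[B(c_{temp},G) \geq k\bigr] \;\leq\; \frac{\mathbb{E}[\sum_S Y_S]}{\binom{k}{s}} \;=\; \frac{\sum_S \Pr[Y_S = 1]}{\binom{k}{s}}.
\]

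\textbf{Bounding the first moment.} The remaining task is to show $\sum_S \Pr[Y_S = 1] \leq n^2$. Note that $Y_S = 1$ forces every vertex of $S$ to be boring and forces all of them to share a common color (because they all sit inside one connected boring subgraph, and within such a subgraph boring vertices propagate their color along edges). Consequently $N[S] = S \cup N(S)$ must be monochromatic, yielding the pointwise bound $\Pr[Y_S = 1] \leq 2 \cdot 2^{-|N[S]|}$.

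\textbf{Main obstacle.} The hard part is that the naive bound $\sum_S 2^{1-|N[S]|} \leq \binom{n}{s} 2^{1-s}$ exceeds $n^2$ once $s \geq 3$, so the pointwise inequality alone is not enough. To obtain $n^2$ one must additionally exploit the connectivity condition hidden in $Y_S = 1$. My approach would be a canonical-anchoring charging scheme: for each $S$ with $Y_S = 1$, assign a canonical pair of vertices $(u_S, v_S) \in V \times V$ extracted from the unique boring connected component containing $S$ (for example, the lexicographically smallest and largest IDs within that component, or the endpoints of a canonical BFS spine), and then argue that for every fixed pair $(u,v) \in V \times V$ the expected mass $\sum_{S : (u_S,v_S)=(u,v)} \Pr[Y_S=1]$ is at most $1$. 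Summing over the $n^2$ pairs then produces the desired bound. The most delicate point is verifying the constant-expected-mass property per pair, which uses the structural fact that once the coloring $c_{temp}$ is fixed the boring connected subgraph containing $(u,v)$ is uniquely determined (if it exists at all), so the $s$-subsets charged to $(u,v)$ all live inside one fixed combinatorial object whose total probabilistic weight is controlled by the monochromaticity event on $N[\{u,v\}]$.
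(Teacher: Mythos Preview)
Your setup via Markov on $s$-subset indicators is exactly the high-level approach the paper takes; the divergence is entirely in how the first-moment bound $\sum_S \Pr[Y_S=1]\le n^2$ is obtained. The paper does not attempt any charging to pairs of vertices. Instead it restricts attention to \emph{connected} $s$-subsets $S$ and invokes a Bollob\'as--type enumeration lemma (proved in the appendix): in any $n$-vertex graph, the number of connected $s$-vertex subsets with exactly $t$ external neighbors is at most $n\binom{s+t-1}{t}\le n\cdot 2^{s+t-1}$. Multiplying by the monochromaticity probability $2^{1-s-t}$ and summing over $t=0,\ldots,n-1$ gives the $n^2$ bound in one line. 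The Markov step then uses that a boring connected subgraph of size $k$ contains $\binom{k}{s}$ boring size-$s$ subsets.

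Your charging scheme, as written, has a real gap. The assignment $(u_S,v_S)$ is read off from the boring component containing $S$, hence depends on the random coloring $c_{temp}$; so the expression $\sum_{S:(u_S,v_S)=(u,v)}\Pr[Y_S=1]$ sums a deterministic quantity over a \emph{random} index set and is not well-defined. The natural repair is to bound $\mathbb{E}\bigl[\sum_S Y_S\,\mathbb{1}[(u_S,v_S)=(u,v)]\bigr]$ instead, but then for a fixed pair $(u,v)$ this equals $\mathbb{E}\bigl[\binom{|C|}{s}\bigr]$ where $C$ is the random boring component whose canonical pair is $(u,v)$, and you provide no argument that this expectation is $O(1)$. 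Your closing remark (``controlled by the monochromaticity event on $N[\{u,v\}]$'') does not do the job: fixing the colors on $N[\{u,v\}]$ constrains only $d_u+d_v+2$ bits and says nothing about the size of the resulting boring component, hence nothing about $\binom{|C|}{s}$. What is missing is exactly the combinatorial control on connected subgraphs with prescribed external boundary size that the paper imports from its appendix; without some substitute for that ingredient, the $n^2$ bound does not follow from your outline.
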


\begin{proof}
Consider a connected subgraph $S$ of size $s$ and let $t$ denote the number of vertices not in $S$ but adjacent to $S$. For $S$ to be boring all these $t+s$ vertices need to be of the same color, and this happens with probability $2^{1 - s - t}$. In any $n$ vertex graph $G$, the number of ways of picking a connected subgraph of $s$ vertices that has exactly $t$ neighbors is at most $n{s + t \choose s} \le 2^{t+s-1}n$ (see \appendixref{sec:conn}). Hence the expected number of boring connected subgraphs of size $s$ is as most $\sum_{t=0}^{n-1} 2^{t+s-1}n2^{1 - s - t} \le n^2$.

Suppose that the size of the largest boring connected subgraph is $k$. Then for every $s < k$, this subgraph has ${k \choose s}$ subgraphs of its own that are boring. By Markov's inequality, this event  happens with probability at most $\frac{n^2}{{k \choose s}}$ (over the choice $c_{temp}$). 
\end{proof}

\begin{thm}\label{thm:r2d2}
For every graph, the longest monochromatic path has length $O\left(\frac{\log n}{\log\log n}\right)$ w.h.p. in  the random lowest neighbor scheme.
\end{thm}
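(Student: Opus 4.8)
The plan is to imitate the first-moment argument used earlier in this section for the longest directed path under the random lowest neighbor scheme, replacing the bounded-degree savings there by savings coming from the monochromatic hypothesis. Fix a target even length $k=2m$. I would bound
\[
\Pr[\exists\text{ monochromatic directed path of length }2m]\ \le\ \sum_{\text{walks }(u_0,\dots,u_{2m})\text{ in }G}\Pr[\text{it is a directed path}]\cdot\Pr[\text{it is monochromatic}],
\]
using that the random IDs and the random coloring $c_{temp}$ are independent. A (simple) walk on $2m{+}1$ vertices is monochromatic with probability $2^{-2m}$ (a non-simple directed path just wraps the length-two cycle at its end, so still has $\ge 2m-1$ distinct vertices); and, by Lemma~\ref{lem:nonincrease2}, any directed path forces the IDs at its $m{+}1$ even positions to be strictly decreasing, an event of probability $\tfrac{1}{(m+1)!}$.

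The subtle point, and where the unbounded degree must be tamed, is the number of walks: the naive count $\prod_i d_{u_i}$ is hopeless, so I would exploit that every internal vertex $u_i$ of the chain is ``boring''. Indeed, $u_i$ picks as parent its minimum-ID neighbor $u_{i+1}$, and $u_{i+1}$ must have the same color as $u_i$ (otherwise $u_i$ would be \emph{good} and would not lie on the chain at all), so in fact \emph{all} of $N(u_i)$ shares $u_i$'s color; hence $\bigcup_{i<2m}\bigl(N(u_i)\cup\{u_i\}\bigr)$ is monochromatic, a set whose size grows with $\sum_i d_{u_i}$, contributing a penalty $2^{-\Theta(\sum_i d_{u_i})}$ that can be played against the $\prod_i \tfrac{1}{d_{u_i}}$ coming from the pointer choices (each step goes to the min-ID neighbor of $u_i$, an event of probability $\le 1/d_{u_i}$ for a fixed target, and these telescope when summed over walks). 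The goal is a bound of the form $n\cdot c^{m}\cdot\tfrac{1}{(m+1)!}$ for an absolute constant $c$; then $2m=\Theta\!\bigl(\tfrac{\log n}{\log\log n}\bigr)$ makes $(m+1)!\ge n^{2}$ by Stirling, and the total is $o(1)$, which is the claim.

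The main obstacle I anticipate is keeping the $\tfrac{1}{(m+1)!}$ factor alive: a plain first moment (using only ``monochromatic with probability $2^{-\Theta(m)}$'' and ``at most $n$ directed paths'') yields only the $O(\log n)$ bound, so the decreasing-even-chain factor must genuinely be extracted \emph{on top of} the degree bookkeeping --- and since the ``min-ID pointer'' events and the ``decreasing even chain'' event are both about the same set of random IDs and are not independent, this requires revealing the IDs neighborhood-by-neighborhood along the walk and checking that no conditioning step inflates the relevant factors. The degree-threshold tension (a boring vertex of degree $d$ occurs only with probability $2^{-d}$, useful only for $d\gtrsim\log n$, whereas a walk count is affordable only for degrees $\lesssim\sqrt{\log n}$) is exactly what the $\prod_i 1/d_{u_i}$ weighting is meant to finesse; if the combined estimate proves too delicate, a fallback is to apply Lemma~\ref{lem:randomf} wholesale and then close the remaining $\log n$ versus $\log n/\log\log n$ gap only on the low-degree part of $G$, where the bounded-degree bound proved earlier in this section applies directly.
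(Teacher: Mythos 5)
Your first-moment plan hits two distinct snags. First, the claim that the boringness of the chain vertices gives a penalty of $2^{-\Theta(\sum_i d_{u_i})}$ is not correct: the set $\bigcup_{i}\bigl(N(u_i)\cup\{u_i\}\bigr)$, which is the set that must be monochromatic, can be far smaller than $\sum_i d_{u_i}$ when neighborhoods along the walk overlap --- e.g.\ if consecutive $u_i$'s share most of their neighbors, the union has size $O(d+m)$, not $\Omega(\sum_i d_{u_i})$, so the penalty is only $2^{-O(d+m)}$ and cannot absorb a walk-count of $d^{2m}$. Second, you cannot charge both $\prod_i 1/d_{u_i}$ from the pointer events and $1/(m{+}1)!$ from the decreasing even chain: they concern the same random IDs, and the pointer events are in general positively correlated, so $\Pr[W\text{ is the directed path}]$ need not be $\le \prod_i 1/d_{u_i}$ at all. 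Already for two steps with $N(u_0)=\{u_1,a,b\}$ and $N(u_1)=\{u_0,u_2,a\}$ one has $\Pr[u_1=\min N(u_0)\text{ and }u_2=\min N(u_1)] = 2/15 > 1/9 = \prod_i 1/d_{u_i}$. So a bound of the form $n\cdot c^m/(m{+}1)!$ does not follow from the ingredients you list, and degrees near $\Theta(\log n)$ --- which boring vertices realistically have --- are not tamed.

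Your fallback is closer to what the paper does, but it misidentifies the second stage. The paper does begin by applying Lemma~\ref{lem:randomf} (with $s=k/2$) to conclude that w.h.p.\ every boring connected subgraph has size $k\le 8\log n$, but it does \emph{not} then invoke the bounded-degree directed-path bound from earlier in this section. That bound is $nd^{2t}/t!$, and for $d=\Theta(\log n)$ and $t=\Theta(\log n/\log\log n)$ the ratio $t/(de)^2$ is $o(1)$, so the expression is $\gg 1$ and useless. Instead, the paper re-counts candidate paths \emph{within a single small component} using fresh degrees: by Lemma~\ref{lem:nonincrease2}, a valid step from $u_i$ must go to a neighbor not previously seen at the same parity, and since the same-parity fresh sets at odd (resp.\ even) positions are pairwise disjoint and contained in $C$, one has $\sum_{\text{odd }i} n_i\le k$ and $\sum_{\text{even }i} n_i\le k$. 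This caps the number of candidate paths of length $2t$ by $k\binom{k+t}{t}^2 (k/t)^{2t}$ rather than $d^{2t}$; combined with the factor $(1/t!)^2$ from the two decreasing ID subchains, the union bound closes at $t=\Theta(\log n/\log\log n)$. The ingredient your proposal is missing is precisely this replacement of a max-degree bound by a sum-of-fresh-degrees bound capped by the (already-bounded) component size.
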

\begin{proof}
Fixing $s = \frac{k}{2}$ in Lemma~\ref{lem:randomf}  implies that after the random coloring, the size of the largest boring component is at most $8\log n$, with high probability. Moreover, there are at most $n$ components in the graph.

Consider now an arbitrary component $C$ of size $k \le 8\log n$ (but still $k = \Omega\left( \frac{\log{n}}{\log\log{n}}\right) $). We bound from above the probability that under the random lowest neighbor scheme, $C$ contains a monochromatic path of length $2t$ (for simplicity, assume $t$ divides $k$). Consider an arbitrary path $P$ in $C$ (not directed, prior to choosing parents) of length $2t$. For vertex $i$ ($1 \le i \le 2t$) on $P$ let $n_i$ denote the number of {\em fresh} neighbors of $i$ in $C$, where a neighbor of $i$ is fresh if it was not a neighbor of any vertex $j < i$ where $i-j$ is even. Hence $\sum_{\mbox{odd}\; i} n_i \le k$ and $\sum_{\mbox{even}\; i} n_i \le k$. By Lemma~\ref{lem:nonincrease2}, in any directed (nonincreasing) path that agrees with $P$ up to vertex $i$, vertex $i$ has only $n_i$ choices for a parent (if $i$ chooses as parent a non-fresh neighbor, some ancestor of $i$ would have already chosen it). Consequently, the number of candidate monochromatic paths of length $2t$ is at most:
$$k{k+t \choose t}^2 \prod_{i=1}^{2t} n_i \leq k{k+t \choose t}^2 \left(\frac{k}{t}\right)^{2t} \le e^t \left(\frac{k}{t}\right)^{5t},$$
where $k$ is the number of starting vertices, ${k+t \choose t}$ is the number of ways of choosing the sequence of $n_i$ in odd (or even) locations, and the number of choices of parents is maximized when all $n_i$ are the same.

The probability that a particular candidate of path $P$ is realized as a monochromatic path in the random neighbor scheme is at most $\left(\frac{1}{t!}\right)^2$. Therefore, the probability that there is at least one such path of length $2t$ is at most
$\left(\frac{n}{(t!)^2}\right)e^t \left(\frac{k}{t}\right)^{5t}$.
Setting $t= \alpha\frac{\log n}{\log\log n}$ for some $\alpha>1$ gives that (as $k\leq 8 \log{n}$)
 with high probability over the choice of random IDs the longest monochromatic path in $G$ has length $O\left(\frac{\log n}{\log\log n}\right)$.
\end{proof}

\begin{proposition}
	There are graphs for which the longest  monochromatic path has length $\Omega\left(\frac{\log n}{\log\log n}\right)$ in the random lowest neighbor scheme  with  probability $1-o(1)$ (over random IDs and random colors).
\end{proposition}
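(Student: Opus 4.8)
The plan is to take $G$ to be a vertex-disjoint union of many identical \emph{path gadgets}, and to show that with probability $1-o(1)$ at least one gadget, by itself, already contains a long monochromatic directed path; since distinct gadgets share no edges, such a path is automatically a directed path in the global parent graph. Fix a small constant $c$ (any $c<\tfrac14$ will do), set $m=\lfloor \tfrac{c\log n}{\log\log n}\rfloor$, let each gadget be a simple path $w_0-w_1-\cdots-w_m$ on $m+1$ vertices, and let $G$ be the disjoint union of $N=\lfloor n/(m+1)\rfloor$ such gadgets together with one extra path on the remaining $n-N(m+1)<m+1$ vertices. Because there are no edges between gadgets, the parent of every vertex lies in its own gadget, and the random IDs and random colors of distinct gadgets are independent; hence the per-gadget events defined below are mutually independent.

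Call a gadget \emph{good} if (a) its random IDs are interleaved-decreasing, i.e.\ $r(w_0)>r(w_2)>r(w_4)>\cdots$ and $r(w_1)>r(w_3)>r(w_5)>\cdots$, and (b) $c_{temp}$ is constant on $\{w_0,\dots,w_m\}$. If a gadget is good, then by the lowest-neighbor rule $p(w_0)=w_1$ and $p(w_i)=w_{i+1}$ for every $1\le i\le m-1$ --- the two neighbors $w_{i-1},w_{i+1}$ of $w_i$ have the same parity, so by (a) $r(w_{i-1})>r(w_{i+1})$ --- so $w_0\to w_1\to\cdots\to w_m$ is a directed path of length $m$ (which then closes into the $2$-cycle $\{w_{m-1},w_m\}$), and by (b) it is monochromatic. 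Events (a) and (b) are independent, $\Pr[(b)]=2^{-m}$, and (since collisions among the random IDs have negligible probability, so the relative order of the $m+1$ IDs is a uniform permutation) $\Pr[(a)]=\bigl(\lceil\tfrac{m+1}{2}\rceil!\cdot\lfloor\tfrac{m+1}{2}\rfloor!\bigr)^{-1}\ge (m+3)^{-(m+3)}$. Thus a gadget is good with probability $p\ge 2^{-m}(m+3)^{-(m+3)}\ge (m+3)^{-(2m+3)}$.

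For the bookkeeping: since $m=o(\log n)$ we have $\log(m+3)\le \log\log n$ and $2m+3\le 3m$ for $n$ large, so $\log(1/p)\le (2m+3)\log(m+3)\le 3m\log\log n\le 3c\log n$, i.e.\ $p\ge n^{-3c}$. With $N\ge n/(2\log n)$ and $3c<1$ this gives $Np\ge \tfrac{1}{2}n^{1-3c}/\log n\to\infty$, so by independence $\Pr[\text{no gadget is good}]=(1-p)^{N}\le e^{-Np}=o(1)$. Hence with probability $1-o(1)$ (over random IDs and random colors) some gadget is good, and then $G$ contains a monochromatic directed path of length $m=\Omega\!\bigl(\tfrac{\log n}{\log\log n}\bigr)$, as required.

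The argument is almost entirely a first-moment estimate; the only step that warrants any care is confirming that under event (a) the directed path in the gadget runs all the way to $w_m$ before closing into a $2$-cycle --- this is where one must notice that $w_{m-2}$ and $w_m$ have the same parity and are therefore comparable via condition (a), so that $p(w_{m-1})=w_m$. Everything else is routine, made rigorous by the exact vertex-disjointness (hence probabilistic independence) of the gadgets.
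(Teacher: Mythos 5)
Your proof is correct and follows essentially the same first-moment argument as the paper: both exhibit $\Theta(n/m)$ disjoint, hence probabilistically independent, local configurations (you use vertex-disjoint path gadgets, the paper uses disjoint segments of a cycle), lower-bound by roughly $n^{-c}$ the probability that a single configuration simultaneously has monochromatic colors and a suitably ordered ID sequence forcing a long parent chain, and conclude via $1-(1-p)^{N}\to 1$. The only cosmetic difference is that you use the exactly-right "interleaved-decreasing" condition on IDs, whereas the paper uses the stronger (and marginally less probable) "monotone decreasing" condition; this changes constants but not the $\Omega(\log n/\log\log n)$ conclusion.
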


\begin{proof}
	Consider the cycle. The probability that a segment of length $t$ is both monochromatic and monotone decreasing is $\frac{1}{2^tt!}$, and there are $\lfloor n/t \rfloor$ such disjoint segments. The probability that none of these segments satisfies both properties is at most $\left( 1-\frac{1}{2^tt!}\right)^{n/t}$. Setting $t=\Omega\left(\frac{\log n}{\log\log n}\right)$ completes the proof.
\end{proof}


\subsection{Weak Probes}
\label{sub:weakp}
The only difference between using strong and weak probes in the implementation of~\algorithmref{alg:lca2rand} is that one strong probe suffices for Step $1$, but we may need $d(u)$ weak probes to check whether a neighbor of $u$ is colored differently than $u$.

\begin{corollary}[to~\theoremref{thm:randomf}]\label{thm:randoml}
~\algorithmref{alg:lca2rand} uses $O(\log^2{n})$ weak probes w.h.p. when implemented with an arbitrary (but independent of $c_{temp}$) choice of parent. 
\end{corollary}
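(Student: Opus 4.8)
The plan is to derive this directly from \theoremref{thm:randomf} by accounting for the cost of Step~1 when only weak probes are available. Recall that \algorithmref{alg:lca2rand} makes two kinds of accesses: in Step~1 it needs to determine whether the current vertex $u$ has a neighbor of a different temporary color, and in Step~2 it follows a single parent edge. With strong probes, Step~1 costs one probe per vertex on the explored path; with weak probes, Step~1 may cost up to $d(u)$ probes for a vertex $u$, since in the worst case all of $u$'s neighbors share its temporary color and must all be examined before concluding $u$ is bad (and if $u$ is good, it is discovered after at most $d(u)$ probes as well). The parent step in Step~2 costs one weak probe regardless (arbitrary parent can be taken as port~1).

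First I would invoke \theoremref{thm:randomf} with $k = (\alpha+1)\log n$ to conclude that, with probability at least $1 - n^{-\alpha}$, every queried vertex $v$ explores a directed path (the chain $v, \parent(v), \parent(\parent(v)), \ldots$ up to the root) consisting of at most $O(\log n)$ distinct vertices before terminating. Call this event $\mathcal{E}$; condition on it. On a single query, the algorithm visits these $O(\log n)$ vertices; at each one it spends at most $d(\cdot) \le n$ weak probes for Step~1 plus one weak probe for Step~2. Naively this only gives $O(n \log n)$, which is too weak. The key observation to get $O(\log^2 n)$ is that we do not need to examine all $d(u)$ neighbors of a bad vertex $u$: it suffices to examine neighbors one port at a time, and since $u$ is bad means \emph{every} neighbor has $u$'s color, we indeed pay $d(u)$ — so the naive bound seems forced.

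Therefore the correct statement requires a more careful reading: the right way to read Step~1 in the weak-probe model is that we probe ports of $u$ one by one, and we stop as soon as we find a differently-colored neighbor. A vertex $u$ on the explored path (other than the final root) is by definition \emph{bad}, so this still costs $d(u)$. To get the $O(\log^2 n)$ bound one additionally controls the number of probes at each bad vertex. Here I would argue: for a \emph{fixed} vertex $u$, the probability that $u$ is bad is $2^{1-d(u)} \le 1/2$ when $d(u)\ge 2$, but more relevantly, the number of probes charged to $u$ when it appears on an explored path is at most $d(u)$ only in the worst case; on average (over $c_{temp}$), conditioned on $u$ being bad we still pay all $d(u)$ of them. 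The cleanest route, and the one I expect the paper takes, is to bound the path length by $O(\log n)$ and separately argue that along a monochromatic (bad) path each vertex's Step~1 cost can be charged so that the total is $O(\log n) \cdot O(\log n) = O(\log^2 n)$: specifically, w.h.p. every bad connected subgraph has size $O(\log n)$ (this is exactly \lemmaref{lem:randomf}/\theoremref{thm:r2d2}-style reasoning for the arbitrary scheme, or simply: the monochromatic component containing $u$ has size $O(\log n)$ w.h.p., so $u$ — being bad — has all its neighbors in a component of size $O(\log n)$, forcing $d(u) = O(\log n)$). Combining: the path has $O(\log n)$ vertices, each costing $O(\log n)$ weak probes, for $O(\log^2 n)$ total.

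The main obstacle is precisely this second ingredient — bounding the degree of the bad vertices encountered. I would handle it by observing that if $u$ is bad under $c_{temp}$, then all of $N(u) \cup \{u\}$ is monochromatic, hence $u$ lies in a monochromatic connected subgraph of size at least $d(u)+1$; by the union-bound argument of \theoremref{thm:randomf} (paths) or more directly by a bound on the size of monochromatic components, every monochromatic connected subgraph has size $O(\log n)$ with probability $1 - n^{-\Omega(1)}$, so $d(u) = O(\log n)$ for every bad $u$, w.h.p. Taking a union bound over this event and event $\mathcal{E}$ above preserves the high-probability guarantee, and the stated $O(\log^2 n)$ weak-probe bound for the arbitrary-parent implementation of \algorithmref{alg:lca2rand} follows.
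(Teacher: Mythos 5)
Your proposal matches the structure of the paper's proof: bound the length of the monochromatic parent-path by $O(\log n)$ via Theorem~\ref{thm:randomf}, then bound the Step-1 cost at each bad vertex by bounding the degree of bad vertices by $O(\log n)$, and multiply. You correctly identify the second ingredient as the crux.

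However, the justification you offer for that second ingredient has a flaw. You assert that ``every monochromatic connected subgraph has size $O(\log n)$ with probability $1-n^{-\Omega(1)}$.'' This is false in general: in $K_n$ under a uniformly random $c_{temp}$, each color class is a monochromatic clique of size $\Theta(n)$. The true statement in the paper concerns \emph{boring} connected subgraphs (Lemma~\ref{lem:randomf}), which is a much more restrictive notion and requires the subgraph counting argument of Appendix~B; it is also overkill here. What you actually need, and what the paper uses, is the far simpler per-vertex bound that you yourself gesture at: for any fixed $u$, the event that $u$ is bad requires $N(u)\cup\{u\}$ to be monochromatic, which happens with probability exactly $2^{-d(u)}$ over $c_{temp}$. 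Hence $\Pr[\exists\, u:\ u\ \text{is bad and}\ d(u)\ge \alpha\log n] \le n\cdot n^{-\alpha}$ by a union bound over the $n$ vertices only --- no counting of connected subgraphs is involved. With that correction the rest of your argument goes through and is identical to the paper's: conditioning on the path having $O(\log n)$ vertices and every bad vertex having degree $O(\log n)$, the total number of weak probes is $O(\log^2 n)$.
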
 

\begin{proof} When implementing~\algorithmref{alg:lca2rand} with weak probes, Step~1 can take as many probes as the degree of the current vertex (denoted $u$ in the pseudocode). Note, however, that the probability that none of the neighbors of $u$ have a different color decays exponentially with the degree. Hence, if the degree is $b \log{n}$, for some $b>0$, then the probability that none of the neighbors is colored differently is $n^{-b}$. By~\theoremref{thm:randomf}, we know that the length of any monochromatic path is at most $(\alpha+1)\log{n}$ w.h.p. If any of the vertices on such a  path  has more than $\alpha \log{n}$ neighbors, we are done (for failure probability $n^{-\alpha}$). Otherwise, we need to make at most $\alpha(\alpha+1)\log^2{n}$ probes.
	\end{proof}

\begin{proposition}
	\label{pro:low}
	There are graphs for which~\algorithmref{alg:lca2rand} uses   $\Omega(\log^2 n)$ weak arbitrary  probes with constant probability (over the random choice of $c_{temp}$).
		\end{proposition}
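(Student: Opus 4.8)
The plan is to match the $O(\log^2 n)$ upper bound of \corollaryref{thm:randoml} with a family of graphs on which, with probability $1-o(1)$ over the random temporary coloring $c_{temp}$, there is a query that forces \algorithmref{alg:lca2rand} (with the arbitrary, i.e.\ port-$1$, parent rule) to probe $\Omega(\log^2 n)$ cells. The graph will be a disjoint union of identical \emph{gadgets}. Put $\ell = \lceil \tfrac{1}{4}\log n\rceil$. A gadget consists of a path $u_0,u_1,\dots,u_\ell$ together with a separate set $H=\{h_1,\dots,h_\ell\}$ of \emph{helper} vertices, with a complete bipartite graph joining every $u_j$ to every $h_i$ (this inflates the degrees while keeping the gadget small, in the spirit of~\propref{pro:diameter3}). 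Thus a gadget has $2\ell+1$ vertices, each $u_j$ has degree $\ell+1$ or $\ell+2$, and each $h_i$ has degree $\ell+1$. We fix the port numbering so that the port-$1$ neighbor of $u_j$ is $u_{j+1}$ for $0\le j<\ell$ and the port-$1$ neighbor of $u_\ell$ is $u_{\ell-1}$; all other ports are numbered arbitrarily. Let $G$ be the disjoint union of $\lfloor n/(2\ell+1)\rfloor=\Theta(n/\log n)$ such gadgets, padded by isolated vertices up to $n$ vertices total.

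\textbf{A monochromatic gadget is expensive.} Call a gadget \emph{monochromatic} if $c_{temp}$ assigns all $2\ell+1$ of its vertices the same color. In such a gadget every $u_j$ is \emph{bad}: its neighborhood is contained in $\{u_0,\dots,u_\ell\}\cup H$ and hence is monochromatic in $u_j$'s own color. Now query $u_0$. To evaluate the test in Step~$1$ of \algorithmref{alg:lca2rand} the algorithm must read all of $u_0$'s incident cells, since it never encounters a neighbor of a different color; this costs $d(u_0)=\ell+1$ weak probes, and then the port-$1$ parent rule sends the walk to $u_1$. Since $u_1,\dots,u_\ell$ are all bad, the walk visits $u_0,u_1,\dots,u_\ell$ in turn, reading all incident cells of each, and stops at $u_\ell$ only because $\parent(u_\ell)=u_{\ell-1}$ has already been visited. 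The total number of probes is therefore at least $\sum_{j=0}^{\ell} d(u_j)\ge (\ell+1)^2=\Omega(\log^2 n)$.

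\textbf{Some gadget is monochromatic.} A fixed gadget is monochromatic with probability $2^{1-(2\ell+1)}=2^{-2\ell}=\Theta(n^{-1/2})$. The gadgets occupy pairwise disjoint vertex sets, so these events are mutually independent, and there are $\Theta(n/\log n)$ of them; hence the probability that \emph{no} gadget is monochromatic is at most $\bigl(1-\Theta(n^{-1/2})\bigr)^{\Theta(n/\log n)}\le \exp\!\bigl(-\Theta(n^{1/2}/\log n)\bigr)=o(1)$. Thus with probability $1-o(1)$ over $c_{temp}$ some gadget is monochromatic, and querying its first path vertex makes $\Omega(\log^2 n)$ weak probes, which proves the proposition.

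\textbf{Where the difficulty lies.} The tension to resolve is that an expensive query needs $\Omega(\log n)$ bad vertices of degree $\Omega(\log n)$ on its walk, whereas a monochromatic event of probability $n^{-c}$ with $c<1$ (which is what is needed so that $n/\polylog(n)$ independent trials suffice) may involve at most $(1-\delta)\log n$ vertices. A chain of bad vertices with private neighborhoods, or a single bad vertex of degree $\log^2 n$, would force $\Theta(\log^2 n)$ vertices to be monochromatic and so fail. The fix is to let the bad vertices \emph{share} one helper set: the number of vertices forced to be monochromatic is then $2\ell$ (additive in the path length and the helper count) while the probe count is $\approx \ell^2$ (their product), and choosing $\ell=\tfrac{1}{4}\log n$ gives $2\ell=\tfrac{1}{2}\log n<\log n$ together with $\ell^2=\tfrac{1}{16}\log^2 n$, exactly the balance that makes the argument work. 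Turning the informal walk description into a precise count of the probes made by the specific pseudocode of \algorithmref{alg:lca2rand} is routine but is the one place that needs care.
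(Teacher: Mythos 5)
Your construction is correct, and it is essentially the same approach as the paper's, differing only in the shape of the gadget. The paper takes $G$ to be a disjoint union of $n/(c\log n)$ cliques of size $c\log n$ for a constant $c<1$; a fixed clique is monochromatic with probability $2n^{-c}$, so with high probability some clique is monochromatic, and an adversarial port numbering routes the parent walk through all $c\log n$ clique vertices, each costing $c\log n - 1$ weak probes in Step~1, for a total of $\Theta(\log^2 n)$. The key structural idea is identical in both proofs and is exactly the one you articulate in your last paragraph: the $\Theta(\log n)$ bad vertices visited by the walk must \emph{share} their neighborhoods, so that only $\Theta(\log n)$ vertices need to be monochromatic (yielding an event of probability $n^{-c}$ for some $c<1$, hence whp present among $\Theta(n/\log n)$ disjoint copies) while the sum of degrees along the walk is $\Theta(\log^2 n)$. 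In the paper's clique each vertex is simultaneously a walk vertex and a shared neighbor; in your gadget these roles are split between the path $u_0,\dots,u_\ell$ and the helper set $H$, which makes the walk explicit and removes the need to argue about routing inside a clique, but the probability/probe trade-off and the union-bound-over-gadgets argument are the same. Both proofs also give probability $1-o(1)$, stronger than the "constant probability" stated.
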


\begin{proof}
	Let $0<c<1$ be a constant, and let $G$ be a collection of $t=n/c\log{n}$ disjoint cliques of size $c\log{n}$, denoted $C_1, \ldots, C_t$. The probability that at least one such clique is monochromatic is at least $1/4$, for sufficiently large $n$. (If we wish to have a connected graph as a lower bound, we can simply connect  $C_i$  to $C_{i+1}$ by a single edge for $i \in [t-1]$, with minimal effect.)
	Let $E' = \Omega(\log^2{n})$ be the number of edges in each clique.	For deterministic choice of parent, it is easy to guarantee that for all cliques, there is a particular query for which  a cycle will be closed only after $E'-\log{n}$ probes (i.e., only when the last vertex of the clique is reached).
	\end{proof}

\begin{corollary}[to~\theoremref{thm:r2d2}]\label{coro:ra}
~\algorithmref{alg:lca2rand} uses $O\left( \frac{\log^2{n}}{\log\log{n}}\right) $ weak probes w.h.p. when implemented using the lowest random neighbor scheme. (The probability is over the random choices of parent and $c_{temp}$).
\end{corollary}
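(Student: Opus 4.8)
The plan is to follow the proof of \corollaryref{thm:randoml} almost verbatim, but to substitute the $O(\log n)$ bound on monochromatic paths of \theoremref{thm:randomf} with the sharper $O\!\left(\frac{\log n}{\log\log n}\right)$ bound of \theoremref{thm:r2d2}, and then to account carefully for how many weak probes are spent at each visited vertex. Fix any query $v$ and let $v = u_0,\, u_1 = \parent(u_0),\, u_2, \ldots, u_\ell$ be the sequence of vertices visited by \algorithmref{alg:lca2rand} before it halts at the root $u_\ell$. Each of $u_0, \ldots, u_{\ell-1}$ has an outgoing parent edge, hence is \emph{bad}, hence has the same $c_{temp}$-color as \emph{all} of its neighbors; in particular $c_{temp}(u_i) = c_{temp}(u_{i+1})$ for every $i < \ell$, so $u_0 \to u_1 \to \cdots \to u_\ell$ is a monochromatic directed path produced by the lowest random neighbor scheme. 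By \theoremref{thm:r2d2}, w.h.p. (over $c_{temp}$ and the random IDs, simultaneously for all queries) its length satisfies $\ell = O\!\left(\frac{\log n}{\log\log n}\right)$.

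It then remains to bound the number of weak probes spent at each $u_i$. Recall that in the weak-probe implementation one probes the ports of the current vertex $u$ in order, halting as soon as a neighbor of $c_{temp}$-color different from $c_{temp}(u)$ is seen (in which case $u$ is a root); if no such neighbor exists, all $d(u)$ ports get probed, after which $\parent(u)$ can be taken to be the already-revealed neighbor of smallest random ID, at no extra probe cost. For $i < \ell$ the vertex $u_i$ is bad, so all $d(u_i)$ of its ports are probed; but since the adversary fixes the port ordering before $c_{temp}$ is drawn, the event that $u_i$ is bad forces its $d(u_i)$ neighbors to agree with $c_{temp}(u_i)$, an event of probability $2^{-d(u_i)}$. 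A union bound over all $n$ vertices shows that, for any constant $\alpha > 1$, w.h.p. every bad vertex has degree at most $\alpha\log n$, so each $u_i$ with $i < \ell$ contributes at most $\alpha\log n$ probes.

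The one remaining case — and what I expect to be the main obstacle — is the root $u_\ell$, which, if it is a \emph{primary} root, is a good vertex of possibly unbounded degree, so the ``bad vertices have small degree'' argument fails for it. Here I would instead use that the algorithm stops at the \emph{first} port of $u_\ell$ whose neighbor disagrees with $c_{temp}(u_\ell)$: since the port ordering is fixed independently of $c_{temp}$, the neighbors of $u_\ell$ disagree with it independently with probability $\tfrac12$ each, so the number of ports probed at $u_\ell$ is stochastically dominated by a $\mathrm{Geometric}(\tfrac12)$ random variable, and a union bound over all vertices gives that w.h.p. it is $O(\log n)$. (If $u_\ell$ is a \emph{secondary} root, the last vertex whose ports are probed is the bad vertex $u_{\ell-1}$, already covered above, and the $2$-cycle closes as soon as its parent is identified.) Combining the three parts, with probability $1 - o(1)$ the total number of weak probes made on query $v$ is at most $(\ell+1)\cdot O(\log n) = O\!\left(\frac{\log^2 n}{\log\log n}\right)$, as claimed.
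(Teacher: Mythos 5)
Your proof follows the paper's own route: invoke \theoremref{thm:r2d2} to get the $O\!\left(\frac{\log n}{\log\log n}\right)$ bound on the length of the monochromatic parent chain, show that w.h.p.\ every bad vertex has degree $O(\log n)$ (since a bad vertex of degree $b\log n$ occurs with probability $n^{-b}$), and multiply. The one thing you add — and it is a genuine point that the paper's one-line proof elides — is the separate treatment of the terminal vertex $u_\ell$: if it is a primary root it is a \emph{good} vertex, so the ``bad vertices have small degree'' union bound does not bound its degree, and in the worst port order Step 1 could in principle cost up to $d(u_\ell)$ weak probes there. Your fix is exactly right: the algorithm halts at the first port whose neighbor disagrees with $c_{temp}(u_\ell)$, the port order is fixed before $c_{temp}$ is drawn, so $\Pr[\text{more than } 2\log n \text{ ports probed at } v] \le 2^{-2\log n} = n^{-2}$ for \emph{every} vertex $v$, and a union bound over all $n$ vertices gives the $O(\log n)$ bound at the root w.h.p. (In fact this single observation subsumes the separate bad-vertex degree bound: for a bad $v$ the stopping time \emph{is} $d(v)$, so the same union bound simultaneously shows bad vertices have degree $\le 2\log n$ w.h.p.) So this is the same approach as the paper, made slightly more careful.
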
 

\begin{proof} The proof is similar to the proof of ~\corollaryref{thm:randoml}: By~\theoremref{thm:r2d2}, we know that the length of any monochromatic path is at most $\frac{\gamma\log{n}}{\log\log{n}}$, for some $\gamma>0$. If any of the vertices on such a  path  has more than $\alpha \log{n}$ neighbors, for an appropriately chosen $\alpha>0$, we are done. Otherwise, we need to make at most $\frac{\alpha\gamma\log^2{n}}{\log\log{n}}$ probes w.h.p.
\end{proof}

	\begin{proposition} 
	There are graphs for which~\algorithmref{alg:lca2rand}, when implemented using the random lowest neighbor scheme, uses  $\Omega(\log^{2} {n}/\log\log{n})$  weak probes with constant probability over the random choice of $c_{temp}$.
\end{proposition}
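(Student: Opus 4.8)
The plan is to exhibit a single hard graph $G$ on $n$ vertices that is a disjoint union of $m=\Theta(n/\log n)$ identical \emph{gadgets}, each gadget tailored so that, with probability $n^{-\Theta(\epsilon)}$ over the random choices of the lowest-neighbor scheme (the random IDs) and of $c_{temp}$, querying a designated vertex of the gadget forces the traversal of \algorithmref{alg:lca2rand} to walk through $\ell=\Theta(\log n/\log\log n)$ consecutive \emph{bad} vertices each of degree $p=\Theta(\log n)$. Since each bad vertex probes all of its neighbours in Step~1 (and none of them has a different $c_{temp}$-value), this costs $\Omega(\ell p)=\Omega(\log^2 n/\log\log n)$ weak probes. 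Each gadget consists of a \emph{spine} $v_1,\dots,v_\ell$ joined into a path, together with a set $X$ of $p$ \emph{padding} vertices, each adjacent to every $v_i$ and to nothing else; thus $\deg(v_i)=p+O(1)$, $\deg(x)=\ell$ for $x\in X$, a gadget has $\ell+p=\Theta(\log n)$ vertices, and $m=\Theta(n/\log n)$ of them fill $\Theta(n)$ vertices (pad the remaining $O(\log n)$ vertices by enlarging one gadget's padding set).

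I would then define the \emph{firing} event of one gadget as a conjunction of two events over disjoint sources of randomness. First, over $c_{temp}$: all $\ell+p$ vertices of the gadget receive the same colour, which has probability $2^{1-(\ell+p)}=n^{-(1+o(1))\epsilon}$ and makes each of $v_1,\dots,v_\ell$ bad. Second, over the random IDs: the IDs assigned to $v_1,\dots,v_\ell$ are the $\ell$ smallest in the gadget, the odd-indexed spine vertices $v_1,v_3,\dots$ have decreasing IDs, and the even-indexed spine vertices $v_2,v_4,\dots$ have decreasing IDs. By \lemmaref{lem:nonincrease2}, this forces $\parent(v_i)=v_{i+1}$ for $i<\ell$ and $\parent(v_\ell)=v_{\ell-1}$, so that querying $v_1$ makes the algorithm visit exactly $v_1,\dots,v_\ell$ before closing the $2$-cycle $v_{\ell-1}\leftrightarrow v_\ell$. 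The probability of the ID event is at least $\binom{\ell+p}{\ell}^{-1}/\ell!$, and since $p/\ell=\Theta(\log\log n)$, Stirling gives $\binom{\ell+p}{\ell}=\exp(O(\ell\log\log n))$ and $\ell!=\exp(\Theta(\ell\log\log n))$; as $\ell=\Theta(\log n/\log\log n)$ both equal $\exp(O(\epsilon\log n))=n^{O(\epsilon)}$, so the ID event has probability $\ge n^{-O(\epsilon)}$. Hence a gadget fires with probability $\ge n^{-c\epsilon}$ for an absolute constant $c$.

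To conclude, I would use that the gadgets are vertex-disjoint, so their firing events — each a function only of the induced order of the IDs and of the $c_{temp}$-values inside that gadget — are mutually independent. Thus the probability that no gadget fires is at most $(1-n^{-c\epsilon})^{m}\le\exp(-m\,n^{-c\epsilon})=\exp(-\Theta(n^{1-c\epsilon}/\log n))$, which is $o(1)$ once $\epsilon$ is fixed below $1/c$. For such an $\epsilon$, with probability $1-o(1)$ some gadget fires, and querying its $v_1$ costs $\sum_{i=1}^{\ell}\deg(v_i)\ge \ell p=\Omega(\log^2 n/\log\log n)$ weak probes; re-probing neighbours when the implementation computes $\parent$ only changes the constant. (As in \corollaryref{coro:ra}, the probability is over the random IDs of the lowest-neighbor scheme and over $c_{temp}$.) This matches the upper bound of \corollaryref{coro:ra}.

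The step I expect to be the main obstacle is the second event: arguing that a directed path of length $\ell$ actually materialises in the random lowest-neighbor scheme with probability as large as $n^{-O(\epsilon)}$. This is exactly what the two structural choices buy: pre-laying the spine fixes the shape of the path (so the only ID constraint is that the $\ell$ spine IDs beat the $p$ padding IDs and are alternately decreasing, which by \lemmaref{lem:nonincrease2} suffices), and sharing the same padding set $X$ among all spine vertices keeps the monochromatic set that must be fixed at size $\Theta(\log n)$ rather than $\Theta(\ell\log n)$, so the colour event is only polynomially unlikely. The delicate point is the Stirling bookkeeping that makes both events' probabilities $n^{-O(\epsilon)}$ with an exponent constant that can be pushed below $1$ by choosing $\epsilon$ small, while the resulting cost $\Theta(\ell p)$ remains $\Theta(\log^2 n/\log\log n)$ regardless of that (fixed, positive) $\epsilon$.
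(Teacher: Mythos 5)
Your proposal is correct and follows essentially the same approach as the paper's proof: a disjoint union of $\Theta(n/\log n)$ path-plus-padding gadgets, where a gadget fires when it is monochromatic under $c_{temp}$ and the random IDs force the spine path, followed by a union bound over gadgets. The only cosmetic difference is that you make the padding set an independent set while the paper joins the path to a clique of size $\Theta(\log n)$, which does not affect the analysis, and your ID-event probability $\binom{\ell+p}{\ell}^{-1}/\ell!$ coincides with the paper's $k!/(k+r)!$.
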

\begin{proof}
	Let $0<c_1, c_2<1$ be  constants. Let $G$ be a collection of $t=\Theta(n/\log{n})$ copies of the following graph $G_i$: a path of length $r=\frac{c_1\log{n}}{\log\log{n}}$, $v_i^1, \ldots, v_i^r$,  where all $v_i^j$ are connected to (all vertices of a) clique of size $k = c_2\log{n}$.  The probability that $G_i$ is monochromatic is $2^{-r-k+1}$. The probability that for all $j \in [r-1]$, the parent of $v_i^j$ is $v_i^{j+1}$, is $\frac{k!}{(k+r)!} \ge (k+r)^{-r}$. For $k$ and $r$ as above, the probability that both events happen is at least roughly (up to low order multiplicative terms) $2^{-c_2\log n}2^{-c_1\log n}$. 
Letting $\alpha=2(c_1+c_2)$ and setting $c_1, c_2$ to be appropriately small,  the probability that none of the $G_i$ have this property is $ \left( 1 - \frac{1}{{n^{\alpha}}}\right) ^{\Omega(n/\log{n})} = o(1)$.
\end{proof}

\subsection{Seed Length}

Thus far in this section, we assumed that we have a perfectly random string of unlimited length, that we can use to generate $c_{temp}$. We  show to to implement~\algorithmref{alg:lca2rand} using a seed of logarithmic length. Our result, whose proof we defer to~\appendixref{app:kwise}, is the following. 

\begin{theorem}\label{thm:ak}
	\algorithmref{alg:lca2rand} is a randomized LCA for weak $2$-coloring arbitrary graphs that can be implemented as  	\begin{enumerate}
		\item  an $\left( O(\log^2{n}), O(\log{n}), \frac{1}{\poly{n}}\right) \text{ - WP LCA, or }$
		\item an
 $\left( O(\log{n}), O(\log{n}), \frac{1}{\poly{n}}\right)  \text{ - SP LCA}.$	
	\end{enumerate}
\end{theorem}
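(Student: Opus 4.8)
The plan is to derandomize Algorithm~\ref{alg:lca2rand} by replacing the fully random function $c_{temp}$ with one drawn from a small family of $k$-wise independent functions, and to observe that the analyses of Theorems~\ref{thm:randomf} and~\ref{thm:r2d2} (and their weak-probe corollaries) only ever invoke independence among at most $O(\log n)$ (respectively $O(\log n/\log\log n)$) of the values $c_{temp}(v)$ at a time. Concretely, I would first recall the standard construction: for a field of size a prime power $q = \Theta(n)$, a random degree-$(k-1)$ polynomial over $\mathbb{F}_q$, with its output reduced to a single bit, gives a $k$-wise independent family of functions $V \to \{0,1\}$ described by a seed of $k\log q = O(k\log n)$ bits. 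Taking $k = \Theta(\log n)$ (with a large enough constant) yields a seed of length $O(\log^2 n)$; to get down to $O(\log n)$ I would instead use the sharper observation below.

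The key step is to re-examine exactly which probabilistic claims the two upper-bound proofs use. In the proof of Theorem~\ref{thm:randomf}, the only probabilistic fact is that a fixed set $S$ of $|S|$ vertices is monochromatic under $c_{temp}$ with probability $2^{1-|S|}$; this holds verbatim if $c_{temp}$ is $|S|$-wise independent, and the union bound is over at most $n$ paths of each length. So if $c_{temp}$ is $k$-wise independent with $k = (\alpha+2)\log n$, no path of length exceeding $k$ is monochromatic except with probability $n^{-\alpha}$, giving the SP bound $O(\log n)$ and (via Corollary~\ref{thm:randoml}) the WP bound $O(\log^2 n)$. Here the crucial point is that the arbitrary parent choice is made \emph{independently of $c_{temp}$}, so it can be fixed adversarially first and the only randomness that matters is the coloring of the $\le k$ vertices on any candidate path. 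This already proves part~1 of the theorem with seed length $O(\log^2 n)$ coming from $k = \Theta(\log n)$, and part~2 with the same seed; I would then note that for part~2 one can afford $k = \Theta(\log n)$ and the seed is $k \cdot \log q = \Theta(\log^2 n)$ — so to actually reach $O(\log n)$ I need the refinement in the next paragraph.

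To obtain seed length $O(\log n)$ I would exploit that the events ``$S$ is monochromatic'' for the relevant sets $S$ are \emph{downward closed} and that it suffices to kill all monochromatic paths of one fixed length $k = c\log n$: a monochromatic path of length $> k$ contains one of length exactly $k$. For a single such path the bad event has probability $2^{1-k} = n^{-\Theta(1)}$, and there are at most $n$ of them, so the bad event is a union of $n$ events each of probability $n^{-\Omega(1)}$, each determined by $k = O(\log n)$ coordinates of $c_{temp}$. By a standard argument (e.g.\ using an $\epsilon$-almost $k$-wise independent distribution, which by Naor--Naor / Alon--Goldreich--Håstad--Peralta can be sampled with seed $O(\log\log n + k + \log(1/\epsilon)) = O(\log n)$ for $\epsilon = n^{-\Theta(1)}$ and $k = \Theta(\log n)$), each bad event's probability changes by at most $\epsilon$, so the union bound still gives failure probability $1/\poly(n)$. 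The same reasoning applied to Lemma~\ref{lem:randomf} and Theorem~\ref{thm:r2d2} (whose union bounds are over $\poly(n)$ sets, each an event on $O(\log n/\log\log n)$ coordinates) handles the randomized-parent variants. For the weak-probe bound in part~1 I combine this with the degree-truncation argument already used in Corollary~\ref{thm:randoml}: a vertex on a monochromatic path with degree $> \alpha\log n$ would itself witness a differently-colored neighbor with overwhelming probability — again an event on few coordinates — so the same almost-$k$-wise family suffices, and the seed stays $O(\log n)$, giving a $\bigl(O(\log^2 n), O(\log n), 1/\poly n\bigr)$-WP LCA.

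The main obstacle I anticipate is bookkeeping the dependence of the ``bad'' events on the \emph{number} of coordinates of $c_{temp}$ rather than on full independence: one must verify that truncating a degree-$(k-1)$ polynomial modulo $2$ still yields a $k$-wise independent bit-string over a domain of size $n$ (choosing $q$ a power of $2$ slightly larger than $n$ makes the reduction-to-one-bit step clean), and that the parent-selection randomness — in the randomized-parent variants — can be taken from a disjoint, independently seeded part of the seed so that the ``independent of $c_{temp}$'' hypothesis used in Theorems~\ref{thm:randomf} and~\ref{thm:r2d2} is preserved. Once that separation of the two sources of randomness is set up carefully, the rest is a direct substitution of the $k$-wise (or almost $k$-wise) family into the existing proofs, so I would present it as a short lemma (``all union bounds above are over events each depending on $O(\log n)$ values of $c_{temp}$, hence a seed of length $O(\log n)$ suffices'') followed by the two instantiations claimed in the statement; the full details are deferred to Appendix~\ref{app:kwise}.
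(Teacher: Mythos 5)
Your proposal is correct and lands on essentially the same argument as the paper's Appendix~\ref{app:kwise}: replace the truly random $c_{temp}$ by one drawn from a family of $\eps$-almost $k$-wise independent hash functions with $k=\Theta(\log n)$ and $\eps = n^{-\Theta(1)}$, use the Naor--Naor construction to get seed length $O(\log n)$, and observe that every event fed into the union bound depends on only $O(\log n)$ values of $c_{temp}$ and has probability $2^{-\Omega(\log n)}$, so the additive $\eps$ slack from almost-independence is absorbed.

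Two remarks on the differences. First, your opening detour through exact $k$-wise independence via degree-$(k-1)$ polynomials is, as you yourself note, a dead end for the claimed $O(\log n)$ seed; the real proof starts at the almost-$k$-wise refinement. The paper goes there directly and formalizes the ``events depend on few coordinates'' claim via an auxiliary process $\A_k$ that always reveals exactly $k$ IDs regardless of the colors it sees (continuing past any root it finds, using a fixed rule once a cycle closes); this makes the set of $k$ relevant coordinates manifestly independent of $c_{temp}$, which is cleaner than arguing about ``all paths of length $k$'' since the adaptive WP probing also reveals non-path neighbors. Second, your extension to the randomized-parent variants (Lemma~\ref{lem:randomf}, Theorem~\ref{thm:r2d2}) is not required — the theorem only claims the arbitrary-parent complexities $O(\log^2 n)$ WP and $O(\log n)$ SP — and is not as immediate as stated: the random-ID parent choice consumes its own randomness, and the $1/t!$-type estimates in the proof of Theorem~\ref{thm:r2d2} do not obviously survive replacing the IDs by a short-seed pseudorandom object. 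The paper deliberately restricts to the arbitrary-parent scheme here, and your argument for that case is sound.
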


\subsection{Discussion}

 We showed that for $d=O\left( \frac{\log{n}}{\log\log{n}}\right)$, a linear dependence on the degree is necessary for deterministic weak $2$-coloring LCAs. It would be interesting to resolve whether such a dependence is also necessary for higher degrees. We conjecture that this is indeed the case. If so, it would give a separation between deterministic and randomized LCAs for this problem. 
 
\section{Lower Bound for Vertex Cover}
\label{sec:VC}

In this section we prove Theorem~\ref{thm:lower}:
 {
	\renewcommand{\thetheorem}{\ref{thm:lower}}
 \begin{theorem}
 		For any $\eps<\frac{1}{2}$, any randomized SP LCA that computes a vertex cover  whose size is a $(\lowerapp)$-approximation to the size of the  minimal vertex cover requires at least $\lowerprobes$ probes.
 \end{theorem}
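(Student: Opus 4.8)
\medskip
\noindent\textbf{Proof proposal.}
The plan is to construct, for every $n$, a family $\mathcal{F}$ of bipartite graphs on $n$ vertices, each having a minimum vertex cover of size $O(1)$, on which no randomized SP LCA making fewer than $\lowerprobes$ probes can output a valid vertex cover of size at most $\lowerapp$ times the optimum. The argument follows the template of the distributed vertex‑cover lower bound of Kuhn, Moscibroda and Wattenhofer (two vertices with essentially the same bounded‑radius view, one of which must be in the cover and one of which must not), but exploits the fact that a strong probe returns a whole adjacency list and hence an LCA with few probes cannot inspect more than a handful of the (very many) neighbors of a high‑degree vertex: it therefore suffices that the two vertices have only \emph{almost} the same view at distance $2$, rather than exactly the same view at distance $\Theta(\sqrt n)$ as in the distributed setting.

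First I would fix the geometry of the hard instance. There is a distinguished vertex $v_0$ whose removal from any vertex cover forces $\Omega(\lowerapp)$ other vertices into the cover (so that $v_0$ ``needs to be added''), together with a set $S \ni v_0$ of $k = \Theta(n^{1-2\eps})$ candidate vertices that all have the same large degree $d$, with $p := \lowerprobes$ chosen so that $p \ll d$. The candidates' neighborhoods are drawn from a common pool of ``point'' vertices via a naming scheme based on low‑degree polynomials over a finite field $\mathbb{F}_q$ (a candidate $\leftrightarrow$ a polynomial $f$, a point $\leftrightarrow$ a pair $(x,y)\in\mathbb{F}_q^2$, with $f$ adjacent to $\{(x,f(x))\}$); the standard fact that two distinct low‑degree polynomials agree in few places gives that any two candidates share only $O(1)$ common neighbors. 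An auxiliary constant‑size ``core'' gadget, together with the way the candidates other than $v_0$ attach to the pool, is engineered so that every edge not incident to $v_0$ is covered by $v_0$ and $O(1)$ core vertices, keeping $\mathrm{OPT}=O(1)$, while the feature that distinguishes $v_0$ from the other candidates (a sub‑gadget of $\Omega(\lowerapp)$ forcing edges) lives entirely within $v_0$'s depth‑$2$ neighborhood. The family $\mathcal{F}$ is obtained by letting each candidate in turn play the role of $v_0$ (relabelling the rest accordingly), and we place the uniform distribution on this choice.

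The core lemma is an indistinguishability claim proved by a union bound. Fix a randomized SP LCA $\mathcal{A}$ making at most $p<\lowerprobes$ probes, fix its random seed (we optimize over it at the end by Yao's principle), and fix a query vertex $v\in S$. A run of $\mathcal{A}(v)$ touches at most $p$ vertices; by the polynomial low‑overlap property the private neighborhoods of at most $O(p)$ candidates are reachable, and the forcing sub‑gadget of any particular candidate occupies only a vanishing fraction (for the parameter range at hand) of the portion of that candidate's depth‑$2$ ball that a run of $p$ probes can uncover. Hence, except for a vanishing fraction of choices of $v_0\in S$ — this is exactly where the quantitative balance between $p$, $d$ and $k$ enters — the entire probe transcript of $\mathcal{A}(v)$, and therefore the answer $\mathcal{A}(v)\in\{\text{in},\text{out}\}$, does not depend on which candidate is $v_0$. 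Averaging over $v\in S$ and over the seed yields the dichotomy: either (i) for a constant fraction of candidates $v$, $\mathcal{A}(v)=\text{out}$ with constant probability, and then on the instance with $v_0=v$ the LCA leaves an edge at $v_0$ uncovered (or, equivalently, is forced onto $\Omega(\lowerapp)$ further vertices, exceeding the allowed cover size since $\mathrm{OPT}=O(1)$); or (ii) $\mathcal{A}$ answers $\text{in}$ for all but $o(k)$ candidates irrespective of which is $v_0$, and then on every instance the output cover has size $(1-o(1))k = \Omega(n^{1-2\eps}) > \lowerapp\cdot\mathrm{OPT}$. Either way $\mathcal{A}$ fails, so $\Omega(\lowerprobes)$ probes are necessary; tracking the constants through the balance $n=\Theta(kd)+O(1)$, $q=\Theta(d)$, $p\ll d$ recovers the exact bound $\lowerprobes$.

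The step I expect to be the main obstacle is the construction in the second paragraph: one must simultaneously (a) keep $\mathrm{OPT}$ as small as $O(1)$, which pushes towards covering the candidates' large neighborhoods ``from the candidate side'' and hence towards having the decoy candidates absorbed by the $O(1)$‑size core rather than by paying for their own neighborhoods; (b) keep the confusable set $S$ of size $\Theta(n^{1-2\eps})$ with pairwise tiny neighborhood overlap; and (c) keep each candidate's degree $d \gg \lowerprobes$ — all inside $n$ vertices. Reconciling these is precisely where the low‑overlap polynomial naming is indispensable, since it lets the $k$ candidate neighborhoods essentially tile a shared point pool of size $\Theta(q^2)$, and verifying that the resulting $\mathrm{OPT}$ is genuinely $O(1)$ while $v_0$ is genuinely essential is the delicate part. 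A secondary technical point is making the indistinguishability argument robust to \emph{adaptive} probing and to the LCA's internal randomness; this is handled as in the reduction of G{\"o}{\"o}s et al., by observing that an adaptive transcript of length $p$ still explores at most $p$ vertices, so the union bound over the few transcripts that could possibly reach $v_0$'s forcing gadget goes through unchanged.
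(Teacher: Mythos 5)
You have the right high-level ingredients — low-degree polynomial naming to make candidate neighborhoods nearly disjoint, indistinguishability of a special vertex from many look-alikes under $o(n^{\eps})$ strong probes, Yao's principle, and the observation that LCAs can be fooled by ``almost'' identical views where a distributed algorithm could not — but there is a genuine gap in the construction, and it is not merely the delicacy you flag: the specific design you propose would not support the indistinguishability lemma at all.

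The trouble is the pair of requirements that $\mathrm{OPT}=O(1)$ and that $v_0$ carries a ``forcing sub-gadget'' of $\Omega(n^{1-2\eps})$ edges inside its depth-$2$ ball. The first requirement cannot hold for any graph in which the $\Theta(n^{1-2\eps})$ candidates each have degree $d\gg 1$ into a shared point pool unless essentially all of that pool is put in the cover (a cover of size $O(1)$ cannot touch $\Theta(n^{1-2\eps}\cdot d)$ distinct candidate--pool edges), so the $O(1)$-core picture is internally inconsistent, as you half-suspect. More importantly, attaching $\Omega(n^{1-2\eps})$ forcing edges in or near $v_0$ makes $v_0$'s local view \emph{visibly} different from a decoy's: a single strong probe to $v_0$ (or to one of its neighbors, for a depth-$2$ gadget) returns the full adjacency list and exposes the anomalous degree or the anomalous neighbor set, so the claim that ``the entire probe transcript of $\mathcal{A}(v)$ does not depend on which candidate is $v_0$'' fails on the very first probe. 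The whole point of the argument is that the distinguished vertex must be locally \emph{identical} to the decoys up to an $O(1)$-edge perturbation, so that with $\ll d$ probes into a degree-$d$ neighborhood with pairwise overlap $\le k-1$ one cannot land on the perturbation; a macroscopic forcing gadget is exactly what you cannot afford.

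The paper resolves this differently. It takes two disjoint copies of a bipartite graph $U^*\cup W^*$ with $|U^*|=p^k$, $|W^*|=p^2$, degrees $p$ on the $U$-side, and pairwise overlap $\le k-1$ via the polynomial incidence you describe; it then deletes one $U$--$W$ edge in each copy and joins the two exposed $U$-endpoints by a single inter-copy edge (``fusion''). The fusion vertex has the same degree $p$ and the same local shape as every other $U$-vertex, differing only in one missing and one extra edge, and the two deleted $W$-endpoints are the only witnesses. $\mathrm{OPT}$ is then $\Theta(p^2)=\Theta(n^{2\eps})$, not $O(1)$ ($W$ plus one fusion vertex is a cover), and the ``forcing'' is just the single inter-copy edge: if the LCA omits both fusion vertices its output is not a vertex cover at all, with no need for a large forcing gadget. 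The union bound is over the four hidden vertices $u'_\tau,w'_\tau,u''_\tau,w''_\tau$ (hit with probability $\le 1/(kp)$ under fewer than $p/k$ probes), and the low-overlap property guarantees that some neighbor of the queried vertex has no probed neighbors, so the view is consistent with that queried vertex being a fusion vertex. This is where the polynomial trick is actually load-bearing, and it sidesteps both obstacles your sketch runs into.
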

 	\addtocounter{theorem}{-1}
 }


In order to prove \theoremref{thm:lower} , we use the minimax theorem of Yao~\cite{Yao77},
by showing a lower bound on the expected number of queries required by a deterministic LCA when the 
input is selected from a certain distribution.
To this end we construct, for infinitely many values of $n$, a family of graphs, 
parametrized by a constant $k \geq 3$. 

Let $p$ be a prime number, and $\Z(p)$ its associated field.
Let $G^*=(U^* \cup W^*, E)$ be a bipartite graph, where $|U^*|=p^k$, $|W^*| = p^2$. Label each vertex in $U^*$ by a $k$-tuple $(a_0, a_1, \ldots, a_{k-1})$, $a_i \in \Z(p), i \in \{0,1, \ldots, k-1\}$ and the vertices in $W^*$ by a pair $(b_0, b_1)$, $b_i \in \Z(p), i \in \{0,1\}$.
Associate each vertex $u_j=(a_0, a_1, \ldots,  a_{k-1}) \in U^*$ with the polynomial $f_j(x)=a_0+a_1 x+\cdots +a_{k-1} x^{k-1}$.
Connect $(b_0, b_1) \in W^*$ to $u_j$ iff $b_1=f_j(b_0)$.

\begin{claim}\label{claim:kdiff}
	For every two different vertices $u_i, u_j \in U^*$, $N(u_i)\cap N(u_j) \leq k-1$.
	\end{claim}
	\begin{proof}
	 Consider $f_i$ and $f_j$, the polynomials associated with $u_i$ and $u_j$ respectively. Let $g(x)=f_i(x)-f_j(x)$. As $g$ is not the zero polynomial, it has at most $k-1$ roots in $\Z(p)$.
	\end{proof}

Let $\mathbb{G}$ to be a graph consisting of two identical copies of $G^*$. We denote these two copies by $G'=(U'\cup W',E')$ and $G''=(U''\cup W'',E'')$. Let $U=U' \cup U''$; let $W=W'' \cup W''$; let $n=|U \cup W| =2(p^k+p^2)$. 

We define the following operation on $\mathbb{G}$ (see \figref{fig1}).  Let $e'$ and $e''$ be edges such that 
 $e'=(u', w'): u' \in U', w' \in W'$ and $e''=(u'', w''):u'' \in U'', w'' \in W''$. Remove $e'$ and $e''$ from $\mathbb{G}$, and add an edge $e=(u', u'')$. 
We call this operation $\fuse(\mathbb{G},e',e'')$, and call $u$ and $u'$ the \emph{fusion 
	vertices}. Note that there are $p^{k+1}$ possible choices for $e'$ and  $p^{k+1}$ possible choices for $e''$. 

	   \begin{figure}[t]
	   	\centering
	   	\includegraphics[width=14cm]{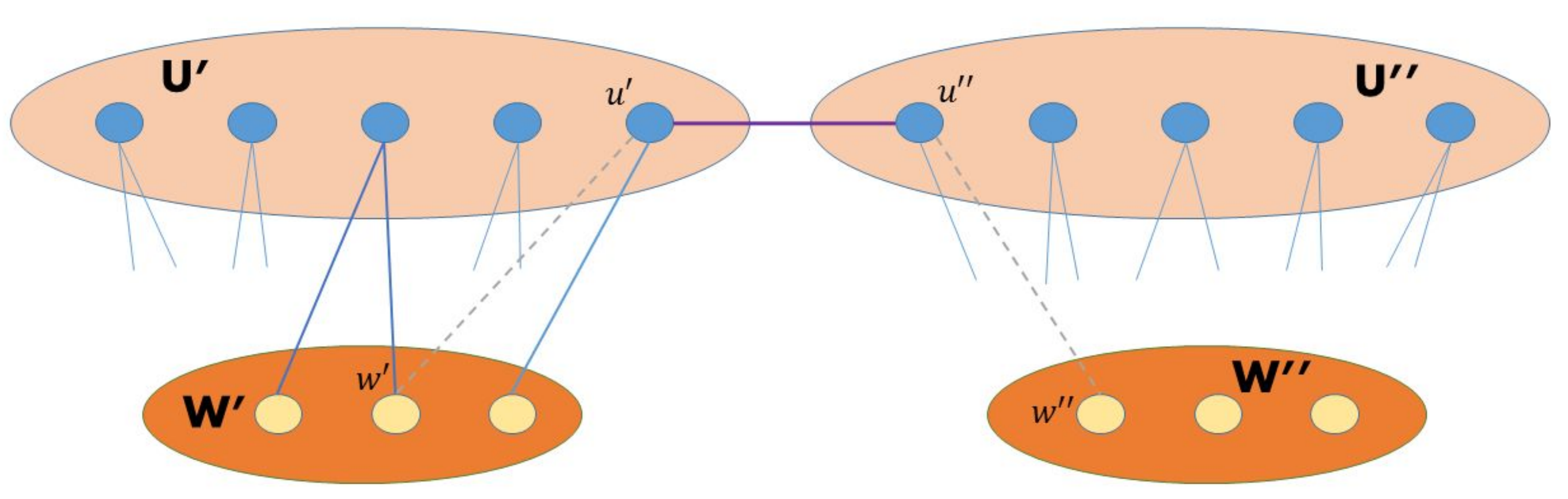}
	   	\caption{The graph  $\fuse(\mathbb{G},(u',w'),(u'',w''))$. The dashed edges $e'=(u',w')$ and $e''=(u'',v'')$ have been removed and the edge $(u',u'')$ has been added.}
	   	\label{fig1}
	   \end{figure}

Given a graph $G=\fuse(\mathbb{G},(u',w'),(u'',w''))$, the optimal size of the vertex cover of $G$ is at most $2p^2+1$, as $W$ and a fusion vertex constitute a vertex cover. 

Note that a vertex can locally detect whether it is in $U$ or in $W$ just by 
looking at its own degree. 
However, to detect whether it is one of the fusion vertices, it needs
to determine the degrees of its neighbors, which is impossible to do with number of probes
significantly smaller than its degree.\footnote{Compare this with a  distributed algorithm in the LOCAL model, for which one round suffices to determine this.}
%

We now describe the graph family we use.
Let $\mathbb{G}=(U \cup W, E)$ be as above. Let $\Pi$ be the set of all possible namings of $U \cup W$ by the ID set $[n]$. 
Let $T=E'\times E''$. 
Given a naming $\pi\in\Pi$ and a pair of edges $\tau=(e',e'')\in T$, the  graph $\mathbb{G}(\tau,\pi)$ is defined
as follows: 
\begin{enumerate}
	\item 
	The topology of $\mathbb{G}(\tau,\pi)$ is given by $\fuse(\mathbb{G},e',e'')$. 
	\item
	The vertices of $\mathbb{G}(\tau,\pi)$ are named  according to $\pi$.
\end{enumerate}
The family of graphs we consider is $\calG(\Pi, T)=\{\mathbb{G}(\tau,\pi)\mid\tau\in T,\pi\in\Pi\}$.  
We now analyze the behavior of a given deterministic LCA $A$ with proble complexity less than
$p/k$ on a graph chosen uniformly
at random  from
$\calG(\Pi, T)$.  
We first make the following simplification.
Suppose that
 $A$ running on some  $\mathbb{G}(\tau,\pi)$ is given $v$ as a query.
If $A$ probes $w'_{\tau}$ or $w''_{\tau}$, it knows that $v$  is neither $u'_{\tau}$ nor $u''_{\tau}$ and hence $A$ need not  include $v$ in the VC. For simplicity, 
we also assume that if $A$ probes $u'_{\tau}$ or $u''_{\tau}$, it knows not to add $v$ to the VC. Note that this only strengthens $A$, hence we can make this assumption without loss of generality.

We use the following definition.
\begin{definition}[View]
	Let $\mcA$ be a deterministic SP LCA. We denote by $\view(\mcA,G, v)$ the subgraph that $\mcA$ discovers when queried on vertex $v$ in graph $G$, i.e., the set of all probed vertices and their neighbors. 
\end{definition}


\begin{claim}
	\label{claim:miss}
	Let $\mcA$ be an SP LCA with probe complexity less than $p/k$.
	Let $G\in \calG(\Pi, T)$. 
	Assume that $A$ is queried on 
	$G$ with vertex $v$. Then there is some vertex $w \in N(v)$ such that $w$ has no neighbors except $v$ in $\view(\mcA, G, v)$. (That is, neither $w$ nor any of its neighbors (except $v$) is probed in $\mcA(G,v)$.)
\end{claim}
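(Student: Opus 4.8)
The plan is to count, for a fixed query vertex $v$ and a fixed deterministic LCA $\mcA$ with probe complexity $P < p/k$, how many neighbors of $v$ can possibly be ``touched'' by the exploration $\view(\mcA,G,v)$, and then to argue that $v$ has too many neighbors for all of them to be touched. Recall from the construction that if $v \in U$ then $\deg(v) = p$ (the polynomial $f_v$ has exactly $p$ points on its graph over $\Z(p)$, possibly adjusted by $\pm 1$ if $v$ is a fusion vertex — in any case $\deg(v) \in \{p-1, p\}$), and if $v \in W$ then $\deg(v) = p^{k-1}$ or $p^{k-1} + 1$. So in all cases $\deg(v) \ge p - 1$. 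A vertex $w \in N(v)$ fails the conclusion of the claim only if $w$ itself is probed, or some neighbor of $w$ other than $v$ is probed. The first step is therefore to bound the total number of neighbors of $v$ that are ``revealed'' after $P$ probes.

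First I would set up the probe-by-probe accounting. Each strong probe is made to some vertex $x$ and reveals $x$ together with all of $N(x)$. A probed vertex $x$ (or a revealed neighbor $x$ of a probed vertex) can be adjacent to $v$, and hence ``kill'' one of $v$'s neighbors, in limited ways. If $x \in W$, then $x$ and $v$ (when $v \in U$) share edges according to whether $v$'s polynomial passes through the point encoded by $x$; but each probe to a $W$-vertex kills at most one neighbor of $v$ directly (namely that $W$-vertex itself, if it is adjacent to $v$), plus its neighbors. The key combinatorial input is Claim~\ref{claim:kdiff}: two distinct $U$-vertices share at most $k-1$ common $W$-neighbors. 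I would use this (and the symmetric statement about $W$-vertices, which share at most $k-1$ neighbors, or the analogous bound coming from the bipartite incidence structure) to argue that whenever the exploration probes a vertex $x$, the set $N(x)$ intersects $N(v)$ in at most $k-1$ vertices (unless $x = v$ or $x$ is itself a neighbor of $v$, which are cases one handles separately and which cost at most $1$ neighbor of $v$ each). Summing over the at most $P$ probes: the number of vertices in $N(v)$ that are either probed, or adjacent to something probed, is at most roughly $P \cdot (k-1) + P \le Pk$.

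The final step is the arithmetic: since $P < p/k$, we get $Pk < p \le \deg(v) + 1$, so there is at least one neighbor $w$ of $v$ that is not probed and none of whose neighbors other than $v$ is probed — which is exactly the claim. The main obstacle I anticipate is getting the bookkeeping exactly right: I must be careful to handle the fusion edge $e = (u'_\tau, u''_\tau)$ (a $U$–$U$ edge, unlike all the others), to account for the possibility that $v$ itself is a fusion vertex so that one of its ``neighbors'' lies in $U$ rather than $W$, and to make sure that a single probe to a $W$-vertex (which has degree $\approx p^{k-1}$) does not secretly reveal many neighbors of $v$ — this is precisely where Claim~\ref{claim:kdiff} is doing the real work, bounding the overlap of neighborhoods by the low degree of the naming polynomials. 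Provided that overlap bound is applied uniformly to every probe, the counting closes with room to spare.
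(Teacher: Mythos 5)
Your proposal is correct and follows essentially the same probe-counting argument as the paper: partition the probes into those to $U$-vertices, each of which reveals at most $k-1$ vertices of $N(v)$ by Claim~\ref{claim:kdiff}, and those to $W$-vertices, each of which reveals at most one vertex of $N(v)$ (itself), and conclude that fewer than $p/k$ probes leave at least one neighbor of $v$ untouched. One small misstatement worth flagging: your parenthetical remark that two distinct $W$-vertices ``share at most $k-1$ neighbors'' is false --- two $W$-vertices $(b_0,b_1)$ and $(b_0',b_1')$ with $b_0\neq b_0'$ impose two independent linear constraints on the $k$ polynomial coefficients and hence share $p^{k-2}$ common $U$-neighbors --- and relatedly, it is not Claim~\ref{claim:kdiff} that controls the effect of a probe to a high-degree $W$-vertex; rather, for $v\in U$ (the only case to which the claim is applied), $N(v)\subseteq W$ while $N(x)\subseteq U$ for any probed $x\in W$, so the intersection $N(x)\cap N(v)$ is empty for trivial bipartiteness reasons and Claim~\ref{claim:kdiff} is needed only for probed $U$-vertices. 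Fortunately your final accounting never actually invokes the false symmetric bound, so the argument closes as intended.
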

\begin{proof}
	$\mcA$ probes $v$ and, say, $a$ vertices from $U$ and $b$ vertices from $W$, for some $a, b \in \N$ such that $a+b < p/k$. From \claimref{claim:kdiff}, $A$ sees at most $a(k-1)$ vertices from $N(v)$ as a result of probing vertices in $U \setminus \{v\}$. Furthermore, $A$ sees at most $b$ vertices from $N(v)$ as a result of probing vertices in $W$. As $a(k-1)+b < p = |N(v)|$, at least one vertex in $N(v)$ is only seen once, while probing $v$.	
\end{proof}


\noindent
 Consider an input graph $\mathbb{G}(\tau,\pi)$, where $\tau=((u'_{\tau},w'_{\tau}),(u''_{\tau},w''_{\tau}))$.
 We use the following notation.
 \begin{compactitem}
 	\item $A_v$ denotes the event that $\mcA$ is given $v$ as a query.
 	Note that $A_v$ is independent of $\pi$ and $\tau$. 
 	\item $X_{\tau, \pi, i}$ denotes the event that none of $u'_{\tau}, w'_{\tau}, u''_{\tau}, w''_{\tau}$ is probed when $A$ is queried on $i \in [n]$.
 \end{compactitem}
%
\begin{claim}\label{claim:view}
	Fix $\pi$ and $\tau$. 
	Let $v$ be a vertex in $U' \setminus \{u'_{\tau}\}$. If $A_v$ and $X_{\tau, \pi, \pi(v)}$ hold, there exist  $\pi_1\in\Pi, \tau_1\in T$ such that
	$\view(\mcA, \mathbb{G}( \pi_1, \tau_1), u'_{\tau_1})=\view(\mcA, 	\mathbb{G}(\pi, \tau), v).$
\end{claim}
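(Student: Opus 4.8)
\textbf{Proof proposal for Claim~\ref{claim:view}.}

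The plan is to construct the desired pair $(\pi_1,\tau_1)$ explicitly by a relabeling argument that makes the queried vertex $v$ play the role of a fusion vertex, while keeping the view of $\mcA$ unchanged. The starting point is the observation that, because $X_{\tau,\pi,\pi(v)}$ holds, none of the four special vertices $u'_\tau, w'_\tau, u''_\tau, w''_\tau$ is probed during $\mcA(\mathbb{G}(\pi,\tau),v)$; hence the fused edge $(u'_\tau,u''_\tau)$ and the two deleted edges do not appear in $\view(\mcA,\mathbb{G}(\pi,\tau),v)$ at all. In other words, the view that $\mcA$ sees when querying $v$ in $\mathbb{G}(\pi,\tau)$ is identical to the view it would see when querying $v$ in the \emph{unfused} double graph $\mathbb{G}$ (with naming $\pi$), and in particular $v$ looks to $\mcA$ exactly like an ordinary degree-$p$ vertex of $U'$.

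First I would invoke Claim~\ref{claim:miss}: since $\mcA$ has probe complexity less than $p/k$, there is a neighbor $w\in N(v)\cap W'$ such that neither $w$ nor any neighbor of $w$ other than $v$ is probed; i.e. $w$ is a ``hidden'' leaf of the view, touched only via the single edge $(v,w)$. This is the edge I will designate as the removed edge $e'_1$ on the primed side, with $u'_{\tau_1}=v$. On the unprimed side I need a second removed edge $e''_1=(u''_{\tau_1},w''_{\tau_1})\in E''$; since the whole unprimed copy $G''$ is disjoint from $\view(\mcA,\mathbb{G}(\pi,\tau),v)$ except possibly for probed vertices whose structure I can accommodate, I can pick \emph{any} edge of $G''$ whose endpoints are outside the view — such an edge exists because the view has size less than $(p/k)\cdot(p+1)$, which is smaller than $|E''|=p^{k+1}$ for $k\ge 3$ and $p$ large — call it $e''_1$, and set $\tau_1=(e'_1,e''_1)$.

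Next I would define the new naming $\pi_1$. The idea is to start from $\pi$ and then permute names so that: (i) $v$ keeps its name (so $u'_{\tau_1}=v$ as required, and the query is the same), (ii) every vertex that actually appears in $\view(\mcA,\mathbb{G}(\pi,\tau),v)$ keeps both its name and its local incidence structure, and (iii) the fusion operation $\fuse(\mathbb{G},e'_1,e''_1)$ produces a graph in which the view of $\mcA$ queried on $v$ is byte-for-byte the same as $\view(\mcA,\mathbb{G}(\pi,\tau),v)$. Since $\mcA$ is deterministic, point (iii) is automatic once the relevant portion of the labeled graph is unchanged: the probe sequence, and hence the view, is a deterministic function of the answers returned, so if the first probe returns the same answer, the second probe is the same, etc. Thus it suffices to check that $\mathbb{G}(\pi_1,\tau_1)$ and $\mathbb{G}(\pi,\tau)$ agree on the (closed neighborhoods of the) probed vertices. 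The only places where the two fused graphs differ from each other (and from $\mathbb{G}$) are near $\{u'_\tau,w'_\tau,u''_\tau,w''_\tau\}$ and near $\{v=u'_{\tau_1},w,u''_{\tau_1},w''_{\tau_1}\}$; by $X_{\tau,\pi,\pi(v)}$ the former set is untouched, and by the choice via Claim~\ref{claim:miss} and the size bound, the latter set is also untouched (the edge $(v,w)$ is present in both the fused and unfused graph as far as the view can tell, because $w$'s other incidences are never revealed). Hence $\view(\mcA,\mathbb{G}(\pi_1,\tau_1),u'_{\tau_1})=\view(\mcA,\mathbb{G}(\pi,\tau),v)$, which is the claim.

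\textbf{Main obstacle.} The delicate point is (iii): one must argue carefully that adding the fusion edge $(u'_{\tau_1},u''_{\tau_1})=(v,u''_{\tau_1})$ and deleting $(v,w)$ does \emph{not} change what $\mcA$ sees. Deleting $(v,w)$ is harmless only because $w$'s full neighborhood is never exposed, so the view cannot tell whether $w$ has degree $p^2$ or $p^2-1$; and $v$'s own degree changes from $p$ to $p$ (one edge removed, one added), so a degree probe on $v$ is also consistent — here I rely on the remark in the text that probing $v$ reveals its neighbor list but, critically, the LCA ``knows not to add $v$ to the VC'' only if it probes a neighbor of degree $\le$ something, which it does not. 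The added edge $(v,u''_{\tau_1})$ goes to a vertex outside the view, so it is invisible. Making all of these ``invisibility'' statements precise, and in particular ruling out the possibility that the relabeling forces some probed vertex to acquire a name already used, is where the real work lies; the counting bound $(\text{view size}) < p^{k+1}$ is what gives the slack to route around all collisions, and I would organize the proof around that inequality.
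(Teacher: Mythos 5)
Your high-level strategy (invoke Claim~\ref{claim:miss} to find the hidden neighbor $w$, set $e'_1=(v,w)$ so $u'_{\tau_1}=v$, and then adjust the naming) matches the paper's, but your plan for $\pi_1$ has a genuine gap, and it is exactly at the ``delicate point (iii)'' you flag as the main obstacle.

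You assert that ``the added edge $(v,u''_{\tau_1})$ goes to a vertex outside the view, so it is invisible.'' This is false in the strong-probe model. $\mcA$ does probe $v$ (this is assumed in the proof of Claim~\ref{claim:miss}), and a strong probe to $v$ returns $v$'s \emph{entire} neighbor list as a set of IDs. In $\mathbb{G}(\pi,\tau)$ that list is $\pi(N_{\mathbb{G}}(v))$, which contains $\pi(w)$; in $\mathbb{G}(\pi_1,\tau_1)$ the topological neighborhood becomes $\bigl(N_{\mathbb{G}}(v)\setminus\{w\}\bigr)\cup\{u''_{\tau_1}\}$, so the reply contains $\pi_1(u''_{\tau_1})$ in place of $\pi(w)$. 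Thus the added edge is very visible: the single probe to $v$ already distinguishes the two graphs unless the naming forces $\pi_1(u''_{\tau_1})=\pi(w)$. Your constraint (ii), however, demands that every vertex in $\view(\mcA,\mathbb{G}(\pi,\tau),v)$ keep its name; since $w$ is in the view (it is a neighbor of the probed vertex $v$), this forces $\pi_1(w)=\pi(w)$, and injectivity of $\pi_1$ then forbids $\pi_1(u''_{\tau_1})=\pi(w)$. So constraints (ii) and (iii) are mutually inconsistent, and the ``permute names around the view'' plan cannot be completed. The same issue also makes the parenthetical ``$w$'s local incidence structure is preserved'' false: the fuse operation deletes $(v,w)$, so $w$'s incidence to $v$ is precisely what changes.

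The missing idea is that the relabeling must \emph{transfer} $w$'s ID onto the new unprimed fusion partner rather than freeze the names of view vertices. The paper does this with a single targeted swap in $\pi$ (and keeps $e''_{\tau_1}=e''_\tau$ rather than choosing a fresh $e''_1$ by a counting argument): the only vertex of $\view(\mcA,\mathbb{G}(\pi,\tau),v)$ whose topological incidences change under $\fuse(\mathbb{G},(v,w),e''_\tau)$ is $v$ itself, and its reply is repaired exactly when $u''_{\tau_1}$ inherits the name $\pi(w)$; all other probed vertices are, by Claim~\ref{claim:miss} and the event $X_{\tau,\pi,\pi(v)}$, disjoint from $\{u'_\tau,w'_\tau,u''_\tau,w''_\tau,w\}$, so their replies are unchanged. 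Your counting argument for finding $e''_1$ outside the view is a correct but unnecessary detour; what is necessary, and what you omit, is the name transfer.
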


\begin{proof}
	By \claimref{claim:miss}, there is some vertex $w \in N(v)$ that has no  neighbors other than $v$ in $\view(\mcA, G(\pi, \tau), v)$.
	Let $\pi_1$ be identical to $\pi$ except that $v$ and $u'_{\tau}$ are interchanged.
 Set $\tau_1 = ((v, w), e''_{\tau})$. The claim follows. 
\end{proof}
A symmetrical argument holds for $v \in U'' \setminus \{u''_{\tau}\}$.

\begin{claim}
	\label{claim:doesnt}
	Fix $\mathbb G$, and let  $i  \in [n]$ be the ID of the vertex given to  a deterministic SP LCA $\mcA$ as a query. 
	If the probe complexity of $\mcA$ is less than $p/k$, then $\Pr[X_{\tau, \pi, i}]\geq 1-\frac{1}{kp}$,
	where the probability is over the choice of $\pi$ and $\tau$.
\end{claim}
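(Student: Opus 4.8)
The plan is to leverage the fact that $A$ is deterministic together with the structural fact that, for a fixed naming $\pi$, all the graphs $\{\mathbb{G}(\tau,\pi) : \tau\in T\}$ have the same vertex set, the same IDs, and pairwise differ only in the adjacency lists of the (at most) four fusion vertices $u'_\tau,w'_\tau,u''_\tau,w''_\tau$ of $\tau$ (recall $\fuse(\mathbb{G},e',e'')$ only deletes $e',e''$ and inserts $(u'_\tau,u''_\tau)$). Consequently the only way the value of $\tau$ can ever affect the run of $A$ on query $i$ is if $A$ actually probes one of those four vertices, and this is precisely the event $\neg X_{\tau,\pi,i}$. So it suffices, for each fixed $\pi$, to bound the probability over $\tau$ that $A$ (on query $i$) probes a fusion vertex of $\tau$, and then average over $\pi$ (recall $(\pi,\tau)$ is uniform over $\Pi\times T$, i.e.\ $\pi$ and $\tau$ are independent uniform).

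Fix $\pi$ and $i$ and consider the \emph{canonical run} of $A$: execute $A$ on query $i$, but answer every probe as if the input were the un-fused graph $\mathbb{G}$ named by $\pi$. Since $A$ is deterministic this yields a fixed sequence of probed vertices $q_1,q_2,\dots$; let $S^\ast$ be the set of the first $\min(\lceil p/k\rceil,\#\text{probes})$ of them, so $|S^\ast|<p/k$ (here we use that $p$ is prime and $1<k<p$, so $p/k\notin\Z$ and $\lceil p/k\rceil-1<p/k$; alternatively one shows directly that the canonical run terminates within fewer than $p/k$ probes). The key implication is: if $\neg X_{\tau,\pi,i}$ then $\tau$ has a fusion vertex in $S^\ast$. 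Indeed, let $j$ be the index of the first probe of the genuine run of $A$ on $\mathbb{G}(\tau,\pi)$ that is a fusion vertex of $\tau$ (if none, $X_{\tau,\pi,i}$ holds). Probes $1,\dots,j-1$ are not fusion vertices of $\tau$, so their answers on $\mathbb{G}(\tau,\pi)$ agree with their answers on $\mathbb{G}$ under $\pi$; by induction the genuine run agrees with the canonical run through probe $j$, and since $j\le\#\text{probes}<p/k$ we get $q_j\in S^\ast$, and $q_j$ is a fusion vertex of $\tau$.

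It remains to bound $\Pr_\tau[\tau\text{ has a fusion vertex in }S^\ast]$. Write $\tau=(e',e'')$ with $e'$ uniform in $E'$ and $e''$ uniform in $E''$ independently, and recall $|E'|=|E''|=p^{k+1}$, every vertex of $U'$ (resp.\ $U''$) has degree $p$, and every vertex of $W'$ (resp.\ $W''$) has degree $p^{k-1}$. The event "$u'_\tau\in S^\ast$ or $w'_\tau\in S^\ast$" is exactly "$e'$ is incident to $S^\ast$", which has probability at most $\frac{|S^\ast\cap U'|\cdot p+|S^\ast\cap W'|\cdot p^{k-1}}{p^{k+1}}=\frac{|S^\ast\cap U'|}{p^{k}}+\frac{|S^\ast\cap W'|}{p^{2}}\le\frac{|S^\ast\cap(U'\cup W')|}{p^{2}}$, where the last step uses $k\ge 2$. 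Adding the symmetric bound for $e''$ and using $|S^\ast|<p/k$ gives $\Pr_\tau[\neg X_{\tau,\pi,i}\mid\pi]\le\frac{|S^\ast\cap U'|+|S^\ast\cap W'|+|S^\ast\cap U''|+|S^\ast\cap W''|}{p^{2}}=\frac{|S^\ast|}{p^{2}}<\frac{1}{kp}$. Averaging over $\pi$ yields $\Pr[X_{\tau,\pi,i}]\ge 1-\frac1{kp}$, which is the claim.

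I expect the only real obstacle to be the middle step: making rigorous that the relevant probed set $S^\ast$ can be chosen \emph{independently of $\tau$} even though $A$ is adaptive. The canonical-run / induction argument above is the clean way to do this, and one just has to be a little careful about the $\lceil p/k\rceil$ versus $p/k$ bookkeeping (or, equivalently, argue that the canonical run halts within fewer than $p/k$ probes, which itself follows from the same counting bound). Everything else — the degree count in $G'$ and the union bound over the two random edges — is routine.
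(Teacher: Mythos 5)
Your argument is correct and follows the same union-bound strategy as the paper (exploit the $1/p^k$ and $1/p^2$ marginals of $u'_\tau,u''_\tau$ and $w'_\tau,w''_\tau$, then optimize/bound over the at most $p/k$ probes), but you make explicit the key point the paper's terse proof leaves implicit: because $A$ is adaptive, one must argue that the relevant probed set can be fixed independently of $\tau$, and your canonical-run-on-$\mathbb{G}$ plus induction on the first fusion-vertex probe is exactly the right way to do it. One small bookkeeping slip: with $S^*$ defined via $\min(\lceil p/k\rceil,\#\text{probes})$ you could get $|S^*|=\lceil p/k\rceil>p/k$ when $p/k\notin\mathbb{Z}$; the clean fix is the alternative you already mention — the canonical run must itself halt within fewer than $p/k$ probes, since at every step its answers remain consistent with many graphs in $\calG(\Pi,T)$ on which $A$ is guaranteed to stop — after which $S^*$ is just the full probed set and the $\min$ is unnecessary. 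The paper keeps $a/p^k + b/p^2$ separate and maximizes at $a=0$; you replace $1/p^k\le 1/p^2$ up front and bound $|S^*|/p^2$ directly — arithmetically equivalent, both giving $1/(kp)$.
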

\begin{proof}
	Fix $\pi$. 
If $A$ probes $a$  vertices in $U$ and $b$ vertices in $W$, it will hit one of $u'_{\tau}, w'_{\tau}, u''_{\tau}, w''_{\tau}$ with probability at most
$\frac{a}{p^k} + \frac{b}{p^2}$,
over the choice of $\tau$. Since $a,b\ge 0$ and $a+b\le p/k$, the probability is maximized for $a=0, b=p/k$. As this bound holds for any $\pi$, the result follows.
\end{proof}

 \claimref{claim:view} and \claimref{claim:doesnt} imply that when
 a deterministic SP LCA  is queried on a vertex  $u\in U$ from a random graph
 in $\calG(T,\Pi)$, it cannot discern in less than $p/k$ probes whether
 $u$ is a fusion vertex with probability greater
 than $\frac{1}{kp}$. Hence the LCA must add vertex $u$ to the VC, because at least one fusion
 vertex \emph{must} be in the VC. Therefore, the size of the VC that $A$ computes is at least $p^{k}-O(1)$, whereas the optimal VC has size at most $p^2+1$:
 
 \begin{theorem}\label{thm:det}
 	There does not exist a deterministic  SP LCA $A$  that computes a VC that is an 
 	$(\lowerapp)$-approximation to  the optimal VC and uses fewer than $\lowerprobes$ probes with probability
 	greater than $\frac{1}{kp}$  on a graph chosen uniformly at random from $\calG(\Pi, T)$.
 \end{theorem}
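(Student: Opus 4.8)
Here is how I would obtain \theoremref{thm:det} by assembling \claimref{claim:kdiff}, \claimref{claim:miss}, \claimref{claim:view} and \claimref{claim:doesnt}.

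The plan is to show that any deterministic SP LCA $A$ whose probe complexity is below $\lowerprobes$, when run on a uniformly random graph $G$ drawn from $\calG(\Pi,T)$, is forced to place essentially every vertex of $U$ into its vertex cover. Since $|U| = 2p^k$ while every graph of the family has $\mathrm{OPT} \le 2p^2+1$, the cover $A$ outputs then exceeds $\lowerapp \cdot \mathrm{OPT}$ in size, so it is not a $(\lowerapp)$-approximation, except on an event of probability at most $\tfrac{1}{kp}$; together with Yao's minimax principle this is exactly what yields \theoremref{thm:lower}. Before doing this I would pin down the parameters: given $\eps < \tfrac12$, choose an integer $k = \Theta(1/\eps)$ and a large prime $p$ so that, with $n = 2(p^k + p^2)$, we have $\lowerprobes < p/k$ on the one hand and $\lowerapp \cdot (2p^2+1) < (1 - \Omega(1))\,|U|$ on the other; this is routine from $\log n = k\log p + O(1)$ but requires some care in how $k$ is chosen relative to $1/\eps$. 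After this reduction it suffices to prove that a deterministic SP LCA with probe complexity strictly below $p/k$ leaves, in expectation over $G \sim \calG(\Pi,T)$, only $o(|U|)$ vertices of $U$ outside its cover.

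The heart of the argument is a per-vertex estimate. Fix a query ID $i \in [n]$ and let $v = \pi^{-1}(i)$. If $v \in W$ then, reading off its own degree, $v$ recognizes that it lies in $W$ and never needs to enter the cover ($W$ together with any one fusion vertex is already a legal cover), so assume $v \in U'$; the case $v \in U''$ is symmetric. Over the random choice of $(\pi,\tau)$: with probability $p/p^{k+1} = p^{-k}$ the vertex $v$ is itself the fusion vertex $u'_\tau$ and there $A$ may answer anything; otherwise, by \claimref{claim:doesnt} the event $X_{\tau,\pi,i}$ that the probes of $A$ on query $i$ avoid all four of $u'_\tau, w'_\tau, u''_\tau, w''_\tau$ fails with probability at most $\tfrac{1}{kp}$; and, conditioned on $X_{\tau,\pi,i}$ together with $v \ne u'_\tau$, \claimref{claim:miss} and \claimref{claim:view} produce a naming--fusion pair $(\pi_1,\tau_1)$ with $\view(A, \mathbb{G}(\pi_1,\tau_1), u'_{\tau_1}) = \view(A, \mathbb{G}(\pi,\tau), v)$, i.e.\ $A$'s observations when queried on $v$ are indistinguishable from those it would make when queried on the genuine fusion vertex $u'_{\tau_1}$ of another family graph. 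I would then argue (next paragraph) that in this case $A$ is compelled to put $v$ into its cover. Hence $\Pr_{\pi,\tau}[\text{$A$ leaves $v$ out of its cover}] \le p^{-k} + \tfrac{1}{kp}$, so by linearity $\expect_G[\#\{v \in U : A \text{ leaves } v \text{ out}\}] \le (p^{-k} + \tfrac{1}{kp})\,|U| = o(|U|)$, giving $\expect_G[|\mathrm{VC}_A(G)|] \ge (1-o(1))\,|U| > \lowerapp \cdot \mathrm{OPT}$; a short Markov argument (the number of left-out vertices exceeds $\tfrac12|U|$ with probability at most $O(\tfrac1{kp})$, and below that threshold the cover is already too large to be a $(\lowerapp)$-approximation by the parameter choice) upgrades this to the statement of \theoremref{thm:det}.

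The step I expect to be the main obstacle is ``$A$ is compelled to put $v$ into its cover''. \claimref{claim:view} matches only the \emph{view} of a single query, whereas ``the cover is large'' concerns the joint behaviour of $A$ over all $n$ queries, so the indistinguishability has to be turned into an actual obligation. The clean way is a dichotomy: either the deterministic rule of $A$ answers ``in cover'' on every view that could be that of a genuine fusion vertex --- in which case, since (by the estimate above) almost every $v \in U$ has such a view on a random $G$, the cover is large and we are done --- or there is some such view on which $A$ answers ``out of cover'', and then one must exhibit a concrete $\mathbb{G}(\pi_1,\tau_1) \in \calG(\Pi,T)$ in which \emph{both} endpoints of the fused edge are answered ``out'': the view of one endpoint is the offending view furnished by \claimref{claim:view}, and the view of the other (the side-$''$ fusion vertex) is controlled by the symmetry that swaps the two copies of $G^*$, while \claimref{claim:kdiff} and \claimref{claim:miss} (and the probe budget $<p/k$) guarantee that the edges that have to be rerouted in this construction were never probed by $A$. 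Such a graph has an uncovered edge, contradicting that $A$ is a valid LCA, so the ``in-cover-everywhere'' branch must hold. Making this accounting fully rigorous --- verifying that a single small-probability event controls all queries simultaneously, and that the two-copy symmetry really yields a graph of $\calG(\Pi,T)$ consistent with every probe answer --- is where the care lies; the remaining parts (the binomial counting inside \claimref{claim:miss}, the parameter arithmetic, and the Yao reduction) are routine.
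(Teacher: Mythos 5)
Your proposal follows essentially the same route as the paper: you invoke the same four claims (\claimref{claim:kdiff}, \claimref{claim:miss}, \claimref{claim:view}, \claimref{claim:doesnt}), you use the same view--indistinguishability-of-a-fusion-vertex argument, and you close with the same arithmetic and Yao step. The paper's own proof of \theoremref{thm:det} is the two-line arithmetic paragraph; the ``hard'' step is the sentence preceding it (``Hence the LCA must add vertex $u$ to the VC\dots''), which the paper does not justify further. You are right to single that sentence out as the main obstacle, and your instinct---exhibit a graph of $\calG(\Pi,T)$ in which \emph{both} endpoints of the fused edge are declared out---is exactly the right way to turn the indistinguishability into an actual contradiction with validity.

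Where your write-up slips is in the per-vertex estimate. You assert that, conditioned on $X_{\tau,\pi,i}$ and on $v=\pi^{-1}(i)$ not being the fusion vertex, $A$ is \emph{compelled} to place $v$ in the cover, so that $\Pr[A \text{ omits } v] \le p^{-k} + \tfrac{1}{kp}$ for every $v\in U$. That is too strong. Claim~\ref{claim:view} only guarantees that $A$'s answer at $v$ coincides with its answer at some fusion vertex $u'_{\tau_1}$ in another graph; validity on that other graph only forces the \emph{partner} $u''_{\tau_1}$ to be covered, not $u'_{\tau_1}$ itself. So nothing yet compels $A$ to include $v$. A priori, a valid $A$ could systematically omit one side of each pair (roughly ``all of $U'$ out, all of $U''$ in''), and view-indistinguishability alone does not rule this out. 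What the ``both endpoints out'' contradiction actually yields is weaker and is a statement about each fixed graph $G$, not about each fixed vertex: for any $G$, the set $\{v\in U: A(G,\pi(v))=\text{out},\ X_{\tau,\pi,\pi(v)}\}$ cannot meet \emph{both} $U'$ and $U''$, because otherwise one assembles a graph of $\calG(\Pi,T)$ (by simultaneously applying the swap of \claimref{claim:view} at an omitted $U'$-vertex and at an omitted $U''$-vertex, using that their probe sets are disjoint from the fusion and from each other under $X$) in which the fused edge is uncovered. This caps $|\{v \in U : A \text{ omits } v\}|$ at roughly $p^k$ rather than at $O(1)$, i.e.\ it forces the cover to have size about $p^k$, not about $2p^k$. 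That is in fact the bound the paper quotes (``at least $p^k - O(1)$''), and it is where your dichotomy should land: either $A$'s rule answers ``in'' on every $U'$- and every $U''$-fusion-consistent view (then omitted vertices are just the $X$-failures, and the cover is near $|U|$), or there are out-views on both sides (contradiction as above), or there are out-views on exactly one side (then the cover still contains the whole other side, of size $p^k$). Your two-branch dichotomy misses the third, asymmetric branch, and the naming/degree symmetry between the two copies of $G^*$ that you gesture at does not automatically collapse it, since the LCA is free to make ID-dependent choices; you do need the per-graph ``both sides cannot simultaneously have an omitted vertex with $X$ holding'' argument rather than a global ``$A$'s rule is the same on both sides'' argument. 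With that fix, the rest of what you write (the $O(p^{k+1})$-vs-$p/k$ counting inside \claimref{claim:miss}, the Markov step, the choice of $k=\Theta(1/\eps)$, and the Yao reduction) is sound and matches the paper.
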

 \begin{proof}
 	Let $\epsilon=1/k$. Recall that $n=\Theta(p^k)$. We have shown that
 	if the number of probes is less than $p/k=\Theta(n^\epsilon\cdot\epsilon)$, then
 	the approximation ratio is at least $p^k/p^2-o(1)=n^{1-{2/k}}-o(1)$. 
 \end{proof}
 
Applying Yao's principle \cite{Yao77} to Theorem~\ref{thm:det} completes the proof of Theorem~\ref{thm:lower}.
 
 \begin{corollary}
 	Any SP LCA for maximal matching on arbitrary graphs requires $\Omega({n}^{1/2-o(1)})$ probes.
 	\end{corollary}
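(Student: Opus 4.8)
The plan is to reduce the approximate vertex-cover lower bound of Theorem~\ref{thm:lower} to maximal matching, using the classical fact that the set of endpoints of a maximal matching is a $2$-approximate minimum vertex cover. Let $\A$ be an SP LCA for maximal matching: on a query vertex $v$ it reports (consistently with a single maximal matching $M$) the edge of $M$ incident to $v$, if one exists, using at most $p(n)$ probes. I would first observe that the matched set $V(M)=\{u: u \text{ is an endpoint of an edge of } M\}$ is a vertex cover --- if some edge had both endpoints unmatched it could be added to $M$, contradicting maximality --- and that $|V(M)|=2|M|$, while $|M|$ is at most the size of a minimum vertex cover (any vertex cover must contain an endpoint of each of the $|M|$ pairwise disjoint edges of $M$). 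Hence $V(M)$ is a vertex cover of size at most twice the optimum.

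Next I would convert $\A$ into an SP LCA $\A'$ for a $2$-approximate minimum vertex cover: on query $v$, run $\A$ on $v$ and answer ``$v\in\text{VC}$'' exactly when $\A$ reports $v$ to be matched. Since $\A$ is consistent with a single maximal matching $M$, $\A'$ is consistent with the single vertex cover $V(M)$, which is a $2$-approximation by the previous paragraph; and $\A'$ uses at most $p(n)$ probes per query, since deciding whether $v$ is matched is exactly one query to $\A$ and costs no extra probes. If $\A$ is randomized, $\A'$ inherits its random seed, probe bound, and failure probability.

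It then remains to apply Theorem~\ref{thm:lower} with a parameter that creeps up to $1/2$. For each admissible $n$ set $\eps=\eps(n)=\tfrac12-\tfrac{1}{\sqrt{\log n}}<\tfrac12$; then $1-2\eps=\tfrac{2}{\sqrt{\log n}}$, so $\tfrac12 n^{1-2\eps}=\tfrac12\, 2^{2\sqrt{\log n}}\ge 2$ once $n$ is large enough. Thus a $2$-approximation is in particular a $\bigl(\tfrac12 n^{1-2\eps}\bigr)$-approximation, and Theorem~\ref{thm:lower} forces $\A'$ (and hence $\A$) to use at least $\eps n^\eps$ probes. Since $\eps n^\eps=\bigl(\tfrac12-\tfrac1{\sqrt{\log n}}\bigr)\cdot n^{1/2}\cdot 2^{-\sqrt{\log n}}=n^{1/2-o(1)}$, we conclude $p(n)=\Omega(n^{1/2-o(1)})$.

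The step that needs the most care is the passage ``from matching to vertex cover without blowing up the probe count.'' This works precisely because a maximal-matching LCA is queried on vertices, so whether $v$ is matched is a single-query fact; in an edge-query formulation one would a priori need $d_v$ matching queries per vertex, and since the extremal instances of Theorem~\ref{thm:lower} contain high-degree vertices this would give nothing useful, so I would state (as is standard, in analogy with the vertex-query model used for MIS) the maximal-matching LCA with vertex queries. A secondary point is that Theorem~\ref{thm:lower} is invoked here with $\eps$ varying with $n$, which is legitimate since it is a per-$n$ statement applied for each admissible $n$ with its own $\eps(n)$.
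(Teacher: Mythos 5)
Your proof is essentially the paper's: endpoints of a maximal matching form a $2$-approximate vertex cover, so a maximal-matching LCA yields a $2$-approximate VC LCA at no extra probe cost, and pushing $\eps$ in Theorem~\ref{thm:lower} up toward $\tfrac12$ gives the $\Omega(n^{1/2-o(1)})$ bound. You are more careful than the three-sentence proof in the paper on two points, both worth keeping. First, you pick an explicit per-$n$ parameter $\eps(n)=\tfrac12-\tfrac{1}{\sqrt{\log n}}$ and verify both that $\tfrac12 n^{1-2\eps}\ge 2$ (so a $2$-approximation is in particular a $\tfrac12 n^{1-2\eps}$-approximation) and that $\eps n^{\eps}=n^{1/2-o(1)}$, where the paper only gestures at ``setting $2=\Theta(n^{1-2\eps})$ gives $\eps=1/2$.'' Second, and more substantively, you flag that the reduction preserves probe complexity only in the vertex-query model for maximal matching (``which edge of $M$, if any, is at $v$?''): with an edge-queried matching LCA one would need $d_v$ matching queries to certify $v$ unmatched, and the hard instances of Section~\ref{sec:VC} have $d_v=\Theta(n^{\eps})$, which would make the reduction vacuous. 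The paper's ``an LCA for maximal matching would immediately give a $2$-approximation'' silently assumes this convention, so your observation is a genuine improvement in exposition. One caveat you and the paper share rather than a defect of your write-up: Theorem~\ref{thm:lower} is stated for all $\eps<\tfrac12$, but its proof constructs graphs parameterized by an integer $k\ge 3$ with $\eps=1/k\le\tfrac13$, so the regime $\eps\to\tfrac12$ that both you and the corollary's proof invoke is not literally reached by the construction as written (it directly gives only $\Omega(n^{1/3})$ for a $2$-approximation); your remark that the application is ``legitimate since it is a per-$n$ statement'' takes the theorem's stated range at face value, exactly as the paper does.
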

\begin{proof} 
	It is well known that, given any maximal matching, taking both end vertices of every edge gives a $2$-approximation to the VC (e.g., \cite{Vaz01}).
	Therefore, an LCA for maximal matching would immediately give a $2$-approximation to the minimal vertex cover. Setting $2=\Theta(n^{1-2\epsilon})$ in Theorem~\ref{thm:lower} gives $\eps = 1/2$. The result follows.
	\end{proof}


\section{Almost Maximum Matching in Regular Graphs}
\label{sec:match1}

Our goal in this section is the following: given a regular input graph $G=(V,E)$, the LCA receives an edge $e \in E$ as a query;
for some almost maximum matching  $M \subseteq E$, if $u \in M$, the LCA outputs ``yes'', otherwise it outputs ``no''.  All outputs must be consistent with the same $M$.
\subsection{Bounded Degree Graphs}
We give an LCA for finding a $(1-\eps)$ approximation to maximum matching on bounded degree graphs whose probe complexity is independent of $n$.
We use the following result of~\cite{YYI12} (rephrased), that can be found in the proof of Theorem 3.7 in~\cite{YYI12}.
\begin{lemma}[\cite{YYI12}]\label{lem:yyi}
	Let $G=(V,E)$ be a graph whose degree is bounded by $d$. There exists a randomized LCA for $(1-\eps_1)$-approximate maximum matching, whose probe complexity is $d^{O\left( \frac{1}{\eps_1^2}\right)}$ in expectation, where expectation is taken both other the choice of queried edge and over the randomness of the LCA.
\end{lemma}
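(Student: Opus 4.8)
The plan is to locally simulate a bounded-depth augmenting-path algorithm. Recall the classical fact (valid for general graphs, via the decomposition of $M \triangle M^*$ into alternating paths and even cycles): a matching $M$ with no augmenting path of length at most $2k-1$ satisfies $|M| \ge \bigl(1-\tfrac{1}{k+1}\bigr)|M^*|$, where $M^*$ is a maximum matching. Fixing $k = \lceil 1/\eps_1 \rceil$ makes this a $(1-\eps_1)$-approximation, so it suffices to produce the matching $M_k$ obtained iteratively by $M_0=\emptyset$ and, for $i=1,\dots,k$, letting $M_i$ be the matching obtained from $M_{i-1}$ by augmenting along a maximal collection of pairwise vertex-disjoint augmenting paths of length $2i-1$ (a phase analysis in the style of Hopcroft--Karp guarantees that the final $M_k$ has no augmenting path of length $\le 2k-1$). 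To answer ``is the queried edge $e$ in $M_k$?'' it then suffices to know $e$'s status in $M_{k-1}$ and whether a selected level-$k$ augmenting path flips $e$.

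Each ``maximal set of vertex-disjoint augmenting paths'' step is a maximal independent set computation in the conflict graph $C_i$ whose vertices are the length-$(2i-1)$ augmenting paths of $M_{i-1}$, with two paths adjacent iff they intersect in $G$. I would implement this MIS via the standard greedy-with-random-priorities oracle: give each path an independent uniform real rank, declare a path to be in the MIS iff none of its lower-ranked neighbours in $C_i$ is, and resolve this by recursion on lower-ranked neighbours. The quantitative engine is the bound of~\cite{YYI12} (the core of the proof of their Theorem~3.7): for such greedy oracles the expected number of recursive calls, over a uniformly random queried element and random ranks, is $O(\bar d)$, where $\bar d$ is the average degree of the conflict graph; here $\bar d \le d^{O(i)}$ since at most $d^{O(i)}$ augmenting paths pass through any fixed vertex of $G$.

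Composing the levels: answering a level-$i$ query first requires enumerating the relevant length-$(2i-1)$ augmenting paths and their conflicting paths, done by a breadth-first exploration to radius $O(i)$ using strong probes at a cost of $d^{O(i)}$ probes; then it runs the level-$i$ greedy simulation, which in expectation makes $d^{O(i)}$ recursive calls within level $i$, each spawning $O(i)$ queries to the level-$(i-1)$ oracle (one per matched edge on the path being tested). Unrolling the recursion, the expected total number of probes is $\prod_{i=1}^{k} d^{O(i)} = d^{O(\sum_{i\le k} i)} = d^{O(k^2)} = d^{O(1/\eps_1^2)}$. Consistency is immediate: all ranks at all levels are a fixed function of the random seed, so every query reconstructs the same $M_k$.

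The step I expect to be the main obstacle is the composition of the expected-complexity guarantees: the $O(\bar d)$ bound of~\cite{YYI12} is stated for a \emph{uniformly random} query, whereas the sub-queries that level $i$ sends to level $i-1$ are the matched edges lying on particular augmenting paths, a distribution far from uniform. The fix --- which is the delicate part of~\cite{YYI12} --- is to show that this induced distribution is within a $\poly(d)$ multiplicative factor of uniform over the appropriate edge set, so that the expectation bound survives the recursion with only a bounded loss per level, a loss already subsumed in the $d^{O(i)}$ factors above; one also has to check that using a single shared rank function across levels keeps the analysis applicable.
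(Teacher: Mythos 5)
Your sketch matches the paper's treatment: the lemma is a cited result of Yoshida, Yamamoto and Ito, and the paper's Appendix~D gives exactly this summary --- Hopcroft--Karp phases with $k=\lceil 1/\eps_1\rceil$, conflict graphs over length-$(2i-1)$ augmenting paths, a maximal independent set found by a greedy random-rank oracle (lazily expanding lower-ranked neighbours first), and the product-over-levels bound $d^{O(k^2)}$ (into which the extra $k^{O(k)}$ factor of~\cite{YYI12} is absorbed since $d\ge 2$). You also correctly flag the one genuinely delicate point --- that the $O(\bar d)$ expected-cost guarantee for the greedy MIS oracle is for a uniformly random query, while recursive calls to level $i-1$ are not uniform --- which is indeed the part the paper defers to~\cite{YYI12} rather than reproving.
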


For completeness, we summarize this algorithm in~\appendixref{app:bounded}.
\begin{theorem}\label{thm:bounded}
	There exists an SP LCA that finds a $(1-\eps)$-approximate maximum matching (in expectation) on graphs with degree bounded by $d$ that uses  $\left( d + \frac{1}{\eps}\right) ^{O\left( \frac{1}{\eps^2}\right)}$ probes per query.
	\end{theorem}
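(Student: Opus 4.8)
The idea is to combine Lemma~\ref{lem:yyi} with Markov's inequality, exactly as in the discussion preceding the statement. Lemma~\ref{lem:yyi} gives, for any $\eps_1$, a randomized LCA $\mathcal{A}_0$ for $(1-\eps_1)$-approximate maximum matching on graphs of degree at most $d$ whose \emph{expected} probe complexity (expectation over the random query edge and over the internal randomness) is $d^{O(1/\eps_1^2)}$. The problem is that this is only an expectation, whereas an LCA must have a \emph{worst-case} probe bound. I would fix this by truncation: run $\mathcal{A}_0$ but abort and return ``no'' (i.e., declare the queried edge not to be in the matching) whenever the number of probes exceeds a threshold $P$, to be chosen. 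The truncated algorithm $\mathcal{A}$ then outputs a matching $M'$ which is a sub-matching of the matching $M$ that $\mathcal{A}_0$ would have produced; in particular $M'$ is still a legal matching, so $\mathcal{A}$ is a consistent LCA.

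\medskip

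\noindent First I would set $\eps_1 = \eps/2$, so $\mathcal{A}_0$ computes a $(1-\eps/2)$-approximation with expected probe complexity $E := d^{O(1/\eps^2)}$, and then set the truncation threshold to $P = \frac{2d}{\eps}\cdot E = \left(d+\tfrac1\eps\right)^{O(1/\eps^2)}$ (the extra factor $d/\eps$ is absorbed into the exponent since the bound is already $d^{\Theta(1/\eps^2)}$). By Markov's inequality, for a uniformly random query edge the probability (over both sources of randomness) that $\mathcal{A}_0$ uses more than $P$ probes is at most $E/P = \eps/(2d)$. Every edge that $\mathcal{A}_0$ would have placed in $M$ but that $\mathcal{A}$ discards is an edge on which $\mathcal{A}_0$ exceeded $P$ probes when queried on that edge; since each such edge, when discarded, removes at most one edge from the matching, the expected number of discarded edges is at most $(\eps/(2d))\cdot|E|$. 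For a $d$-regular graph $|E| = dn/2$, so the expected number of discarded edges is at most $\eps n/4$, while the maximum matching has size at least $(1-\eps/2)$ times\dots{} actually the maximum matching of a $d$-regular graph has size $\ge$ a constant fraction of $n$; more simply, $|M| \ge (1-\eps/2)\,|M^*|$ and $|M^*| \le n/2$, and $\eps n/4 \le (\eps/2)|M^*|$ when $|M^*| \ge n/2 - o(n)$, which one gets from $d$-regularity (a $d$-regular graph has a matching of size $(1-o(1))n/2$, or at the very least $\Omega(n)$, which suffices after adjusting constants). Hence $\expect[|M'|] \ge |M| - \eps n/4 \ge (1-\eps/2)|M^*| - (\eps/2)|M^*| = (1-\eps)|M^*|$.

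\medskip

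\noindent The remaining point is the probe-complexity bookkeeping: one truncated run of $\mathcal{A}_0$ uses at most $P = \left(d+\tfrac1\eps\right)^{O(1/\eps^2)}$ probes by construction, which is the claimed bound. I would also note that the randomness of $\mathcal{A}$ (the global seed, which one may take to be long / idealized per the paper's convention in \sectionref{subsec:rand2}) is shared across all queries, so the matching output is globally consistent; truncation is a purely local, deterministic post-processing of each query's probe count and does not affect consistency. I do not expect any genuine obstacle here — the only mildly delicate step is making sure the discarded-edge count translates into a loss of only $O(\eps)$ in the approximation ratio, and that is handled by the $d$-regularity bound $|M^*| = \Omega(n)$ (indeed $|M^*| \ge \lfloor n/2\rfloor$ up to lower-order terms for $d$-regular graphs, by a standard argument, e.g. via König/Vizing or a greedy count), together with choosing the truncation factor to be $d/\eps$ rather than just $1/\eps$.
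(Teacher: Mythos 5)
Your proposal matches the paper's proof essentially step for step: set $\eps_1=\eps_2=\eps/2$, run the YYI LCA of Lemma~\ref{lem:yyi}, truncate at $\frac{d}{\eps_2}$ times the expected probe count, and invoke Markov's inequality to bound the expected fraction of queries that hit the cap (and hence the expected number of matched edges lost by truncation). The one place you are slightly off — the claim that a $d$-regular graph has matching size $n/2-o(n)$ is false in general (e.g.\ a disjoint union of $K_{d+1}$'s with $d$ even gives $\frac{d}{d+1}\cdot\frac{n}{2}$) — you yourself repair by falling back to $|M^*|=\Omega(n)$ (or, more uniformly, $|M^*|\ge|E|/(2d)$ by a greedy count, which also covers the bounded-degree non-regular case in which the theorem is stated) and adjusting constants, which is exactly the level of care the paper's own terse proof implicitly relies on.
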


(Note that in~\theoremref{thm:bounded} the matching is a $(1-\eps )$ approximation in expectation, while the probe complexity is worst case; the opposite is true for~\lemmaref{lem:yyi}.)


\begin{proof}[Proof of~\theoremref{thm:bounded}]
	Set $\eps_1=\eps_2=\eps/2$. Consider the LCA of~\lemmaref{lem:yyi} with parameter $\eps_1$:
	For any edge $e \in E$, let $X_e$ be the random variable denoting the number of probes required to reply to the query for $e$. By~\lemmaref{lem:yyi}, $\expect[X_e] \leq d^{\left( \frac{c}{\eps_1^2}\right)}$, for some constant $c>0$, where expectation is taken both other the choice of queried edge and over the randomness of the LCA. Let us use the notation $\tau = d^{\left( \frac{c}{\eps_1^2}\right)}$.
	
		Our algorithm is simply the algorithm of~\cite{YYI12} (with parameter $\eps_1$), except that if the algorithm uses more than $\frac{\tau d}{\eps_2}$ probes it returns ``no''; i.e., the edge is not in the approximate maximum matching.

By Markov's inequality, the expected fraction of edges for which $X_e>\frac{\tau d}{\eps_2}$ is at most $\frac{\eps_2}{d}$.
Hence in expectation, the LCA will reply incorrectly on at most $\eps_2 n$ edges; the approximation ratio of the LCA is therefore $(1-(\eps_1+\eps_2))$.
\end{proof}

\subsection{High Degree Regular Graphs}

Let $\epsilon > 0$ be a fixed constant. For some $d_{\epsilon}$ that may depend on $\epsilon$, and for some $d \ge d_{\epsilon}$ (for our proofs $d_{\epsilon} \simeq \frac{1}{\epsilon^2}$ suffices), let $G(V,E)$ be a $d$-regular graph on $n$ vertices. We wish to design a randomized SP LCA that 
outputs a matching of size $(1 - \epsilon)\frac{n}{2}$ in $G$. As $G$ cannot have a matching larger than $n/2$, this provides a $(1 - \epsilon)$-approximation. 
An $\eps$-approximately-$d$-regular graph is one for which the average degree is $d$ and all vertex degrees lie within the range $\left[ (1-\eps)d, (1+\eps)d\right] $.

 {
	\renewcommand{\thetheorem}{\ref{thm:dregularbest}}
\begin{theorem}
	There exists an SP LCA that finds a $(1-\eps)$-approximate maximum matching in expectation on $\eps$-approximately-$d$-regular graphs that uses $\left( \frac{1}{\eps}\right)^{O\left( \frac{1}{\eps^2}\right)}$ probes per query.
\end{theorem}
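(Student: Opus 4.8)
The plan is to handle two regimes separately, using a threshold $d_\eps\simeq 1/\eps^2$. If $d\le d_\eps$ there is nothing new to do: the graph has maximum degree $O(1/\eps^2)$, so the LCA of Theorem~\ref{thm:bounded}, invoked with approximation parameter $\eps$, already finds a $(1-\eps)$-approximate maximum matching in expectation using $(d+1/\eps)^{O(1/\eps^2)}=(1/\eps)^{O(1/\eps^2)}$ probes.

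So assume $d>d_\eps$. The idea is to \emph{sparsify} $G$ to bounded degree and run the bounded-degree LCA on the sparsified graph, simulating each of its probes by a few probes to $G$. Concretely: using the (unbounded) random seed, give each edge an independent survival bit that is $1$ with probability $p=\frac{c}{d\eps}$ for a large absolute constant $c$; let $G_s$ be the subgraph of surviving edges, and let $G'$ be $G_s$ with every vertex of $G_s$-degree exceeding $\frac{2c}{\eps}$ deleted (along with its incident edges). By construction $G'$ has maximum degree $\frac{2c}{\eps}$ and $E(G')\subseteq E(G)$; and since $G$ is $\eps$-approximately $d$-regular with $d>d_\eps$, every vertex has expected $G_s$-degree $\Theta(1/\eps)$ and, by a Chernoff bound, is deleted with probability at most $\eps/100$. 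I would then establish two claims: (i) $\expect[\nu(G')]\ge(1-\eps/2)\nu(G)$, where $\nu$ is the matching number and the expectation is over the sparsification; and (ii) a strong probe to $G'$ can be answered using $O(1/\eps)$ strong probes to $G$. Granting these, run the LCA $\A$ of Theorem~\ref{thm:bounded} on $G'$ with parameter $\eps/2$: it uses $(\frac{2c}{\eps}+\frac{2}{\eps})^{O(1/\eps^2)}=(1/\eps)^{O(1/\eps^2)}$ probes to $G'$ and outputs a matching of $G'$ of expected size $\ge(1-\eps/2)\nu(G')$; simulating its probes via (ii) costs $(1/\eps)^{O(1/\eps^2)}$ probes to $G$, and is well defined because $\A$ only ever reaches non-deleted vertices (every $G'$-neighbor of a non-deleted vertex is non-deleted). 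Taking expectations and using (i) gives expected output size $\ge(1-\eps/2)^2\nu(G)\ge(1-\eps)\nu(G)$. On an edge query $e$ we first test whether $e\in E(G')$ using the seed and $O(1)$ probes, answering ``no'' if not; consistency is immediate since the sparsification and $\A$ are deterministic given the seed.

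For (ii), to answer a probe to a non-deleted vertex $v$ we probe $v$ in $G$, keep only its surviving incident edges (read off from the seed, no probes), and then probe each of $v$'s $O(1/\eps)$ surviving neighbors once in $G$ to recompute its $G_s$-degree and hence decide whether it was deleted; that is $O(1/\eps)$ probes. For (i), let $M^*$ be a maximum matching of $G$; I would build a matching of $G'$ greedily, scanning the edges of $M^*$ in an arbitrary order and, for each $(u,v)\in M^*$ with $u$ and $v$ neither deleted nor already used, matching $u$ and $v$ to currently-unused vertices through surviving edges of $G'$. Because every non-deleted vertex keeps $\Omega(1/\eps)$ surviving neighbors while the greedy process blocks only boundedly many of them per step, a counting argument should show that all but an $O(\eps)$ fraction of the edges of $M^*$ succeed once $c$ is large enough; the bookkeeping can be made cleaner using the structure of maximum matchings --- $V\setminus V(M^*)$ is an independent set and there are no augmenting paths.

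The hard part is claim (i). The obstacle is that the edges of $M^*$ survive only with the minuscule probability $p$, so the matching of $G'$ has to be rebuilt almost entirely out of \emph{other} surviving edges; turning the greedy sketch above into a proof requires care in handling the dependencies among the edge-survival events and the order in which greedy commitments are made, and it is precisely the (approximate) regularity --- which forces every surviving vertex to retain $\Theta(1/\eps)$ neighbors, with concentration --- that makes this achievable.
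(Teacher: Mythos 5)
Your reduction framework --- random sparsification to degree $q\simeq 1/\eps$, simulate each strong probe to $G'$ by $O(1/\eps)$ strong probes to $G$, then run the bounded-degree LCA of Theorem~\ref{thm:bounded} on $G'$ --- is exactly the paper's, and your claim~(ii), the two-regime split, and the consistency observations are all fine. The genuine gap is claim~(i), $\expect[\nu(G')]\ge (1-\eps/2)\nu(G)$, for which you offer only a greedy sketch: scan $M^*$ and rematch its endpoints to fresh vertices through surviving edges. I do not see how that counting closes: each greedy commitment consumes one of some later vertex's $\Theta(1/\eps)$ surviving neighbors, and a per-step blocking bound does not control the global accumulation of lost options; the survival indicators are also awkward to condition on once you fix a scanning order and start revealing which edges were consumed. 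You correctly flag this as the hard part, but the proposal leaves it open, and since claim~(i) is the entire content of the reduction, the proof is incomplete.

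The paper sidesteps greedy entirely, by a fractional-matching argument (Lemma~\ref{lem:sparsifyq}). Additionally $2$-color the vertices at random and keep only \emph{bipartite} surviving edges, so the surviving graph is bipartite; put fractional weight $1/\max\{d_u,d_v\}$ on each surviving bipartite edge $(u,v)$, where $d_u,d_v$ are the surviving bipartite degrees. This is a valid fractional matching by construction, the expected weight per original edge of $G$ is lower-bounded via a Chebyshev-type estimate on $\max\{d_u,d_v\}$ (Claim~\ref{claim:expectmax}), summing over all $\approx nd/2$ edges gives expected total weight $\bigl(1-O(1/\sqrt q)\bigr)\,n/2$, and bipartiteness lets you round to an integer matching with no loss. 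Since $\nu(G)\le n/2$, this gives the approximation guarantee. Note this argument loses an additive $O(1/\sqrt q)$, so with it you would want $q\simeq 1/\eps^2$ (still $(1/\eps)^{O(1/\eps^2)}$ probes after the $O(q)$-overhead simulation); to get the loss $O(1/q)$ that your choice $q\simeq 1/\eps$ needs, and to handle $\eps$-approximate (rather than exact) regularity cleanly, the paper uses a separate, sharper argument in Appendix~\ref{sec:mmm} via Hall's condition and a union bound over vertex subsets (Theorem~\ref{thm:sparsify} and Corollary~\ref{cor:fin}). Either of these would close your gap; the greedy sketch as written will not.
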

 	\addtocounter{theorem}{-1}
 }



Our approach is based on {\em sparsification} of $G$. Let $q$ be a parameter that depends on $\epsilon$ (in this section we shall take $q \simeq \epsilon^{-2}$, but in~\appendixref{sec:mmm} we show that $q \simeq \frac{1}{\epsilon}$ also suffices). Construct from $G(V,E)$ a random graph $G'(V',E')$ in two steps:

\begin{enumerate}

\item Initially $V' = V$, and every edge $e \in E$ is included in $E'$ independently with probability $\frac{q}{2d}$. We refer to these edges as {\em surviving} edges.

\item To obtain $V'$, (simultaneously) drop from $V$ every vertex that has more than $q$ surviving edges.

\end{enumerate}

\begin{proposition}
$G'$ has maximum degree $q$.
\end{proposition}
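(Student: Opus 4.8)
This is essentially immediate from the construction, so the plan is simply to unwind the definitions of $V'$ and $E'$ and observe that the pruning step in item~2 directly enforces the claimed degree bound, while the only subtlety to check is that discarding vertices cannot \emph{increase} anyone's degree.

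First I would fix a vertex $v \in V'$. Since $v$ was not dropped in step~2, by definition $v$ is incident to at most $q$ surviving edges (i.e., edges of $E$ that were placed in $E'$ in step~1). Next I would note that the edge set of the graph $G'$ on vertex set $V'$ consists precisely of those surviving edges whose \emph{both} endpoints lie in $V'$; in particular every edge of $G'$ incident to $v$ is a surviving edge incident to $v$. Hence $\deg_{G'}(v)$ is at most the number of surviving edges incident to $v$, which is at most $q$. Since $v \in V'$ was arbitrary, $G'$ has maximum degree at most $q$, as claimed.

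There is no real obstacle here; the only thing one must be slightly careful about is not to confuse ``surviving edges incident to $v$'' (which the pruning rule bounds by $q$) with ``edges of $G'$ incident to $v$'' (which is a subset of the former, since the other endpoint may itself have been pruned). Because removing vertices only removes edges, the inequality goes the right way, and the bound follows.
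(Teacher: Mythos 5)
Your proof is correct and matches the paper's argument, which is simply ``By step~2 of the sparsification''; you have just spelled out explicitly the (true) observation that pruning vertices can only decrease degrees, so the bound on surviving edges per retained vertex carries over to the degree in $G'$.
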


\begin{proof}
By step~2 of the sparsification.
\end{proof}

\begin{lemma}
\label{lem:simulation}
Every strong probe in $G'$ can be implemented by $q+1$ strong probes to $G$.
\end{lemma}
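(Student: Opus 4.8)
The key point is that the randomness used in the sparsification---namely, for each edge $e \in E$ the independent coin that decides whether $e$ survives---is part of the global random seed of the LCA, and hence its outcome on any specific edge can be computed \emph{without any probe to $G$}. Consequently, to simulate a strong probe to a vertex $v$ in $G'$, it suffices to learn enough of $G$ to (i) decide whether $v \in V'$, and (ii) if so, list exactly those neighbors $u$ of $v$ for which the edge $(v,u)$ survived and $u$ itself was not dropped.

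\textbf{Step 1 (probe $v$).} Make a single strong probe to $v$ in $G$, obtaining $N_G(v)$. Using the seed, determine the set $S(v) \subseteq N_G(v)$ of neighbors joined to $v$ by a surviving edge. If $|S(v)| > q$, then by Step~2 of the sparsification $v$ was dropped, so $v \notin V'$; the simulated strong probe reports this and halts, having used exactly one probe.

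\textbf{Step 2 (probe the surviving neighbors).} Otherwise $|S(v)| \le q$, so $v \in V'$. For each $u \in S(v)$, make one strong probe to $u$ in $G$ to obtain $N_G(u)$, and use the seed to compute $|S(u)|$, the number of surviving edges incident to $u$. The vertex $u$ is a neighbor of $v$ in $G'$ if and only if $|S(u)| \le q$ (the edge $(v,u)$ has already been verified to survive, and $u \in V'$ exactly when it has at most $q$ surviving edges). Collecting all such $u$ yields the neighbor list of $v$ in $G'$. This stage uses $|S(v)| \le q$ additional probes, for a total of at most $q+1$.

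\textbf{Remarks.} There is no real obstacle here; the only thing to be careful about is that the simulation is consistent across different queries, which holds because all the relevant coin flips are read from the same fixed global seed rather than re-sampled. Note also that the bound is in the worst case: whenever $v$ survives we probe at most $q$ of its neighbors (the surviving ones), and this is precisely where Step~2 of the sparsification---capping the surviving degree at $q$---is used.
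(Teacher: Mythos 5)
Your proof is correct and matches the paper's own argument step for step: one strong probe to $v$ to recover its surviving edges (computed locally from the seed), then one strong probe per surviving neighbor to check whether that neighbor also survives Step~2 of the sparsification. Your explicit remark that the coin flips are read from the shared seed rather than re-sampled is a useful clarification of a point the paper leaves implicit, but the underlying approach is identical.
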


\begin{proof}
On a probe to vertex $v \in V$, the reply needs to be the list of its neighbors in $G'$. Probe $v$ in $G$ so as to get the list of its neighbors in $G$, and hence of its surviving edges. If $v$ has more than $q$ surviving edges, then reply that $v$ is not in $G'$. If $v$ has at most $q$ surviving edges, probe each of their endpoints and include the respective vertex in the reply only if it is a member of $G'$ (has at most $q$ surviving edges).
\end{proof}

The following lemma suffices for proving~\theoremref{thm:dregularbest}. Stronger bounds (of the form $\frac{n}{2} - O(\frac{n}{q})$) are provided by ~\theoremref{thm:sparsify} in~\appendixref{sec:mmm}.

\begin{lemma}
\label{lem:sparsifyq}
For every $G$, the expected size of a maximum matching in graph $G'$ is at least $\left(1 - \frac{1}{\sqrt{q}}\right)\frac{n}{2}$.
\end{lemma}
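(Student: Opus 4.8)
The plan is to handle the two steps of the sparsification separately, and then to lower bound the matching number of the edge‑sampled graph by the size of a randomized greedy matching. Write $p=\frac{q}{2d}$, let $\nu(H)$ denote the size of a maximum matching in $H$, and let $G''=(V,E'')$ be the graph obtained after Step~1 (each edge of $G$ kept independently with probability $p$) but before Step~2; then $G'$ is $G''$ with the set $Z$ of vertices having more than $q$ surviving incident edges removed. Since removing a vertex lowers the matching number by at most one, $\nu(G')\ge\nu(G'')-|Z|$, so $\expect[\nu(G')]\ge\expect[\nu(G'')]-\expect[|Z|]$. The number of surviving edges at a fixed vertex is $\mathrm{Bin}(d,\frac{q}{2d})$ with mean $q/2$, so a Chernoff bound gives $\Pr[v\in Z]\le e^{-q/6}$ and $\expect[|Z|]\le n e^{-q/6}\le \frac{n}{4\sqrt q}$ whenever $q$ is at least a suitable absolute constant $q_0$ (for $q<q_0$ the statement is weak and can be handled separately; in the application $q\simeq\eps^{-2}$ is large). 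It therefore suffices to prove $\expect[\nu(G'')]\ge \frac n2\bigl(1-\frac{1}{2\sqrt q}\bigr)$.

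To lower bound $\expect[\nu(G'')]$ I would run the randomized greedy maximal matching on $G''$: give each edge of $G$ an independent label, equal to $\mathrm{Unif}[0,1]$ with probability $p$ and to $\infty$ with probability $1-p$ (the finitely labelled edges are exactly the surviving ones), scan the finitely labelled edges in increasing order of label, and add the current edge to a matching $N$ whenever both endpoints are still unmatched. Then $N$ is a maximal matching of $G''$, so $\nu(G'')\ge|N|=\frac12(n-|U|)$ where $U$ is the set of $N$‑unmatched vertices; hence it is enough to prove $\Pr[v\in U]\le\frac{1}{2\sqrt q}$ for every $v$. The useful observation is that while $v$ is unmatched it has no effect on the matching formed among the other vertices (an edge $vw$ scanned while $v$ is free either does nothing, if $w$ is already matched, or else matches $v$); so, conditioning on the labels of $v$'s edges, the event $\{v\in U\}$ is exactly the event that each of the $\deg_{G''}(v)$ (in expectation $q/2$) surviving neighbours of $v$ is already matched by the time its edge to $v$ is scanned, in the greedy process run on $G''-v$. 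Tracking $\phi(t):=\Pr[\text{a vertex is still unmatched once scanning has reached time }t]$, the standard differential‑equation heuristic yields $\phi'(t)=-\frac q2\,\phi(t)^2$ with $\phi(0)=1$, hence $\phi(1)=\frac{2}{q+2}$ and $\Pr[v\in U]\le\frac{2}{q+2}(1+o(1))\le\frac{1}{2\sqrt q}$ once $q$ is a large enough constant.

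The main obstacle is making this last estimate rigorous: the events ``the $j$‑th surviving neighbour of $v$ is still free when its edge is scanned'' are correlated, so the heuristic $\phi'=-\frac q2\phi^2$ must be upgraded to an honest upper bound on $\expect[|U|]$ that holds for an \emph{arbitrary} $d$‑regular host graph $G$ (one cannot assume $G''$ is locally tree‑like, since $d$ and $q$ are comparable). This is the technical heart of the argument; it can be carried out by the differential equation / Azuma‑type concentration machinery for randomized greedy matching, tracking the degree sequence of the graph induced on the not‑yet‑matched vertices, and the deliberately generous $\frac{1}{\sqrt q}$ in the statement (the truth is $\Theta(1/q)$) is precisely what absorbs the resulting error terms. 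A self‑contained alternative sufficient for the $\frac{1}{\sqrt q}$ bound is to run $\Theta(q\log q)$ rounds of a parallel ``mutual proposal'' matching --- each still‑unmatched vertex proposes along a uniformly random surviving incident edge, and mutually proposing pairs are matched --- and to show that the set of unmatched vertices shrinks by a factor bounded away from $1$ in each round as long as its size is $\Omega(n/q)$; the only nontrivial point in that bookkeeping is controlling the accumulation of vertices that become isolated among the still‑unmatched ones.
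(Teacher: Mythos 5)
Your outer decomposition is sound and essentially matches what the paper does at that level: writing $\nu(G')\ge\nu(G'')-|Z|$ and bounding $\expect[|Z|]\le n e^{-\Omega(q)}$ by a Chernoff bound is correct, and the paper likewise absorbs the step-2 vertex removals as a $n2^{-\Omega(q)}$ error term. The divergence, and the gap, is entirely in how you lower bound $\expect[\nu(G'')]$.

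The paper avoids greedy matching altogether. It randomly $2$-colors $V$, keeps only surviving edges whose endpoints have different colors (``eligible'' edges), and considers the fractional matching that gives edge $(u,v)$ weight $1/\max\{d_u,d_v\}$ where $d_w$ is $w$'s eligible degree. Each edge is eligible with probability $\frac{q}{4d}$; the eligible degree of a vertex has mean $q/4$ and standard deviation at most $\sqrt{q}/2$ (it is stochastically dominated by a binomial), so by the auxiliary Claim C.1 on $\expect[\max\{X,Y\}]$ and Jensen's inequality the expected weight per edge is at least $\frac{1}{d}\bigl(1-O(1/\sqrt{q})\bigr)$. Summing over the $nd/2$ edges gives expected fractional value $\bigl(1-O(1/\sqrt{q})\bigr)\frac{n}{2}$, and since the eligible graph is bipartite by construction, this fractional matching rounds to an integral matching of the same size. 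The proof is short, works uniformly for every $d$-regular host $G$, and the $1/\sqrt{q}$ in the statement is exactly the price of the Chebyshev-level concentration of a degree.

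Your route, via randomized greedy matching and the differential-equation bound $\phi(1)=\frac{2}{q+2}$, is not a proof as written, and you say so yourself: the heuristic $\phi'=-\frac{q}{2}\phi^2$ is only valid for graph processes that are locally tree-like or otherwise close to a configuration/Erd\H{o}s--R\'enyi model, whereas here $G''$ is an edge-percolated copy of an \emph{arbitrary} $d$-regular $G$, which may contain, for example, disjoint $(d+1)$-cliques, so the events ``neighbour $w_j$ of $v$ is already matched when its edge is scanned'' are correlated through short cycles. Neither of the two routes you sketch to close this (Wormald-style differential equations with an Azuma concentration step, or $\Theta(q\log q)$ rounds of parallel mutual proposals) is carried out, and both would require tracking the evolving degree sequence of the residual graph on unmatched vertices over an arbitrary host --- this is genuinely the hard part of your argument, not a routine finishing step, and it is not obvious that the $1/\sqrt{q}$ slack suffices to absorb the error without real work. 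The paper's fractional-matching device is precisely engineered to avoid this: it replaces the dynamic, correlated greedy process by a static, per-edge expectation computation that needs only degree concentration, which is clean and host-independent.

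Separately, note that your target bound $\Pr[v\in U]\le\frac{1}{2\sqrt{q}}$ is far weaker than the $\Theta(1/q)$ your heuristic predicts, so the overall plan is not unreasonable; but as submitted it has a hole exactly where the difficulty lies.
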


\begin{proof}
Consider first only the first step of the sparsification. Let every vertex choose at random a color of either~0 or~1 uniformly and independently. Call an edge {\em eligible} only if it is a surviving edge whose endpoints have different colors. For a vertex $w$, let $d_w$ denote the number of eligible edges incident with it. Consider a fractional matching in which the weight of every edge $(u,v)$ is $\frac{1}{\max\left\lbrace d_u,d_v\right\rbrace }$.
\begin{claim}The expected weight contributed by any edge $e \in E$ to the fractional matching is at least $\frac{1}{d}\left( 1 - \frac{1}{\sqrt{q}}\right) $.\end{claim}
\begin{proof} Edge $e$ survives with probability $\frac{q}{2d}$, and, conditioned on surviving, is eligible with probability $\frac{1}{2}$. The expected degree of a vertex in $G'$ is $q/2$, and its expected eligible degree is $q/4$. As its degree is dominated by the binomial distribution, its variance is at most $q/4$. Its standard deviation is therefore at most $\sqrt{q}/2$. 
	From ~\claimref{claim:expectmax},
	$\expect\left[ \frac{1}{max\{d_u, d_v\}}\right] \geq \frac{1}{\frac{q}{4}+\frac{3\sqrt{q}}{2}}.$
	
	 Hence the total expected fractional weight is
	 	\begin{align}\frac{q}{4d}\frac{1}{\frac{q}{4} +\frac{3\sqrt{q}}{2}} &= \frac{1}{d}\left( \frac{1}{1 +\frac{6}{\sqrt{q}}}\right) \notag\\
	 	&\geq \frac{1}{d}\left( 1-\frac{1}{\sqrt{q}}\right) \label{eq:1}.
	 	\end{align}
\end{proof}


The fractional matching above is supported on edges of a bipartite graph, and hence it can be transformed into an integer matching of the same size.

Finally, some of the matched edges might be removed due to vertices that are removed in the second step of the sparsification. However, the total number of vertices removed it that step is of order $n2^{-\Omega(q)}$; $2^{-\Omega(q)}$ is absorbed into Inequality~\eqref{eq:1}. (This last statement requires $q$ to be at least some sufficiently large constant. Such a choice of $q$ can be enforced in our sparsification procedure. Its effect on the upper bound on the number of probes can be absorbed into the $O$ notation in~\theoremref{thm:dregularbest}.)
\end{proof}


The combination of \lemmaref{lem:simulation} and \lemmaref{lem:sparsifyq} implies that we can randomly reduce the problem of approximating maximum matching in a $d$-regular graph $G$ with arbitrarily large $d \ge q$ to that of approximating maximum matching in a graph $G'$ of degree bounded by $q$, with an expected additive loss of at most $\frac{1}{\sqrt{q}}$ in the approximation factor, and a multiplicative loss of $q$ in the number of strong probes. 

\begin{thm}\label{thm:reduction}
There is a randomized reduction from the problem of approximating maximum matching in a $d$-regular graph $G$ with arbitrarily large $d \ge q$ to that of approximating maximum matching in a bipartite graph $G'$ of degree at most $q$, with an expected additive loss of at most $\frac{1}{\sqrt{q}}$ in the approximation factor, and a multiplicative loss of $q$ in the number of strong probes.
\end{thm}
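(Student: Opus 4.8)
The plan is to assemble the reduction from the two ingredients already proved — the sparsification estimate \lemmaref{lem:sparsifyq} and the probe simulation \lemmaref{lem:simulation} — together with one small modification to guarantee that the target graph is bipartite. Concretely, I would define the reduction as follows. Given the $d$-regular graph $G=(V,E)$ with $d\ge q$, first keep every edge of $E$ independently with probability $\frac{q}{2d}$ to obtain the \emph{surviving} edges; then draw a uniform independent color in $\{0,1\}$ for every vertex and discard each surviving edge whose two endpoints received the same color, keeping only the \emph{eligible} edges; finally delete every vertex incident to more than $q$ eligible edges, and let $G'=(V',E')$ be the graph that remains. By construction $E'\subseteq E$, so every matching of $G'$ is a matching of $G$; the two color classes (restricted to $V'$) exhibit $G'$ as bipartite; and the last deletion step bounds its maximum degree by $q$.

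For the quality of the reduction I would invoke the \emph{proof} of \lemmaref{lem:sparsifyq} rather than only its statement: that argument builds a fractional matching supported on the eligible edges whose expected total weight is at least $\left(1-\frac{1}{\sqrt q}\right)\frac n2$, and since the eligible-edge graph is bipartite this fractional matching rounds (with no loss) to an integral matching of the same size. The deletion step removes only $n\cdot 2^{-\Omega(q)}$ vertices (having more than $q$ eligible edges is at least as rare as having more than $q$ surviving edges), hence destroys at most that many matching edges, which is absorbed into the bound. Thus $\expect[\mu(G')]\ge\left(1-\frac1{\sqrt q}\right)\frac n2$, where $\mu(\cdot)$ is the maximum-matching size and the expectation is over the sparsification. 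Now if an algorithm outputs a matching $M'$ of $G'$ with $\expect[|M'|]\ge(1-\eps')\mu(G')$, then taking expectations also over the sparsification gives $\expect[|M'|]\ge(1-\eps')\left(1-\frac1{\sqrt q}\right)\frac n2\ge\left(1-\eps'-\frac1{\sqrt q}\right)\frac n2\ge\left(1-\eps'-\frac1{\sqrt q}\right)\mu(G)$, using $\mu(G)\le\frac n2$; so the approximation factor degrades by exactly the additive $\frac1{\sqrt q}$ claimed.

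For the probe overhead I would follow \lemmaref{lem:simulation} essentially verbatim: a strong probe to a vertex $v$ of $G'$ is answered by one strong probe to $v$ in $G$, which reveals all edges of $G$ at $v$ and hence (reading the random seed for the colors of $v$ and its neighbors, at no probe cost) the eligible edges at $v$; if there are more than $q$ of them then $v\notin V'$ and the probe is answered accordingly, and otherwise one further strong probe to each of the at most $q$ eligible neighbors determines whether it too lies in $V'$. This is $q+1=O(q)$ strong probes to $G$ per strong probe to $G'$, the stated multiplicative blow-up. Putting the three pieces together yields the reduction. I do not anticipate a genuine difficulty here: the content is already in \lemmaref{lem:sparsifyq} and \lemmaref{lem:simulation}, and the only point requiring a moment's care is noticing that the fractional-matching analysis of \lemmaref{lem:sparsifyq} already takes place inside the bipartite eligible-edge subgraph, so insisting that $G'$ be bipartite costs nothing.
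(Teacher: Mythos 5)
Your proposal is correct, and the core is exactly what the paper intends: combine the sparsification estimate of \lemmaref{lem:sparsifyq} (to control the expected matching size) with the probe-simulation argument of \lemmaref{lem:simulation} (to control the probe overhead). The paper in fact offers no standalone proof of \theoremref{thm:reduction}; it simply asserts that the statement follows from those two lemmas.

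Where you go beyond the paper, and usefully so, is in taking the word ``bipartite'' in the theorem statement seriously. The sparsification procedure actually written in \sectionref{sec:match1} (keep each edge with probability $\frac{q}{2d}$, then delete vertices with more than $q$ surviving edges) does \emph{not} produce a bipartite graph, and the sentence in the paper just before the theorem quietly drops the word ``bipartite.'' You repair this by inserting the random $2$-coloring and discarding monochromatic surviving edges before the degree-trimming step, and you correctly observe that this costs nothing: the fractional matching in the proof of \lemmaref{lem:sparsifyq} is already supported on the eligible (bichromatic) edges, so the expected-value bound $\bigl(1-\frac{1}{\sqrt q}\bigr)\frac n2$ is unchanged. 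Your remark that trimming by \emph{eligible} degree rather than \emph{surviving} degree only removes fewer vertices (since eligible edges are a subset of surviving edges) is also correct and ensures the $n\cdot 2^{-\Omega(q)}$ loss term is not worsened. The probe-simulation step is as in \lemmaref{lem:simulation}, with the harmless addition of consulting the color seed; this keeps the overhead at $q+1$ strong probes as claimed. In short, this is the paper's intended argument, made explicit and with the bipartiteness gap honestly closed.
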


\begin{proof}[Proof of~\theoremref{thm:dregularbest}]
	Let $G$ be a $d$-regular graph. Let $\eps'=\eps''=\frac{\eps}{2}$. Set $d_{\eps} = \frac{1}{\eps'^2}$. If $d>d_\eps$,
combining~\theoremref{thm:bounded} and~\theoremref{thm:reduction} gives required algorithm: use the sparsification technique to obtain $G'$, run the LCA of~\theoremref{thm:bounded} to obtain a $(1-\eps'')$-approximate maximum matching on $G'$, which is also a $(1-\eps)$-approximate matching in $G$.
If $d \leq d_\eps$, the LCA of~\theoremref{thm:bounded} suffices.

This gives a probe complexity of $O\left( \frac{1}{\eps^2}\right) ^{\frac{1}{\eps^2}}$.
To obtain the actual bound of~\theoremref{thm:dregularbest} , set $d_\eps=\frac{1}{\eps'}$ and use~\corollaryref{cor:fin} in lieu of~\theoremref{thm:reduction}.
\end{proof}


\section{Almost Maximum Matching in Regular Graphs with High Girth}

\label{sec:match}

In this section, we prove~\theoremref{thm:girth}. 

 {
	\renewcommand{\thetheorem}{\ref{thm:girth}}
\begin{theorem}
For any $\epsilon > 0$, let $G$ be an $n$-vertex $d$-regular graph of girth $g$, with $d \ge \frac{1}{\epsilon}$ and $g \ge \frac{1}{\epsilon^3}$. Then there is a randomized SP LCA that uses at most $O\left( 1/{\epsilon^7}\right) $  probes and finds a matching that matches at least $(1 - \epsilon)n$ vertices in expectation.
\end{theorem}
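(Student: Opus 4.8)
\textbf{Proof plan for Theorem~\ref{thm:girth}.}

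The strategy is to reduce, via the sparsification idea of~\sectionref{sec:match1}, to the case of a graph of bounded degree, and then invoke the result of Gamarnik and Goldberg~\cite{GG10} that the randomized greedy matching algorithm produces a $(1-\epsilon)$-approximate maximum matching on regular graphs of sufficiently high degree and girth, implemented as an LCA in the style of Yoshida, Yamamoto and Ito~\cite{YYI12} (with a strict cap on the number of probes, at a cost of a $\frac{1}{\epsilon^{O(1)}}$ factor over the expected probe count, exactly as in the proof of~\theoremref{thm:bounded}). The plan is: first, if $d$ is already small (say $d \le \frac{1}{\epsilon^3}$ or so), we are in the regime where~\cite{YYI12} runs in expectation $O(d)$ probes on a graph of max degree $d$; since $G$ is $d$-regular of girth $g \ge \frac{1}{\epsilon^3}$, \cite{GG10} gives that randomized greedy matches $(1-\epsilon)n$ vertices in expectation, and capping the probes at a $\frac{1}{\epsilon^{O(1)}}$ factor above the expectation (Markov) loses only an $\epsilon$-fraction of matched vertices, giving $O(1/\epsilon^7)$ probes overall. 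Second, if $d$ is large, sparsify: keep each edge independently with probability roughly $\frac{q}{2d}$ for $q \simeq \frac{1}{\epsilon^c}$ (some constant power), drop vertices of surviving degree $> q$, obtaining $G'$ of max degree $q$; each strong probe to $G'$ costs $q+1$ probes to $G$ by \lemmaref{lem:simulation}, and sparsification also preserves girth (removing edges cannot create short cycles), so $G'$ still has girth $\ge \frac{1}{\epsilon^3}$.

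The new difficulty, relative to~\sectionref{sec:match1}, is that $G'$ is only \emph{nearly} regular, not regular, so \cite{GG10} does not directly apply. The plan to handle this is the one sketched in the overview: add imaginary edges to $G'$ to make it $q$-regular while keeping girth high (this is possible because $G'$ has few low-degree vertices and enough room; one can for instance connect up deficient vertices along long paths, or appeal to a standard regularization-preserving-girth construction), obtain a regular graph $G''$, run the randomized-greedy LCA of~\cite{YYI12} on $G''$ (which, being $q$-regular of high girth, satisfies the hypotheses of~\cite{GG10} and hence the greedy matching matches $(1-\epsilon')$ a fraction of $V(G'')$ in expectation), and then discard from the output matching all imaginary edges, returning ``no'' for them. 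The key lemma to prove is that the \emph{expected} number of imaginary edges in the output matching is not much larger than the fraction of imaginary edges in $G''$ — this uses that randomized greedy is a local process: whether an edge $e$ is in the greedy matching depends only on the random priorities in a bounded-radius neighborhood of $e$ (the relevant radius is governed by $\epsilon$ and $d$, and the high girth ensures these neighborhoods are trees, so the local analysis of~\cite{GG10} transfers), so by symmetry / linearity of expectation the probability that a given imaginary edge is matched is comparable to that of a real edge, and summing over the (few) imaginary edges gives the bound. Since $G'$ is $\epsilon$-approximately-$q$-regular (the fraction of dropped / deficient vertices is $2^{-\Omega(q)}$, hence tiny, and the number of imaginary edges is a correspondingly small fraction of $|E(G'')|$), discarding them costs only an $o(\epsilon)$ fraction of matched vertices.

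Putting the pieces together: $G'$ has expected maximum matching at least $(1 - \frac{1}{\sqrt q})\frac n2$ (by the same argument as \lemmaref{lem:sparsifyq}, since girth only helps), $G''$'s greedy matching matches a $(1-\epsilon')$ fraction of its vertices, discarding imaginary edges costs $o(\epsilon)n$, and the strict probe cap costs another $\epsilon''$ fraction; choosing $\epsilon', \epsilon'', q$ as suitable functions of $\epsilon$ (with $q = \Theta(\epsilon^{-c})$ for the appropriate constant $c$, e.g. $c$ around $4$ or $5$, so that $\frac{1}{\sqrt q} \ll \epsilon$ and the high-girth hypothesis $g \ge q \cdot \mathrm{poly}(1/\epsilon)$ is respected) yields a $(1-\epsilon)n$ guarantee; the probe count is (probes of~\cite{YYI12} on degree-$q$ graph, capped) $\times$ (simulation cost $q$) $= q^{O(1)} \cdot \frac{1}{\epsilon^{O(1)}} \cdot q = O(1/\epsilon^7)$ after optimizing constants. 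I expect the main obstacle to be making rigorous the claim that the expected fraction of imaginary edges surviving in the greedy matching is controlled — i.e., carefully setting up the locality radius of randomized greedy, verifying that within that radius $G''$ looks like a regular tree (which needs $g$ larger than twice the locality radius, dictating the $g \ge \frac{1}{\epsilon^3}$ hypothesis), and transferring the edge-matching probability estimates of~\cite{GG10} to this setting without reproving their analysis.
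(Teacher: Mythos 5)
Your high-level plan is essentially the paper's proof: sparsify $G$ to a bounded-degree graph $G'$, regularize by adding imaginary edges while preserving high girth to get $G''$, run the bounded randomized-greedy LCA (the maximal-matching LCA of~\cite{YYI12} with a hard probe cap) on $G''$, discard imaginary edges, and control their contribution by the locality and symmetry of randomized greedy on a high-girth regular graph. These are precisely the ingredients of Proposition~\ref{pro:deficit}, Lemma~\ref{lem:sparsify}, and the proof of Theorem~\ref{thm:girth}, and your instinct that the symmetry argument (every edge is matched with probability roughly $1/d'$, Corollary~\ref{cor:gamarnik}) is what bounds the imaginary-edge loss is exactly right.

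There is, however, a concrete gap in your probe accounting, and it is the reason you land ``around'' the right exponent rather than at $O(1/\epsilon^7)$. You charge a multiplicative factor of $q$ for simulating each strong probe to $G'$ by $q+1$ strong probes to $G$ (as in Lemma~\ref{lem:simulation}). But the other constraints force $q = \Theta(\epsilon^{-3})$: the imaginary edges number roughly $n\,q^{2/3}$ and each is matched with probability about $1/q$, so the imaginary-edge loss is about $n\,q^{-1/3}$, which requires $q \ge \epsilon^{-3}$; and the Markov probe cap must be about $q^{7/3}$ so that the cap-truncation loss is also $O(\epsilon n)$. Thus the probe cap on $G'$ is already $\Theta(\epsilon^{-7})$, and multiplying by the simulation factor $q+1 = \Theta(\epsilon^{-3})$ yields $\Theta(\epsilon^{-10})$, not $O(\epsilon^{-7})$. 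Your suggestion of taking $c$ ``around 4 or 5'' for $q = \epsilon^{-c}$ only makes this worse, since both the cap and the simulation factor grow with $q$.

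The paper's proof closes this gap with a small optimization you do not mention: the $q+1$ overhead in Lemma~\ref{lem:simulation} comes from checking, for each surviving neighbor of a probed vertex, whether that neighbor itself has at most $q$ surviving edges. But once you are running the LCA on the regularized graph $G''$, this check is unnecessary. If the LCA traverses a surviving edge and discovers (on probing its other endpoint) that the endpoint has surviving degree above $d'$, that edge was by definition dropped in the second sparsification step, i.e., it is an imaginary edge of $G''$; the LCA can simply treat it as such on the fly and continue. One probe to $G$ per probe to $G'$ therefore suffices, removing the multiplicative factor and recovering the claimed $O((d')^{7/3}) = O(1/\epsilon^7)$ bound. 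You should add this observation (or an equivalent amortization) to your argument; everything else is sound.
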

 	\addtocounter{theorem}{-1}
}



Our starting point is the following result of Gamarnik and Goldberg~\cite{GG10} (a restatement of their Theorem 5):

\begin{thm}
\label{thm:gamarnik}\cite{GG10}
Let $G$ be a an $d$-regular graph, of girth $g$, where $d \geq 3, g \geq 4$. Then the expected size of the matching found by the randomized greedy algorithm is at least
$$\left(\frac{1 - (d - 1)^{-\frac{d}{d - 2}}}{2} - \frac{d(d - 1)^{\lfloor \frac{g - 2}{2} \rfloor}}{\lfloor \frac{g - 2}{2} \rfloor !}\right)n.$$
\end{thm}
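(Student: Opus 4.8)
The plan is to run the \emph{randomized greedy matching}, implemented as an LCA, and to invoke \theoremref{thm:gamarnik} after pushing the degree down to $\poly(1/\epsilon)$ via the sparsification of \sectionref{sec:match1}. The LCA engine is the implementation of (exact) randomized greedy matching as an LCA that uses $O(d)$ probes in expectation on graphs of maximum degree $d$~\cite{YYI12}, run with a hard cap on the number of probes so as to make the worst case bounded, exactly the device used in the proof of \theoremref{thm:bounded}. Fix $q = \Theta(1/\epsilon^2)$; all losses below are $O(\epsilon)n$ (constants absorbed as usual), and I split on the degree $d$.

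\emph{Case $d \le q$.} Here $G$ is already $d$-regular with $d \ge 1/\epsilon$ and $g \ge 1/\epsilon^3$, so \theoremref{thm:gamarnik} applies directly: the leading term is $\ge \tfrac12 - \tfrac{1}{2(d-1)} \ge \tfrac12 - \tfrac{\epsilon}{2}$, and since $\lfloor(g-2)/2\rfloor \ge \Omega(1/\epsilon^3) \gg d$ the girth-correction term is, by Stirling, super-polynomially small in $1/\epsilon$; hence greedy matches $(1-O(\epsilon))n$ vertices in expectation. I run the greedy LCA but abort with answer ``no'' once a query uses more than $O(d^2/\epsilon) = O(1/\epsilon^5)$ probes; by Markov's inequality (as in the proof of \theoremref{thm:bounded}) the expected number of matched edges lost to the cap is $O(\epsilon n)$, so the output is a matching covering $(1-O(\epsilon))n$ vertices.

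\emph{Case $d > q$.} I sparsify $G$ as in \sectionref{sec:match1}: keep each edge independently with probability $q/d$, drop every vertex that then has more than $2q$ surviving edges (a $2^{-\Omega(q)}$ fraction, by Chernoff), and delete a few further surviving edges so that every remaining degree is $\le q$. Deleting edges cannot reduce the girth, and Chernoff bounds show that the number of deleted edges and the total remaining degree deficiency $\sum_v (q - \deg(v))$ are each an $O(1/\sqrt q) = O(\epsilon)$ fraction of the edge count; call the result $G'$ (maximum degree $q$, girth $\ge g$). As in \lemmaref{lem:simulation}, one probe to $G'$ costs $O(q)$ probes to $G$. Next I add \emph{imaginary} edges to $G'$ — joining deficient vertices to one another and/or to a small auxiliary high-girth gadget, always between vertices that are far apart in $G'$ — to obtain a $q$-regular graph $G''$ of girth still $\Omega(g) \gg q$, in which the imaginary edges are an $O(\epsilon)$ fraction of $E(G'')$ and a probe still costs $O(q)$ probes to $G$; making this construction canonical and locally evaluable takes some care but is not the crux. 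Running the capped greedy LCA on $G''$: by \theoremref{thm:gamarnik} (now $q$-regular, girth $\gg q$) it matches $(1-O(1/q))|V(G'')| = n \pm O(\epsilon n)$ vertices in expectation, and a cap of $O(q^2/\epsilon)$ probes to $G''$ — that is, $O(q^3/\epsilon) = O(1/\epsilon^7)$ probes to $G$ — costs $O(\epsilon n)$ more matched edges by Markov. Finally I discard every imaginary edge (and every gadget vertex) from the output; what remains is a matching in $G' \subseteq G$ covering $(1-O(\epsilon))n$ vertices of $G$.

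\emph{The key lemma and the main obstacle.} The discarding step is affordable only because of the following claim: the expected number of imaginary edges in the greedy matching on $G''$ is $O(1)$ times their global fraction times $|V(G'')|$, i.e.\ $O(\epsilon n)$. The idea is that the randomized greedy process is blind to the real/imaginary label, and by the locality of the process (whether $e \in M$ is decided by the random timestamps on an exploration region of expected size $O(q)$ with exponentially decaying tails~\cite{YYI12}) together with girth $\gg q$, the local environment of almost every edge is a $q$-regular tree perturbed by only $O(1)$ imaginary edges; in such an environment no single edge at a vertex is the matched edge there with probability exceeding $O(1/q)$, and summing this bound over the $O(\epsilon q n)$ imaginary edges gives $O(\epsilon n)$. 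Making this rigorous — controlling the probability that a greedy exploration runs ``deep'' into the graph or into a cluster of imaginary edges, and showing that the per-vertex matching probabilities stay within a constant factor of $1/q$ under such perturbations — is the part I expect to be genuinely delicate; everything else is Chernoff/Markov bookkeeping together with the final adjustment of constants.
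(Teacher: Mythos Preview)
Your proposal is not a proof of \theoremref{thm:gamarnik} at all. That theorem is a result quoted from Gamarnik and Goldberg~\cite{GG10} (a restatement of their Theorem~5), and the paper does not prove it; it is used as a black box. Your write-up explicitly says ``invoke \theoremref{thm:gamarnik}'', so you are treating it as an ingredient, not as the goal. What you have actually sketched is a proof of \theoremref{thm:girth}.

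If the intended target was indeed \theoremref{thm:girth}, then your approach is essentially the same as the paper's: split on the degree, run bounded randomized-greedy directly when $d$ is small, and when $d$ is large sparsify, complete the resulting nearly-regular graph to an exactly regular graph $G''$ via imaginary edges, run bounded randomized-greedy on $G''$, and discard imaginary edges from the output. The paper's key lemma (\lemmaref{lem:sparsify}) is exactly your ``key lemma'': bounding the expected number of imaginary edges in the output matching by their global density, using the high girth and the locality of randomized greedy (\corollaryref{cor:gamarnik}). Two minor differences: the paper caps degrees at $d' = t + t^{2/3}$ with $t = \Theta(1/\epsilon^3)$ rather than at $q = \Theta(1/\epsilon^2)$ (both yield an $O(\epsilon)$ deficiency fraction and $O(1/\epsilon^7)$ probes), and the paper realizes the imaginary edges by hanging fresh $d'$-regular trees off the deficient ports rather than pairing deficient vertices --- this sidesteps your concern about making the gadget ``canonical and locally evaluable'', since the imaginary trees are purely virtual and never require probes to $G$.
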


We  use the following corollary. 

\begin{corollary}
\label{cor:gamarnik1}
For every $\epsilon > 0$, if $d \geq \max\{\frac{2}{\epsilon}, 3\}$ and $g \geq 8d$  then the expected size of the matching found by the randomized greedy algorithm on an $d$-regular graph of girth at least $g$ is at least $\frac{(1 - \epsilon)n}{2}$.
\end{corollary}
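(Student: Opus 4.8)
The plan is to simply plug the hypotheses $d \ge \max\{2/\epsilon,3\}$ and $g \ge 8d$ into the bound of Theorem~\ref{thm:gamarnik} and check that each of the two "loss" terms is at most $\epsilon/2$ in absolute contribution, so that the overall coefficient of $n$ drops from $\frac12$ by at most $\epsilon/2 + \epsilon/2 = \epsilon$, giving $\frac{(1-\epsilon)n}{2}$. Concretely, Theorem~\ref{thm:gamarnik} gives a matching of expected size at least
\[
\left(\frac{1 - (d-1)^{-\frac{d}{d-2}}}{2} - \frac{d(d-1)^{\lfloor\frac{g-2}{2}\rfloor}}{\lfloor\frac{g-2}{2}\rfloor!}\right)n,
\]
so it suffices to show (i) $(d-1)^{-\frac{d}{d-2}} \le \epsilon$, which handles the first term since then $\frac{1-(d-1)^{-d/(d-2)}}{2} \ge \frac{1-\epsilon}{2}$, and (ii) $\frac{d(d-1)^{\lfloor(g-2)/2\rfloor}}{\lfloor(g-2)/2\rfloor!} \le \frac{\epsilon}{2}$, actually we only need it $\le$ something like $\epsilon/2$ but in fact we can make it tiny.

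For (i): since $d \ge 3$ we have $\frac{d}{d-2} \ge 1$, hence $(d-1)^{-d/(d-2)} \le (d-1)^{-1} = \frac{1}{d-1}$. Because $d \ge 2/\epsilon$, we get $d - 1 \ge 2/\epsilon - 1 \ge 1/\epsilon$ for $\epsilon \le 1$ (and for $\epsilon > 1$ the statement is vacuous as the conclusion is trivially true), so $(d-1)^{-d/(d-2)} \le \epsilon$. This is the easy half.

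For (ii), the main (still routine) obstacle: we must show the second term is negligible using $g \ge 8d$. Write $m = \lfloor\frac{g-2}{2}\rfloor$, so $m \ge \frac{g-2}{2} - 1 = \frac{g-4}{2} \ge 4d - 2 \ge 2d$ (using $g \ge 8d$ and $d \ge 3$). Then $\frac{d(d-1)^m}{m!} \le \frac{d\,(d-1)^m}{(m/e)^m} = d\left(\frac{e(d-1)}{m}\right)^m$ by the standard bound $m! \ge (m/e)^m$. Since $m \ge 2d > 2(d-1)$, we have $\frac{e(d-1)}{m} \le \frac{e}{2} < \frac{3}{2}$... that is not yet below $1$, so I would instead be a bit more careful: using $g \ge 8d$ more fully, $m \ge 4d - 2 \ge 3(d-1)$ for $d \ge 2$ (check: $4d - 2 \ge 3d - 3 \iff d \ge -1$, true), hence $\frac{e(d-1)}{m} \le \frac{e}{3} < 1$, so $\frac{d(d-1)^m}{m!} \le d (e/3)^m \le d(e/3)^{3(d-1)}$, which decays exponentially in $d$ and is easily $\le \epsilon/2$ once $d$ is at least a small absolute constant; for the finitely many remaining small values of $d$ (e.g. $d=3,4$) with $d \ge 2/\epsilon$, i.e. $\epsilon \ge 1/2$, the conclusion $\frac{(1-\epsilon)n}{2}$ is weak enough to verify directly. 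Combining (i) and (ii), the expected matching size is at least $\left(\frac{1-\epsilon}{2} - \frac{\epsilon}{2}\cdot\frac1{?}\right)n \ge \frac{(1-\epsilon)n}{2}$ after rebalancing the constants (one can instead split the budget as $\epsilon/2$ for each term from the start by using $d \ge 4/\epsilon$ in place of $2/\epsilon$ — but the stated hypothesis $2/\epsilon$ already suffices since term (ii) is super-exponentially small, not merely $O(\epsilon)$). The only thing to be careful about is the floor function and the edge cases of tiny $d$; everything else is a direct substitution.
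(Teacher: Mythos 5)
The paper does not actually prove this corollary; it is stated as an unproved consequence of Theorem~\ref{thm:gamarnik}, so your job here is to supply the missing computation, and the approach you take (plug the hypotheses into the explicit bound and estimate each loss term) is the only reasonable one. Your part (i) is fine. The two issues are in part (ii) and in the budget accounting.

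First, the bound you end up with, $d(e/3)^{3(d-1)}$, is much too loose: you derive $m\ge 4d-2$ and then immediately weaken it to $m\ge 3(d-1)$, which gives $e(d-1)/m \le e/3 \approx 0.906$. The resulting quantity $d(e/3)^{3(d-1)}$ does not drop below $2/d \le \epsilon$ until $d$ is roughly $25$, so the set of ``finitely many small values'' you wave away is $d\in\{3,\ldots,24\}$, not just $d=3,4$, and the clean ``$\epsilon\ge 1/2$, hence the conclusion is trivial'' escape hatch does not cover those. The fix is to keep the sharper $m\ge 4d-2$ throughout: then $\frac{e(d-1)}{m}\le\frac{e(d-1)}{4d-2}<\frac{e}{4}$, and $d(e/4)^{4d-2}$ is already small enough at $d=3$ (it is about $0.06$, against an available budget of about $0.27$ there), so no separate small-$d$ case analysis is needed at all.

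Second, you notice but do not really resolve that showing $(d-1)^{-d/(d-2)}\le\epsilon$ and ``second term $\le\epsilon/2$'' only yields $\frac{1-2\epsilon}{2}$, not $\frac{1-\epsilon}{2}$. The observation that the second term is super-exponentially small is the right instinct, but you should make the slack explicit. One clean way: since $d\ge 2/\epsilon$ and $\epsilon\le 1$, $\frac{1}{d-1}\le\frac{\epsilon}{2-\epsilon}$, so after dividing by $2$ the first loss term uses at most $\frac{\epsilon}{2(2-\epsilon)}$ of the $\epsilon/2$ budget, leaving $\frac{\epsilon}{2}\cdot\frac{1-\epsilon}{2-\epsilon}$ for the second term. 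This margin does go to zero as $\epsilon\to 1$, but that corresponds to $d=3,4$, where $(d-1)^{-d/(d-2)}$ (namely $2^{-3}$ and $3^{-2}$) is far smaller than the crude $\frac{1}{d-1}$ you used, so the real margin there is large. If you state this, together with the corrected $d(e/4)^{4d-2}$ bound, the argument is complete for all $d\ge 3$ with no residual cases.
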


\corollaryref{cor:gamarnik1} implies that almost every vertex is matched by the randomized greedy algorithm. However, this guarantee is global in nature. For our purpose (e.g., for the proof of \lemmaref{lem:sparsify} that will follow), it is useful to change them into local guarantees, as in \corollaryref{cor:gamarnik}). The local guarantees follow from the fact that the randomized greedy algorithm can be implemented as a (randomized) LCA, which we refer to as RG-LCA, for {\em randomized greedy} LCA. The algorithm is essentially the maximal matching algorithm of~\cite{YYI12}; similarly to~\sectionref{sec:match1}, if the LCA performs more probes than the allotted number, we return ``no''.  

\begin{proposition}
\label{pro:probes}
Let $\epsilon > 0$, $d \geq \max\{\frac{2}{\epsilon}, 3\}$ and $g \geq 8d$, and let $G = (V,E)$ be a $d$-regular graph with girth $g$.
On a query to edge $e \in E$, the probability (over the random permutation over the edges) that  RG-LCA probes extend to edges at distance greater than $g/2$ from $e$ is at most $\frac{\eps}{2d}$.
\end{proposition}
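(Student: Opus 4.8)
The plan is to use the standard local characterization of the greedy matching oracle. Recall that RG-LCA fixes a uniformly random permutation of $E$ (equivalently, i.i.d.\ uniform ranks), and that, queried on an edge $e$, it decides whether $e$ lies in the greedy matching by recursively querying the edges adjacent to $e$ (i.e.\ sharing an endpoint with $e$) that have smaller rank, halting as soon as one of them is found to be matched. Hence the set of edges it probes on query $e$ is contained in the \emph{decreasing-rank reachable set} of $e$: the set of edges $f$ for which there is a sequence $e = f_0, f_1, \dots, f_j = f$ with each $f_{i+1}$ adjacent to $f_i$ and $\mathrm{rank}(f_{i+1}) < \mathrm{rank}(f_i)$. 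Since adjacent edges are at (line-graph) distance $1$, an explored edge at distance greater than $g/2$ from $e$ is the last edge of such a sequence with more than $g/2$ steps; truncating that sequence to a prefix with exactly $L := \lfloor g/2\rfloor + 2$ edges leaves it a decreasing-rank path from $e$, and $g/2 < L \le g/2 + 2 < g$ since $g \ge 8d \ge 24$. So it suffices to bound the probability that a decreasing-rank path with $L$ edges emanates from $e$.

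A decreasing-rank path has strictly decreasing ranks, hence pairwise distinct edges; it is therefore a trail, and since its length is $L < g$, a revisited vertex would yield a cycle of length $< g$, contradicting the girth hypothesis — so every such path is a \emph{simple} path. In a $d$-regular graph the number of simple paths with $L$ edges that start at the fixed edge $e$ is at most $2(d-1)^{L-1}$ (at most $2(d-1)$ choices for the second edge and at most $d-1$ for each of the remaining ones), while a fixed $L$-edge path has its edges in strictly decreasing rank order with probability $1/L!$. A union bound gives
\[
\Pr\big[\text{RG-LCA probes reach distance} > g/2 \text{ from } e\big]
\;\le\; \frac{2(d-1)^{L-1}}{L!}
\;\le\; 2\Big(\frac{(d-1)e}{L}\Big)^{L},
\]
using $L! \ge (L/e)^L$.

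It remains to substitute the hypotheses. From $g \ge 8d$ we have $L > g/2 \ge 4d > 4(d-1)$, so $(d-1)e/L < e/4 < 1$ and the right-hand side is at most $2(e/4)^{L} \le 2(e/4)^{4d}$. Finally, $2(e/4)^{4d} \le \epsilon/(2d)$ for every integer $d \ge \max\{2/\epsilon,3\}$: in this range $\epsilon \ge 2/d$, so it is enough that $2d^2 (e/4)^{4d} \le 1$, and $\ln\!\big(2d^2(e/4)^{4d}\big) = \ln 2 + 2\ln d - 4(\ln 4 - 1)d$ is negative at $d = 3$ and decreasing for $d \ge 2$, hence negative for all integers $d \ge 3$. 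This gives the claimed bound.

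The step I expect to demand the most care is the reduction in the first paragraph: one must check that the probed region is genuinely contained in the decreasing-rank reachable set (early termination only removes edges from it), that reaching distance more than $g/2$ really forces a decreasing-rank path with more than $g/2$ steps, and that the chosen prefix has fewer than $g$ edges so that the girth/simple-path argument of the second paragraph is legitimate. The counting and the final arithmetic are routine.
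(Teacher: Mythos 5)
Your overall strategy and the final arithmetic are the same as the paper's (the paper counts the $2(d-1)^\ell$ edges at line-graph distance $\ell = g/2-1$ and multiplies by a $1/\ell!$ probability per edge; you count $2(d-1)^{L-1}$ paths of $L\approx g/2$ edges and multiply by $1/L!$). But there is a genuine gap in the step where you deduce that the decreasing-rank sequence is a simple path in $G$. A decreasing-rank sequence $e=f_0,f_1,\dots,f_{L-1}$ in which each $f_{i+1}$ merely shares an endpoint with $f_i$ is a path in the \emph{line graph} of $G$, not a walk in $G$, so ``pairwise distinct edges, hence a trail'' is a non-sequitur. Concretely, $f_0=(u,v),\ f_1=(u,w),\ f_2=(u,x)$ are pairwise distinct, consecutive pairs share the vertex $u$, yet this is not a walk in $G$ (and it is neither a trail nor a simple path, even in a tree). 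The girth argument therefore never applies. As a result, $2(d-1)^{L-1}$ undercounts the decreasing-rank line-graph paths of length $L$ emanating from $e$: the correct count is closer to $(2(d-1))^{L-1}$, an extra factor of about $2^L$, and pushing that through your estimate $L!\ge (L/e)^L$ yields something of order $(e/2)^L$, which diverges.

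The reason the bound is still morally correct, and the reason the paper's per-edge bound ``the probability that any specific edge at distance $\ell$ is probed is at most $1/\ell!$'' is justified, is a more subtle structural property of RG-LCA: inside a tree-like ball, if RG-LCA probes an edge $f$ at line-graph distance $\ell$ from $e$, then the \emph{shortest} line-graph path from $e$ to $f$ must already have strictly decreasing ranks (one can see this by induction on the time at which $f$ is first probed: if $f$ is first probed from $g$, and the shortest path to $g$ is $e=g_0,\dots,g_k=g$ with $r(g_0)>\dots>r(g_k)>r(f)$, then in a tree the shortest path from $e$ to $f$ is either $g_0,\dots,g_k,f$ or $g_0,\dots,g_{k-1},f$, and in both cases the ranks decrease). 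That shortest line-graph path within the high-girth ball \emph{is} a simple path in $G$, which is what legitimizes restricting the union bound to simple paths of length $L$. Once you replace your ``trail/simple path'' deduction with this shortest-path fact, the rest of your proof — the truncation to $L$ edges with $g/2<L<g$, the count $2(d-1)^{L-1}$, the $1/L!$ factor, and the arithmetic — goes through.
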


\begin{proof}
 There are exactly $2(d-1)^{\ell}$ edges at distance (exactly) $\ell<g/2$ from any edge $e$, as the graph is $d$-regular. The probability that any specific edge at distance $\ell$ is probed is at most $\frac{1}{\ell!}$. For an edge at distance more than $g/2$ to be probed, at least one edge at distance exactly $g/2-1$ needs to be probed.
 Setting $\ell=g/2-1$, gives that the probability that one such edge is probed is at most (using Stirling's inequality),
 $$\frac{2d^{(4d-1)}}{(4d-1)!}\leq \frac{3}{4}^{4d-1}<\frac{1}{d^2}\leq\frac{\eps}{2d}.$$
\end{proof}

\begin{corollary}
\label{cor:gamarnik}
Let $\epsilon > 0$, $d \geq \max\{\frac{2}{\epsilon}, 3\}$ and $g \geq 8d$, and let $G = (V,E)$ be a $d$-regular graph with girth $g$. For every vertex $v \in V$, the probability that $v$ is matched by RG-LCA is at least $1 - \epsilon-\frac{\eps}{2d}$. For every edge, the probability that it is in this matching is at least $\frac{1-2\eps}{d}$ and at most $\frac{1}{d}$.
\end{corollary}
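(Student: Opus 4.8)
The plan is to read off both claims from Corollary~\ref{cor:gamarnik1}, which controls the \emph{global} behaviour of the randomized greedy algorithm, after two steps: (i) upgrading that global bound to uniform per-vertex and per-edge bounds, using that RG is a local procedure and that $G$ has large girth; and (ii) paying the small additive price incurred by RG-LCA's probe cap, which is precisely the quantity bounded in Proposition~\ref{pro:probes}.

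First I would analyze the true greedy matching $M$, obtained by assigning i.i.d.\ uniform ranks to the edges. Whether an edge $e$ lies in $M$ is determined by the ranks along strictly decreasing-rank paths out of $e$, and the estimate in the proof of Proposition~\ref{pro:probes} shows that with probability at least $1-\frac{\eps}{2d}$ this exploration never leaves the ball of radius $g/2$ around $e$. Since $g\ge 8d$, that ball is a $d$-regular tree, the same one for every edge and every vertex; coupling its edge-ranks with those of the infinite $d$-regular tree $T_d$ shows that $\Pr[e\in M]$ differs from the universal value $q^\ast:=\Pr[e_0\in M_{RG}(T_d)]$ by at most $\frac{\eps}{2d}$, and that $\Pr[v\text{ is }M\text{-matched}]$ equals, up to the relevant escape correction, the universal value $p^\ast:=\Pr[v_0\text{ is }M_{RG}(T_d)\text{-matched}]$. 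Edge-transitivity of $T_d$ together with the fact that $M_{RG}(T_d)$ is a matching forces $d\,q^\ast\le 1$, i.e.\ $q^\ast\le\frac1d$; averaging the per-edge estimate over all $\frac{nd}{2}$ edges and using $\expect[|M|]\ge\frac{(1-\eps)n}{2}$ from Corollary~\ref{cor:gamarnik1} gives $q^\ast\ge\frac{1-\eps}{d}-\frac{\eps}{2d}$; and the per-vertex form of Gamarnik--Goldberg~\cite{GG10} (their estimate is local in nature) gives $p^\ast\ge 1-\eps$ once $d\ge 2/\eps$.

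Next I would pass from $M$ to the matching $M'$ produced by RG-LCA. RG-LCA runs the same exploration but, with its probe budget chosen to cover everything within distance $g/2$ of the queried edge, it outputs the correct greedy answer whenever the exploration stays inside that ball and outputs ``no'' otherwise. Writing $Z_e=\mathbf{1}[e\in M']$ and $Y_e=\mathbf{1}[e\in M]$, we therefore have $0\le Y_e-Z_e\le\mathbf{1}[\text{exploration from }e\text{ reaches distance}>g/2]$, an event of probability at most $\frac{\eps}{2d}$ by Proposition~\ref{pro:probes}; moreover $Z_e$ never exceeds the ``tree value'' of $e$, so $\expect[Z_e]\le q^\ast\le\frac1d$, which is the claimed upper bound for edges. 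For the lower bound, $\expect[Z_e]\ge\Pr[e\in M]-\frac{\eps}{2d}\ge q^\ast-\frac{\eps}{2d}\ge\frac{1-\eps}{d}-\frac{\eps}{2d}-\frac{\eps}{2d}=\frac{1-2\eps}{d}$. For the vertex bound, $v$ fails to be $M'$-matched only if it is not $M$-matched (probability $\le 1-p^\ast\le\eps$) or it is $M$-matched via its unique matched edge $e^\star$ but the exploration from $e^\star$ escapes the ball of radius $g/2$ (probability at most $\frac{\eps}{2d}$ by Proposition~\ref{pro:probes}); a union bound then yields $\Pr[v\text{ is }M'\text{-matched}]\ge 1-\eps-\frac{\eps}{2d}$.

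The crux — and the step I expect to require the most care — is item (i): converting the global guarantee of Corollary~\ref{cor:gamarnik1} into uniform per-vertex/per-edge statements. This is exactly where the girth hypothesis is used (all local neighbourhoods are isomorphic $d$-regular trees) and it forces one to set up the coupling with $T_d$ and to track every $O(\eps)$ correction term. The most delicate bookkeeping point is the vertex statement: the contribution ``$v$ is $M$-matched but its matched edge is dropped by RG-LCA'' must be kept at the $\frac{\eps}{2d}$ scale rather than being bounded by a crude union bound over all $d$ incident edges (which would only give $\frac{\eps}{2}$); this relies on the fact that a matched vertex has a \emph{unique} matched edge, and that being in the matching is anticorrelated with having a deep decreasing-rank exploration, so that the relevant escape probability is charged only to that one edge.
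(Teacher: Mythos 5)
Your proposal follows essentially the same route as the paper's proof: regularity and high girth imply that every vertex (resp.\ edge) has the same probability of being RG-matched, so \corollaryref{cor:gamarnik1} pins the per-vertex rate at $\ge 1-\eps$ (hence $\le 1/d$ and $\ge (1-\eps)/d$ per edge), and \propref{pro:probes} supplies the probe-cap correction. The explicit infinite-tree coupling and the ``anticorrelation'' gloss on the vertex bound are extra scaffolding not used in the paper, which invokes symmetry within $G$ directly and is similarly terse about exactly how the per-edge escape estimate aggregates into an $\eps/(2d)$ loss at a vertex.
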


\begin{proof}
The distance $g/2$ neighborhood of any two vertices or edges in $G$ is identical, due to the regularity and high girth. Therefore, by symmetry, every vertex and edge are equally likely to be in the matching. By~\propref{pro:probes} the probes do not extend beyond this neighborhood, except with negligible probability. This, combined with~\corollaryref{cor:gamarnik1}, gives the vertex probability. As $G$ is $d$-regular, dividing by $d$ gives the edge probabilities.
\end{proof}

We now consider the number of probes made by RG-LCA. For this purpose we use the following theorem of~\cite{YYI12} (an adaptation of their Theorem 2.1 to matching).\footnote{Their theorem (when viewed in the context of SP LCAs) states that the expected number of strong probes an LCA for maximal independent set needs to make is $1+m/n$, where $m$ is the number of edges and $n$ is the number of vertices.  A maximal matching in $G$ is a maximal independent set on the line graph of $G$, $L(G)$. If $G$ has degree bounded by $d$, $L(G)$ has degree bounded by $2d-2$, we have $m\leq (d-1)n$.}

\begin{thm}\cite{YYI12}
\label{thm:YYI}
Let $G$ be a $d$-regular graph. The expected number of strong probes made by RG-LCA is at most $d$ when querying a random edge, where the expectation is over the choice of edge and choice of random permutation.
\end{thm}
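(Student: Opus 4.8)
The statement is essentially a restatement of the probe‑complexity bound of Yoshida, Yamamoto and Ito~\cite{YYI12} for their maximal independent set oracle, applied to the line graph of $G$, so the plan is to make that reduction precise and then invoke their theorem as a black box.

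First I would observe that RG‑LCA, when run on $G$ with a uniformly random permutation $\pi$ of $E$, is literally the Nguyen--Onak/YYI greedy‑MIS oracle executed on the line graph $L(G)$: vertices of $L(G)$ are edges of $G$, two of them are adjacent iff the corresponding edges of $G$ share an endpoint, and $\pi$ is a uniformly random permutation of $V(L(G))$. The recursive rule ``$e$ is in the matching iff no lower‑ranked adjacent edge is'' is exactly the greedy‑MIS rule on $L(G)$. Consequently, the set of edges of $G$ examined while answering a query on edge $e$ equals the set of vertices of $L(G)$ on which the MIS oracle is (recursively) evaluated when queried on $e$.

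Next I would apply \cite[Theorem~2.1]{YYI12}: for the greedy‑MIS oracle on any graph $H$, the expected number of vertices on which the oracle is evaluated, over a uniformly random query vertex and a random permutation, is at most $1 + |E(H)|/|V(H)|$. Taking $H = L(G)$ with $G$ being $d$‑regular, we have $|V(L(G))| = |E(G)| = dn/2$, and since $L(G)$ is $(2d-2)$‑regular, $|E(L(G))| = (d-1)\,|E(G)|$, so $|E(L(G))|/|V(L(G))| = d-1$. Hence the expected number of edges of $G$ examined per random query is at most $d$. Finally I would translate examined edges into strong probes to $G$: the examined edges form a connected subgraph of $G$ (each newly examined edge is adjacent to a previously examined one), so one strong probe per endpoint — using caching so that every vertex of $G$ is probed at most once — reveals all adjacencies the recursion needs, and the number of such vertices is controlled by the number of examined edges; averaging yields the bound stated in the theorem.

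The genuinely hard content here — the proof that the greedy recursion on a bounded‑average‑degree graph has expected size $O(\bar d)$ rather than exponential in $\bar d$ — is exactly \cite[Theorem~2.1]{YYI12} and is taken as given; re‑deriving it is outside the scope of this paper. The only point requiring care on our side is the bookkeeping in the last step: one must invoke caching of vertex probes and be mindful that the additive ``$+1$'' terms (for the queried edge, and for a connected subgraph having one more vertex than edges in the tree case) are absorbed into the conventions already present in the ``$1 + |E|/|V|$'' form of the cited bound, which is why the statement can be written cleanly as ``at most $d$''.
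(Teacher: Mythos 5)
Your proposal is correct and follows essentially the same route as the paper's own (footnoted) argument: both reduce RG-LCA on $G$ to the YYI greedy-MIS oracle on the line graph $L(G)$, invoke their Theorem~2.1 bound of $1 + |E|/|V|$, and compute $|E(L(G))|/|V(L(G))| = d-1$ from $(2d-2)$-regularity of $L(G)$. Your extra care about caching vertex probes and the connectedness of the examined edge set is a reasonable bookkeeping step that the paper glosses over; note it technically yields $d+1$ rather than $d$ (one probe per endpoint of the queried edge, plus one per subsequently examined edge), but this additive constant is immaterial to every use of the theorem downstream (e.g.\ Corollary~\ref{cor:YYIa} via Markov).
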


When the girth is large, we can remove the dependence on choice of edge in the expectation in ~\theoremref{thm:YYI}:  

\begin{corollary}
\label{cor:YYIa}
Let $G=(V,E)$ be a $d$-regular graph with girth $g \geq 8d$.
For every edge, the probability that RG-LCA uses more than $d^{7/3}$ strong probes when queried on any edge $e \in E$ is at most $d^{-4/3}$, where the expectation is over the choice of random permutation.
\end{corollary}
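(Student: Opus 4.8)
The plan is to upgrade the edge-averaged probe bound of \theoremref{thm:YYI} to a per-edge bound by exploiting the symmetry of the radius-$g/2$ neighborhood of every edge, and then to discard the (rare) runs in which RG-LCA leaves that neighborhood via \propref{pro:probes}. First I would introduce a truncated algorithm $\widehat{\mathrm{RG}}$ that runs exactly like RG-LCA on a query edge $e$, but halts the moment it is about to probe a vertex at distance more than $g/2$ from $e$; let $\widehat P_e$ be its number of strong probes (a random variable over the random ranking of the edges). The run of $\widehat{\mathrm{RG}}$ on $e$ depends only on the ranking restricted to the edges within distance $g/2$ of $e$. Since $g \ge 8d$ forces every ball of radius $g/2$ in $G$ to be a truncated $d$-regular tree, this restricted structure --- together with the induced uniformly random ranking on it --- is isomorphic for all edges $e$ (the same symmetry used in the proof of \corollaryref{cor:gamarnik}). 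Hence $\widehat P_e$ has the same distribution for every $e \in E$.

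Next I would pass to a per-edge expectation. As $\widehat P_e \le P_e$ pointwise, where $P_e$ is the true probe count of RG-LCA on $e$, averaging over a uniformly random edge and applying \theoremref{thm:YYI} gives $\expect_e\big[\expect[\widehat P_e]\big] \le \expect_e\big[\expect[P_e]\big] \le d$. But $\expect[\widehat P_e]$ does not depend on $e$, so in fact $\expect[\widehat P_e] \le d$ for every edge. Markov's inequality then gives $\Pr[\widehat P_e > d^{7/3}] \le d/d^{7/3} = d^{-4/3}$ for every $e \in E$.

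Finally I would remove the truncation. On the event that RG-LCA on $e$ never probes beyond distance $g/2$ --- which by \propref{pro:probes} fails with probability at most $\frac{\eps}{2d}$ (indeed the argument there bounds it by less than $d^{-2}$) --- the two algorithms make identical probes, so $P_e = \widehat P_e$. Therefore
\[
\Pr[P_e > d^{7/3}] \;\le\; \Pr[\widehat P_e > d^{7/3}] \;+\; \Pr[\text{RG-LCA on } e \text{ leaves the radius-}g/2 \text{ ball}] \;\le\; d^{-4/3} + \tfrac{\eps}{2d},
\]
and the second term, being $O(d^{-2})$, is of lower order than $d^{-4/3}$; up to this lower-order correction (which one can fold into the bound, e.g.\ by lowering the threshold slightly below $d^{7/3}$) we obtain the claimed $\Pr[P_e > d^{7/3}] \le d^{-4/3}$.

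The main obstacle is the localization argument of the first two paragraphs: one must verify that the truncated process $\widehat{\mathrm{RG}}$ genuinely inspects only edges within distance $g/2$ of $e$ (so it never ``peeks'' at a part of the graph where the symmetry between edges breaks down), and that this makes $\widehat P_e$ a function of an edge-independent random object, so that \theoremref{thm:YYI} may be read off per-edge. Given that, Markov supplies the tail bound and \propref{pro:probes} controls the gap between $\widehat P_e$ and $P_e$; those remaining steps are routine.
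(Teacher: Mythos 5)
Your argument is essentially the paper's: the paper's one-line proof ("using the same symmetry argument as in the proof of Corollary~\ref{cor:gamarnik}\ldots then Markov") is exactly the chain you spell out — symmetry of radius-$g/2$ neighborhoods upgrades the edge-averaged bound of \theoremref{thm:YYI} to a per-edge expectation of roughly $d$, \propref{pro:probes} absorbs the negligible probability of probing beyond that ball, and Markov gives the tail. Your truncated process $\widehat{\mathrm{RG}}$ is just a clean bookkeeping device for the "roughly $d$" the paper leaves implicit, including the $O(d^{-2})$ correction that the paper also sweeps under the same "roughly."
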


\begin{proof}
	Using the same symmetry argument as  in the proof of~\corollaryref{cor:gamarnik} gives that for every queried edge, the expected number of strong probes made by RG-LCA is at most roughly $d$. Markov's inequality completes the proof.
	\end{proof}

Suppose now that graph $G$ is $d$-regular,  
but $d > g/8$. In this case~\theoremref{thm:gamarnik} does not guarantee a large matching. 
Likewise, the expected number of probes (as stated in~\theoremref{thm:YYI}) might be too high, if $d$ is not bounded by a polynomial in $\frac{1}{\epsilon}$. We handle both of these issues by the following sparsification procedure that creates an edge induced subgraph $G'$. Let $t = g/10$, and let $d' = t + t^{2/3}$. (We will later also consider the case that $t<g/10$; we will need it when $g$ and $d$ are large. Note that~\propref{pro:deficit} and~\lemmaref{lem:sparsify} hold in this case as well.)

\begin{enumerate}

\item Keep each edge independently with probability $ t/d$.

\item Simultaneously, for every vertex of degree larger than $d'$, remove all of its edges.

\end{enumerate}

In $G'$, for a  vertex $v$ of degree $deg_v$, we define its {\em deficit} to be $d' - deg_v$

\begin{proposition}
\label{pro:deficit}
$G'$ has maximum degree $d'$. The expected sum of deficits for vertices of $G'$ is at most $nt^{2/3} < n(d')^{2/3}$.
\end{proposition}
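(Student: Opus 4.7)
\textbf{Proof plan for Proposition~\ref{pro:deficit}.}

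The maximum degree bound is immediate from the construction: every vertex that has more than $d'$ surviving edges after step~1 has all of its edges removed in step~2 (hence has degree~$0$), while every other vertex has degree at most $d'$ already after step~1 and can only lose edges in step~2. So every vertex of $G'$ has degree at most $d'$.

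For the deficit bound, I would use linearity of expectation and the regularity of $G$: the sparsification is vertex-symmetric, so $E[\text{deficit}(v)]$ is the same for every $v$, and it suffices to bound this quantity for a single fixed vertex by $t^{2/3}$ (up to lower-order terms that can be absorbed into the choice of sufficiently large $t$). Fix $v$, and let $Z_v$ be its degree after step~1, so $Z_v \sim \mathrm{Binomial}(d, t/d)$ with mean $t$. Let $B_v$ denote the event $Z_v > d'$ and let $L_v = |\{u : (u,v)\in E_1,\ Z_u > d'\}|$. A direct case split on whether $B_v$ occurs gives
\[
\text{deficit}(v) \;\le\; (d' - Z_v)^{+} \;+\; L_v \;+\; d'\cdot\mathbf{1}[B_v].
\]

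Now I would estimate each term. For the first, since $d' - Z_v = (d'-Z_v)^{+} - (Z_v-d')^{+}$, we get $E[(d'-Z_v)^{+}] = t^{2/3} + E[(Z_v-d')^{+}]$. Since $d' = t + t^{2/3}$ and $\mathrm{Var}(Z_v) \le t$, a Chernoff bound applied to $Z_v$ yields $\Pr[Z_v > d'] \le e^{-\Omega(t^{1/3})}$, which makes both $E[(Z_v-d')^{+}]$ and $d'\Pr[B_v]$ negligible compared to $t^{2/3}$. For $L_v$, use linearity and symmetry over the $d$ potential neighbors:
\[
E[L_v] \;=\; d \cdot \Pr[(u,v) \in E_1 \text{ and } Z_u > d'] \;=\; d \cdot (t/d) \cdot \Pr[Z_u > d' \mid (u,v)\in E_1].
\]
Conditional on $(u,v) \in E_1$, the variable $Z_u - 1$ is $\mathrm{Binomial}(d-1, t/d)$ with mean at most $t$, so a Chernoff bound again gives $\Pr[Z_u > d' \mid (u,v)\in E_1] \le e^{-\Omega(t^{1/3})}$, making $E[L_v]$ much smaller than $t^{2/3}$. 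Summing over the three terms yields $E[\text{deficit}(v)] \le t^{2/3}(1+o(1))$, and summing over the $n$ vertices gives the claimed $nt^{2/3}$ bound; finally, $t < d'$ immediately gives $nt^{2/3} < n(d')^{2/3}$.

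The only genuine subtlety I anticipate is handling the dependence between the degrees $Z_u$ and $Z_v$ for adjacent $u,v$; this is handled cleanly by conditioning on the edge $(u,v)$ surviving step~1 and noting that the remaining $d-1$ potential edges at $u$ are still independent Bernoulli$(t/d)$'s. Beyond that, everything reduces to routine binomial tail bounds, and any universal constants that arise can be absorbed either by enforcing $t$ to be larger than a suitable absolute constant (which is consistent with how Section~\ref{sec:match} uses the proposition) or by slightly redefining $d' = t + Ct^{2/3}$ with a larger constant $C$.
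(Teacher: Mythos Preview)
Your proof is correct and follows the same high-level route as the paper --- linearity of expectation together with the observation that a vertex's expected degree after step~1 is exactly $t$, so the per-vertex expected deficit is essentially $d'-t=t^{2/3}$. The paper's own proof is only two sentences: it records the maximum-degree observation, states that the expected degree after step~1 is $t$, and concludes that the expected deficit is at most $t^{2/3}$, without explicitly treating the additional deficit created by step~2 (either from $v$ itself being wiped or from losing edges to wiped neighbors). Your decomposition $\text{deficit}(v)\le (d'-Z_v)^{+}+L_v+d'\mathbf{1}[B_v]$ and the accompanying Chernoff estimates make rigorous exactly what the paper leaves implicit, and your remark about conditioning on $(u,v)\in E_1$ to decouple $Z_u$ is the right way to handle the one dependence issue. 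So: same approach, with your write-up filling in the details the paper elides; your closing caveat about absorbing the $1+o(1)$ factor into a lower bound on $t$ is also how the paper implicitly treats constants elsewhere in Section~\ref{sec:match}.
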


\begin{proof}
The maximality of the degree is due to the second stage of the sparsification. The expected degree of vertex in $G$ after the first stage is $t$, hence the expected deficit of a vertex in $G'$ is at most $t^{2/3} < d^{2/3}$.
\end{proof}

We want to execute (a modified version of) RG-LCA on $G'$. To be able to apply \corollaryref{cor:gamarnik} we need $G'$ to be regular. For this we imagine that $G'$ is embedded as part of a larger $d'$-regular graph $G''$. The graph $G''$ need not be constructed explicitly. Rather, for the purpose of RG-LCA, each deficit edge of $G'$ is imagined to lead into a fresh $d'$-regular tree of depth much larger than $g$, and that the leaves of these trees are pairwise far away vertices in some huge large girth graph in which these leaves have degree $d' - 1$ and the remaining vertices have degree $d'$. (Queries are only to edges of $G'$. However, such queries might lead to probes to parts of $G'' \setminus G'$. These are ``virtual probes". They are not actual probes to the input graph $G$, but rather to the imaginary graph $G''$ maintained by the LCA.)

Furthermore, we modify RG-LCA by enforcing the following: if a queried edge uses more than $(d')^{7/3}$ probes, the LCA terminates, and the edge is declared not to be in the matching. We refer to this LCA as \emph{bounded-RG-LCA}.

\begin{lemma}
\label{lem:sparsify}
After sparsification, bounded-RG-LCA uses $(d')^{7/3}$ probes and produces a matching of expected size at least $(\frac{1}{2} - O((d')^{-1/3}))n$.
\end{lemma}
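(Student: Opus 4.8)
The plan is to analyze bounded-RG-LCA by comparing it with the uncapped randomized greedy LCA run on the imaginary $d'$-regular graph $G''$, for which Corollaries~\ref{cor:gamarnik} and~\ref{cor:YYIa} already give strong guarantees. The probe bound is immediate: bounded-RG-LCA answers ``no'' the instant it reaches $(d')^{7/3}$ probes, so it never exceeds that many probes. Everything else is about the matching size, which I would bound by accounting separately for three sources of ``lost'' vertices: vertices that RG-LCA leaves unmatched already on $G''$; vertices of $G'$ that RG-LCA matches, but only through a virtual (deficit) edge of $G''\setminus G'$; and edges of $G'$ on which the probe cap of bounded-RG-LCA is triggered.

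First I would set $\epsilon' = 2/d'$ and note that, once $t$ (hence $d'$) exceeds a suitable absolute constant, $G''$ is $d'$-regular of girth at least $8d'$: every cycle of $G''$ either lies inside $G'\subseteq G$, whose girth is $g \ge 10t \ge 8d'$, or it traverses the attached $d'$-regular trees and the far-away leaf graph and is therefore much longer than $g$. Hence Corollary~\ref{cor:gamarnik} applies to $G''$ with parameter $\epsilon'$: conditioned on the sparsification, every vertex of $G''$ is matched by RG-LCA with probability at least $1-\epsilon'-\frac{\epsilon'}{2d'} = 1-O(1/d')$, and every edge of $G''$ lies in the matching with probability at most $1/d'$. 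For a vertex $v$ of $G'$ of deficit $\delta_v$, a union bound over its $\delta_v$ virtual edges shows $v$ is matched through a virtual edge with probability at most $\delta_v/d'$, so it is matched through a genuine $G'$-edge with probability at least $1-O(1/d')-\delta_v/d'$. Summing over the $n$ vertices of $G'$, taking expectation over the sparsification, and invoking Proposition~\ref{pro:deficit} (which gives $\expect[\sum_v \delta_v] = O(n(d')^{2/3})$), the expected number of vertices of $G'$ matched by RG-LCA through genuine edges of $G'$ is at least
$$n - O\!\left(\frac{n}{d'}\right) - \frac{1}{d'}\,\expect\!\Big[\sum_v \delta_v\Big] \;\ge\; n\Big(1 - O\big((d')^{-1/3}\big)\Big),$$
so RG-LCA's matching restricted to $G'$ has expected size at least $\big(\tfrac12 - O((d')^{-1/3})\big)n$.

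Next I would pass from RG-LCA to bounded-RG-LCA. Since the capped algorithm declares an edge out of the matching only when a query exceeds $(d')^{7/3}$ probes, and otherwise computes exactly whether the edge belongs to RG-LCA's matching, its output is a subset of RG-LCA's matching; in particular it is a legal matching. The number of edges it loses is at most the number of $e\in E(G')$ on which the query uses more than $(d')^{7/3}$ probes, and by Corollary~\ref{cor:YYIa} applied to the $d'$-regular high-girth graph $G''$ this has probability at most $(d')^{-4/3}$ for each fixed $e$. As $|E(G')|\le nd'/2$, the expected number of such edges is at most $\tfrac{n}{2}(d')^{-1/3} = O(n(d')^{-1/3})$. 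Combining this with the previous paragraph yields an expected matching size of at least $\big(\tfrac12 - O((d')^{-1/3})\big)n$, as claimed.

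The main obstacle is making the reduction to $G''$ fully rigorous: one has to check that the behaviour of RG-LCA on a query to a $G'$-edge is literally its behaviour as an LCA on the regular high-girth graph $G''$ (so that the regularity/high-girth hypotheses of Corollaries~\ref{cor:gamarnik} and~\ref{cor:YYIa} are legitimately invoked), that the virtual parts of $G''$ never need to be explicitly materialized, and that ``matched through a virtual edge'' is precisely the event disqualifying a $G'$-vertex from being matched in the actual output. Once this bookkeeping is in place, the remaining quantitative steps (the Chernoff-type control of deficits behind Proposition~\ref{pro:deficit}, and Markov's inequality for the probe cap behind Corollary~\ref{cor:YYIa}) are routine.
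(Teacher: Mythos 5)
Your proposal is correct and follows essentially the same route as the paper: compare bounded-RG-LCA against the uncapped RG-LCA on the imaginary $d'$-regular high-girth graph $G''$, apply \corollaryref{cor:gamarnik} to get that RG-LCA matches almost all vertices, and then subtract two $O\bigl(n(d')^{-1/3}\bigr)$ error terms, one from \corollaryref{cor:YYIa} (probe-cap triggers) and one from \propref{pro:deficit} (matched imaginary edges). The only differences are in the bookkeeping: you count probe-cap failures over all $|E(G')|\le nd'/2$ edges of $G'$, whereas the paper conditions on an edge being matched and multiplies the conditional failure probability by roughly $n/2$ matched edges, and you account for imaginary edges vertex-by-vertex via deficits while the paper counts matched imaginary edges directly --- these give the same bounds. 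Your explicit verification that $G''$ has girth at least $8d'$ (so that the corollaries legitimately apply) is a useful detail the paper leaves implicit.
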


\begin{proof}
 The bound on the number of probes is by design. It remains to lower bound the expected size of the matching. In the unbounded-RG-LCA, \corollaryref{cor:gamarnik} implies that every vertex 
has probability at least $(1 - O(1/d'))n$ of being matched. In bounded-RG-LCA,  each edge might be declared unmatched with probability at most $(d')^{-4/3}$, by \corollaryref{cor:YYIa}. We want to bound the expected number of edges that would be matched by RG-LCA but dropped by bounded-RG-LCA.  There  are two reasons this could happen: (1) bounded-RG-LCA exceeds the probe bound or (2) bounded-RG-LCA matches imaginary edges.
We bound each of these separately:
\begin{enumerate}[(1)]
	\item  By \corollaryref{cor:gamarnik}, for any edge $e$, the probability that $e$ is matched by RG-LCA is approximately $1/d'$. Conditioning on the event that an edge $e$ is matched, the probability of $e$ not being added by bounded-RG-LCA is  $O\left( (d')^{-1/3}\right) $. Hence the expected number of matched vertices that remain in $G''$ is  $n(1 - O(d')^{-1/3})$.
	
	\item From~\propref{pro:deficit}, the expected number of deficit edges in $G''$ is at most $n(d')^{2/3}$, hence by~\corollaryref{cor:gamarnik}, the expected number of defective edges  match is at most  $\frac{n(d')^{2/3}}{d'} = n(d')^{-1/3}$.
	\end{enumerate} Overall, the total number of edges that would have been added by RG-LCA but are dropped by bounded-RG-LCA due to is $O((d')^{-1/3})n$.
\end{proof}

We are now ready to prove ~\theoremref{thm:girth}.

\begin{proof}[Proof of~\theoremref{thm:girth}]

 Recall that $d > \frac{1}{\epsilon}$. We consider two possible setting of the parameters, and for each of them bound the number of probes.

\begin{enumerate}

\item $d \le \frac{1}{\epsilon^3}$ and $g > 8d$ (here we do not require $g \ge \frac{1}{\epsilon^3}$). Run bounded-RG-LCA directly on $G$, with a bound of $\frac{d^2}{\epsilon}$ (i.e., if bounded-RG-LCA performs more than $\frac{d^2}{\eps}$ probes, return that the queried edge is unmatched). If we had run (unbounded) RG-LCA, the resulting matching $M$ would have  size $\left( \frac{1}{2} - \frac{\epsilon}{2}\right)n$, from~\corollaryref{cor:gamarnik1}. From~\corollaryref{cor:YYIa}, the number of edges  from $M$ that would not be added by bounded-RG-LCA due to the bound is at most $d^{-2/3}<\eps/2$ in expectation. Hence the expected size of the matching is at least  $ \left( \frac{1}{2}-\eps\right) n$.

\item $d > \frac{1}{\epsilon^3}$ (and $g \ge \frac{1}{\epsilon^3}$). Sparsify $G$ with $t=\Theta\left( \frac{1}{\eps^3}\right)$, satisfying $t \leq \frac{g}{10}$, to obtain a new graph $G'$. Execute bounded- RG-LCA on $G'$ (or rather $G''$ as explained above), enforcing an upper bound of $(d')^{7/3}$ on the number of probes (to $G'$), where $d'=t+t^{2/3}$. By \lemmaref{lem:sparsify}, the size of the matching found is at least $ \left( \frac{1}{2}-\eps\right) n$.
    
    One issue that still needs to be addressed is the number of probes to $G$. By Lemma~\ref{lem:simulation}, every strong probe to $G'$ can be implemented using $d' + 1$ strong probes to $G$. However, for the LCA described above for $G'$, the $d' + 1$ multiplicative overhead in the number of probes can be avoided. The original reason for the $d'+1$ multiplicative overhead was to check whether after the first step of the sparsification procedure, any of the (at most) $d'$ neighboring vertices of a probed vertex has degree above $d'$ (such vertices affect the reply to the strong probe). But in our LCA this check can be omitted. Observe that if at any point our LCA visits a vertex whose degree (after the first sparsification step) is above $d'$, this effectively means that the edge that led the LCA into this vertex should have been discarded (at the second sparsification step) and replaced by an imaginary edge in $G''$. Hence the LCA can simply replace the current vertex by an imaginary vertex and continue without interruption.   (Further details omitted.)
\end{enumerate}
\end{proof}



\paragraph{Acknowledgments.}

We  thank Noga Alon and Mika G\"o\"os for helpful discussions, and the anonymous referees for their useful comments. 
\bibliographystyle{plain}\bibliography{Vardi_PhD_Bibliography}

\pagenumbering{roman}
\section*{APPENDIX}
\appendix

\section{The (Modified) Goldberg-Plotkin-Shannon Algorithm}
\label{app:goldberg}

We present the simplified version of the Golberg-Plotkin-Shannon algorithm \cite{GPS88} for unicyclic trees, which we call MGPS. The two modifications are the removal of the roots as a special case, and the execution of the algorithm for some possibly ``redundant'' rounds; the GPS algorithm terminates as soon as the number of colors is at most $6$, while the MGPS algorithm may continue for several more rounds. This can cause the colors generated by the GPS and MGPS algorithms to be different, but it does not affect the correctness of the algorithm. The algorithm consists of two stages. In the first, shown as \algorithmref{alg:sixcol}, each unicyclic  tree is colored with $6$ colors and in the second, the number of colors is reduced to $3$, by \emph{shifting down}: when there are $x \in \{3,4,5\}$ colors, each vertex takes the color of its parent, and then vertices colored $x$ change their color to the smallest color in $\{0,1,2\}$ that is different than their parents' and children's current color (which is their previous color).

\vspace{10pt}
\begin{algorithm}[H]
	\caption{Parallel $6$ Coloring a Unicyclic-Forest}\label{alg:sixcol}
	\SetKwInOut{Input}{Input}\SetKwInOut{Output}{Output}\SetKwInOut{Inquiry}{Inquiry}
	
	\Input{$G=(V,E)$}
	\Output{a color for each vertex}
	
	\BlankLine
	Let $\T$ be the maximal number of rounds required to reduce the number of colors of an $n$ vertex graph to $6$\;
	\ForEach{$v \in V$ in parallel}{$C_v = ID(v)$\;}
	$i=0$\;
		\While{$i\leq\T$}
		{
			\ForEach{$v \in V$ in parallel}
			{	
			
					$a_v = \min\{j|C_v(j) \neq C_{parent(v)}(j)\}$\;
					$b_v=C_v(a_v)$\;
				$C_v = (a_v, b_v)$\;
			}
			
			$i=i+1$\;
		}

\end{algorithm}

\vspace{10pt}
The following theorem asserts the correctness and running time of 	\algorithmref{alg:sixcol}.
\begin{theorem}
	\algorithmref{alg:sixcol} followed by the shifting-down phase produces a valid $3$-coloring of a unicyclic  tree in $O(\log^*{n})$ time on a CREW PRAM using a linear number of processors.
\end{theorem}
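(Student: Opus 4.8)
The plan is to replay the Cole--Vishkin / Goldberg--Plotkin--Shannon analysis of \cite{GPS88}, verifying that the one place where that analysis treats the root of a rooted tree specially simply evaporates on a unicyclic tree, because there \emph{every} vertex --- cycle vertices included --- has a well-defined parent (and grandparent). First I would prove, by induction on the round index, the invariant that after every iteration of \algorithmref{alg:sixcol} the colouring is \emph{proper}, i.e. $C_v \ne C_{\parent(v)}$ for all $v$. The base case holds because $C_v = ID(v)$ and a unicyclic tree has no self-loops, so $v \ne \parent(v)$. For the step, suppose after a round $(a_v,b_v) = (a_{\parent(v)},b_{\parent(v)})$ and write $p = \parent(v)$, $pp = \parent(p)$ (which exists). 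By the choice of $a_v$, $b_v = C_v(a_v) \ne C_p(a_v)$; by the choice of $a_p$, $b_p = C_p(a_p) = C_p(a_v)$; hence $b_v \ne b_p$, contradicting $b_v = b_p$. This argument inspects only a length-two prefix of a parent chain, so it is oblivious to whether $v$ lies on the cycle and to the cycle's length (two, odd, or even).

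Given the invariant, the standard palette-shrinking estimate applies unchanged: a palette of $m$ colours yields after one round a palette of at most $2\lceil\log m\rceil$ colours (the new colour $2a_v+b_v$ uses a bit index $a_v\in[0,\lceil\log m\rceil)$ and a bit $b_v$), which is non-increasing down to the fixed point $m=6$ and reaches $m\le 6$ within $\log^* n + O(1)$ rounds --- this is exactly $\T$. Since $2\lceil\log 6\rceil = 6$, running for any number of rounds $\ge\T$ keeps the palette inside $\{0,\dots,5\}$ while preserving properness.

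For the shift-down phase I would show that each of the three rounds (eliminating colour $5$, then $4$, then $3$) maps a proper colouring on $\{0,\dots,x\}$ to a proper colouring on $\{0,\dots,x-1\}$. After $v$ adopts $C_{\parent(v)}$ the colouring is still proper; $v$ then carries colour $x$ iff its old parent did, and in that case $v$ recolours to the least element of $\{0,1,2\}$ avoiding its new parent's colour (the old grandparent's colour, $\ne x$) and its children's new colours. The two facts that make this work are: (i) all children of $v$ acquire $v$'s old colour, so they forbid only a single value --- $v$ must dodge just two of the three colours in $\{0,1,2\}$, a legal choice always exists, and $v$ needs no information about how many children it has; and (ii) at most one endpoint of each parent-edge is recoloured, because if $v$ and $\parent(v)$ both carried colour $x$ after the shift then the old parent and old grandparent of $v$ both had colour $x$, violating properness of the old colouring. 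Hence recolourings do not interact and the output of the three rounds is a proper $3$-colouring of the unicyclic tree.

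For the model: each round costs $O(1)$ work per vertex --- read the parent's colour, and in the shift-down phase also the grandparent's colour (one extra pointer hop) --- with no vertex ever scanning its children; reads of a parent's colour are concurrent (many children of a high-degree vertex read the same cell) but writes touch only the vertex's own cell, so a CREW PRAM with one processor per vertex suffices, over a total of $\T + 3 = \log^* n + O(1)$ rounds. I expect the only genuinely delicate point to be checking that each of these properness arguments really does survive on the unique cycle of a component and for every cycle length; this is precisely where the unicyclic structure pays off, since the lack of a distinguished root means one uniform rule governs all vertices and every argument consults only a bounded-length prefix of a parent chain.
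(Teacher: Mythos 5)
Your proof is correct and follows essentially the same route as the paper's proof sketch: the same induction showing properness of the parent-edge colouring is preserved by each Cole--Vishkin step (with the observation that the argument only consults a length-two parent chain, so the cycle causes no trouble), the same palette-shrinking count, and the same shift-down analysis. You have merely filled in the details that the paper leaves implicit --- notably the two observations that children all inherit the old colour of their parent and that at most one endpoint of any parent-edge is recoloured in a shift-down round --- so there is nothing to flag.
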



\begin{proof}[Proof sketch]
    We prove by induction that every vertex is colored differently than its parent. For the base of the induction, at the start of the algorithm, each vertex is colored with its ID, and the hypothesis holds.
    For the inductive step, assume that vertex $v$ is colored differently than its parent, $p(v)$ at the start of a round. Denote $p(v)$'s parent by $p(p(v))$ (note that possibly $p(p(v))=v$). If the first bit on which the colors of $v$ and $p(v)$ differ is not the same as the first bit on which the colors of $p(v)$ and $p(p(v))$ differ, $i_v\neq i_{p(v)}$, and we are done. Otherwise, clearly $b_v \neq b_{p(v)}$. 
      Similarly, shifting down preserves the coloring invariant. 
      
      Regarding the complexity: in each round of 	\algorithmref{alg:sixcol}, the  number of bits needed to represent a vertex's color decreases logarithmically, implying at most $O(\log^*{n})$ rounds are needed. Shifting down requires $3$ more rounds. 
	\end{proof}

\section{On the Number of Connected Subgraphs}

\label{sec:conn}

Given a graph $G(V,E)$ on $n$ vertices, a set $S \subset V$ of vertices is referred to as {\em connected} if its induced subgraph is connected. The neighborhood $N(S)$ contains those vertices in $V \setminus S$ that have at least one neighbor in $S$.
Let $C(s,t,n)$ denote the maximum number of connected subsets $S$ of size $|S| = s$ that have precisely $t$ neighbors (namely, $|N(S)| = t$) in an $n$ vertex graph.

\begin{thm}
\label{thm:graphs}
Using the notation above, $C(s,t,n) \le \frac{n}{s}{s+t-1 \choose t}$.
\end{thm}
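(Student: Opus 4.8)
The plan is to set up a charging/encoding argument: I want to show that each connected set $S$ with $|S|=s$ and $|N(S)|=t$ can be encoded by choosing a root vertex together with a combinatorial "shape" drawn from a set of size at most $\binom{s+t-1}{t}$, and that each (root, shape) pair is used by at most one such $S$ when we additionally divide by the $s$ possible choices of root inside $S$. Concretely, fix a connected $S$ with neighborhood $N(S)$ of size $t$. Pick any vertex $r \in S$ and run a canonical traversal (say BFS, breaking ties by a fixed global vertex order) that explores exactly the vertices of $S \cup N(S)$: the traversal visits each vertex of $S$, and from each vertex of $S$ it discovers its neighbors, but it never expands out of a vertex of $N(S)$. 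The sequence of "new vertex discovered / already seen / is a boundary vertex" decisions made during this traversal is a string over a small alphabet whose length and composition are controlled by $s$ and $t$, and from this string together with $r$ one can reconstruct $S$ uniquely.

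The key counting step is to bound the number of possible traversal transcripts. First I would observe that the traversal has exactly $s$ "expansion steps" (one per vertex of $S$) and it encounters a total of $s + t$ distinct vertices; the discovery events — telling us, each time a vertex is expanded, which of its incident edges lead to brand-new vertices versus already-discovered ones — can be packaged as a composition. The cleanest route is the standard lemma that the number of rooted subtrees of a given structure, or equivalently the number of ways to realize a connected $s$-set with $t$ boundary vertices reachable from a fixed root, is at most $\binom{s+t-1}{t}$: this is the number of ways to interleave $s-1$ "internal" tokens and $t$ "boundary" tokens along the $s+t-1$ non-root discovery events. (This is exactly the bound that appears implicitly as $\binom{s+t-1}{s}$ in the proof of \lemmaref{lem:randomf}, where the estimate $n\binom{s+t}{s}\le 2^{s+t-1}n$ was used; here we sharpen the leading $n$ to $n/s$.) Since each connected $S$ can be rooted at any of its $s$ vertices, and each (root, transcript) pair determines $S$, the total count is at most $\frac{1}{s}\cdot n \cdot \binom{s+t-1}{t}$, which is the claimed bound.

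The main obstacle I anticipate is making the "canonical transcript reconstructs $S$" argument fully rigorous while keeping the transcript count down to $\binom{s+t-1}{t}$ rather than something larger like $(s+t)!$ or $\binom{s+t}{t}$ with an extra factor. The delicate point is that a naive encoding records, for each expanded vertex, a subset of its incident edges, which is too much information; one must argue that it suffices to record only the \emph{number} of newly discovered vertices at each expansion step (the actual identities being forced by the fixed global ordering used for tie-breaking), and that these $s$ counts, summing appropriately, are in bijection with lattice paths counted by $\binom{s+t-1}{t}$. A secondary subtlety is the off-by-one in dividing by $s$: one must check that the $s$ rootings of a single $S$ genuinely give $s$ distinct (root, transcript) pairs and that no transcript is shared across different sets, so that the division by $s$ is valid rather than merely an overcount-in-the-wrong-direction. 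Once the encoding lemma is stated carefully, the arithmetic is immediate.
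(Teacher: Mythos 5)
Your approach is sound but takes a genuinely different route from the paper's. Both proofs fix a vertex $v$, bound the number $q_v$ of connected sets $S$ with $v\in S$, $|S|=s$, $|N(S)|=t$ by ${s+t-1 \choose t}$, then sum $q_v$ over all $v$ and divide by $s$ (each $S$ being counted once per element); the outer counting structure is identical. The difference is how the per-root bound $q_v\le{s+t-1\choose t}$ is obtained. The paper invokes the Bollobas set-pair inequality (\theoremref{thm:sets}) with $A_i=S_i\setminus\{v\}$ and $B_i=N(S_i)$, verifying the cross-intersection condition $A_i\cap B_j\neq\emptyset$ by following a path in the connected set $S_i$ from $v$ to a vertex of $S_i\setminus S_j$ until it first steps out of $S_j$. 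You replace this extremal set-theoretic lemma with an explicit BFS encoding argument. Both work: the set-pair argument is shorter once the lemma is in hand, while your encoding is self-contained and constructive.

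On the obstacle you flagged: the right transcript is the sequence of $s+t-1$ binary labels (``internal'' vs.\ ``boundary'') attached to the non-root vertices in BFS discovery order, i.e.\ exactly the interleaving of $s-1$ internal tokens with $t$ boundary tokens you mention, which is counted by ${s+t-1\choose t}$. Recording only the per-step \emph{counts} of fresh discoveries, as you also suggest, does not suffice --- it loses the internal/boundary distinction, and treating the transcript as a composition of $s+t-1$ into $s$ nonnegative parts would yield the larger ${2s+t-2\choose s-1}$. With the label sequence and a fixed global tie-break rule, reconstruction is forced: the set of already-discovered vertices determines which neighbors of the next expanded vertex are fresh, and the next block of labels says which of those are internal (and hence get enqueued). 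Injectivity follows by induction on the number of labels read, and the $s$ rootings of a fixed $S$ yield $s$ distinct (root, transcript) pairs, so the division by $s$ is legitimate.
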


Proofs for Theorem~\ref{thm:graphs} (sometimes with the term $\frac{n}{s}$ replaced by $n$) can be found in several places (e.g.,~\cite{FV12} and~\cite{KRS15}), but we do not know what the earliest reference to this theorem is. Alon~\cite{nogapriv} points out that Theorem~\ref{thm:graphs} is basically a special case of the following theorem of Bollobas~\cite{Bollobas65}.

\begin{thm}
\label{thm:sets}
Let $U$ be a collection of items and let $A_1, \ldots, A_m, B_1, \ldots B_m$ be a collection of sets such that:

\begin{enumerate}

\item $|A_i| = a$ for every $1 \le i \le m$.

\item $|B_i| = b$ for every $1 \le i \le m$.

\item $A_i \cap B_i = \emptyset$ for every $1 \le i \le m$.

\item $A_i \cap B_j \not= \emptyset$ for every $i \not= j$.

\end{enumerate}

Then $m \le {a + b \choose b}$ holds. In particular, the upper bound on the number of sets is independent of the number $|U|$ of items.
\end{thm}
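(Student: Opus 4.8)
The plan is to use the classical random-permutation argument for the set-pair inequality. First I would reduce to a finite ground set: since the four hypotheses only involve the $2m$ sets $A_1,\dots,A_m,B_1,\dots,B_m$, I may replace $U$ by the finite set $U' = \bigcup_{i=1}^m (A_i \cup B_i)$ without changing any of the conditions, so I assume $|U| = N < \infty$ (and $m$ is finite, as in the statement). I would then pick a uniformly random linear order $\prec$ on $U$, i.e. a uniformly random permutation of the $N$ elements.

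For each $i \in [m]$, let $E_i$ be the event that every element of $A_i$ precedes (under $\prec$) every element of $B_i$. Because $A_i \cap B_i = \emptyset$, the set $A_i \cup B_i$ has exactly $a+b$ distinct elements, and in a uniformly random order the relative order of just these $a+b$ elements is uniform; the probability that all $a$ elements of $A_i$ come before all $b$ elements of $B_i$ is therefore $\frac{a!\,b!}{(a+b)!} = \binom{a+b}{b}^{-1}$. Hence $\Pr[E_i] = \binom{a+b}{b}^{-1}$ for every $i$.

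The key step is to show that the events $E_1,\dots,E_m$ are pairwise disjoint. Suppose $E_i$ and $E_j$ both occur for some $i \ne j$. By condition~4 there exist $x \in A_i \cap B_j$ and $y \in A_j \cap B_i$. Since $E_i$ occurs and $x \in A_i$, $y \in B_i$, we get $x \prec y$; since $E_j$ occurs and $y \in A_j$, $x \in B_j$, we get $y \prec x$ — a contradiction. Thus for any fixed ordering at most one $E_i$ occurs, so the events are pairwise disjoint, and $1 \ge \Pr\left[\bigcup_{i=1}^m E_i\right] = \sum_{i=1}^m \Pr[E_i] = m\,\binom{a+b}{b}^{-1}$, which rearranges to $m \le \binom{a+b}{b}$. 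The bound visibly does not involve $N = |U|$, giving the "in particular" clause for free.

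I do not expect a serious obstacle: the only real content is the disjointness observation, which uses the cross-intersection condition applied to both ordered pairs $(i,j)$ and $(j,i)$, together with $A_i \cap B_i = \emptyset$ (needed so that $|A_i \cup B_i| = a+b$ and the probability computation is exact). If one instead wanted the stronger skew version (condition~4 only for $i < j$), the route would be the exterior-algebra / general-position argument of Bollob\'as and Lov\'asz, assigning generic vectors to ground-set elements; but for the symmetric statement above the permutation proof is cleanest, so that is the one I would carry out.
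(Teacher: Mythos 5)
Your proposal is correct and is essentially the same random-permutation argument the paper gives (indeed, the paper cites it as the known short proof from Alon--Spencer and Jukna). You spell out the disjointness step in more detail than the paper, but the approach is identical.
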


For completeness, let us present a known short proof (that appears in Section~1.3 of~\cite{AS08} and in Section 9.2.2 of~\cite{Jukna}) of Theorem~\ref{thm:sets}.

\begin{proof}
Consider a uniformly random permutation of $U$. For $1 \le i \le m$, let $x_i$ be an indicator random variable for the event that all items of $A_i$ precede all items of $B_i$ in this random permutation, and let $X = \sum_{i=1}^m x_i$. Then $Pr[x_i = 1] = \frac{a! b!}{(a+b)!} = \frac{1}{{a + b \choose b}}$ and consequently $E[X] =  \frac{m}{{a + b \choose b}}$. However,  the combination of the third and fourth conditions from Theorem~\ref{thm:sets} imply that the events underlying $x_i$ and $x_j$ are disjoint for every $i \not= j$, and consequently $X \le 1$. Hence $m \le {a + b \choose b}$, as desired.
\end{proof}

We now prove Theorem~\ref{thm:graphs} using Theorem~\ref{thm:sets}.

\begin{proof}
Fix an arbitrary vertex $v$. A set $S_i \subset V$ will be referred to as {\em eligible} for $v$ if $S_i$ is connected, $v\in S_i$, and the neighborhood $N(S_i)$ has size exactly $t$. Let $q_v$ be the number of sets that are eligible for $v$. Letting the sets $S_i \setminus v$ serve as $A_i$ in Theorem~\ref{thm:sets} and the sets $N(S_i)$ serve as $B_i$, it is not hard to see that all conditions of Theorem~\ref{thm:sets} are satisfied, with $a = s-1$ and $b= t$. It follows that $q_v \le {s+t-1 \choose t}$.

To complete the proof, sum $q_v$ over all $n$ vertices of $G$, and note that every eligible set $S_i$ is counted exactly $s$ times.
\end{proof}

\section{Random Weak $2$-Coloring Seed Length}

\label{app:kwise}

In order to prove~\theoremref{thm:ak}, we need several definitions.
A pseudorandom generator\footnote{We refer the reader to ~\cite{Vad12} for an excellent tutorial on pseudorandomness.} is an algorithm that takes as input a 
short, perfectly random seed and returns a (much longer) sequence of bits that ``looks'' random.  We clearly sacrifice some randomness when we do this: the bit sequence we get is defined by the random seed, and if the seed is of length $s$, there are only $2^s$ possible distinct values of the sequence.
The notions of $k$-wise independent hash functions and almost $k$-wise independent hash functions were introduced by  Carter and Wegman \cite{CW79}. 
To quantify what we mean by almost $k$-wise independence, we use the notion of \emph{statistical distance}.
\begin{definition}[Statistical distance]\label{def:stat}
	For random variables $X$ and $Y$ taking values in $\mathcal{U}$, their \emph{statistical distance}  is $$\Delta(X, Y) = max_{D\subseteq \mathcal{U}}|\Pr[X \in D] - \Pr[Y \in D]|.$$ For $\eps\geq 0$, we say that $X$ and
	$Y$ are $\eps$-close if $\Delta(X, Y)\leq \eps$.
\end{definition}

\begin{definition}[$\eps$-almost $k$-wise independent hash functions] For $n,L,k \in \N$ such that $k \leq n$, let $Y$ be a random variable sampled uniformly at random from $[L]^k$. For $\eps\geq 0$, a family of functions $\mcH = \{h : [n] \rightarrow [L]\}$ is \emph{$\eps$-almost $k$-wise independent} if for all distinct $x_1, x_2, \ldots ,x_k \in [n]$, we have that $\langle h(x_1), h(x_2),\ldots ,h(x_k)\rangle $ and $Y$ are $\eps$-close,  when
	$h$ is sampled uniformly from $\mcH$.  \end{definition}

The following lemma is the main building block for the proof  of~\theoremref{thm:ak}. 
\begin{lemma}\label{lemma:gg}
	Let $k=4\log n$ 
	and let $\eps = 1/{n^4}$. Let $c_{temp}$ be a function sampled uniformly at random from a family of $\eps$-almost $k$-wise independent hash functions. Then~\algorithmref{alg:lca2rand}  makes at most $k$  probes with probability at least $1-1/{n^2}$.
\end{lemma}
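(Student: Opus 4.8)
The plan is to reduce to \theoremref{thm:randomf}, which already gives the claim for a perfectly uniform $c_{temp}$, by observing that the probe count of \algorithmref{alg:lca2rand} (under the arbitrary-parent scheme) is governed by only a handful of coordinates of $c_{temp}$. Recall that with an arbitrary choice of parent the parent pointers of \algorithmref{alg:lca2rand}, and hence the out-degree-one ``parent graph'' $G'$, are a deterministic function of the input graph, independent of $c_{temp}$. On a query to $v$ the algorithm walks along the directed path from $v$ in $G'$, making one strong probe at each visited vertex — a single strong probe at $u$ reveals $N(u)$, and from this both $u$'s parent and the predicate ``$u$ is good'' are computable, since $c_{temp}$ is a deterministic function of the seed — and it halts as soon as it reaches a good vertex or closes a cycle. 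Exactly as in the proof of \theoremref{thm:randomf}, the event $\mathcal{B}$ that some query causes strictly more than $k$ probes is therefore contained in $\bigcup_{P}\mathcal{B}_P$, where $P$ ranges over the directed paths of $G'$ on $k$ vertices (there are at most $n$ of these, one starting at each vertex, and they are simple until termination), and $\mathcal{B}_P$ is the event that all $k$ vertices on $P$ receive the same color under $c_{temp}$. The crucial feature is that $\mathcal{B}_P$ depends only on the $k$ values $\{c_{temp}(u):u\in P\}$.

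Next I would plug in the almost-independence termwise. For a truly uniform $c_{temp}:V\to\{0,1\}$ we have $\Pr[\mathcal{B}_P]=2^{1-k}$. Now let $c_{temp}$ be drawn from an $\eps$-almost $k$-wise independent family $\mcH$ with range $\{0,1\}$ (so $L=2$). Fix a path $P$ on vertices $u_1,\dots,u_k$ and let $D=\{(0,\dots,0),(1,\dots,1)\}\subseteq\{0,1\}^{k}$ be the set of colorings realizing $\mathcal{B}_P$. By the definition of $\eps$-almost $k$-wise independence (using also that passing to a marginal does not increase statistical distance, in case some path is shorter than $k$),
\[
\Pr_{c_{temp}\sim\mcH}[\mathcal{B}_P]
=\Pr\!\big[(c_{temp}(u_1),\dots,c_{temp}(u_k))\in D\big]
\le 2^{1-k}+\eps .
\]
A union bound over the at most $n$ paths then gives
\[
\Pr_{c_{temp}\sim\mcH}[\mathcal{B}]\le n\big(2^{1-k}+\eps\big)
= n\cdot 2^{\,1-4\log n}+n\cdot n^{-4}
= 2n^{-3}+n^{-3}\le n^{-2}
\]
for every $n\ge 3$, which is precisely the assertion that \algorithmref{alg:lca2rand} makes at most $k$ probes with probability at least $1-1/n^2$.

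The step I expect to require the most care is the reduction in the first paragraph: one has to verify that ``more than $k$ probes on some query'' really is captured by a monochromatic directed $k$-path — in particular handling the cycle-closing termination case alongside the good-vertex termination case, and confirming that the parent graph $G'$ is fixed independently of $c_{temp}$ — so that $\mathcal{B}$ is a union of at most $n$ events, each a function of at most $k$ coordinates of $c_{temp}$. All of this is implicit in the proof of \theoremref{thm:randomf}; making the coordinate-locality explicit is the only new ingredient, after which $\eps$-almost $k$-wise independence substitutes for full independence with merely an additive $n\eps$ loss in the union bound. Finally, I note that standard constructions of $\eps$-almost $k$-wise independent families use a seed of length $O(\log(k\log n/\eps))$, which for $k=4\log n$ and $\eps=n^{-4}$ is $O(\log n)$, exactly as needed for \theoremref{thm:ak}.
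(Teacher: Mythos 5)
Your proof is correct and follows essentially the same route as the paper's: identify, for each of the $n$ possible queries, a deterministic set of at most $k$ (or $k+1$) vertices whose monochromaticity under $c_{temp}$ is necessary for the LCA to exceed $k$ probes, then replace the $2^{1-k}$ bound of the uniform case (\theoremref{thm:randomf}) by $2^{1-k}+\eps$ using almost-$k$-wise independence and union-bound over queries. The only cosmetic difference is that the paper introduces an auxiliary algorithm $\A_k$ to pin down ``the first $k$ observed IDs'' as a deterministic quantity (and uses all observed IDs, getting $2^{-k}$), whereas you sidestep this by observing directly that the parent graph $G'$ and hence the directed $k$-paths are fixed independently of $c_{temp}$; this is arguably a cleaner way to make the coordinate-locality explicit, but the argument is the same.
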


\begin{proof}
	Fix $G=(V,E)$ and the queried vertex $v_0\in V$.  Note that in any implementation of~\algorithmref{alg:lca2rand}, the algorithm stops making probes to $G$ when either (1) a primary root is found: a vertex ID $u$ is observed such that $c_{temp}(u) \neq c_{temp}(v_0)$ or (2) a secondary root is found.  
	Denote by $\A_k$ the algorithm that executes~\algorithmref{alg:lca2rand}, except that if it has observed the IDs of fewer than $k$ vertices before finding a root, it continues probing the graph until it has seen exactly $k$ IDs: If it finds a primary  root, it continues making exactly the same probes as~\algorithmref{alg:lca2rand} would have made if for every observed vertex $u$, $c_{temp}(u)=c_{temp}(v_0)$. If it encounters a secondary root, it continues making probes according to some (fixed) predetermined rule (e.g., probe all the neighbors of the vertex with the smallest ID out of all observed vertices, then the neighbors of the vertex with the second smallest ID, and so on).
	
	Let $v_1,\ldots,v_k$ be the first $k$ vertex IDs (other than $v_0$) that $\A_k$ observes. 
	We want to bound the probability that $c_{temp}(v_0)= c_{temp}(v_1) = \cdots = c_{temp}(v_k)$, an event we denote by $X$. This is an upper bound on the probability that $\A_k$ needs to make more than $k$ probes, hence also an upper bound on the probability that~\algorithmref{alg:lca2rand} needs to make more than $k$ probes. Let $g$ be truly random independent assignment of $\{0,1\}$ to $V$. The probability that $g(v_i)=g(v_j)$ for all $i,j \in \{0,1, \ldots, k\}$ is $2^{-k}$. As $c_{temp}$ is an $\eps$-almost $k$-wise independent hash function, $\Pr[X]  \leq 2^{-k}+\eps$; therefore $\Pr[X] <1/n^3$. Taking a union bound over all possible queries gives that the probability that there exists a query for which~\algorithmref{alg:lca2rand} needs to make more probes after it has observed $k$ vertex IDs is at most $1/n^2$. 
\end{proof}

We require the following theorem of Naor and Naor~\cite{NN90}.
\begin{theorem} [\cite{NN90}]\label{thm:almost}
	For every $n,k \in \N$ and $\eps > 0$ such that $n$ is a power of $2$,  and $\eps>0$, there is a family of  $\eps$-almost $k$-wise  independent functions
	$\mathcal{H} =  \{h : [n] \rightarrow \{0,1\}\}$  whose seed length is  $O(\log\log{n}+k+
	\log(1/\eps))$. In particular, for  $k=O(\log n)$ the seed length is $O(\log{n}+\log(1/\eps))$. 
\end{theorem}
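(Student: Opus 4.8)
The plan is to build $\mcH$ from a \emph{small-bias sample space} via the standard reduction, and then to apply a universe-reduction trick so that the seed depends on $n$ only through $\log\log n$. First I would recall that a distribution $\D$ on $\{0,1\}^N$ is \emph{$\eps'$-biased for subsets of size at most $k$} if $\bigl|\expect_{z\sim\D}[(-1)^{\sum_{i\in T}z_i}]\bigr|\le\eps'$ for every nonempty $T\subseteq[N]$ with $|T|\le k$, and that a routine Fourier/Cauchy--Schwarz computation then shows that the projection of $\D$ onto any $k$ coordinates is within statistical distance $2^{k/2}\eps'$ of the uniform distribution on $\{0,1\}^k$. Hence taking $\eps'=\eps\cdot 2^{-k/2}$ converts such a space into an $\eps$-almost $k$-wise independent family; note $\log(1/\eps')=\log(1/\eps)+k/2$, so this does not change the target seed length up to constants.

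Second, I would invoke the small-bias generator of~\cite{NN90}: for every $N$ and $\beta>0$ there is a $\beta$-biased distribution on $\{0,1\}^N$ samplable from $O(\log(N/\beta))$ bits. Applying this with $N=n$ and $\beta=\eps\cdot 2^{-k/2}$ already proves the second assertion of the theorem --- seed length $O(\log n+k+\log(1/\eps))$, and in particular $O(\log n+\log(1/\eps))$ once $k=O(\log n)$, which is the only regime used in Lemma~\ref{lemma:gg}.

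For the $\log\log n$ bound in general, I would first collapse the domain. Fix $m=\Theta(k^2/\eps)$, set $\delta=\eps/(10k^2)$, and let $g:[n]\to[m]$ be the Toeplitz hash $x\mapsto Tx$ over $\mathbb{F}_2$, where the $\log n+\log m-1$ entries defining $T$ are drawn from a $\delta$-biased distribution rather than uniformly. Since $x\mapsto Tx$ is pairwise independent under uniform entries, a collision of two fixed distinct points is an intersection of $\log m$ parity conditions on the entries, which under the $\delta$-biased choice is satisfied with probability at most $\tfrac1m+\delta$; so any fixed $k$-subset of $[n]$ is mapped injectively except with probability at most $\binom{k}{2}(\tfrac1m+\delta)\le\eps/4$. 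The seed describing $g$ is just that of a $\delta$-biased space on roughly $\log n$ bits, namely $O(\log\log n+\log(1/\delta))=O(\log\log n+k+\log(1/\eps))$. Then I would draw $z\in\{0,1\}^m$ from an $(\eps/4)$-almost $k$-wise independent distribution on the small cube (seed $O(\log m+k+\log(1/\eps))=O(k+\log(1/\eps))$, by the two steps above applied to the universe $[m]$) and output $h(x)=z_{g(x)}$. For fixed distinct $x_1,\dots,x_k$, conditioned on $g$ being injective on them the tuple $(h(x_1),\dots,h(x_k))$ is exactly $z$ restricted to $k$ fixed distinct coordinates, hence $(\eps/4)$-close to uniform; combining this with the $\le\eps/4$ collision probability shows $\mcH=\{x\mapsto z_{g(x)}\}$ is $\eps$-almost $k$-wise independent, with total seed $O(\log\log n+k+\log(1/\eps))$.

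The hard part is this last step --- obtaining a universe-reduction hash whose description depends on $n$ only through $\log\log n$ rather than $\log n$. This is exactly where the improvement over the naive small-bias bound comes from, and it is the technical heart of~\cite{NN90}; by contrast, the reduction from almost $k$-wise independence to small bias and the small-bias construction itself are standard.
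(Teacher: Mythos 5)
The paper does not prove this statement at all --- Theorem~\ref{thm:almost} is imported verbatim from Naor--Naor~\cite{NN90} and used as a black box in Appendix~\ref{app:kwise}, so there is no internal proof to compare your plan against. Judged on its own, your reconstruction is sound and correctly identifies the two moving parts: (i) the Vazirani-XOR reduction from $\eps$-almost $k$-wise independence to $\eps 2^{-k/2}$-bias, which is where the additive $k$ in the seed comes from, and (ii) a universe reduction that replaces the $\log n$ in the small-bias seed by $\log\log n$. Your specific realization of (ii) --- a Toeplitz hash $g:[n]\to[m]$ with $m=\Theta(k^2/\eps)$ whose entries are themselves drawn from a small-bias space, composed with an almost $k$-wise independent string on $[m]$ --- does work: the collision bound $\tfrac1m+\delta$ via the Fourier expansion of the intersection of parities is correct (the relevant linear forms in the Toeplitz entries are nontrivial because the least-index nonzero coordinate of $x\oplus y$ forces an odd coefficient), and the mixture argument ``injective with prob.\ $\ge 1-\eps/4$, then $(\eps/4)$-close to uniform on $k$ fixed coordinates'' gives total distance $\le\eps/2$; your constants are slightly off ($\binom{k}{2}(\tfrac1m+\delta)$ comes out closer to $\eps$ than $\eps/4$ with the parameters you wrote) but this is fixable by enlarging $m$ by a constant factor. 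For the record, the route in~\cite{NN90} (and in Alon--Goldreich--H\aa stad--Peralta, where this seed length is usually attributed) does the universe reduction differently: instead of a hash composition, they assign to each $i\in[n]$ a codeword $v_i$ of a linear code of length $O(k\log n)$ and minimum distance $>k$ (so any $k$ of the $v_i$ are linearly independent) and set the $i$th output bit to $\langle v_i,y\rangle$ for a single small-bias string $y$; this avoids the extra hashing layer and the conditioning-on-injectivity step. Both routes land on the same $O(\log\log n + k + \log(1/\eps))$ seed, so your plan is a legitimate, if slightly less economical, alternative.
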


\begin{proof}[Proof of~\theoremref{thm:ak}]
	Let $c_{temp}$ be a function sampled uniformly at random from a family of hash functions as in~\theoremref{thm:almost}, setting $\eps=1/n^4$ and $k=4\log{n}$. The seed length for such a family is $O(\log{n})$. 
	From~\lemmaref{lemma:gg},~\algorithmref{alg:lca2rand}  observes at most $k$ vertices with probability $1-1/n^2$.
	
	In order to observe $c\log{n}$ IDs,~\algorithmref{alg:lca2rand}, with arbitrary parent choice, needs to make at most $c\log{n}$ strong probes or $c(c+1)\log{n}^2$ weak probes. Therefore with probability at least $1-1/n^2$,~\algorithmref{alg:lca2rand} will make at most $4\log{n}$ strong probes or $20\log^2{n}$ weak probes. 
\end{proof}

\section{Almost-Maximum Matching on Bounded Degree Graphs}
\label{app:bounded}

Nguyen and Onak~\cite{NO08} give an algorithm that approximates the size of the maximum matching in graphs with degree bounded by $d \geq 2$ to within $(1-\eps)$ using $2^{d^{O\left( 1/\eps\right)} }$ probes. The algorithm samples $O(1/\eps^2)$ edges uniformly at random, and checks whether they are in some matching $M^*$, using the algorithm  detailed below (the `O' notation hides a dependency on $d$ that is not analyzed). By the Hoeffding bound, this suffices to estimate $|M^*|$ with constant probability and additive error at most $\eps n$. Yoshida, Yamamoto and Ito~\cite{YYI12} show how to improve the total number of probes to $d^{O\left(\frac{1}{\eps^2} \right) }\left( \frac{1}{\eps}\right) ^{O\left( \frac{1}{\eps}\right)}$, by tweaking the algorithm of~\cite{NO08}, and fixing the number of vertices queried to be $O(d^2/\eps^2)$. The algorithms build on the algorithm of Hopcroft and Karp~\cite{HP73}, and we summarize them here for completeness. Note that others (e.g.,~\cite{LPP08,MV13,LRY15}) have used similar algorithms in the context of approximation algorithms, distributed algorithms and LCAs. We first need some definitions.

An {\em augmenting path} with respect to a matching $M$ is a simple path
whose endpoints are \emph{free} (i.e., not part of any edge in the
matching $M$), and whose edges alternate between $E\setminus M$ and
$M$.
A set of augmenting paths $P$ is  {\em independent} if no two paths
$p_1,p_2\in P$ share a vertex.

For sets $A$ and $B$, we denote $A \oplus B \overset{def}{=} (A \cup
B) \setminus (A \cap B)$. An important observation regarding
augmenting paths and matchings is the following.

\begin{obs}
	If $M$ is a matching and $P$ is an independent set of augmenting
	paths, then $M\oplus P$ is a matching of size $|M|+|P|$.
\end{obs}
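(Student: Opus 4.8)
The plan is to prove the claim first for a single augmenting path and then bootstrap to an arbitrary independent set $P$ using vertex-disjointness; throughout, $P$ (and each path $p\in P$) is identified with its edge set, so that $M\oplus P$ denotes the symmetric difference of the edge set $M$ with $\bigcup_{p\in P}p$.

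First I would recall the structure of a single augmenting path $p=v_0,v_1,\ldots,v_{2\ell+1}$ with respect to $M$: the endpoints $v_0$ and $v_{2\ell+1}$ are free, the edges $(v_{2i},v_{2i+1})$ lie in $E\setminus M$, and the edges $(v_{2i-1},v_{2i})$ lie in $M$. Hence $p$ contains exactly $\ell$ edges of $M$ and $\ell+1$ edges of $E\setminus M$, so $M\oplus p$ deletes $\ell$ edges and inserts $\ell+1$, giving $|M\oplus p|=|M|+1$. To see that $M\oplus p$ is again a matching I would check degrees vertex by vertex: a vertex not on $p$ retains exactly its $M$-edges (at most one); each internal vertex $v_i$ with $1\le i\le 2\ell$ was incident to exactly one $M$-edge of $p$ and is now incident to exactly one newly inserted edge of $p$; and each of $v_0,v_{2\ell+1}$ was free in $M$ and becomes incident to exactly one new edge. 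So no vertex has degree exceeding one in $M\oplus p$.

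Next I would handle the general case by induction on $|P|$. Write $P=\{p_1,\ldots,p_m\}$ and set $M_0=M$, $M_j=M_{j-1}\oplus p_j$. The inductive hypothesis is that $M_j$ is a matching with $|M_j|=|M|+j$ and that $p_{j+1},\ldots,p_m$ are still augmenting paths with respect to $M_j$. The base case is immediate. For the step: since $P$ is independent, $p_j$ shares no vertex with any $p_{j'}$ for $j'>j$; therefore, for every vertex $v$ lying on some such $p_{j'}$, the $M_j$-edges incident to $v$ coincide with the $M$-edges incident to $v$ (the operation $M\mapsto M\oplus p_1\oplus\cdots\oplus p_j$ only alters edges incident to vertices of $p_1,\ldots,p_j$). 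Consequently the alternation pattern along $p_{j'}$ and the free-endpoint property of $p_{j'}$ are unaffected, so $p_{j'}$ remains augmenting with respect to $M_j$; and the single-path argument applied to $M_{j-1}$ and $p_j$ shows $M_j$ is a matching with $|M_j|=|M_{j-1}|+1=|M|+j$. After $m$ steps, $M_m=M\oplus p_1\oplus\cdots\oplus p_m=M\oplus P$ (the last equality because vertex-disjoint paths are edge-disjoint), which is a matching of size $|M|+m=|M|+|P|$.

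The one place that genuinely needs care — the ``main obstacle,'' modest as it is here — is the claim that augmenting along $p_j$ leaves the later paths $p_{j'}$ augmenting; this is exactly where the independence hypothesis enters, and the statement fails without it. The remaining degree bookkeeping is routine, so I would present it tersely.
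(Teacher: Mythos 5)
Your proof is correct and complete. The paper states this observation without proof (it is a classical fact from matching theory, implicit in Berge's lemma and Hopcroft--Karp), so there is no argument in the text to compare against; your route — verifying the single-path case by degree bookkeeping and then inducting on $|P|$, with vertex-disjointness guaranteeing that augmenting along $p_j$ does not disturb the alternation pattern or the free endpoints of the remaining paths — is the standard one and fills the gap the paper leaves implicit.
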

While the main result of Hopcroft and Karp \cite{HP73} is an
improved matching algorithm for bipartite graphs, they show the
following useful lemmas for general graphs.

\begin{lemma}\cite{HP73} \label{lemma:hp1}
	Let $G=(V,E)$ be an undirected graph, and let $M$ be some matching in $G$.
	If the shortest augmenting path with respect to $M$ has length $\ell$ and
	$\Phi$ is a maximal set of independent augmenting paths of length
	$\ell$, the shortest augmenting path with respect to $M \oplus \Phi$ has
	length strictly greater than $\ell$.
\end{lemma}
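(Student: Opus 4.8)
The plan is to argue by contradiction, using the classical symmetric-difference counting for matchings. I would set $M' := M \oplus \Phi$, which is a matching with $|M'| = |M| + |\Phi|$ (here $|\Phi|$ denotes the number of paths in $\Phi$), and suppose for contradiction that there is an augmenting path $p$ with respect to $M'$ having at most $\ell$ edges; applying it yields a matching $M'' := M' \oplus p$ with $|M''| = |M| + |\Phi| + 1$. The key object is the edge set $s := M \oplus M'' = \Phi \oplus p$, where we identify $\Phi$ with the union of the edge sets of its paths.

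Then I would run the usual double count on $s$. Since $s$ is the symmetric difference of the two matchings $M$ and $M''$, it decomposes into vertex-disjoint paths and even cycles, at least $|M''| - |M| = |\Phi| + 1$ of which are augmenting paths with respect to $M$; each of those has $\ge \ell$ edges by the hypothesis on $M$, so $|s| \ge \ell(|\Phi|+1)$. Conversely, every path of $\Phi$ has exactly $\ell$ edges, so $|s| = |\Phi \oplus p| \le |\Phi| + |p| \le \ell(|\Phi|+1)$. Comparing the two bounds pins everything down: $|p| = \ell$, the edge sets $\Phi$ and $p$ are disjoint, and $s$ consists of exactly $|\Phi|+1$ vertex-disjoint $M$-augmenting paths of length precisely $\ell$, with no leftover edges.

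The hard part — the only step beyond bookkeeping — is to turn $|p| = \ell$ into a contradiction by showing that $p$ shares no vertex with any path of $\Phi$. Given the edge-disjointness just established, a common vertex $v$ would have degree in $s = \Phi \oplus p$ equal to $\deg_\Phi(v) + \deg_p(v)$, which must be $\le 2$ since $s$ is a disjoint union of simple paths; this forces $v$ to be an endpoint both of some $P_i \in \Phi$ and of $p$. But an endpoint of $P_i$ is incident to the first (non-matching) edge of $P_i$, which becomes an $M'$-edge once $\Phi$ is applied, so $v$ is not free with respect to $M'$ and cannot be an endpoint of the $M'$-augmenting path $p$ — a contradiction. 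Hence $p$ is vertex-disjoint from every path of $\Phi$, so $M$ and $M'$ coincide on all edges incident to $p$, making $p$ an $M$-augmenting path of length $\ell$; then $\Phi \cup \{p\}$ is a strictly larger independent set of $M$-augmenting paths of length $\ell$, contradicting the maximality of $\Phi$. I expect the points needing the most care to be the justification of the ``$\ge |\Phi|+1$ augmenting paths'' count (the standard structure theorem for the symmetric difference of two matchings) and the degree/free-vertex bookkeeping in the last paragraph; the remainder is routine.
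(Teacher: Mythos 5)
The paper cites this result directly from Hopcroft--Karp \cite{HP73} and does not reproduce a proof, so there is no ``paper's own proof'' to compare against. Your argument is correct and is essentially the classical proof from \cite{HP73}: you take the symmetric difference $s = M \oplus M''$, use the standard decomposition of a symmetric difference of matchings into paths and even cycles to get the lower bound $|s| \ge \ell(|\Phi|+1)$, match it against the trivial upper bound $|s| \le \ell|\Phi| + |p|$ to force $|p| = \ell$ and edge-disjointness of $p$ from $\Phi$, and then use the degree argument plus the observation that endpoints of paths in $\Phi$ are not $M'$-free to rule out any shared vertex, contradicting maximality of $\Phi$. The bookkeeping is sound: the count $k_1 - k_2 = |M''| - |M| = |\Phi|+1$ gives $k_1 \ge |\Phi|+1$, the equality chain forces $k_1 = |\Phi|+1$ with all components being $M$-augmenting paths of length exactly $\ell$, and vertex-disjointness of $\Phi$ ensures $\deg_\Phi(v) = \deg_{P_i}(v)$ for a unique $P_i$. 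One small simplification you might note: once you know $s$ decomposes into components of exactly $\ell$ edges each while $\Phi$ and $p$ are edge-disjoint with each $P_i$ and $p$ having exactly $\ell$ edges, any shared vertex would merge two of them into a single connected component with $2\ell$ edges, which is immediately impossible; this avoids the free-vertex step entirely, though both routes work.
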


\begin{lemma}\cite{HP73} \label{lemma:hp2}
	Let $G=(V,E)$ be an undirected graph. Let $M$ be some matching in
	$G$, and let $M^*$ be a maximum matching in $G$. If the shortest
	augmenting path with respect to $M$ has length $2k - 1 > 1$ then
	$|M| \geq (1-1/k)|M^*|$.
\end{lemma}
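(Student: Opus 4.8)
The plan is to run the standard symmetric-difference argument for matchings. First I would form the graph $H = (V, M \oplus M^*)$ and observe that, since both $M$ and $M^*$ are matchings, every vertex of $H$ has degree at most $2$; hence $H$ decomposes into vertex-disjoint simple paths and simple cycles, and along each component the edges alternate between $M \setminus M^*$ and $M^* \setminus M$. Every cycle of $H$ has even length and therefore contains equally many $M$-edges and $M^*$-edges, and the same holds for every path of $H$ of even length; the only components of $H$ on which the $M^*$-count and the $M$-count differ are the odd-length paths.

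Next I would classify the odd-length paths. An endpoint $v$ of a path of $H$ is incident to exactly one edge $f$ of $H$; a short case check (using that $M$ and $M^*$ are matchings and that $f \notin M \cap M^*$) shows that $v$ is unmatched by whichever of $M$, $M^*$ does not contain $f$. Consequently an odd-length path of $H$ whose two end-edges lie in $M^* \setminus M$ is an augmenting path with respect to $M$ (it has exactly one more $M^*$-edge than $M$-edge), while an odd-length path whose end-edges lie in $M \setminus M^*$ would be an augmenting path with respect to $M^*$. The latter cannot occur: by the observation stated above (that for an independent set $P$ of augmenting paths, $M^* \oplus P$ is a matching of size $|M^*| + |P|$), a single augmenting path with respect to $M^*$ would yield a matching larger than $M^*$, contradicting maximality. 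Hence every odd-length path of $H$ is an augmenting path with respect to $M$, and summing $|C \cap M^*| - |C \cap M|$ over all components $C$ of $H$ shows that the number of augmenting paths with respect to $M$ contained in $H$ is exactly $|M^*| - |M|$.

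Finally I would invoke the hypothesis. Each of these $|M^*| - |M|$ augmenting paths has length at least $2k-1$, and an augmenting path with respect to $M$ of odd length $\ell$ starts and ends with non-$M$ edges and hence contains $(\ell-1)/2 \ge k-1$ edges of $M$; since the paths are vertex-disjoint, these collections of $M$-edges are pairwise disjoint, so
\[ |M| \ge (k-1)(|M^*| - |M|). \]
Rearranging gives $k|M| \ge (k-1)|M^*|$, i.e.\ $|M| \ge (1 - 1/k)|M^*|$ (note that $2k - 1 > 1$ forces $k \ge 2$, so the statement is nonvacuous). I do not expect any real obstacle in this argument: the only points requiring care are the structural decomposition of $M \oplus M^*$ into alternating paths and cycles, and the parity bookkeeping that an augmenting path of length $2k-1$ uses exactly $k-1$ edges of $M$.
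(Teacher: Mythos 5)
Your argument is correct and is the standard Hopcroft--Karp symmetric-difference proof: decompose $M \oplus M^*$ into alternating paths and cycles, observe that the only components with an imbalance are odd paths, rule out odd paths favoring $M$ using the maximality of $M^*$, count the augmenting paths as $|M^*|-|M|$, and charge $k-1$ disjoint $M$-edges to each. The paper itself does not reprove this lemma --- it is stated and cited directly from Hopcroft and Karp \cite{HP73} --- so there is no in-paper proof to compare against, but your derivation is the canonical one and I see no gaps; the only small points worth being careful about (that each endpoint of a component of $M \oplus M^*$ is exposed in the matching not containing its incident edge, and the parity count that a length-$(2k-1)$ augmenting path contains exactly $k-1$ edges of $M$) are both handled.
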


The high level of the approach is the folllwing: Start with an empty matching.
In stage $\ell = 1,3,\ldots, 2k-1$, add a maximal independent
collection of augmenting paths of length $\ell$, to obtain the matching $M_\ell$. For $k=\lceil
1/\eps \rceil$, by Lemma \ref{lemma:hp2}, we have that the matching
$M_{2k-1}$ is a $(1-\epsilon)$-approximation to the maximum
matching.

In order to find such a collection of augmenting paths of
length $\ell$, we need to define a conflict graph:
\begin{definition}\cite{LPP08} Let $G = (V,E)$ be an undirected graph, let $M \subseteq E$
	be a matching, and let $\ell > 0$ be an integer.  The
	$\ell$-conflict graph with respect to $M$ in $G$, denoted
	$C_M(\ell)$, is defined as follows. The nodes of $C_M(\ell)$ are all
	augmenting paths of length  $\ell$, with respect to $M$, and
	two nodes in $C_M(\ell)$ are connected by an edge if and only if
	their corresponding augmenting paths intersect at a vertex of
	$G$.
	\end{definition}

The high level sequential (non-local) algorithm for computing a maximal matching used by \cite{NO08, YYI12} (and others) is described as~\algorithmref{alg1}.


\begin{algorithm}[h]
	\caption{ \textsc{High Level Maximum Matching Approximation Algorithm}}\label{alg1}
	\SetKwInOut{Input}{Input}\SetKwInOut{Output}{Output}
	\LinesNumbered
	\Input{$G=(V,E)$ and $\eps>0$}
	\Output{A matching $M$}
	\BlankLine
	
	$M_{-1}\gets \emptyset$ \tcp*{$M_{-1}$ is the empty matching}
	$k\gets \lceil 1/\eps \rceil$~\;
	\For{$\ell= 1, 3, \ldots, 2k-1$}{
		Construct the conflict graph $C_{M_{\ell-2}}(\ell)$~\;\label{line:constr}
		Let $\mathcal{I}$ be an MIS of $C_{M_{\ell-2}}(\ell)$~\; \label{line:MIS}
		Let $\Phi(M_{\ell-2})$ be the union of augmenting paths corresponding to $\mathcal{I}$~\; \label{line:P}
		$M_{\ell}\gets M_{\ell-2} \oplus \Phi(M_{\ell-2})$ \tcp*{$M_{\ell}$ is matching at the end of phase $\ell$} \label{line:last}
	}
	Output $M_{\ell}$\tcp*{$M_{\ell}$ is a $(1-\frac{1}{k+1})$-approximate maximum matching}
\end{algorithm}

The idea behind~\cite{NO08,YYI12} is to implement oracle access to the matchings $M_\ell$ in~\algorithmref{alg1}. Recall that the goal is to output an approximate size of the maximum matching. To that end,  query $O(d^2/\eps^2)$ edges at random. For each one,  find out if it is in $M_{2k-1}$. The proportion of edges that \emph{are} in $M_{2k-1}$ is an approximation to the size of the maximum matching. To find out whether an edge is in $M_\ell$, $\ell=1, 3, \ldots, 2k-1$, construct $M_{\ell-2}$  (note that $\ell$ is always odd) and use that to construct the conflict graph: Let $H_{\ell-2}$ be the graph constructed by associating every path $p_i$ of length $\ell-2$ with a vertex, and adding an edge between two vertices $p_i$ and $p_j$ only if there is some vertex $v \in G$ that is in both $p_i$ and $p_j$. We say that a vertex $p_i$ in $H_{\ell-2}$ is \emph{valid} if $p_i$ is an augmenting path with respect to $M_{\ell-2}$. The conflict graph $C_{M_{\ell-2}}$ is the subgraph of $H_{\ell-2}$ induced by the valid vertices of $H_{\ell-2}$. 
In order to find all paths of length $\ell$ containing a vertex $v$, it suffices to probe the graph $d^{2\ell}$ times. The difficulty comes from finding a maximal independent set in each of these conflict graphs. Nguyen and Onak do the following: Let $G=(V,E)$ be a graph. For each vertex $v \in V$, generate a unique real number in $[0,1]$, called its \emph{rank}. Construct a  \emph{query subgraph} rooted at vertex $v$, denoted $T_v$, is constructed as follows.  Initialize $T_v$ to contain $v$, and  add to $T_v$ all of the neighbors of $v$ whose rank is smaller than the rank of $v$. Continue iteratively - for each vertex $u$ in $T_v$,  add all of its neighbors whose rank is at most the rank of $u$: $\{w: (w,u) \in E, r(w)\geq r(u)\}$. Then, simulate the greedy online algorithm for MIS on this subgraph, such that lower ranked vertices arrive earlier. The size of the query subgraph\footnote{More precisely, the size of the query subgraph multiplied by $d$, as it is necessary to probe the neighbors of the perimeter of the subgraph in order to determine its borders.} rooted at $v$ is an upper bound on the number of probes made to the graph in order to compute whether $v$ is in the MIS. Nguyen and Onak use a ``locality lemma'' in order to bound the expected  number of probes that need to be made.
Markov's inequality is used to show that  if $O(1/\eps^2)$ queries are made, the total number of probes will be at most $2^{d^{O(1/\eps)}}$ with probability at least $5/6$. If more probes are needed, the algorithm can be terminated and started again with new random bits. 

\cite{YYI12} improve the subroutine for maximal independent set as follows: instead of computing the entire query subgraph up front, only as much of the subgraph as is needed is constructed. For example, assume that the queried vertex $u$ has rank $0.8$, and its neighbors $v_1, v_2$ and $v_3$ have rank $0.1, 0.4$ and $0.7$ respectively. $v_1$ is added to the query subgraph first, and  checked recursively for membership in the independent set. If it is in the MIS, we can return ``no'', as $u$ is certainly not in the MIS. If $v_1$ is not in the MIS, $v_2$ is added and checked, and so on. Clearly in many cases fewer probes are made than if the entire query subgraph was discovered in advance, but there is another advantage: lower ranked vertices are more likely to be in the independent set, and these are probed first. The algorithm uses $d^{6k^2} k^{O(k)}$ calls to the MIS subroutine, which uses $O(d^2)$ weak probes. We note that~\cite{YYI12} use the weak probe model, and their MIS algorithm can be implemented using $O(d)$ strong probes in expectation. Overall,

\begin{lemma}[\cite{YYI12}]
	Let $G$ be a graph whose degree is bounded by $d$. The expected number of probes used to reply to a query $e \in  M_{2k-1}$ is $d^{6k^2} k^{O(k)}$.
	\end{lemma}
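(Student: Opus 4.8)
The plan is to unwind the recursive structure of \algorithmref{alg1} and to bound, by induction on the phase index $\ell\in\{1,3,\dots,2k-1\}$, the quantity $N(\ell)$ defined as the expected number of probes needed to answer a membership query ``$e'\in M_\ell$?'', the expectation being over the random ranking and over the choice of queried edge (as in the earlier \cite{YYI12} bound quoted above). The lemma then asserts $N(2k-1)=d^{6k^2}k^{O(k)}$.

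First I would isolate the two combinatorial bounds that drive the recursion. (i) An augmenting path of length $\ell\le 2k-1$ (with respect to any matching) spans at most $\ell+1\le 2k$ vertices of $G$, so the number of such paths passing through a fixed vertex is at most $d^{O(\ell)}\le d^{O(k)}$, and all of them can be enumerated with $d^{O(\ell)}$ probes (a bounded-depth local search). (ii) Consequently the conflict graph $C_{M_{\ell-2}}(\ell)$ has maximum degree $D_\ell\le(\ell+1)\,d^{O(\ell)}=d^{O(k)}$, since two augmenting paths conflict only if they share one of their $\le 2k$ vertices. With these in hand, answering ``$e'\in M_\ell$?'' reduces to: (a) enumerate the $\le d^{O(k)}$ candidate augmenting paths of length $\ell$ through the endpoints of $e'$; (b) for each such path $p$, run the Nguyen--Onak MIS oracle (in the \cite{YYI12} refinement) on $C_{M_{\ell-2}}(\ell)$ to decide whether $p$ lies in the maximal independent set $\mathcal{I}$; (c) combine these answers with the membership of $e'$ in $M_{\ell-2}$, which is itself a recursive query.

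The cost of step (b) is the crux. Here I would invoke the locality lemma of Nguyen and Onak in the sharpened form used by \cite{YYI12}: started from a uniformly random rank, the MIS oracle on a conflict graph explores, in expectation, only few conflict-graph vertices; moreover the \cite{YYI12} rule of examining neighbours in increasing rank order and halting as soon as a lower-ranked neighbour is found in $\mathcal{I}$ reduces the Nguyen--Onak bound to the regime that, tracked through the $\ell$-dependence, contributes the $k^{O(k)}$ overhead. Whenever the oracle touches a conflict-graph vertex $q$ (an augmenting path of length $\ell$) it must spend $d^{O(k)}$ probes to realize $q$ together with its conflict-graph neighbours, and it must check that $q$ is \emph{valid}, i.e.\ augmenting with respect to $M_{\ell-2}$; the validity check amounts to $\le\ell+1$ recursive membership queries against $M_{\ell-2}$, costing $O(\ell)\,N(\ell-2)$ in total. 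Assembling (a)--(c) yields a recursion of the form $N(\ell)\le d^{O(k)}\cdot\Lambda(D_\ell)\cdot\bigl(d^{O(k)}+O(k)\,N(\ell-2)\bigr)$ with $N(-1)=O(1)$, where $\Lambda(D_\ell)$ is the expected exploration size of the refined MIS oracle. Since there are only $k$ distinct phases, unrolling the recursion produces a product of $k$ per-phase factors, each of size $d^{O(k)}$, multiplied by the oracle-locality contribution $k^{O(k)}$; this telescopes to $d^{O(k^2)}k^{O(k)}=d^{6k^2}k^{O(k)}$ (the constant $6$ being an artifact of the loose counting).

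The step I expect to be the main obstacle is making step (b) rigorous: bounding $\Lambda(D_\ell)$, i.e.\ proving that the rank-monotone exploration of the conflict graph does not blow up, and verifying that the \cite{YYI12} early-termination rule genuinely brings the Nguyen--Onak estimate down to the polynomial-in-$D_\ell$ / $k^{O(k)}$ regime rather than leaving a $2^{O(D_\ell)}=2^{d^{O(k)}}$ (doubly exponential in $d$) bound. One must also be careful that the query distribution induced in the conflict graph by a random edge query to $G$ is the one to which the locality lemma applies. Once these locality estimates are secured, the remaining work is the bookkeeping over the $k$ phases sketched above.
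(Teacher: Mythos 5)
The paper does not prove this lemma: it states it as a quoted result from \cite{YYI12}, preceded only by an informal overview of the Nguyen--Onak/\cite{YYI12} oracle machinery in the same appendix. Your sketch reconstructs the \cite{YYI12} argument along exactly the lines of that overview---phase-by-phase recursion on $M_\ell$, conflict graph of augmenting paths with degree $d^{O(k)}$, the rank-monotone MIS oracle, and validity checks via recursive $M_{\ell-2}$ queries---and correctly flags the MIS locality bound (and the induced query distribution on the conflict graph) as the points that require the heavy lifting of \cite{YYI12}, so it is consistent with the approach the paper defers to.
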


In~\cite{YYI12}, similarly to~\cite{NO08},  if the algorithm (for approximating the size of the maximum matching) uses more probes than the allotted number, it can be terminated and started again with new randomness.  With LCAs that is not the case. We overcome this issue by allowing the LCA to use $c$ times more probes, for some constant $c>0$. If the process still hasn't terminated, we return ``no'' as a reply to the query. 

\section{Expectation of the Maximum of Two Random Variables}

\begin{claim}\label{claim:expectmax}
	Let $X$ and $Y$ be two (possibly correlated) non-negative random variables drawn from the same distribution, with mean $\mu$ and variance $\sigma^2$. Then
$\expect\left[ \max\{X, Y\}\right] \leq \mu + 3\sigma$.
\end{claim}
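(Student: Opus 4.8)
The plan is to bound $\max\{X,Y\}$ pointwise by a quantity whose expectation is trivial to compute, and then control the single deviation term that remains via Cauchy--Schwarz. The key pointwise inequality is that for every real $c$ and all values of $X,Y$,
\[
\max\{X,Y\} \le c + (X-c)_+ + (Y-c)_+,
\]
where $(z)_+ = \max\{z,0\}$. To see this, say $X \ge Y$, so $\max\{X,Y\}=X$; if $X \ge c$ the right-hand side is at least $c + (X-c) = X$, and if $X < c$ it is at least $c > X$. (This step does not even use non-negativity of $X,Y$; non-negativity is only relevant to ensure $\mu$ and $\sigma$ are finite.)

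Taking $c=\mu$, taking expectations, and using that $X$ and $Y$ have the same distribution,
\[
\expect[\max\{X,Y\}] \le \mu + 2\,\expect[(X-\mu)_+].
\]
Next I would use the identity $(X-\mu)_+ = \tfrac12\big(|X-\mu| + (X-\mu)\big)$, which gives $\expect[(X-\mu)_+] = \tfrac12\,\expect|X-\mu|$ since $\expect[X-\mu]=0$. Finally, Jensen's inequality (equivalently Cauchy--Schwarz) yields $\expect|X-\mu| \le \sqrt{\expect[(X-\mu)^2]} = \sigma$. Combining the three displays gives $\expect[\max\{X,Y\}] \le \mu + \sigma$, which is in fact stronger than the claimed bound $\mu + 3\sigma$.

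There is no genuine obstacle here; the only point worth noting is that the statement allows $X$ and $Y$ to be correlated, and the argument above is automatically insensitive to this, since the pointwise inequality and the identity $\expect[(X-\mu)_+]=\tfrac12\expect|X-\mu|$ each involve only the marginal law of a single variable (and the two marginals coincide), never the joint law. As an alternative I could argue directly from $\max\{X,Y\} = \tfrac12(X+Y) + \tfrac12|X-Y|$ together with $\expect|X-Y| \le \sqrt{\expect[(X-Y)^2]} = \sqrt{\operatorname{Var}(X-Y)} \le 2\sigma$, the last step being the triangle inequality in $L^2$ applied to the centered variables $X-\mu$ and $Y-\mu$; this again yields $\mu+\sigma$. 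I would present whichever of the two formulations reads more cleanly in context, and state the weaker $\mu+3\sigma$ bound as in the claim since that is all that is needed in the application.
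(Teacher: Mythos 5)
Your proof is correct and, in fact, sharper than what the paper proves: you obtain $\expect[\max\{X,Y\}] \le \mu + \sigma$, whereas the paper only establishes $\mu + 3\sigma$. The two routes are genuinely different. The paper's argument is a tail-integral computation: it first applies Chebyshev to get $\Pr[|X-\mu| > a] \le \sigma^2/a^2$, then a union bound to get $\Pr[Z-\mu > a] \le 2\sigma^2/a^2$ for $Z = \max\{X,Y\}$, and finally integrates $\expect[Z] = \int_0^\infty \Pr[Z>a]\,da$, splitting at $a = \mu+\sigma$; this relies on $Z \ge 0$ (hence on the non-negativity hypothesis) and the crude tail bound costs an extra factor. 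Your argument instead works pointwise: bounding $\max\{X,Y\} \le c + (X-c)_+ + (Y-c)_+$ with $c = \mu$, then using $\expect[(X-\mu)_+] = \tfrac12\expect|X-\mu| \le \tfrac12\sigma$ by Jensen. This sidesteps both the union bound and the tail integral, does not need non-negativity, and yields the cleaner constant. Your alternative via $\max\{X,Y\} = \tfrac12(X+Y) + \tfrac12|X-Y|$ and the $L^2$ triangle inequality is equally valid and gives the same $\mu+\sigma$. Either formulation is a legitimate improvement over the paper's proof; since the application only needs $\mu + O(\sigma)$, the extra precision is not strictly necessary, but it is worth recording.
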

\begin{proof}

	By Chebyshev's inequality, it holds that 
	$\Pr[|X-\mu| >a]\leq \frac{\sigma^2}{a^2}$. 
	Set $Z=\max\{X,Y\}$.
	
	By the union bound, it holds that $\Pr[Z-\mu>a] \leq \frac{2\sigma^2}{a^2}$. 

	The expectation of $Z$ is
	\begin{align*}
	\expect[Z]
	&= \int_{0}^\infty  \Pr[Z>a] da\\
	&\leq \mu + \sigma + \int_{\sigma}^\infty Pr[Z-\mu >a] da\\
	&\leq \mu + \sigma + 2\int_{\sigma}^\infty  \frac{\sigma^2}{a^2} da\\
	&\leq \mu + 3\sigma.
	\end{align*}
	\end{proof}

\section{Almost Maximum Matching in Almost $d$-Regular Graphs}

\label{sec:mmm}

Let $G(V,E)$ be a graph with $n$ vertices and $m$ edges. Let $d = \frac{2m}{n}$ denote its average degree, and let $0 \le \delta < \frac{1}{2}$ be such that the degree of every vertex is in the range $[(1-\delta)d, (1+\delta)d]$. For $\delta = 0$ the graph $G$ is regular, and otherwise $G$ is approximately regular.

The following procedure will be referred to as {\em randomized bipartite sparsification} (with parameter $q$). It constructs from $G(V,E)$ a random bipartite graph $G'(V_0,V_1;E')$ in three steps:

\begin{enumerate}

\item The set $V$ of vertices is partitioned at random into two groups $V_0$ and $V_1$. This is done by choosing a random function $R_1: V \longrightarrow \{0,1\}$, where for each vertex $v$ independently it holds that $R_1(v) = 0$ with probability $\frac{1}{2}$. Vertex $v$ belongs to $V_{R_1(v)}$. Edges in $E$ with one endpoint in each side are referred to as {\em bipartite} edges and remain in the graph. The other edges are referred to as {\em monochromatic} edges and are dropped from the graph.

\item Every bipartite edge is declared to be a {\em surviving} edge independently with probability $\frac{q}{d}$. The other edges are dropped from the graph. Determining which edges are surviving is done by choosing a random function $R_2: E \longrightarrow \{0,1\}$, where for each edge $e$ independently it holds that $R_2(e) = 1$ with probability $\frac{q}{d}$. Edge $e$ survives if $R_2(e) = 1$.

\item To obtain $E'$, (simultaneously) drop all those surviving edges that are incident with a vertex that has more than $q$ surviving edges.

\end{enumerate}

\begin{proposition}
$G'$ is bipartite and has maximum degree $q$.
\end{proposition}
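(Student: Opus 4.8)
The plan is to check the two asserted properties one at a time; both are immediate structural consequences of the three steps of randomized bipartite sparsification, so this will be a short argument rather than a computation. First I would argue bipartiteness. Step~1 discards every monochromatic edge and keeps only edges with one endpoint in $V_0$ and one in $V_1$, so immediately after Step~1 the graph is bipartite with sides $(V_0,V_1)$. Steps~2 and~3 only delete edges and never relocate vertices between the two sides, so $(V_0,V_1)$ remains a valid bipartition of $G'$; hence $G'$ is bipartite.

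For the degree bound I would fix an arbitrary vertex $v$ and let $s_v$ denote the number of surviving edges incident with $v$ after Step~2. If $s_v > q$, then Step~3 removes all edges incident with $v$, so $\deg_{G'}(v)=0$. If $s_v \le q$, then at most the $s_v \le q$ surviving edges at $v$ can remain in $E'$ (some of them may still be dropped, if their other endpoint is over-full). Either way $\deg_{G'}(v) \le q$, and since $v$ was arbitrary, $G'$ has maximum degree $q$.

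I do not expect any genuine obstacle here: this is the bipartite analogue of the earlier proposition in~\sectionref{sec:match1} stating that the (non-bipartite) sparsification yields maximum degree $q$, and the proof is of the same one-line flavor. The only subtlety worth recording is that Step~3 drops an edge as soon as \emph{one} of its two endpoints has more than $q$ surviving edges, which is exactly what guarantees the degree bound at \emph{both} endpoints simultaneously; the case split on the value of $s_v$ above captures precisely this point.
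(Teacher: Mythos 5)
Your proof is correct and follows essentially the same route as the paper's: bipartiteness is inherited from Step~1, and the degree bound follows directly from Step~3. Your case split on $s_v$ is a slightly more explicit version of the paper's one-sentence observation, but there is no substantive difference.
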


\begin{proof}
Bipartiteness follows from Step~1 of the randomized bipartite sparsification. The bound on the maximum degree follows from Step~3, in which vertices with more than $q$ surviving edges lose all their edges.
\end{proof}

\begin{lemma}
\label{lem:simulation1}
Given $R_1$ and $R_2$ of the randomized bipartite sparsification procedure, every strong probe in $G'$ can be implemented by $q+1$ strong probes to $G$.
\end{lemma}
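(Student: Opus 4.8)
The proof is essentially identical to that of Lemma~\ref{lem:simulation}, with the only extra bookkeeping being the use of the fixed functions $R_1$ and $R_2$ to determine the status of each edge; I do not expect any real obstacle. The plan is as follows.

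To answer a strong probe to a vertex $v$ in $G'$, I would first spend one strong probe on $v$ in $G$, obtaining the full list $N(v)$ of its neighbors in $G$. Since $R_1$ and $R_2$ are given, for each $u \in N(v)$ the values $R_1(u)$ and $R_2((u,v))$ are determined by the (now known) names $u$ and $v$, so I can locally identify which edges incident with $v$ are \emph{bipartite} (those with $R_1(u)\neq R_1(v)$) and, among those, which are \emph{surviving} (those with $R_2((u,v))=1$). If $v$ has more than $q$ surviving edges, then by Step~3 of the randomized bipartite sparsification $v$ loses all its edges, so $v \notin V'$, and the reply is that $v$ has no neighbors in $G'$ (or equivalently that $v$ is not in $G'$); this used a single probe.

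Otherwise $v$ has at most $q$ surviving edges, say $(v,u_1),\dots,(v,u_r)$ with $r \le q$. For each such $u_j$ I would spend one more strong probe on $u_j$ in $G$, obtaining $N(u_j)$, and again use $R_1$ and $R_2$ to count the number of surviving edges incident with $u_j$. The edge $(v,u_j)$ belongs to $E'$ precisely when $u_j$ has at most $q$ surviving edges, i.e.\ when $u_j \in V'$; in that case $u_j$ is a neighbor of $v$ in $G'$, and otherwise it is not. Returning the list of those $u_j$ passing this test is exactly the correct reply to the strong probe on $v$ in $G'$, and the total number of strong probes to $G$ is $1 + r \le q+1$.

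The only point worth emphasizing — and the one that makes the whole simulation well-defined — is that, once $R_1$ and $R_2$ are fixed, membership of any vertex in $V'$ and of any edge in $E'$ is a deterministic function of the names involved, which can be recomputed locally from the neighbor lists returned by strong probes to $G$. This is precisely the observation underlying Lemma~\ref{lem:simulation}, and no additional randomness or global information is needed.
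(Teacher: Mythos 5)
Your proposal is correct and matches the paper's own proof essentially line for line: one probe to $v$ to obtain $N(v)$, apply $R_1$ and $R_2$ locally to classify edges as bipartite and surviving, return ``no neighbors'' if $v$ has more than $q$ surviving edges, and otherwise spend one additional probe on each of the at most $q$ surviving neighbors to check whether they remain in $G'$. No differences worth noting.
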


\begin{proof}
On a probe to vertex $v \in V$ (where $V = V_1 \cup V_2$), the reply needs to be the list of its neighbors in $G'$. Probe $v$ in $G$ so as to get the list of its neighbors in $G$, and hence also the list of its incident edges in $G$. Apply $R_1$ to $v$ and each of its neighbors, and determine the bipartite edges. Apply $R_2$ to the bipartite edges, and determine which of them is surviving. If $v$ has more than $q$ surviving edges, then reply that $v$ has no neighbors in $G'$. If $v$ has at most $q$ surviving edges, probe each of their endpoints, determine as above how many surviving edges it has, and include the respective vertex in the reply only if it has at most $q$ surviving edges.
\end{proof}

\begin{thm}
\label{thm:sparsify}
There are universal constants $c_1, c_2, c_3 \ge 0$ such that for every $0 < \epsilon \le \frac{1}{4}$ the following holds. Let $G$ be a graph on $n$ vertices, with average degree $d  > c_1\frac{\log \frac{1}{\epsilon}}{\epsilon^2}$ and degrees in the range $[(1 - \delta)d, (1 + \delta)d]$ where $\delta \le c_2\epsilon$. Then the expected size of a maximum matching in graph $G'$ (generated by randomized bipartite sparsification with degree bound $q \ge \frac{c_3}{\epsilon}$) is at least $(\frac{1}{2} - \epsilon)n$.
\end{thm}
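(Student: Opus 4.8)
The plan is to strengthen the fractional-matching argument behind \lemmaref{lem:sparsifyq}, but to work with the bipartite structure through K\"onig's/Hall's deficiency theorem rather than through a fractional matching weighted by $1/\max\{D_u,D_v\}$: the latter only controls $\expect[\max\{D_u,D_v\}]$ up to an additive $O(\sqrt q)$ term (which is essentially all that \claimref{claim:expectmax} gives), costing an $O(1/\sqrt q)$ loss in the approximation ratio and hence forcing $q\simeq\epsilon^{-2}$. Write $G'=(V_0,V_1;E')$ for the output of the randomized bipartite sparsification, let $H_0$ be the bipartite graph left after Step~1 and $S$ the surviving edge set after Step~2. Since $G'$ is bipartite, $\nu(G')=|V_0|-\max_{T\subseteq V_0}\bigl(|T|-|N_{G'}(T)|\bigr)$, and as $\expect[|V_0|]=n/2$ it suffices (with the symmetric statement, obtained by swapping $V_0$ and $V_1$) to show $\expect\bigl[\max_{T\subseteq V_0}(|T|-|N_{G'}(T)|)\bigr]\le\epsilon n$. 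Sets $T$ with $|T|<\epsilon n$ contribute less than $\epsilon n$ for free, so the whole task is to prove that with probability $1-o(1)$ over the random bits \emph{every} $T\subseteq V_0$ with $|T|\ge\epsilon n$ has $|N_{G'}(T)|\ge|T|$; the rare failure event then contributes at most $n\cdot o(1)\ll\epsilon n$ in expectation, and Step~3 deletes only an $e^{-\Omega(q)}$ fraction of vertices and edges, which is absorbed throughout (this is where $q$ being at least a large constant is used).

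The first ingredient is control of $H_0$. Since $G$ is $\delta$-approximately-$d$-regular with $\delta\le c_2\epsilon$, every $T\subseteq V$ with $|T|\ge\epsilon n$ has $G$-edge-boundary $e_G(T,V\setminus T)=\Omega(\epsilon nd)$ \emph{unless} $G[T]$ is internally almost $d$-regular; in this ``cluster'' case $T$ lies inside a vertex set of $G$ carrying a sparse cut, and by a Bollob\'as-type count (\theoremref{thm:graphs}) there are only few such clusters, each of which — being $2$-coloured uniformly at random in Step~1 — splits into two nearly balanced halves except with probability $e^{-\Omega(|T|)}=e^{-\Omega(\epsilon n)}$. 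Separately, using $d>c_1\frac{\log(1/\epsilon)}{\epsilon^2}$ with $c_1$ large, a Chernoff bound gives $\Pr[\deg_{H_0}(v)<d/4]\le e^{-\Omega(d)}=\epsilon^{\Omega(c_1/\epsilon^2)}$, so by Markov at most $\frac{\epsilon}{10}n$ vertices are ``light'' in $H_0$ with probability $1-o(1)$. Combining the two, with probability $1-o(1)$ over Step~1, every $T\subseteq V_0$ with $|T|\ge\epsilon n$ satisfies $e_{H_0}(T,V_1)=\sum_{v\in T}\deg_{H_0}(v)=\Omega(|T|\,d)$ and (away from the cluster case) its $H_0$-edges to $V_1$ are not all piled onto $O(|T|)$ common vertices.

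Conditioning on this good event, I would analyse Step~2. Fix $T$ with $e_{H_0}(T,V_1)=M=\Omega(|T|d)$. The number of surviving edges leaving $T$ is $\mathrm{Bin}(M,q/d)$ with mean $Mq/d=\Omega(|T|q)$, and since $q\ge c_3/\epsilon$ is at least a large constant this is $\ge 2|T|$ except with probability $e^{-\Omega(|T|q)}$; using that no $w\in V_1$ absorbs more than $(1+\delta)d$ of the $H_0$-edges from $T$ (and, in the non-cluster case, that the edge-multiplicities $k_w$ into any $w$ are not too concentrated), a constant fraction of these surviving edges land on distinct vertices, so $|N_{G'}(T)|\ge|T|$ except with probability $e^{-\Omega(|T|q)}$ (plus the already-accounted cluster probability). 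A union bound over the $\binom ns\le(e/\epsilon)^s$ sets of size $s\ge\epsilon n$ now closes the argument: the per-set failure probability $e^{-\Omega(sq)}=e^{-\Omega(sc_3/\epsilon)}$ beats $(e/\epsilon)^s$ once $c_3$ is a sufficiently large absolute constant (as $1/\epsilon$ dominates $\log(1/\epsilon)$ for $\epsilon\le\frac14$), yielding $\Pr[\exists\,\text{bad }T]\le\sum_{s\ge\epsilon n}e^{-\Omega(s)}=o(1)$. Tracking the $\delta$-slack (a multiplicative $1+O(\delta)$), the ``$+1$'' contributed by the queried edge itself (a $1+O(1/q)$ factor), and the $e^{-\Omega(q)}$ Step~3 loss gives $\expect[\nu(G')]\ge\bigl(\tfrac12-O(\delta+1/q+\epsilon)\bigr)n$, and rescaling $\epsilon$ and choosing $c_1,c_2,c_3$ accordingly gives the claimed $(\tfrac12-\epsilon)n$.

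The step I expect to be the main obstacle is exactly the ``cluster'' case — sets $T\subseteq V_0$ whose $H_0$-neighbourhood in $V_1$ is small — which the generic union bound cannot handle because there the per-set failure probability is not small. Making rigorous that every such $T$ must sit inside a $G$-set with a sparse cut, that there are only few such sets (a \theoremref{thm:graphs}-type estimate), and that a uniformly random $2$-colouring of such a set is balanced with overwhelming probability — and then balancing all of these against the combinatorial count of candidate sets — is the technical heart of the proof, and it is precisely this balancing that forces the two hypotheses $q\ge c_3/\epsilon$ (so the edge-subsampling tail beats $(e/\epsilon)^s$) and $d>c_1\log(1/\epsilon)/\epsilon^2$ (so the random bipartization leaves $o(\epsilon n)$ light vertices).
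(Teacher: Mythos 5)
Your high-level plan is the same as the paper's: analyse the bipartite $G'$ via Hall's condition, and use a union bound to control the deficiency. The paper, however, organizes the argument differently in a way that eliminates exactly the part you identify as the main obstacle. Rather than showing that every $T\subseteq V_0$ with $|T|\ge\epsilon n$ has $|N_{G'}(T)|\ge |T|$, the paper observes that a matching of size $\frac{n}{2}-s$ forces a Hall--violating pair: a set $S\subseteq V_0$ of size $\ell+s$ and a set $T\subseteq V_1$ of size $\ell$ with every surviving edge from $S$ landing in $T$. It then counts the $H_0$-edges from $S$ that avoid $T$ by the elementary inequality $e_{H_0}(S, V_1\setminus T)\ge \sum_{v\in S}\deg_{H_0}(v)-\sum_{w\in T}\deg_{H_0}(w)$. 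Near-regularity (after the random bipartization, with the small number of low/high-degree vertices charged separately via Chernoff and Markov) makes this at least $\Theta(sd)$, so Step~2 kills all of them with probability $\le(1-q/d)^{\Theta(sd)}\approx e^{-\Theta(qs)}$. For $s\gtrsim n/q$ this beats the $3^n$ choices of $(S,T,\text{rest})$, and for smaller $s$ one simply accepts the loss $s\le O(n/q)\le\epsilon n$. This parameterization by the deficiency $s$ is what makes the ``cluster case'' evaporate: you never try to lower bound $|N_{H_0}(T)|$ itself, only the edge count out of $S$ avoiding $T$, and for that the degree sums suffice.

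Two concrete problems with your version as sketched. First, the target claim ``$|N_{G'}(T)|\ge|T|$ for all $T$ with $|T|\ge\epsilon n$'' is stronger than needed and in general false for $\delta>0$, since near-regularity only gives $|N_{H_0}(T)|\ge|T|(1-O(\delta))$ and $N_{G'}(T)\subseteq N_{H_0}(T)$; one must allow a deficiency of order $\delta n+n/q$ and fold it into the final error, exactly as the paper does via the threshold on $s$. Second, and more seriously, the ``cluster case'' you propose to handle with a \theoremref{thm:graphs}-type count does not fit: \theoremref{thm:graphs} bounds the number of \emph{connected} vertex subsets with prescribed neighbourhood size, whereas the Hall sets $T\subseteq V_0$ that you must union-bound over are arbitrary (and in $H_0$ not even subsets of a single connected component). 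You would have to reduce to connected pieces and re-aggregate, which is a substantial additional argument you have not supplied; and once you carry out the degree-sum estimate carefully you will find it already rules out a small $H_0$-neighbourhood for any $T$ up to $O(\delta)$ slack, so the cluster machinery buys nothing. I would recommend switching to the paper's $(S,T)$ parameterization before trying to fill in the rest of the estimates.
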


\begin{proof}
We sketch the proof, omitting some of the computations, and without optimizing the constants involved. Also, we will consider expectations of various quantities, omitting the proofs that their realized value is w.h.p. close to their expectation.

Consider first the outcome of Step~1. Since every edge has probability exactly $\frac{1}{2}$ of being bipartite, the expected average degree of the bipartite graph is $\frac{d}{2}$. Moreover, for some fixed constant $c > 0$ and every constant $k > 2$, for every vertex, its degree will fall in the interval $[(1 - \delta)\frac{d}{2} - c\sqrt{d\log k}, (1 + \delta)\frac{d}{2} + c\sqrt{d\log k}]$ with probability at least $1 - \frac{1}{k}$. Hence:

\begin{enumerate}
\item The expected number of vertices of degree smaller than $(1 - \delta)\frac{d}{2} - c\sqrt{d\log k}$ is at most $\frac{n}{k}$.
\item The expected total number of edges incident with vertices of degree larger than $(1 + \delta)\frac{d}{2} + c\sqrt{d\log k}$ is at most $(1 + \delta)\frac{dn}{k}$.
    \end{enumerate}

Now let us analyze the expected size of the maximum (bipartite) matching in $G'$ after Step~2. Let $s \ge 0$  be such that this maximum matching contains $\frac{n}{2} - s$ edges. Then $2s$ vertices are unmatched. Without loss of generality, at least $s$ of the unmatched vertices are in $V_0$. By Hall's condition, this implies that for some $\ell$, there is a set $S \subset V_0$ of size $\ell + s$, and a set $T$ in $V_1$ of size $\ell$, such that all surviving edges with one endpoint in $S$ have their other endpoint in $T$. The probability that such an event happens can be upper bounded as follows. There are at most $3^n$ ways of choosing the partition of $V$ into $(S,T,V \setminus (S \cup T))$.
After Step~1 (and using item~1 above), the sum of degrees of vertices in $S$ is at least $\left((1 - \delta)\frac{d}{2} - c\sqrt{d\log k}\right)\left(\ell + s - \frac{n}{k}\right)$. Using item~2 above, the sum of degrees of vertices in $T$ is at most $\left((1 + \delta)\frac{d}{2} + c\sqrt{d\log k}\right)\ell + (1 + \delta)\frac{dn}{k}$. Hence after Step~1, the number of edges with one endpoint in $S$ and no endpoint in $T$ is at least:

\begin{eqnarray*}
  \left((1 - \delta)\frac{d}{2} - c\sqrt{d\log k}\right)\left(\ell + s - \frac{n}{k}\right) - \left((1 + \delta)\frac{d}{2} + c\sqrt{d\log k}\right)\ell - (1 + \delta)\frac{dn}{k} \\
  \ge \left((1 - \delta)\frac{d}{2} - c\sqrt{d\log k}\right)s - \left( \delta d + 2c\sqrt{d\log k}\right) \ell - \frac{2dn}{k} \\
  \ge \frac{ds}{4} - \left( \delta d + 2c\sqrt{d\log k} + \frac{2d}{k}\right) n .
\end{eqnarray*}

For the last inequality we used the facts that $\delta \le \frac{1}{4}$ and $\ell < n$, and assumed that $c\sqrt{d\log k} \le \frac{d}{8}$.

Let us now require that $s \ge \frac{8}{d}(\delta d + 2c\sqrt{d\log k} + \frac{2d}{k})n = (8\delta + \frac{16c\sqrt{\log k}}{\sqrt{d}} + \frac{16}{k})n$, and then the number of edges with one endpoint in $S$ and no endpoint in $T$ is at least $\frac{ds}{8}$.
Consider  Step~2. In that step every edge with one endpoint in $S$ and no endpoint in $T$ needs to be dropped. The probability of this happening is at most $(1 - \frac{q}{d})^{ds/8} \simeq e^{-qs/8}$. Hence if $s = \frac{16n}{q}$, the probability is at most $e^{-2n}$ (for larger $s$ the probability is even smaller). This suffices for applying a union bound over all (at most) $3^n$ choices of $S$ and $T$, and conclude that with overwhelming probability, after Step~2, $G_1$ has a matching of size at least $(\frac{1}{2} - \frac{16}{q})n$.

Finally, some of the matched edges might be removed in Step~3, because one of their incident vertices had more than $q$ surviving edges. However, the expected number of surviving edges that a vertex has is at most $\frac{1}{2}\frac{q}{d}(1 + \delta)d \le \frac{5}{8}q$ (using our assumption that $\delta \le \frac{1}{4}$). Hence the probability that a vertex has more than $q$ surviving edges is at most $2^{-c' q}$ (for some universal constant $c' > 0$), and the expected number of matched edges that are lost due to such events is at most $2^{-c' q}n$. Therefore the expected size of the maximum matching in $G'$ is at least $(\frac{1}{2} - \frac{16}{q} - 2^{-c' q})n$

Let us now show how to choose the parameters for our proof. To obtain the desired lower bound on the size of the maximum matching, we require that $\frac{16}{q} + 2^{-c' q} \le \epsilon$. Choosing $q \ge \frac{c_3}{\epsilon}$ for $c_3$ slightly above $16$ suffices for this.
In addition, recall that we took $s = \frac{16n}{q}$ and hence $s \simeq \epsilon n$. In the course of the proof we required that:

\begin{enumerate}

\item $c\sqrt{d\log k} \le \frac{d}{8}$.

\item $\epsilon n \simeq s \ge (8\delta + \frac{16c\sqrt{\log k}}{\sqrt{d}} + \frac{16}{k})n$.

\end{enumerate}

For the second requirement, we may balance the three terms on the right hand side to each give a value of $\frac{\epsilon}{3}$. For the first term setting $c_2 = \frac{1}{24}$ suffices. For the third term we choose $k = \frac{48}{\epsilon}$. Then the first requirement translates to $d \ge 64c^2\log\frac{48}{\epsilon}$. For the second term we need a stronger requirement on $d$, namely,
$d \ge \frac{(48c)^2\log \frac{48}{\epsilon}}{\epsilon^2}$, and this governs the choice of $c_1$.
\end{proof}


\begin{corollary}\label{cor:fin}
There is a randomized reduction from the problem of approximating maximum matching in a graph $G$ with arbitrarily large average degree $d \ge q^2\log q$ and all degrees in the range $[(1 - \frac{1}{q})d, (1 + \frac{1}{q})d]$, to that of approximating maximum matching in a bipartite graph $G'$ of degree at most $q$, with an expected additive loss of at most $O(\frac{1}{q})$ in the approximation factor, and a multiplicative loss of $q$ in the number of strong probes.
\end{corollary}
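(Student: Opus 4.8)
The plan is to instantiate Theorem~\ref{thm:sparsify} with $\epsilon = \Theta(1/q)$ and $\delta = 1/q$, apply randomized bipartite sparsification with degree bound $q$, and then hand the resulting bounded-degree bipartite graph to whatever algorithm one has for that restricted class, simulating each of its strong probes via Lemma~\ref{lem:simulation1}.

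First I would fix the parameters. Let $c_1, c_2, c_3$ be the universal constants of Theorem~\ref{thm:sparsify}, and set $\epsilon = \max\{c_3,\, 1/c_2\}/q$; we may assume $q$ is at least a large enough absolute constant so that $\epsilon \le \tfrac14$ (smaller $q$ is absorbed into the $O(\cdot)$ notation of the statement). With $\delta = 1/q$ we then have $\delta \le c_2\epsilon$ and $q \ge c_3/\epsilon$, as required. It remains to check the degree hypothesis $d > c_1\frac{\log(1/\epsilon)}{\epsilon^2}$: since $\epsilon = \Theta(1/q)$, the right-hand side is $\Theta(q^2\log q)$, so the assumption $d \ge q^2\log q$ implies it once $q$ exceeds an absolute constant (adjusting, if necessary, the constant hidden in $d \ge q^2\log q$). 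Thus $G$, as described in the corollary, satisfies all the hypotheses of Theorem~\ref{thm:sparsify}.

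Next, run randomized bipartite sparsification on $G$ with degree bound $q$ to obtain $G'(V_0,V_1;E')$. By construction $G'$ is bipartite with maximum degree $q$, every matching in $G'$ is a matching in $G$, and by Theorem~\ref{thm:sparsify} the expected size of a maximum matching in $G'$ is at least $(\tfrac12 - \epsilon)n$. Given any (possibly randomized) algorithm $B$ that on bipartite graphs of degree at most $q$ outputs an $\alpha$-approximate maximum matching, apply $B$ to $G'$. The matching $M$ it returns is a legal matching in $G$, and, taking expectation over both the sparsification and the randomness of $B$, $\expect[|M|] \ge \alpha\,\expect[\mathrm{OPT}(G')] \ge \alpha\left(\tfrac12 - \epsilon\right)n \ge \alpha(1-2\epsilon)\,\mathrm{OPT}(G)$, using $\mathrm{OPT}(G) \le n/2$. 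Hence $M$ is an $\alpha(1-2\epsilon) \ge \alpha - O(1/q)$-approximate maximum matching in $G$, i.e., the expected additive loss in the approximation factor is $O(1/q)$.

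Finally, for the probe complexity: the only access $B$ needs to $G'$ is through strong probes, and by Lemma~\ref{lem:simulation1} each such probe is implemented using $q+1$ strong probes to $G$, where the functions $R_1, R_2$ of the sparsification are drawn once and stored as part of the random seed, keeping all answers consistent. Hence the reduction multiplies the number of strong probes by $q+1 = O(q)$. The one step to be careful about is the bookkeeping of the constants $c_1, c_2, c_3$ of Theorem~\ref{thm:sparsify} — in particular, that $\delta = 1/q$ can be forced below $c_2\epsilon$ while simultaneously keeping $\epsilon = O(1/q)$ and $q \ge c_3/\epsilon$ — but, as shown above, this amounts to choosing $\epsilon$ proportional to $1/q$ with a sufficiently large proportionality constant.
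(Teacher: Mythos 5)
Your proposal is correct and follows exactly the route the paper takes: the paper's proof of Corollary~\ref{cor:fin} is a one-liner that cites Lemma~\ref{lem:simulation1} for the probe blowup and Theorem~\ref{thm:sparsify} for the approximation loss, and you simply carry out the parameter bookkeeping (choosing $\epsilon = \Theta(1/q)$ so that $\delta \le c_2\epsilon$, $q \ge c_3/\epsilon$, $\epsilon \le \tfrac14$, and $d \ge q^2\log q$ implies $d > c_1\log(1/\epsilon)/\epsilon^2$) that the paper leaves implicit.
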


\begin{proof}
The bound on the number of probes follows from Lemma~\ref{lem:simulation1}, and the bound on the approximation error follows from Theorem~\ref{thm:sparsify}
\end{proof}

\end{document}